\newcommand{\m}[1]{\mathsf{#1}}
\newcommand{\mt}[1]{\mathit{#1}}
\newcommand{\paragraphb}[1]{\vspace{0.05in}\noindent{\bf #1}\xspace}
\newcommand{\bnfdef}{::=}
\newcommand{\bnfalt}{\,|\,}
\newcommand{\rulename}[1]{\textsc{#1}}
\newcommand{\ifthen}[3]{\m{if}\ #1\ \m{then}\ #2\ \m{else}\ #3}
\newcommand{\timestamp}{\tau}
\newcommand{\context}{\kappa}
\newcommand{\nvmem}{N}
\newcommand{\vmem}{V}
\newcommand{\nint}{N_{\mt{int}}}
\newcommand{\ncont}{N_{\mt{cont}}}
\newcommand{\nrb}{N_{\mt{rb}}}
\newcommand{\cmd}{c}
\newcommand{\loc}{\mt{loc}}
\newcommand{\val}{\mt{val}}
\newcommand{\resetm}{\mathit{reset}}
\newcommand{\runof}[1]{\mt{Run}(#1)}
\newcommand{\inputs}{\mathcal{I}}
\newcommand{\war}{\mathit{WAR}}
\newcommand{\rio}{\mathit{RIO}}
\newcommand{\tntrel}{\mathrel{\approx_\mt{tnt}}}
\newcommand{\dom}{\m{dom}}
\newcommand{\tshared}{\mathcal{T}\!\textit{s}}
\newcommand{\tlocal}{\mathcal{T}\!\iota}
\newcommand{\tpriv}{\mathcal{T}\!\textit{p}}
\newcommand{\tlocaln}{\mathcal{T}\!\iota_N}
\newcommand{\tlocalv}{\mathcal{T}\!\iota_V}
\newcommand{\stepsto}{\longrightarrow}
\newcommand{\Stepsto}[1]{\stackrel{#1}{\Longrightarrow}}
\newcommand{\SeqStepsto}[1]{\stackrel{#1}{\longrightarrow}}
\newcommand{\MStepsto}[1]{\stackrel{#1}{\Longrightarrow^*}}
\newcommand{\MSeqStepsto}[1]{\stackrel{#1}{\longrightarrow^*}}
\newcommand{\proj}[2]{#1|_{#2}}
\newcommand{\ee}{\mathcal{E}}
\newcommand{\denote}[1]{[\![ #1 ]\!]}
\newcommand{\erase}[1]{{#1}^-}
\newcommand{\oleq}{\leqslant}
\newcommand{\oleqm}{\leqslant^m}
\newcommand{\oleqmc}{\leqslant_c^m}
\newcommand{\Vtaint}{\Vdash_\mt{taint}}
\newcommand{\cache}{\mathcal{C}}
\newcommand{\ulog}{\mathcal{L}}
\newcommand{\loggedl}{\mt{LL}}
\newcommand{\ulctx}{\kappa_\mt{UL}}
\newcommand{\ulStepsto}[1]{\stackrel{#1}{\Longrightarrow}_\mt{UL}}
\newcommand{\ulMStepsto}[1]{\stackrel{#1}{\Longrightarrow}^*_\mt{UL}}
\newcommand{\rlctx}{\kappa_\mt{RL}}
\newcommand{\rlStepsto}[1]{\stackrel{#1}{\Longrightarrow}_\mt{RL}}
\newcommand{\rlMStepsto}[1]{\stackrel{#1}{\Longrightarrow}^*_\mt{RL}}
\newcommand{\tskctx}{\kappa_\mt{TSK}}
\newcommand{\tskStepsto}[1]{\stackrel{#1}{\Longrightarrow}_\mt{TSK}}
\newcommand{\rel}{\looparrowright}
\newcommand{\dstate}{\Sigma}
\newcommand{\ustate}{\Sigma_{\mt{UL}}}
\newcommand{\rstate}{\Sigma_{\mt{RL}}}
\newcommand{\tstate}{\Sigma_{\mt{TSK}}}
\newcommand{\dcon}{\context}
\newcommand{\ucon}{\context_{\mt{UL}}}
\newcommand{\rcon}{\context_{\mt{RL}}}
\newcommand{\tcon}{\context_{\mt{TSK}}}
\newcommand{\trel}{\leadsto}
\theoremstyle{plain}
\newtheorem{thm}{Theorem}
\newtheorem{lem}[thm]{Lemma}
\newtheorem{cor}[thm]{Corollary}
\newtheorem{defn}[thm]{Definition}
\newenvironment{proofsketch}{\vspace{-1pt}\textsc{Proof (sketch). }\hspace*{0.25em}}{ \hspace*{\fill} \qed}
\newcommand{\notes}[2]{} 
\newcommand{\limin}[1]{\notes{magenta}{Limin says: #1}}
\newcommand{\brandon}[1]{\notes{orange}{Brandon says: #1}}
\newif\ifproofs
\begin{document}

\title[Foundations of Intermittent Computing]{Towards a Formal Foundation of Intermittent Computing}

\author{Milijana Surbatovich}

\affiliation{
  \institution{Carnegie Mellon University}            
  \country{USA}                    
}
\email{milijans@andrew.cmu.edu}          

\author{Limin Jia}

\affiliation{
  \institution{Carnegie Mellon University}            
  \country{USA}                    
}
\email{liminjia@cmu.edu}          

\author{Brandon Lucia}

\affiliation{
  \institution{Carnegie Mellon University}            
  \country{USA}                    
}
\email{blucia@cmu.edu}          

\begin{abstract}
Intermittently powered devices enable new applications in harsh or inaccessible 
environments, such as space or in-body implants, but also 
introduce problems in programmability and correctness.  
Researchers have developed programming models to ensure that programs make
progress and do not produce
erroneous results due to memory inconsistencies caused by
intermittent executions. 
As the technology has matured, more and more features are
added to intermittently powered devices, such as I/O. Prior work
has shown that all existing intermittent execution models have
problems with repeated device or sensor inputs (RIO). RIOs could leave
intermittent executions in an inconsistent state.
Such problems and the
proliferation of existing intermittent execution models necessitate a
formal foundation for intermittent computing. 

In this paper, we formalize 
intermittent execution models, their correctness
 properties with respect to memory consistency and inputs, and identify the invariants needed to prove systems
 correct. We prove equivalence between several existing
 intermittent systems. 

To address RIO problems, we define an algorithm for identifying variables
 affected by RIOs that need to be restored after reboot and
 prove the algorithm correct. Finally, we
implement the algorithm in a novel intermittent runtime system that is correct with respect to
 input operations and evaluate its performance. 

\end{abstract}





\maketitle

 \section{Introduction}
\label{sec:intro}

Battery-less, energy-harvesting devices (EHDs) are an emerging class of
embedded computing device that operate entirely using energy extracted from
their environment, such as light energy from a solar panel or energy from
radio waves using an antenna.
Free from a battery, these devices enable new applications in
IoT~\cite{pible,permamote,capybara,flicker}, civil infrastructure
sensing~\cite{camaroptera}, in-body medical sensing~\cite{proteus}, and space
exploration~\cite{capybara,kicksat, oec-constellations}.  
We study EHDs that compute {\em intermittently} as energy is available.
The device slowly harvests energy into a capacitor. 
After storing sufficient energy to make meaningful progress, the
device operates, quickly consuming the energy.
After exhausting the stored energy, the device powers off, awaiting more
energy.  Software executes according to an {\em intermittent
execution model}, where programs make progress during an active period that is
preceded by and followed by an inactive recharge period and a
reboot~\cite{dino,alpaca,chain,ratchet,hibernus,quickrecall,mementos}. 
A reboot clears volatile state (registers and SRAM) and preserves non-volatile
state (FRAM~\cite{tiwolverine} and Flash). This execution pattern 
is illustrated in Figure~\ref{fig:gen-overview}~(a).

Unpredictably-timed power failures create several challenges for an
intermittent execution model, including how to maintain forward progress,
ensure memory consistency, manage I/O and concurrency, and correctly interface
with hardware. Figure~\ref{fig:gen-overview}~(a) illustrates how executing a program intermittently 
can result in a memory state inconsistent with any continuous execution of  
a program, if the program contains certain memory access patterns or repeats input operations. 
As energy-harvesting devices have matured, an increasing variety of new
programming models and runtime systems for intermittent execution,
with varying technical approaches to addressing these key challenges, have
emerged~\cite{dino,chain,alpaca,hibernus,
hibernusplusplus,chinchilla,coati,ratchet,quickrecall,mementos,sonic,
incidentalcomputing,whatsnext,capybara,samoyed, tics}. 

The proliferation of diverse intermittent execution models 
presents 
a developer with a confusing space of implementation options and raises the
question of how to specify and compare the behavioral properties of different
systems. 
Models differ subtly, and a program written for one model may make assumptions
not met by another. 
Moreover, no existing software or hardware system has
clear, formally defined behaviour. 
Such a characterization and formalism is a key building block for defining and
proving correctness properties, developing tools to find and fix
intermittence-specific bugs, and understanding the fundamental similarities and
differences between models.  A primary motivation for using intermittent computing systems 
is their deployability in remote, inaccessible environments. 
Updating the device may 
be difficult or even impossible, so a buggy runtime that corrupts the memory state
can make the device practically useless. The lack of specifications of intermittent systems
is a key impediment to their deployment, particularly in applications that demand high
reliability or security.

In this work, we lay the groundwork for provably correct 
intermittent computing by formalizing
the semantics of several classes of intermittent execution models, focusing on 
the foundational correctness issue of {\em ensuring memory consistency} in 
the presence of non-deterministic input operations. Figure~\ref{fig:gen-overview}~(b)
provides an overview of our contributions. An intuitive correctness property 
is that 
an intermittent execution's behavior should be equivalent to some
continuously-powered execution. Prior work~\cite{ibis} has shown
(confirmed by our formalism) that a majority of
existing intermittent systems {\em do not satisfy a reasonable correctness
condition} in the presence of input operations that may change as a program runs intermittently.
We articulate the changes to the execution model that are necessary to correctly handle
the behavior of input operations in an intermittent execution. 
We start by formalizing a {\em checkpoint-based intermittent execution
model} based on DINO~\cite{dino}.  A checkpoint model saves important state 
during execution and, after a power failure, restarts from that point
in the execution by restoring the checkpoint on reboot.  
We then formalize the 
behavior of variants of
checkpoint-based systems~\cite{ratchet}, including both redo- and undo-logging
checkpoint strategies, and a task-based execution model~\cite{alpaca}.  We
relate these systems via bi-simulation,
showing that the correctness properties of one system hold for the others.
Finally, we propose a new checkpoint-based intermittent execution model that provably
handles input behavior correctly, while allowing re-execution of inputs (crucial for data 
freshness). As far as we know, we are the first to
propose such an execution model. 

\begin{figure}[t]
  \centering
  \includegraphics[width=1.00\textwidth]{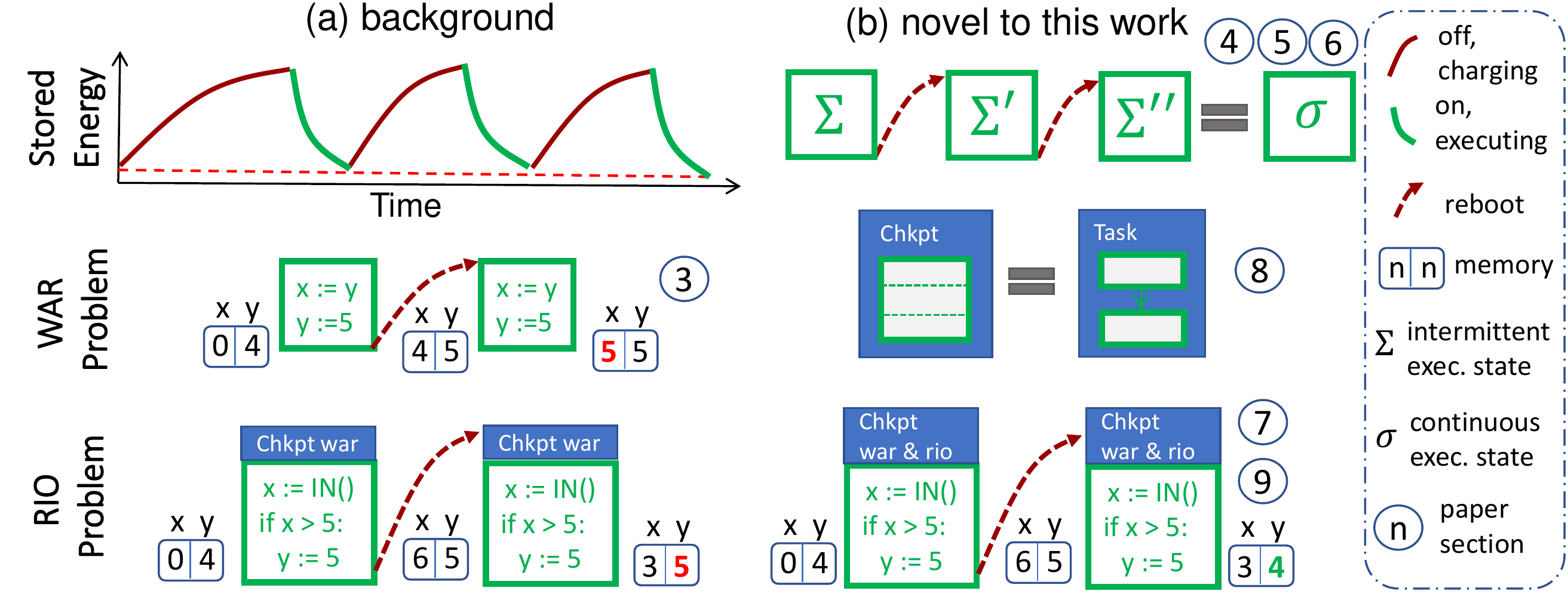}
\vspace{-20pt}
  \caption{(a) The intermittent execution model. Re-executions can result in an inconsistent memory state due 
  to write-after-read dependencies and repeated input operations. 
  (b) We provide a formal correctness theorem, prove equivalence between systems, and implement a runtime with a provably correct algorithm}
  \label{fig:gen-overview}
\end{figure}

We acknowledge that intermittent systems are similar to systems
that handle crashes and
failures in file systems or databases~\cite{idempotent,
  justdo, reachability}, but there are key differences.
As intermittent execution 
  targets highly resource-constrained, embedded systems, and failure is 
  inevitable and frequent, the checkpoint and 
  recovery mechanisms are typically implemented to save and restore the least 
  state possible. Additionally, the correctness of the system 
  depends on the state of both the volatile and non-volatile memory, and programs 
  are often driven by non-deterministic sensor inputs. Existing formalisms 
  for verifying crash consistency typically do not model the internal state of the recovery 
  mechanism~\cite{reachability}, which is necessary to capture the behaviour of the systems we model,
  model nonvolatile state only~\cite{yggdrasil, fscq},
   or do not explicitly model peripherals. These differences mean
    that existing automatic verifiers or logic frameworks~\cite{reachability, yggdrasil, fscq, ft-rr} 
    are not directly suitable for analyzing 
    intermittent execution models, though the correctness specifications 
    and invariants that we identify are useful for extending 
    such tools to work for intermittent computing.   
Identifying such invariants for the correctness proofs is nontrivial.  These
invariants deepen our understanding of intermittent systems. This work is the
foundation for formalizing more complex intermittent behaviour, 
including event-driven and reactive execution~\cite{coati,ink}, or guaranteeing 
forward progress. 
We make the following contributions:

\begin{itemize}[topsep=0pt,leftmargin=.1in] 
\item A novel, formal intermittent execution model with inputs,
  a correctness theorem, and sufficient conditions for
  correctness. 
\item A provably sound algorithm to collect variables to checkpoint
  and an implementation for an existing intermittent system.
\item Formalized variants of intermittent-execution models and
  proofs of the equivalence of these systems via bi-simulation.
\item An experimental evaluation of our algorithm implementation on real hardware showing that our 
technique has low time and space overheads while requiring little to no programmer effort. 
\end{itemize}

Due to space, we relegate detailed formalism and proofs to the 
appendices.

 \section{Scope and Related Work}
Fully specifying intermittent system behaviour requires reasoning 
about diverse properties. These properties include memory consistency, 
forward progress, timeliness, and correct handling of concurrency. 
Our paper addresses 
memory consistency. An intermittent program that is guaranteed to finish and
 always processes only fresh data is incorrect if it operates on 
 a memory state inconsistent with any continuous execution of the program.
  While we focus on memory consistency, we can 
 extend the formalism to cover other correctness properties in the future. 
 In the remainder of this section, 
 we describe these other properties at the high level and sketch 
 what extensions are necessary for our framework to capture these properties. We 
 then discuss how our framework relates to existing research, particularly 
 in intermittent computing and verified crash consistency. 

\subsection{Scope: What the Paper is \emph{Not} About}
\label{sec:other-props}
We focus on memory consistency with re-executed inputs. Reasoning about other 
desirable properties presupposes that the underlying 
system memory is correct. As intermittent applications are often 
sensor-driven, ensuring correctness with input operations is paramount. 
The full set of properties mentioned above is a long term goal that 
  can be reached by building on top of our current framework. 
  Each additional piece requires nontrivial 
  theoretical and implementation components.

\paragraphb{Forward progress} To make forward progress, a program 
executing intermittently must be able to execute 
the region between any two adjacent checkpoints (or any task) with the amount 
of energy in the device's buffer. Otherwise the program will get 
stuck, partially executing the region, recharging energy, and rebooting forever. Current intermittent 
systems assume that the largest task or checkpoint region will be cheap enough 
to finish~\cite{ratchet, alpaca}. Our correctness theorem is sound relative to this assumption and 
does not itself prove forward progress. 
Formalizing and {\em guaranteeing} forward progress 
requires a persistent energy model, which is likely to be complex because the 
amount of energy a sequence of instructions consumes depends on the 
state of the entire board, not the processor alone. A region between checkpoints could finish 
on a processor in isolation,
 but may not if, e.g., the radio is enabled.  
 CleanCut~\cite{cleancut} is 
 a compiler tool that provides a probabilistic energy model to guide programmers 
 in sizing tasks, but it offers no guarantees and does 
 not consider the full state of the board. Samoyed~\cite{samoyed} 
 allows programmers to specify cheaper alternatives to algorithms that 
 the system can switch to at runtime, if it seems a program is not making progress.

\paragraphb{Timeliness} As power can 
be off for an arbitrary period of time, 
sensor data collected before a power failure can be stale and useless after a reboot.
 To avoid processing stale 
data, prior systems have either required external persistent timekeepers
~\cite{reliable-time, remanence} so that a programmer 
can specify explicit timing annotations that the system can check at runtime 
~\cite{mayfly, tics}, or required that the programmer place sensor calls 
and uses requiring fresh data in the same checkpoint region or task. We 
assume the latter approach in this work, and find that while it allows 
timely processing of data, it also introduces memory inconsistencies, making current 
systems that take this approach incorrect. Our formalism
aids us in developing a runtime that allows consistent re-execution of inputs. 

Guaranteeing timely consumption of data requires 
additional language and type constructs to specify which inputs and uses are time-critical, along with 
 either static checking algorithms to disallow programs that may incorrectly 
consume stale data or additional runtime mechanisms to 
ensure that a program will not consume stale data. 

\paragraphb{Concurrency} While most current 
intermittent systems use single-core micro-controllers and have no parallelism,
 recent work~\cite{coati} supported interrupt-based concurrency through transactions. 
 Modeling concurrency requires modeling interrupts
and asynchronous events and updating the language with 
synchronization commands. 

\subsection{Related Work}
The ideas presented in this paper are related to work in 
intermittent systems, fault tolerance 
and crash consistency in files systems, and formal persistent memory models.
We first discuss the most related works in verified crash consistency and persistent memory models, 
and then how our work relates to existing intermittent systems, particularly 
those that deal with inputs or reactivity.

\noindent{\bf Crash consistency}
The failure and recovery problems of 
intermittent systems are similar to those of crash 
consistency on concurrent programs and file systems. 
A file system can crash at any time and must not exhibit 
unspecified behaviour after recovering. Developing formal specifications and verifiable file 
systems is an important research goal~\cite{jpl-mgc}, particularly to guarantee correctness in the presence of 
crashes~\cite{fscq, fs-cc-models, 
yggdrasil, ft-rr, flashix, flashix2}.  

Bornholt et al.~\cite{fs-cc-models} create a framework 
for generating crash consistency models. A crash consistency 
model specifies the allowed behaviour of a file system across crashes. 
Their 
crash consistency theorem relates the crashy fs trace to a canonical program 
trace with no crashes. In contrast, as we model non-deterministic sensor inputs, there 
is no single canonical trace even for executions with no crashes.  
The model consists of litmus tests 
and an operational semantics of the file system that 
models both volatile core state and durable disk state. Our semantics additionally 
model checkpoints. 

The 
verification tool Yggdrasil~\cite{yggdrasil} uses crash refinement 
to aid programmers in developing verified file systems. 
Programmers must write specification and 
consistency invariants of their system. Then the 
verification is process is modular, allowing developers to swap in different 
implementation of system components as long as they meet the specification.
The focus of our work is on defining correct specifications of intermittent 
system behaviour, including whether different implementations are in fact equivalent.
While our correctness theorem is similar to crash refinement, we do not use 
Yggdrasil to verify our specifications as Yggdrasil uses file system abstractions, 
e.g., inode layouts and disc models, that don't apply to intermittent systems, 
which interact directly with memory.
We additionally model the effects of inputs. 

Crash Hoare Logic (CHL)~\cite{fscq} and fault-tolerant resource reasoning~\cite{ft-rr} are proof
 automation tools that extend Hoare triples with crash 
 conditions to verify file system implementations.
Using CHL has a high programmer proof burden because the 
programmer must specify the correctness invariants and recovery 
procedures and prove the recovery procedure 
correct. Thus, a large portion of our work is a prerequisite to using CHL; we 
define intermittent correctness invariants, which is non-trivial. We additionally show that existing recovery procedures 
are in fact {\em in}correct. Using CHL for proof automation once we have defined intermittent correctness is not 
immediately possible as 
 CHL  
 does not provide primitives for checkpoints and only 
 explicitly models non-volatile state, not the mixed-volatility state typical on 
 an intermittent system.
Additionally, the crash conditions for the Hoare triples should capture the intermediate states 
at which a crash could occur, and must be specified for every procedure. For a 
set of file system procedures, capturing these intermediate states is not onerous, 
as each procedure interacts 
 with only a few blocks of the disk. In contrast,  we model intermittent 
 execution traces of programs, which makes enumerating crash conditions complicated 
 and time-consuming. 
Ntzik et al. \cite{ft-rr} 
 do consider both volatile and non-volatile
 resources, though they 
 also require enumerating the non-volatile states possible 
 after a procedure crashes. 
 The authors use their framework to prove the 
 soundness of an ARIES recovery mechanism. They model updates to 
 the durable state at page granularity, and undoing a transaction 
 requires rolling back all updates to pages modified by the transaction. 
 In contrast, the intermittent systems we model are designed to roll back 
 the minimum set of updates necessary to (ostensibly) guarantee correctness. 
 Neither of these frameworks model non-deterministic sensor inputs.

Unlike the works above, which 
deal with verifying the file system itself, Koskinen et al. 
~\cite{reachability} automatically verify crash 
recoverability at the program level. Our approach is most 
similar to this work. 
Our correctness theorem is similar, defining correctness in 
terms of a simulation relation 
and observational equivalence between the continuous and intermittent executions.
In contrast to our work, this model assumes
that underlying system operations will be correct. Their method 
analyzes control-flow and reduces crash recoverability 
to reachability: if control-flow cannot reach an error state, it will be correct. The definition 
of a recovery
mechanism is that given a state $q_k$ in the 
original program, after a crash the mechanism brings the program back 
to $q_k$ after transitioning through some recovery states. While this is clearly the desired behaviour 
of a recovery mechanism, this definition does not consider 
the internal state of the recovery mechanism and is not expressive enough to capture 
the behaviour of existing intermittent checkpoint systems.
After fully executing the recovery procedure, an intermittent program is not in an 
equivalent state to before the crash.  Identifying 
what differences are allowable in the recovered state so that 
further execution eventually brings the program to a consistent state is a key contribution of this work.

\noindent{\bf Persistent memory formalisms}
Persistent memory has been used for whole systems~\cite{wholesystempersistence}, 
entirely non-volatile 
processors~\cite{nvp,nvp2}, and heap structures~\cite{nvheaps, mnemosyne}.
There are formalisms exploring persistency models~\cite{mp,mp2} for 
reasoning about data on non-volatile systems and parallel persistency~\cite{parallel-pm}. Other work looks at defining 
linearizability~\cite{persistent-linearizability} for persistent objects on concurrent systems.  
While these are useful correctness properties, current intermittent hardware is single-core and has no thread-level concurrency.

Weak persistency semantics have been formalized for TSO memory models~\cite{raad1}, for 
ARMv8~\cite{raad2}, and for Intel x86~\cite{raad3}. 
In~\cite{raad2}, 
the authors introduce a declarative semantics for reasoning about persistency. Among memory persistency 
formalisms, our approach is most similar to this one, but we are at a higher level; there are differences in 
scope and the language features provided. 
These persistency models reason about the allowable 
differences between the order in which instructions execute and 
the order they persist to memory on multi-threaded programs. This scope introduces (needed) 
complexity into the declarative semantics, but the devices we target do not 
have multi-threading and expose no difference between execution and persist order. We do 
not currently benefit from this complexity, but in future work we may need to integrate 
with these models to guarantee assumptions, e.g., checkpoint atomicity, that are currently 
upheld by the simple hardware. 
Moreover, our modeling language provides inputs and checkpoints.

\paragraphb{Runtime systems for crash consistency}
Runtime systems that attempt to provide crash consistency 
on database systems have similar functionality to runtime systems for 
intermittent execution, but generally do not provide re-execution of inputs, 
necessary for data freshness.

JustDo logging~\cite{justdo} targets hybrid persistent systems. JustDo
 explicitly avoids 
re-executing code for better performance. 
iDo~\cite{ido}, also targeting hybrid systems, identifies idempotent
instruction sequences to reduce the number of locations to be logged.
Idempotence has also been used as a correctness criterion for fault
tolerance in distributed systems~\cite{ft-idempotent}. Idempotent
processing~\cite{idempotent, idempotent2} has been posed as an
alternative recovery mechanism to checkpoint-logging and
re-execution, but does not allow re-executing inputs, which
sometimes is necessary for intermittent systems to provide fresh sensor readings.
\limin{check the previous sentence}

Other work~\cite{delay-free} provides a construction to 
automatically make accesses to shared 
memory and algorithms persistent. 
Intermittent systems need all executing code to be 
checkpointed or in transactions, not just 
shared data structures.

\noindent{\bf Runtimes for Intermittent Systems}
In this paper, we explicitly model DINO~\cite{dino} as a basic checkpointing system, 
Ratchet~\cite{ratchet} for the idempotent region variant, Alpaca~\cite{alpaca} as an example 
of task-based redo logging, and Chinchilla~\cite{chinchilla} for undo logging. 
Hibernus~\cite{hibernus, hibernusplusplus} is a 
\emph{just-in-time} checkpoint system that dynamically inserts checkpoints and does not re-execute code, but suffers timeliness violations.
Mayfly~\cite{mayfly} is the first work to describe the timeliness
problem and implements a 
programming model 
to enforce 
timeliness 
using an external timekeeper 
and explicit programmer annotations. Capybara~\cite{capybara} is a
reconfigurable energy-harvesting platform that allows 
 flexible
atomicity and reactive events. Homerun~\cite{homerun} also
explores atomicity for I/O events.  Coati~\cite{coati} and
InK~\cite{ink} explore event-driven intermittent
systems.  None of these works provide formal definitions or
guarantees of correctness for either memory consistency or timely 
processing of inputs.

EDB~\cite{edb} and Ekho~\cite{ekho} are frameworks for debugging intermittent
systems, and ScEpTIC~\cite{sceptic} is a tool for detecting bugs caused by
write-after-read patterns. 
 The EH
Model~\cite{ehmodel} provides a way of reasoning about the architectural and
software consequences of energy availability and intermittent system design
choices.

Dahiya et al.~\cite{avis} create a formal model for verifying 
via translation validation that 
instrumented intermittent programs are equivalent to continuous ones. 
They do not consider repeated input operations or how checkpoints must behave 
for  programs to be correct.

\noindent{\bf Inputs on intermittent systems}
IBIS~\cite{ibis} identifies and characterizes bugs caused by repeated inputs in 
intermittent systems.
The authors provide only a bug detection tool, not a correct runtime system, 
nor formal correctness invariants. We provide a formal proof of a sound 
version of the algorithm the authors use in their tool, as well as a correct 
runtime system. We discuss in detail the differences and similarities of the 
algorithm presented in this paper versus the algorithm in the IBIS tool in 
Section~\ref{sec:ibis-diff}.

Developed most recently, TICS~\cite{tics} is a runtime system that uses an external timekeeper 
and programmer annotations to avoid consuming stale data. Rather than 
regather data, the runtime reruns expiration checks after rebooting, 
so that any stale data will not be processed. This approach avoids 
any consistency errors associated with re-executing inputs, but can 
also miss processing any input events if power failures are frequent.
Moreover, this approach requires external time-keeping hardware.

Samoyed~\cite{samoyed}, Sytare~\cite{sytare} and
RESTOP~\cite{restop} look at retaining the peripheral state of input
devices, not at memory correctness issues caused by repeated
input operations.

 \section{Background and Motivation}
\label{sec:overview}

Our work is motivated by emerging intermittent execution models and their
 varied correctness definitions. 
In this section, we review the fundamentals of checkpoint-based intermittent execution and 
show by example how existing models are not correct in the presence of input operations.   
Any intermittent execution model must ensure forward progress and preserve
state. An intermittent execution progresses only when energy is available.
Power fails when energy is exhausted, erasing the device's execution context
and volatile state, including registers and all data stored in volatile
memory.  By default, the system then restarts from the start of {\tt main()}
and naively-written code makes no forward progress. 
To make progress, an
intermittent system can periodically save its
execution context and restart from that execution context on 
reboot; a common mechanism for saving state is a statically placed {\em checkpoint}. 
A checkpoint is an operation that stores some memory state and execution 
context in non-volatile memory, preserving it across a power failure~\cite{ratchet,dino, mementos, 
hibernus, hibernusplusplus,quickrecall,idetic,chinchilla}.

To be correct, an intermittent execution should generate the same result as a
continuous execution. Multiple partial executions followed by a complete execution should have the same
behaviour as {\em some} continuous execution.  Code between checkpoints must execute
\emph{idempotently}.  Unfortunately, a checkpoint system that saves only
volatile execution context and volatile memory~\cite{mementos} may be
incorrect.
Prior work~\cite{dino,ratchet} identified that write-after-read (WAR)
dependencies on non-volatile locations cause non-idempotent behavior and adjust
the checkpointed data accordingly.  We next discuss how WAR dependencies cause
problems.

\subsection{Write-After-Read (WAR) Dependencies}
\label{sec:overview-war}

An intermittent execution may produce an incorrect result if the execution
writes a value into non-volatile memory before a power failure and the
execution reads that updated value after rebooting. 
Consider the example shown in Figure~\ref{fig:rio-example} (a). On the left of
the figure is a small program. The next column shows execution traces
illustrating the WAR problem.  In the initial execution, the branch at line 1
is taken.  After executing through line 4, there is a power failure. The
column then shows the re-execution of the code.  The re-execution assumes that
the checkpoint restores only volatile state (i.e., control state) and retains
non-volatile variables' values.  This time, the
execution completes and yields state $N'_4$. A continuous program
would finish with the memory in state $N_4$.
The state in $N'_4$ contains a different value
for $x$.  The re-execution does not idempotently update $x$ because $x$ depends
on $w$.
When the re-execution reads $w$ into $x$ at line
2, the read produces the (incorrect) value of $w$ written before the power
failure. 
The example shows that the re-execution is non-idempotent
because $w$ is involved in a WAR dependence, which we call a {\bf WAR Variable}.  

\newcounter{linenum}
\newcommand{\refnum}[1]{\stepcounter{#1} \arabic{#1} }

\setcounter{linenum}{0}
\begin{figure*}[tbhp]
\centering
\includegraphics[width=1.0\textwidth]{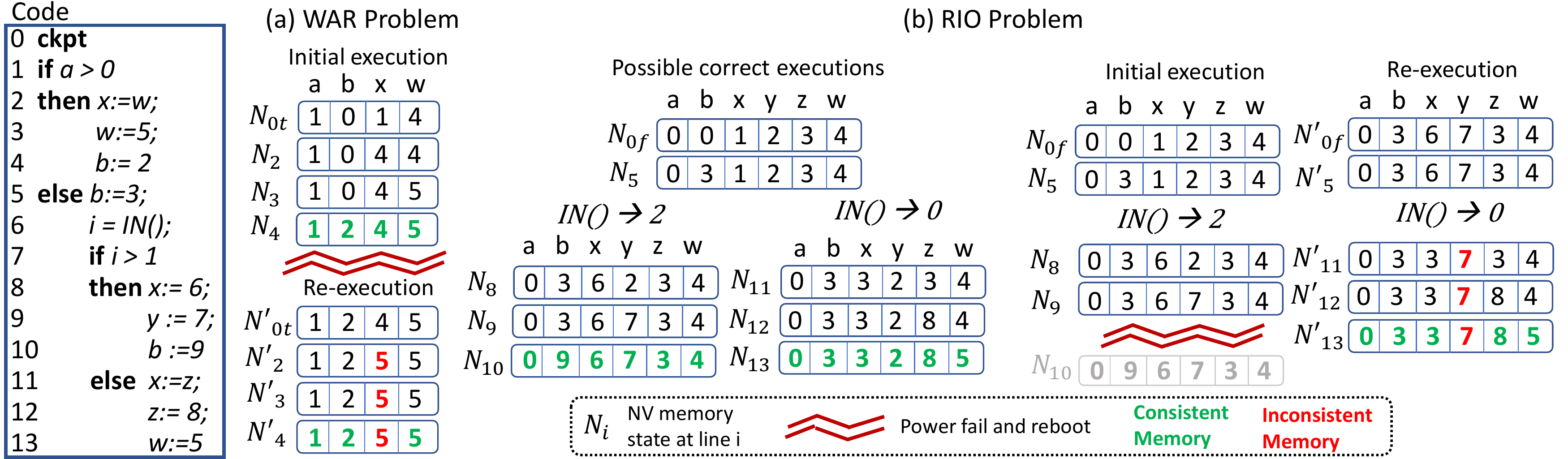}
\vspace{-10pt}
\caption{An example program illustrating WAR and RIO problems}
\label{fig:rio-example}
\end{figure*}
To ensure that code containing WAR dependencies executes idempotently, existing
checkpoint-based systems must add {\em potentially inconsistent} variables to
the checkpoint~\cite{dino,chinchilla} and restore those variables with the
checkpoint after a power failure. 
In the example, $w$ is potentially inconsistent because a write to $w$ before the
power failure may be visible to a read after restarting from the checkpoint. At
line 0, a checkpoint system needs to save a {\em version} of $w$ with the
checkpoint.
Prior work observed that not all WAR dependences lead to
inconsistency~\cite{ratchet,dino,alpaca}.  If $w$ had been written before being
read, e.g. if line 1 was instead $w:=3; \m{if}~a>0$, then $w$ need not be
checkpointed.  The read on line 1 would always see a consistent value.

Checkpointing WAR variables that are not write-dominated is the
current state-of-art in ensuring idempotent re-execution, which 
is \emph{insufficient} for correctness, due to RIOs~\cite{ibis}.

\subsection{Repeated Inputs Cause Incorrect Behavior}

\label{sec:overview-rio}
Applications that target low-power embedded systems rely heavily on peripheral
devices, such as sensors and radios.  
A program stores in a variable the result of an input operation.  In an
intermittent execution, a program may execute an input operation before a power
failure, and then repeat that input operation after a failure, in both cases
fetching a fresh, usable value.  However, repeating the input operation can lead to
incorrect behavior when a program's control- or data-flow depends on the result
of that input operation. We refer to a re-executed input as a 
{\bf RIO}: a {\bf R}epeated {\bf I}nput {\bf O}peration that can generate a different 
value each execution. 

Continuing with Figure~\ref{fig:rio-example} (b), the columns to the right show
how a RIO causes incorrect behavior.  The example now assumes that all
variables are in non-volatile memory and that the system now checkpoints
variables involved in potentially inconsistent WAR dependencies at line 0,
saving and restoring $w$ and $z$.  The starting memory $N_{0f}$ has $a \mapsto
0$, so the initial branch is not taken.  Instruction $\mathit{i} = \m{IN}()$;
on line 6 reads an input value (e.g., from a sensor). Depending on the sensor
reading, a continuous execution could correctly end with either state $N_{10}$ or
$N_{13}$.

An intermittent execution may produce a result different from both 
$N_{10}$ and $N_{13}$.
Such an execution may first get an input greater than $1$ at line 6, causing
the branch at line 7 to be taken.
Power then fails and the program restarts. After the restart, the input is
less than $1$, and the branch at line 7 is not taken.
The final state is $N'_{13}$, which is inconsistent with all correct,
continuously-powered outcomes because the RIO is not idempotent. 

An intermittent execution with inputs is correct if it corresponds
to a continuously-powered execution, regardless of the inputs.  Here, the
RIO causes different branches to be taken and non-volatile variable $y$ is
written on only one of them.  Checkpoint systems that version WAR  variables
do not handle $y$'s RIO problem because $y$ is not a WAR
variable. {\em No existing checkpoint or task-based intermittent execution system
correctly handles these non-idempotent RIOs.}
A correct checkpoint system must store $y$'s value at the checkpoint and
restore it on reboot. We refer to variables that are affected by RIOs
(e.g., $y$) as {\bf RIO variables}.
Preventing inputs from re-executing, as by placing a 
checkpoint immediately after the operation, is not an adequate solution 
as some inputs must be re-executed to be timely~\cite{mayfly}. 
This paper fills the gap left by RIOs, formally and with a practical
system implementation that makes intermittent systems robust to RIOs.

 \section{System Assumptions and Formal Model}
\label{sec:framework}

We define a language to 
model checkpoint-based intermittent execution with inputs, providing the syntax and semantics for both
continuously-powered and intermittent executions. 
First, we explain the lower-level system assumptions to justify our choice of modeling language. 

\paragraphb{Target System Assumptions} Our target intermittent systems use
low-end microcontrollers (MCUs) such as the TI MSP430FR
series~\cite{tiwolverine}. These are single-threaded, single-issue, in-order
compute cores.  The MCUs have embedded, on-chip volatile SRAM or DRAM and non-volatile
Flash, FRAM, or STT-MRAM.  These architectures often lack caches or have only a simple
write-through cache to avoid repeated non-volatile memory accesses. Unlike prior work in
persistent memory targeting more complex
architectures~\cite{raad1,raad2, raad3,persistent-linearizability, parallel-pm}, we
need not reason about concurrency or persist order due to
write-back caches or other microarchitectural optimizations.  We thus
realistically assume that execution and persist order are the same and that the
compiler never re-orders an instruction past a checkpoint. 

\paragraphb{Syntax} 
Our simple language includes accesses to volatile memory, 
accesses to non-volatile memory, and branch statements. We
include arrays but omit general pointer arithmetic. We also omit functions
calls and unbounded loops. 
These omissions do not affect our ability to capture
 the behavior of existing intermittent execution models. Existing systems~\cite{dino, alpaca} do not 
allow recursive function calls, so any code in a function body can be inlined. Including general pointer arithmetic would not change the correctness invariants we 
present, as the definitions consider memory locations directly, but would complicate the 
implementation of any checkpoint algorithm (discussed in Section~\ref{sec:implementation}), as the alias sets of the memory locations in the definitions 
would need to be tracked as well. 
Unbounded loops can be handled by extending our infrastructure with 
loop invariants, which do not introduce technical difficulties but 
unnecessarily complicate the presentation.
Though simple, this modeling language suffices to illustrate clearly the key challenges in defining memory-consistent
intermittent execution models. 
 
 We summarize the syntax in Figure~\ref{fig:syntax}. 
\begin{figure}[tbp]
\centering
  \small
\(
\begin{array}{llcl@{\quad}llcl}

\textit{Values} & v & \bnfdef & n \bnfalt \m{true} \bnfalt \m{false} \bnfalt \m{in}(\timestamp)

& 
\textit{Configuration} & \Sigma& \bnfdef & (\timestamp,
                                                      \context,
                                                      \nvmem, \vmem, \cmd) 
\\

\textit{Expressions} & e & \bnfdef & x \bnfalt v \bnfalt e_1\;
                                     \m{bop}\; e_2 \bnfalt a[e']

&\textit{Cont. config.} & \sigma& \bnfdef & (\timestamp,\nvmem, \vmem, \cmd)

\\

\textit{Instructions} & \iota &  \bnfdef&  x:= e
                                          \bnfalt a[e]:= e' \bnfalt x:=\m{IN}() \bnfalt
                                          & \textit{Volatile mem.} & \vmem & : & M
 \\
&& & \m{skip} \bnfalt \m{checkpoint}(\omega) \bnfalt \m{reboot}(n)

& \textit{Non-vol. mem.} & \nvmem & : & M
\\
\textit{Commands} & \cmd & \bnfdef& \iota  \bnfalt \iota;\cmd \bnfalt
                                    \m{if}\ e\ \m{then}\ \cmd_1\
                                    \m{else}\ \cmd_2
& \textit{Context} & \context & \bnfdef & (\nvmem, \vmem, \cmd)
\\
\textit{Memory loc.} & \loc & \bnfdef & x\bnfalt  a[n] 
& \textit{Read obs.} & r & \bnfdef & \m{rd}\ \loc\ v \bnfalt  r, r
\\

\textit{Chckpnted loc.} & \omega & \bnfdef & \omega, x,\bnfalt \omega, a^n
&  \textit{Observation} & o & \bnfdef & [r] \bnfalt \m{in}(\timestamp) \bnfalt \m{reboot}                                               
\\
\textit{Mem. mapping} & M & \bnfdef & \m{Loc} \rightarrow \m{Val}
& & & & \bnfalt \m{checkpoint} 
\end{array}
\)
\vspace{-10pt}
\caption{Syntax and Semantic Constructs}
\label{fig:syntax}
\end{figure}
We write $v$ to denote values, which can be
numbers $n$, the boolean values $\m{true}$ and $\m{false}$, and inputs $\m{in}(\timestamp)$, 
representing the input gathered at time $\tau$.
Expressions, denoted $e$, can be variables, values, binary operations
of expressions, or an array element.  Array lengths are fixed 
and all array indices are assumed in 
bounds; this assumption is necessary for correctness and memory safety for real 
C code is orthogonal~\cite{cyclone-regions-pldi02} and beyond our scope.
Instructions, denoted $\iota$, consist of assignments to 
variables and arrays, checkpointing, rebooting, $\m{skip}$, and synchronous input operations $\m{IN}()$. We write $\omega$ to denote the set of
non-volatile variables and arrays that must be saved with a
checkpoint to avoid inconsistency. 
We call these variables \emph{checkpointed locations}. In the example
in Figure~\ref{fig:rio-example}, $\m{checkpoint}(\{w,y,z\})$ would
precede the if statement on line 1, which include both WAR and RIO
variables. \limin{check if these variables are correct} \brandon{Why
  don't we put the variables in the ckpt expression in the figure?  It
  would be clearer and we could make direct reference.  I suppose when
  we add rio, then the checkpoint has to change, so that's one reason
  not to.} We write $a^n$ to represent all the locations in the array
$a$. That is: each $a^n$ in $\omega$ represents the set of locations
$\{a[1], \cdots, a[n]\}$. We often omit the bounds $n$ and write $a$
directly. We assume that checkpoint operations are manually inserted
into code (e.g., like DINO~\cite{dino}).  Appendix~B.2
 details the algorithm to
compute $\omega$ for WAR variables, as in existing systems.
Section~\ref{sec:rio-collection} describes our novel algorithm for
computing $\omega$ for RIOs.
A program is a command $\cmd$, which is an atomic instruction, a
sequence of instructions, or an if branching statement. We lift all the
branches to the top-level for ease of explanation. Any program with general
branching statements can be re-written to our language and bounded loops can be un-rolled to
{\tt if} statements. 

\paragraphb{Semantics for Intermittent Execution} 
We focus on intermittent execution semantics. The rules 
for continuously-powered execution semantics are standard and can be found in Appendix~A.2.  
 First, we
define the necessary runtime constructs in Figure~\ref{fig:syntax}. Memory is a mapping
from a location, which is either a variable or an array index, to a
value.  We distinguish between volatile and non-volatile memory, which are disjoint. 
The method of specifying where a variable resides varies; systems may provide 
abstractions~\cite{alpaca,capybara}, do automatic compiler analysis~\cite{chinchilla}, or assume that all data is non-volatile~\cite{ratchet}.
A configuration $\Sigma$ is a tuple consisting of 
a
timestamp $\tau$, a checkpoint context $\context$, 
 non-volatile memory state, volatile memory state, and a command to be
 executed $\cmd$. 
The checkpoint context $\context$ consists of the non-volatile data, volatile data, and 
command saved at the last checkpoint. The timestamp is the logical
time at the current configuration. 
Executing commands and evaluating expressions generate observations, 
which are memory reads, input reads, or occurrences of a checkpoint or reboot instruction. These observations 
are used for facilitating definitions of correctness.

The semantic rules are of the form:
$(\timestamp, \context, \nvmem, \vmem, \cmd) \Stepsto{O}
(\timestamp',\context', \nvmem', \vmem', \cmd')$, where $O$ is a list
of observations. We write $\Stepsto{}$ to denote an intermittent execution, 
and $\SeqStepsto{}$ to denote a continuous execution. 
We show rules specific to intermittent execution 
in Figure~\ref{fig:semantic-rules}.
\begin{figure}[t]
\begin{mathpar}
\small
\inferrule[I/O-CP-CkPt]{ }{
  (\timestamp,\context, \nvmem, \vmem, \m{checkpoint}(\omega);\cmd) 
 \Stepsto{\m{checkpoint}} (\timestamp + 1, (\proj{\nvmem}{\omega}, \vmem, \cmd), \nvmem, \vmem, \cmd) 
}
\and
\inferrule[I/O-CP-PowerFail]{ \m{pick}(n)}{
  (\timestamp, \context, \nvmem, \vmem, \cmd) 
 \Stepsto{}  (\timestamp + 1, \context, \nvmem, \resetm(\vmem), \m{reboot}(n)) 
}
\and
\inferrule[I/O-CP-Reboot]{ \context=(\nvmem, \vmem, \cmd)}{
  (\timestamp, \context, \nvmem', \vmem', \m{reboot}(n)) 
 \Stepsto{\m{reboot}}  (\timestamp + n, \context, \nvmem'\lhd\nvmem, \vmem, \cmd) 
}
\end{mathpar}
\vspace{-10pt}
\caption{Selected semantic rules}
\label{fig:semantic-rules}
\end{figure}
The rule \rulename{I/O-CP-CkPt} states that at a checkpoint, the
context $\context$ is updated to include the portion of the current
non-volatile memory whose domain is $\omega$, the current state of
volatile memory, and the current command to be executed. 
We write $\proj{m}{\omega}$ to denote the part of $m$, whose domain is
 $\omega$. 
The system generates the
$\m{checkpoint}$ observation and proceeds to execute the command
after the checkpoint. This operation is assumed to be atomic, implemented 
 at the low-level with an atomic flag update and double buffering to ensure there is always a valid last checkpoint.
The \rulename{I/O-CP-PowerFail} can execute at any
step, as power failures can occur at any time, and is thus non-deterministic.
When power fails, volatile memory is reset, the current
command is lost, and a positive integer $n$ is picked at random for the subsequent
reboot instruction. The system then transitions to reboot.
On reboot, \rulename{I/O-CP-Reboot}
partially restores non-volatile memory using the checkpoint. This rule applies even 
before the first checkpoint is reached 
as $\context$ is a piece of memory initialized with the starting $\vmem$ and $\cmd$ and empty non-volatile portion.
We write $m_1\lhd m_2$ to denote the memory
resulting from updating locations in $m_1$ with the values of those locations in $m_2$.
The reboot restores volatile memory and the command to 
the values in the checkpoint's context. 
A $\m{reboot}$ is added to the observation sequence, and the timestamp increases to
$\timestamp + n$. The increase captures the idea that a power failure can have an
arbitrary duration. 

We write $N, V \vdash e \Downarrow_{r} v$ to mean that with memories $N$
 and $V$, expression $e$ evaluates to value $v$ with observation
 $r$.  For example, $N_{\mt{0f}}, V \vdash w \Downarrow_{\m{rd}\, w\,
   4} 4$ is a sub-derivation when line 2 of the program in
 Figure~\ref{fig:rio-example} is executed.
The rules are standard, so we omit them for space.

 \section{Formally Correct Intermittent Execution}
\label{sec:inputs}

\begin{figure}[t!]
  \centering
  \includegraphics[width=\textwidth]{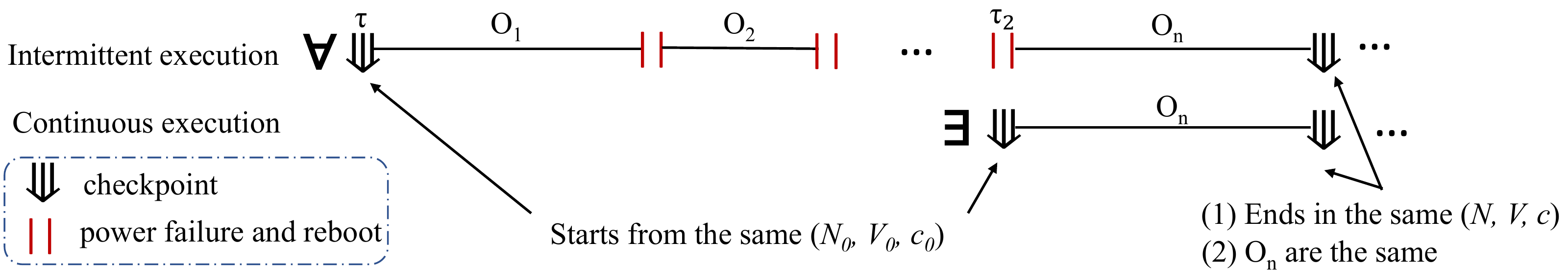}
\vspace{-20pt}
  \caption{Illustrating the correctness definition}
  \label{fig:correctness-rio}
\end{figure}
A program can be executed correctly in an intermittent model if and only if
any completed intermittent execution trace of the program corresponds to a
continuous execution, w.r.t. the program context and the observation sequence. 
More precisely, comparing an intermittent and continuous execution, the program 
contexts, including volatile and non-volatile memory and the command to be executed 
must be the same by the end of the program. To show this, we examine execution 
segments between checkpoints.
Each partial re-execution
from a checkpoint can observe a different value produced by the same
input.  Consequently, observation sequences from partial executions are {\em
not necessarily} prefixes of the same continuous execution. 
To be correct, the observation sequence of the \emph{ final} re-execution segment in an
intermittent execution must match a continuous execution with the same input
results. 
We illustrate this in Figure~\ref{fig:correctness-rio}. Time advances
from left to right. The top line  is an intermittent
execution trace. We detail a segment between two checkpoints, marked
by down arrows. Multiple power failures and reboots are present in
these segments, demarcated by red parallel bars. The observed memory reads
$O_i$ are shown on top of the line. For each such intermittent
execution, the correctness property dictates the existence of a
continuous execution---the second line---such that the read accesses 
from the latest reboot to the checkpoint ($O_n$) match the read accesses
of that continuous execution. Furthermore, the ending configuration of both
executions at the checkpoint are the same (excepting
the extra context $\context$ in the intermittent configuration). 
The above holds for all execution segments, including the last.

To formalize this definition, we introduce additional notation and 
constructs to relate intermittent and continuous 
program contexts and observation sequences. We define the erasure of the configuration: $\erase{(\tau, \context,
\nvmem, \vmem, \cmd)} = (\tau,  \nvmem, \vmem, \cmd)$ to 
relate the configurations at the checkpoints. 
We next formally relate the observation sequences.
\begin{mathpar}
  \small
\inferrule[I-Rb-Base]{ }{
  O \oleqm O
}
\and 
\inferrule[I-Rb-Ind]{ O_1' \oleqm O_2}{
  O_1, \m{reboot}, O_1' \oleqm O_2
}
\and
\inferrule[Cp-Base]{ O_1\oleqm O_2}{
  O_1 \oleqmc O_2
}
\and
\inferrule[Cp-Ind]{O_1 \oleqm O_2 \\ O_1' \oleqmc O_2'}{
  O_1, \m{checkpoint}, O_1' \oleqmc O_2, O_2'
}

\end{mathpar}
The rules use $\oleqm$ and $\oleqmc$ to express the prefix requirements of the
observation sequence of the intermittent execution ($O_1$) to the observation of the continuous execution ($O_2$).  
$\oleqm$ expresses a relation 
between an $O_1$ that may include reboots to $O_2$, 
and $\oleqmc$ expresses a relation between an $O_1$
that may include both reboots and checkpoints to $O_2$. 
The crucial aspect of the observation relation is in rule \rulename{I-Rb-Ind}. An 
intermittent observation consisting of two observation prefixes $O_1$ and $O_1'$ separated by a reboot 
relates to the continuous observation $O_2$ if the latter prefix relates to the continuous observation. 
The intuition of this rule is that the observation prefix of an intermittent execution before a reboot 
may read values produced by input operations. After the reboot, the input 
operations may return different values, so the old prefix should be discarded. The 
intermittent and continuous executions need only agree on observations 
after the most recent reboot. 

 The correctness of an intermittent execution model is defined as follows:
\begin{defn}[Correctness of Intermittent Execution]
  \label{defn:io-correctness}
  ~\\
  A program $\cmd$  can be
 correctly intermittently executed if 
 for all $\timestamp$,
  $\nvmem$, $\vmem$, $O_1$ s.t.
  $(\timestamp, \emptyset, \nvmem, \vmem, \cmd) \MStepsto{O_1}
  \Sigma$, where the program in $\Sigma$ is $\m{skip}$ (i.e., the
  program terminated), 
 then $\exists O_2, \timestamp_2, \sigma$ s.t.
  $(\timestamp_2,\nvmem, \vmem, \cmd) \MSeqStepsto{O_2} \sigma$,
  $\timestamp_2\geq\timestamp$, $O_1\oleqmc O_2$, and
  $\sigma = \erase{(\Sigma)}$.
  \end{defn}

Figure~\ref{fig:correctness-rio-example} illustrates the relations in
the correctness definition by
revisiting the code from Figure~\ref{fig:rio-example}. Assume there is
a checkpoint immediately preceding the branch on $a$, which saves the state of
$\{w,y,z\}$.
Consider a power failure after the assignment to $y$ on line 9, which lasts for 4 timestamps. The
column on the left shows the intermittent execution state and the right shows
the continuous execution state, starting at a later time, time 8.  The final state of the executions ($N_5'$ and
$N_5$) are equal, despite differences in their execution paths and intermediate
states. Further, the observation sequences relate: $\m{checkpoint}, \m{in}(1), \m{reboot}, \m{in}(9), \m{rd}\,z\, 3 \oleqmc
\m{in}(9), \m{rd}\,z\, 3 $.  There happens to be no
reads before the reboot. If there were, they would not need to match
the reads on the right, as they are not in the final (successful)
re-execution. 
To ensure this correctness property, a checkpointed set must include
{\em both} WAR and RIO variables, which we will explain in the next section.

\begin{wrapfigure}{L}{0.53\textwidth}
  \centering
  \includegraphics[width=0.53\textwidth]{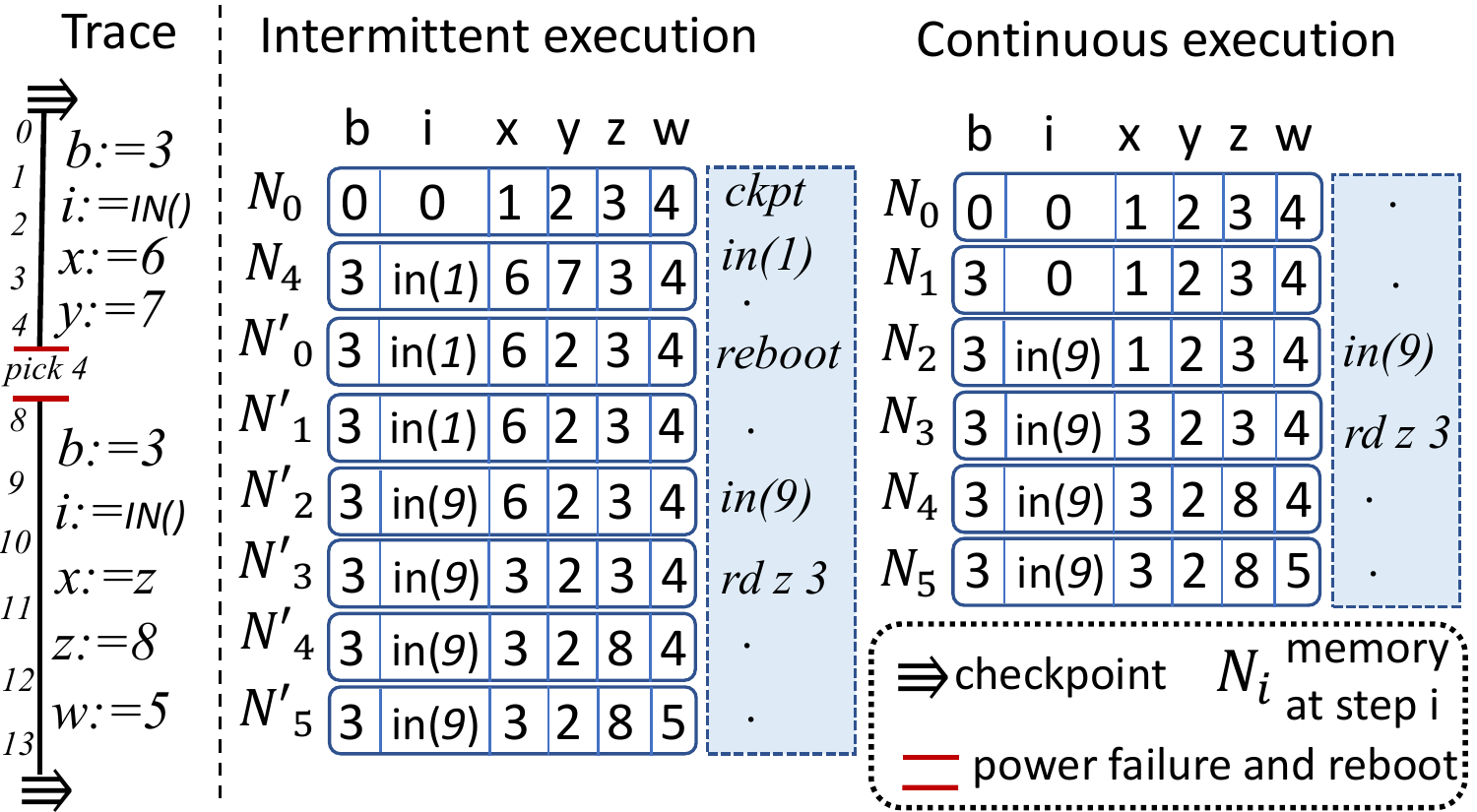}
 \vspace{-20pt}
  \caption{Illustrating the correctness definition by example}
  \label{fig:correctness-rio-example}
\end{wrapfigure}

\section{Proving Memory Consistency}
Non-volatile memory updates of an intermittent execution can diverge from non-volatile memory updates of a continuous execution in two key ways. 
A continuous execution updates memory through the writes of a single execution of the program. An intermittent execution 
updates memory through multiple executions of a prefix of the program and by overwriting some set of non-volatile memory locations after 
a reboot. To be correct, an intermittent execution model must ensure that any inconsistencies caused by these different update traces 
have resolved by the next checkpoint. In this section, we define for the first time invariants on the checkpointed set and non-volatile memory updates 
that allow us to prove an intermittent execution model correct.
A runtime system that upholds these invariants will provably update memory consistently, no matter the algorithm used in implementation.

\subsection{Locations to Checkpoint}
\label{sec:rio-check}

The correctness of an intermittent execution model requires restoring at reboot
a set of non-volatile memory locations $\omega$ (e.g., $\{y, z, w\}$ in Figure~\ref{fig:correctness-rio-example}). 
We refer to the minimal set of potentially inconsistent non-volatile
memory locations as $\omega^{\dagger}$ and observe that precisely computing this
set is difficult in general. 
It is safe to over-approximate $\omega^{\dagger}$.  
One safe over-approximation is all of
non-volatile memory, $\omega^\mt{all} = \m{dom}(N)$, but $\omega^\mt{all}$ is inefficient
because it requires unnecessarily checkpointing many variables.  Existing systems 
use
ostensibly less conservative over-approximations, such as a subset
of the variables
involved in WAR dependence,
$\omega^\mt{WAR}$~\cite{dino,alpaca,ratchet,clank,chinchilla}, but as 
we have shown, this set misses any variables that are inconsistent 
due to RIOs.   
Next, we describe how to to check that all potentially inconsistently written locations are
 checkpointed. We then define an algorithm to statically analyze the program
 and add those locations to the checkpoint in Section~\ref{sec:rio-collection}.

To express that a checkpoint includes 
the subset of potentially inconsistent variables, we introduce 
two judgments:  $\Vdash_\war c:\m{ok}$, which checks 
variables inconsistent due to WARs and $\Vdash_\rio c:\m{ok}$ 
which checks variables inconsistent due to RIOs.
We formally define rules for this judgment here. Rules for the WAR checking judgment are in Appendix~B.1.

 Our checking algorithm leverages taint-tracking to find branch operations that depend on an
 input, then ensures that if a non-volatile location is written on any
 path of such a branch, that location is either written on all paths of
 the branch or is in the checkpointed set. 
 We call variables that are written on all paths regardless of inputs {\em
   must-write variables}.

 There are two top-level judgments for commands, depending on whether control is currently input-dependent --- i.e., \emph{tainted} --- or not: 
 $N;M\Vtaint c: \m{ok}$ and $N;I;M\Vdash_\rio c: \m{ok}$ respectively. Likewise there are two judgments for instructions: 
  $N;M' \Vtaint \iota:\m{ok}$ or $N;I;M \Vdash_\rio \iota:\m{ok}$. $N$ is the set of
 checkpointed variables, $I$ is the set of variables that depend on inputs
 (control and data dependence), and $M$ is the set of variables written prior to
 executing $c$. The $\Vtaint$ judgment does not use $I$ as the judgment itself carries the information that 
 control is tainted. The $\Vtaint$ check for instructions ensures that all writes access 
 checkpointed variables in $N$ or are must-write variables, which
 captures how RIOs' effects may transitively taint variables through
 dependences.  We show rule \rulename{RIO-Assign-tainted} as an example.

 \begin{mathpar}
 \small

 \inferrule[RIO-Cp]{\omega;\emptyset;\emptyset \Vdash_\rio \cmd: \m{ok}}{
   N;I;M \Vdash_\rio \m{checkpoint} (\omega);\cmd : \m{ok}
 }

 \and
  \inferrule[RIO-If-NDep]{I \cap rd(e) = \emptyset 
 \\
 N;I;M \Vdash_\rio \cmd_i: \m{ok}   \\ i \in [1, 2] }{
  N;I;M \Vdash_\rio \m{if}\ e\ \m{then}\ \cmd_1 \m{else}\ \cmd_2: \m{ok} 
 }
 \and
 \inferrule*[left=RIO-If-Dep]{I \cap rd(e) \neq \emptyset 
 \\
   M \Vdash^{\mt{mstWt}} \m{if}\ e\ \m{then}\ \cmd_1\ \m{else}\ \cmd_2:M' 
   \\\\   N;  M' \Vtaint \cmd_i: \m{ok} ~i \in [1, 2]}{
     N;I;M \Vdash_\rio \m{if}\ e\ \m{then}\ \cmd_1\ \m{else}\ \cmd_2: \m{ok} 
 }
\and
 \inferrule[RIO-Assign-tainted]{ x \in (M \cup N)}{
     N;M \Vtaint x:= e : \m{ok} 
   }
 \end{mathpar}

 We explain selected rules for commands. Rule \rulename{RIO-Cp}
 applies to a command starting with a
 checkpoint and checks the remaining command $\cmd$ using the checkpoint's
  checkpointed set $\omega$ and an empty $I$ and $M$.
 Rule \rulename{RIO-If-NDep} checks a branch that is input-independent.
 \rulename{RIO-If-Dep} checks an input-dependent branch 
 , identifying its must-write variables up to the next
 checkpoint using auxiliary judgment $M \Vdash^\mt{mstWt} c: M'$ (not shown). The resulting $M'$ includes variables written on the path up to 
 the branch (\{i, b\} in the example) and any variables that must be written on all paths from the branch ($\{x \}$ in the example). 
 The key difference between these two rules is that \rulename{RIO-If-NDep}
 checks its sub-commands with $\Vdash_\rio$ and the must-write set $M$, whereas 
 \rulename{RIO-If-Dep} checks its sub-commands with $\Vtaint$ and $M'$. The intuition for this 
 is that an input-independent branch will always evaluate the same way. Any variable written will be 
 written on any re-execution, even if it is not written on all paths. 
 An input-dependent branch may not take the same path, however, so the sub-command must be checked with the 
 must-write set of all paths from the branch. 

 Returning to Figure~\ref{fig:rio-example}, the rules check the if statement at
 line 7 using judgment $\Vtaint$. The $M'$ is $\{i, b, x\}$. To satisfy
 the check, $N$ must include $y$, $z$, $w$. 

\subsection{Defining the Effect of Input on Execution Prefixes}
Input can cause an intermittent execution's memory state to vary across re-executions, behaviour impossible 
on a continuous execution.
We introduce notation showing how input interacts with a program execution.
 $O|_{\m{in}}$ denotes the sequence of input values in
observation sequence $O$. Let "trace" refer to the sequence of execution states with observations annotated on top of each transition generated 
by an intermittent execution. 
We define $\runof{\sigma,\inputs, \cmd}$ to be a trace starting at
$\sigma$, ending in command $\cmd$ with the input sequence
$\inputs$ and without checkpoints. Formally: 
$$\runof{\sigma,\inputs,\cmd} = \{T \bnfalt T = \sigma{\MSeqStepsto{O}}
(\timestamp,  \nvmem, \vmem, \cmd)\land
O~\m{contains~no~checkpoints}\land\inputs = O|_{\m{in}}\}
$$

We write $\runof{\sigma,\inputs, \mt{CP}}$ to denote the trace ending in the
nearest checkpoint.  Note that given a set of inputs, $\runof{\sigma,\inputs,\cmd}$ is always a
singleton set with a uniquely determined trace: 
inputs already executed from the {\em previous} checkpoint to the current execution point are fixed, and any 
execution to the current point with those same inputs will yield the same trace. There are, however, multiple
possible traces from an arbitrary point $\cmd$ to the {\em next checkpoint} due to inputs yet to execute. 

We write $\mt{Wt}(T)$ to be the write set of a trace and $\mt{FstWt}(T)$ to be the 
set of variables written before they are read in a trace.
We define the set of locations that must be written on any input:
\[\begin{array}{ll}
\mt{MstWt}(N, V, \cmd) = \{\loc \bnfalt \forall\tau, N, V, \cmd, \forall \inputs, \forall T \in \runof{(\timestamp, N, V, \cmd),\inputs,\mt{CP}}, 
\loc \in \mt{Wt}(T)\}
\end{array}
\]
The must-write set of the example program is $\{b, x\}$. 
We define the set of locations that must be first written before being read, no matter the input value, below:
\[
\begin{array}{ll}
\mt{MFstWt}(N, V, \cmd) = 
\{\loc \bnfalt \forall\tau, N, V, \cmd, \forall \inputs, 
\forall T \in \runof{(\timestamp, N, V, \cmd),\inputs,\mt{CP}},
\loc \in \mt{FstWt}(T)
\}
\end{array}
\]
This set contains locations in non-volatile memories that are first written
(not read) on all possible runs through the checkpointed region, starting from
$\cmd$.  We call the set the \emph{must-first-write} set, or $\mt{MFstWt}$. In the example program, the 
$\mt{MFstWt} = \mt{MstWt}$, but the distinction is important as the must-first-write set will not 
include variables that have a (non-write-dominated) WAR dependence.

\subsection{Invariants Relating Memories}
\label{sec:rio-relations}
As shown in Figure~\ref{fig:correctness-rio-example}, the non-volatile memories
are not always the same. To prove the intermittent execution model correct, we
need to identify relations between the intermittent execution and the
continuously-powered execution configurations.  
Therefore, we define relations between memories in intermittent and continuous executions,
as illustrated in Figure~\ref{fig:correctness-war-proofs}.  We relate the
memory states at the same execution point with dashed lines and relate an
intermittent execution's memory state at any point to a continuous execution's
memory at its initial point with solid lines. These relations describe how
memory locations are allowed to differ, while still converging to the same
memory and observation by the next checkpoint, the key invariant for correctness.

\begin{figure}[tb]
  \centering
  \includegraphics[width=1.00\textwidth]{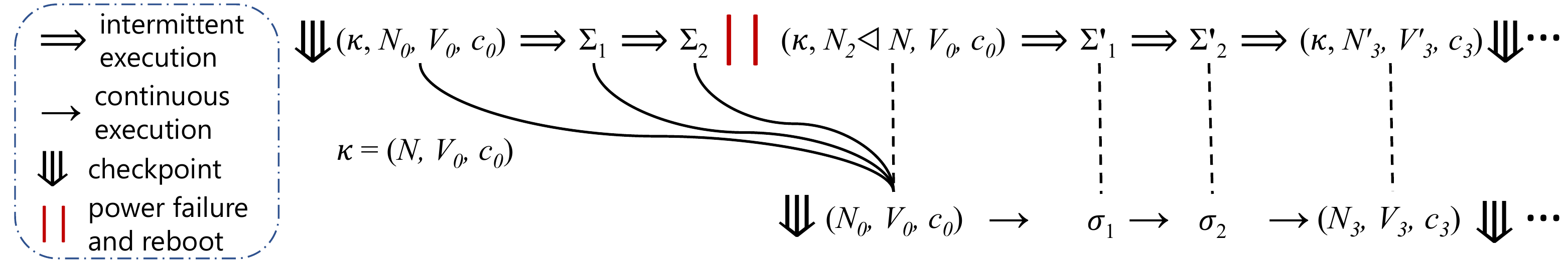}
 \vspace{-20pt}
  \caption{Illustrating the correctness proofs and invariants}
  \label{fig:correctness-war-proofs}
\end{figure}

\paragraphb{Arbitrary-point Memory Relation (Solid Line)}
We define the relation of the memory state of an intermittent execution $\nint$ at an arbitrary point  
to the memory state of a continuous execution $\ncont$ at its initial point:
\begin{defn}[Related memories between current and initial execution point]
\label{def:io-initial}  
  $N_{\mt{ckpt}}, N, V, c \vdash \nint \sim \ncont$ iff $\m{dom}(\nint)= \m{dom}(\ncont)$, 
$\forall \loc \in \nint$ s.t. $\nint(\loc) \neq \ncont(\loc)$, $\loc\in
N_{\mt{ckpt}} \cup \mt{MFstWt}(N, V, \cmd)$
\end{defn} 
For correctness, all locations that differ between $\nint$ and $\ncont$ must be in the
checkpointed set $N_{\mt{ckpt}}$ or in the must-first-write set of the initial
command $\cmd$. 
The checkpoint reverts writes to data in the checkpointed set on each reboot.
Re-execution after a reboot over-writes each variable in the must-first-write
set because {\em every} path to the next checkpoint writes to variables in
that set, regardless of input.
In Figure~\ref{fig:correctness-rio-example}, variables {$b,i,x,y$} differ
between $N_0$ and $N_4$.  
Given the initial memory state, execution always takes the false path from the
first branch and always writes $b$ and $i$.  All paths write $x$ and
it is always first written to.  Finally, $y$ is checkpointed.  Together these
actions reconcile all differences between $N_0$ and $N_4$.

\paragraphb{Same-point Memory Relation (Dashed Line)}
We next define the relation of the memory state of an intermittent
execution $\nint$ and a
continuous execution $\ncont$ at the same point : 
\begin{defn}[Related memories at the same execution point]
  \label{def:io-same_exec}\small
  $\timestamp, N, V, c, c', \inputs \vdash \nint \sim \ncont$ iff

$\m{dom}(\nint)= \m{dom}(\ncont)$ and

$\forall \loc \in \nint$ s.t. $\nint(\loc) \neq \ncont(\loc)$, 
\hspace{-5pt} \begin{itemize}
\item  $\loc\in\mt{MFstWt}(N, V, c)$
\item let $\{T\} = \runof{(\timestamp, N, V, c),\inputs, c'}$ and the last state of $T$ is
  $(\timestamp',  \nint, V, c')$,
 $\loc\in\mt{MstWt}(\nint, V, c')$  and $\loc\notin \mt{Wt}(T)$
\end{itemize}
\end{defn} 
The relation is parameterized with a timestamp $\timestamp$
and an input sequence $\inputs$. The relation uses $\timestamp$ and
$\inputs$ to define the singleton set $\runof{\sigma,\inputs, c'}$ from
the initial point to the current point.
The relation has access to all
parameters because at each point in the execution, all prior input
values are already concrete and timestamps are given.    
Beyond sharing the same domain, the
definition restricts locations that differ between the memories.  If such a
location is not written from the initial execution point to the current one,
then the location must be in the must-first-write set of the entire trace {\em
and } must be written between the current execution point and the next
checkpoint.
The intuition is that differing locations must be written to on all
possible paths through the remainder of the trace for the
intermittent and continuous traces to converge to the same state.
Moreover, re-execution following any path dictated by fresh input values
should not read locations that differ, which would cause
non-idempotent re-execution (i.e., first written to in every
execution). 

In Figure~\ref{fig:correctness-rio-example}, the continuous and intermittent
states differ at corresponding points $N_0$ and $N_0'$, $N_1$ and $N_1'$, and
$N_2$ and $N_2'$.  Starting with $N_0$, the execution writes $b,i,x$ regardless
of input.  
After stepping to states $N_1, N_1'$, $b$ cannot differ because the execution from its initial point to the current point wrote to $b$.  
After stepping to states $N_2, N_2'$, $i$ cannot differ because the execution
from its initial point to the current point wrote to $i$.  
$i$'s written value is the same in both executions because we choose a
continuous execution that reads the input at time $\mathit{9}$, which the
intermittent execution also reads. 
$x$ is not yet written, but 
will be on all paths.  
At states $N_3$ and $N_3'$, all locations must be the same. 
$x,b,i$ have been written between the initial and current execution point and
none are written between the current execution point and the next checkpoint. 
$z$ is written to on the current path, but is not in the must-first-write set
of the entire trace, nor are $w$ and $y$.

\subsection{Proving Correctness}
\label{sec:correctness-proofs}

We prove the following theorem:
\begin{thm}[Correctness]
  \label{thm:io-correctness-top}
  If $\Vdash_\war \cmd: \m{ok}$,
   $\Vdash_\rio \cmd: \m{ok}$ 
  then $\cmd$ can be correctly intermittently executed.
  \end{thm}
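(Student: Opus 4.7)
The plan is to prove Theorem~\ref{thm:io-correctness-top} by structural induction over the intermittent execution trace, viewed as a concatenation of segments demarcated by checkpoints (with the initial state and the final \m{skip} treated as pseudo-checkpoints). The inductive invariant is a simulation statement: at every checkpoint reached by the intermittent execution, there is a continuous execution from the original initial state that reaches the same erased configuration and whose observation sequence is related by $\oleqmc$. Within a single segment, the memory relations from Section~\ref{sec:rio-relations} serve as the witness: the arbitrary-point (solid) relation connects the intermittent memory to the initial memory at the start of the segment, and the same-point (dashed) relation connects the final (successful) re-execution to a freshly chosen continuous execution that replays the same input values observed after the last reboot.

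The proof would rely on four lemmas. First, a preservation lemma showing that if the checking judgments hold for $c$ and a non-reboot step fires under the dashed relation, the relation is preserved at the successor; for the \rulename{RIO-If-Dep} case this uses the fact that tainted writes go either to checkpointed locations or to $\mt{MstWt}$. Second, a reboot lemma establishing that after \rulename{I/O-CP-Reboot} fires, the non-volatile memory $\nvmem'\lhd\nvmem$ satisfies the solid relation with the segment's starting memory; here the hypotheses $\Vdash_\war c:\m{ok}$ and $\Vdash_\rio c:\m{ok}$ guarantee that every location whose current value could disagree with its starting value lies in $\omega$ (and hence is reverted) or in $\mt{MFstWt}$ (and hence will be overwritten before being read). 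Third, a transition lemma showing that when the last reboot has fired and we begin the final re-execution, we can choose a continuous execution starting from the segment's initial state whose input trace matches the intermittent one from the last reboot onward, upgrading the solid relation to the dashed relation along this final re-execution. Fourth, a closure lemma showing that when the next \m{checkpoint} is reached under the dashed relation with $\mt{MFstWt}$ and $\mt{MstWt}$ both satisfied, the two memories coincide and the concatenated observation sequences satisfy $\oleqmc$ via the \rulename{I-Rb-Ind} and \rulename{Cp-Ind} rules.

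The main obstacle will be the reboot lemma in the presence of RIOs. Before a power failure, the intermittent execution may have followed a tainted branch whose guard evaluated differently than the continuous execution's guard will, producing writes to non-volatile locations that lie off the "correct" path. The proof must show that every such write is safe: it hits either $\omega$ (and the checkpoint restore discards it) or $\mt{MstWt}$ (and the post-reboot re-execution overwrites it before any read). This requires a careful induction over the $\Vtaint$ derivation, threading the auxiliary must-write set $M'$ through nested branches so that the set of locations visible to a read on the re-execution path never intersects the set of "dirty" locations inherited from the aborted path. A secondary technical burden is choosing the right continuous execution: because inputs are non-deterministic, the existential $O_2,\tau_2,\sigma$ in Definition~\ref{defn:io-correctness} must be witnessed by replaying exactly the input values observed after the last reboot of the intermittent trace, and the timestamp bookkeeping from \rulename{I/O-CP-Reboot} must be shown to admit a continuous $\tau_2\geq\tau$.

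Once these lemmas are in place, the induction closes cleanly: each segment extends the prefix continuous execution by one checkpointed block, observation prefixes compose via \rulename{Cp-Ind}, and the final terminating segment yields the required $\sigma = \erase{\Sigma}$.
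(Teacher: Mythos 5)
Your proposal matches the paper's proof essentially step for step: the paper also inducts over checkpoints and then over reboots within each segment, uses the solid (arbitrary-point) and dashed (same-point) memory relations with explicit solid-to-dash, dash-to-solid, and solid-after-reboot switching lemmas, and discharges the reboot case via lemmas showing every non-volatile write is either checkpointed or in the must-first-write set (your ``overwritten before any read'' qualifier on $\mt{MstWt}$ is exactly the $\mt{MFstWt}$ condition the paper uses). The key difficulty you identify---threading the must-write set through the $\Vtaint$ derivation for aborted tainted branches---is precisely the content of the paper's ``writes are either must-writes or versioned'' lemmas.
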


We actually prove a stronger
theorem that relates an intermittent execution up to checkpoints to 
a corresponding continuous execution. Only at each checkpoint, are the
memories guaranteed to sync up between the two executions. 

 The proof requires augmenting the semantics with variable taint tracking, dynamically marking all
 input-dependent locations. 
 We leverage standard taint tracking semantics rules and omit them here. 
 The proof follows the structure in Figure~\ref{fig:correctness-war-proofs} and 
  requires the following 
 properties: (1) arbitrary intermittent configurations relate to the continuous
 initial configuration, (2) each intermittent configuration relates to a
 continuous configuration at the same execution point, and (3), after reboot, we
 can switch from the relation illustrated by the solid line to that of the
dashed line and after checkpoint, we can switch from dashed line to
 solid line.

With properties (1) -- (3) established, the proof of Theorem~\ref{thm:io-correctness-top} 
is by induction over the
structure of the intermittent execution trace. First over the number of
checkpoints to show that, from checkpoint to checkpoint, the resulting memories
are the same and memory reads are idempotent. For each segment between
checkpoints, we induct over the number of reboots and use the relations in the
previous section to relate memory at each execution step. 
Note that no existing intermittent execution model checks $\Vdash_\rio \cmd:
\m{ok}$; none meets a reasonable correctness definition in the presence of I/O,
which is one of the key results of this work.  We show the full proofs in
Appendix~F.6.

\section{Collecting Exclusive May-Writes} 
\label{sec:rio-collection}
Given our correctness definition, an intermittent execution model must collect
and checkpoint not only WAR variables, but also RIO variables. 
Our algorithm identifies a (safe) conservative over-approximation of this set.
If a variable might be written on one side of a branch and not the other, and
the branch condition might change from one re-execution to the next due to a
RIO, then the variable should be checkpointed.
In other words, RIO variables are in the \emph{exclusive may-write}
set for some command $\cmd$ -- i.e., the set of variables that may be written
on exclusively one side of some future branch, but that will not be written
unconditionally.
Using the exclusive may-write set, a simple rewriting algorithm can transform a
program with empty checkpoint sets into a program that correctly checkpoints
RIO variables. 

Our algorithm computes exclusive may-write sets, and identifies input-dependent
(or \emph{tainted}) branches. 
Given a branch $\m{if}\ e\ \m{then}\ \cmd_1\ \m{else}\ \cmd_2$, if $e$ is not
(transitively) input-dependent, the branch's outcome is the same on every
re-execution.  The first branch in Figure~\ref{fig:rio-example} is
never taken under $N_{0f}$,
and the write to $b$ happens regardless of line 6's input; the write is not 
in the exclusive may-write set.  
If $e$ is input-dependent, the branch outcome depends on input and later writes
are candidates for exclusive may-write. In Figure~\ref{fig:rio-example}, the branch at line 7
is input-dependent and its exclusive may-write set is $\{w,y,z\}$, each of
which are written on one, but not both sides of the branch.
Our algorithm uses
taint analysis to identify input-dependent branches, and adds to $\omega$
exclusive may-write variables for input-dependent branches.  

\paragraphb{Collection per instruction.}
Two sets of rules collect the exclusive may-write set $X$, must-write set
$M$, and input-dependent variable set $I$ for instructions: $X;M;I
\Vdash_\rio \iota: X';M';I'$ and $X;M \Vtaint \iota: X';M'$. The $\Vdash_\rio$ rules apply 
to $\iota$ in commands from an input-independent branch and $\Vtaint$ rules apply to
$\iota$ in commands from an input-dependent branch. The primary distinction for 
instruction level rules is that $I$ does not need to be collected in the $\Vtaint$ rules 
as our branches never merge.
We explain selected $\Vdash_\rio$ rules; rules for
$X;M \Vtaint \iota: X';M'$ are similar with taint tracking removed. 

\begin{mathpar}
\small
\inferrule*[right=I/O-Get]{ }{
    X;M;I \Vdash_\rio x:= \m{IN}() : X;M\cup x;I \cup x
  }
  \and
  \inferrule*[right=I/O-Assign-dep]{ I \cap rd(e) \neq \emptyset}{
  X;M;I \Vdash_\rio x:= e : X;M \cup x;I \cup x
}
 \and
\inferrule[I/O-dep-clear]{ I \cap rd(e) = \emptyset \\ x \in I}{
   X;M;I \Vdash_\rio x:= e : X;M \cup x;I \setminus x
 }
\and

\inferrule[I/O-Arr-loc]{ I \cap rd(e) \neq \emptyset}{
  X;M;I \Vdash_\rio a[e]:= e' : X \cup a;M;I \cup a 
}
\end{mathpar}
All assignments add the variable $x$ to $M$, since $x$ must be written
on the current command. Rule \rulename{I/O-Get} adds $x$ to $I$.
If any assignment has an expression that
reads a value in $I$, the assigned location is also added to $I$, 
as taint propagates to $x$ (rule \rulename{I/O-Assign-dep}). 
Conversely, assigning a location in $I$ to an 
input-independent expression removes that location
 from $I$, effectively clearing its taint (rule
 \rulename{I/O-dep-clear}). 
Propagating taint to an array element would cause the entire
array $a$ to be conservatively tainted. 
If an array index is tainted, then $a$ is added to $X$ because 
the written array element may differ in each re-execution (rule
\rulename{I/O-Arr-loc}).  

\paragraphb{Collection for commands}
Two sets of rules define collection and rewriting for commands:
$X;M;I \Vdash_\rio \cmd \SeqStepsto{} \cmd': X'$ and
$X;M \Vtaint \cmd \SeqStepsto{} \cmd': X';M'$, with the same
distinction as the $\iota$ rules between $\Vdash_\rio $ and $\Vtaint$. 
These rules compute the
exclusive may-write set $X'$ and must-write set $M'$ up to a checkpoint in $\cmd$ 
and rewrite the checkpoint command to use the collected $X'$ as $\omega$.
Rewriting an instruction $\iota$ directly uses the rules we introduced
in the previous paragraph to collect relevant variants (e.g, $X$, $M$) and rewrites of itself. 
Much of the complexity for commands is collecting exclusive
may-write and must-write sets from (nested)
branches.  We show key rules:
\begin{mathpar}
\small
\mprset{flushleft}

\inferrule*[right=I/O-If-Dep]{I \cap rd(e) \neq \emptyset 
\quad  X;M\Vtaint \cmd_i \SeqStepsto{} \cmd_i': X_i;M_i   
~~ i \in [1, 2]}{
X;M;I \Vdash_\rio \m{if}\ e\ \m{then}\ \cmd_1\ \m{else}\ \cmd_2 \SeqStepsto{ } 
\m{if}\ e\ \m{then}\ \cmd_1'\ \m{else}\ \cmd_2': (X_1 \cup X_2 \cup M_1 \cup M_2) \setminus (M_1 \cap M_2)
}
\and
\inferrule*[right=I/O-If-Tainted]{
\emptyset;M \Vtaint \cmd_i \SeqStepsto{} \cmd_i': X_i;M_i \\ i \in [1, 2]}{
X;M\Vtaint \m{if}\ e\ \m{then}\ \cmd_1\ \m{else}\ \cmd_2 \SeqStepsto{ } 
\m{if}\ e\ \m{then}\ \cmd_1'\ \m{else}\ \cmd_2'
\\ \qquad\qquad\quad : (X_1 \cup X_2 \cup M_1 \cup M_2) \setminus (M_1 \cap M_2) ;  (M_1 \cap M_2)
}
 \and
 \inferrule*[right=CP-Tainted]{\emptyset;\emptyset;\emptyset \Vdash_\rio 
 \cmd \SeqStepsto{} \cmd': X'}{
   X;M \Vtaint \m{checkpoint()};\cmd \SeqStepsto{}
   \m{checkpoint(X')};\cmd' : X; M
 }
\end{mathpar}

\rulename{I/O-If-Dep} applies to an input-dependent
branch encountered when control is not yet tainted. 
The rule switches from $\Vdash_\rio$  to $\Vtaint$ in the premises 
because the condition expression is input dependent. 
The exclusive may-write set is the union of exclusive may-write and must-write sets from each
side of the branch, minus the intersection of the must-write sets. 
\rulename{I/O-If-Tainted} applies to branches encountered while in $\Vtaint$
This rule must also collect the branch's must-write set,
which is the intersection of the must-write sets from both sides of the branch,
unioned with the must-write set from before the if statement. 

\begin{wrapfigure}{R}{0.50\textwidth}
  \centering

  \includegraphics[width=0.48\columnwidth]{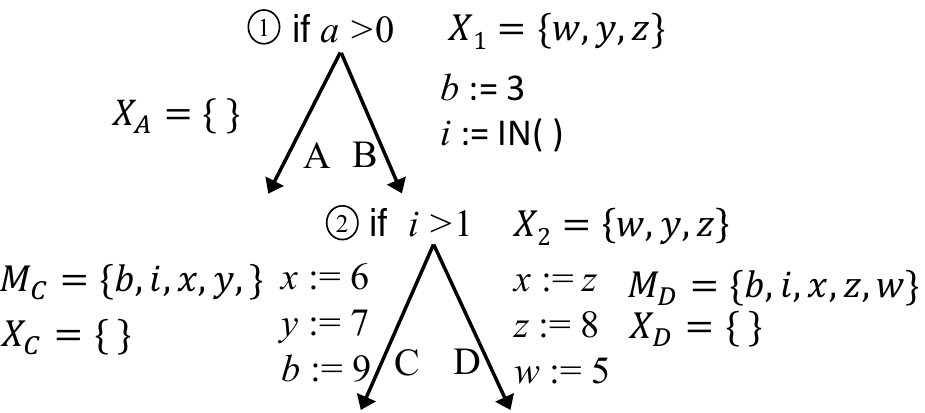}

\vspace{-10pt}
\caption{Example of exclusive may-write set collection.}
  \label{fig:exclusive-may}
\end{wrapfigure}

\paragraphb{Exclusive-May-Write Collection Example}
Figure~\ref{fig:exclusive-may} illustrates exclusive may-write collection,
using code from Figure~\ref{fig:rio-example}.  We letter each branch outcome
path and number each branch instruction.  Collection begins in the
$\Vdash_\rio$ rule, as the branch is not tainted.  For path $A$, $X_A$ is
$\emptyset$.  For path $B$, the must-write set is $\{b,i\}$ before branching on
$i$.  We apply $\Vtaint$ rules to paths $C$ and $D$ because $i$ is
input-dependent.  
$X_C$ and $X_D$ are empty, $M_C = \{b,i,x,y\}$, and $M_D = \{b,i, x,z,w\}$.
Then the must-write set for branch 2 is the intersection of $M_C$ and $M_D$,
which is $\{b, i, x\}$. The may-write set of paths $C$ and $D$ is the union of
the segments' $X$ and $M$ sets: $\{b,i,w,x,y,z\}$.  The exclusive may-write set
removes variables that must be written on both paths (i.e., $\{b,i, x\}$); so
$X_2=\{w,y,z\}$.  Collection propagates $X_B=X_2$ and the final exclusive
may-write set for branch 1 is the union of those for paths $A$ and $B$: $X_1=
\{w,y,z\}$. For this code snippet, $\{w,y,z\}$ have to be checkpointed.

\paragraphb{Inserting $\omega$}
At a checkpoint, collection inserts the exclusive
may-write set returned from rewriting $\cmd$ as $\omega$ (rule \rulename{CP-Tainted}).
 The rules re-write the command after a checkpoint using the judgment
 $\Vdash_\rio$, {\em even if control is tainted}.
 This re-writing is correct because after checkpointing, the input operation
 will never be re-executed and execution is
 deterministic until the next input operation. 

\paragraphb{Correctness of the Algorithm}
We prove that the collection rules are safe with regard to the
checking rules, which is a condition in Theorem~\ref{thm:io-correctness-top}.
\begin{lem}
If $~\Vdash_\rio \cmd \longrightarrow \cmd': X$ then $\Vdash_\rio
\cmd': \m{ok}$. 
\end{lem}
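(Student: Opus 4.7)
The plan is to prove the lemma by induction on the derivation of the collection judgment $\Vdash_\rio \cmd \SeqStepsto{} \cmd': X$. Because the collection rules are mutually recursive with the $\Vtaint$ rewriting rules (through \rulename{I/O-If-Dep} and \rulename{I/O-If-Tainted}), and similarly the checking judgments $\Vdash_\rio$ and $\Vtaint$ are mutually recursive, the induction must be carried out simultaneously on both. I will therefore prove a strengthened statement by mutual induction:
\begin{enumerate}
\item if $X;M;I \Vdash_\rio \cmd \SeqStepsto{} \cmd': X'$, then for any $N$ and $M_0$ with $X \subseteq N$ and $M \subseteq M_0$, $N; I; M_0 \Vdash_\rio \cmd': \m{ok}$;
\item if $X;M \Vtaint \cmd \SeqStepsto{} \cmd': X'; M'$, then for any $N$ and $M_0$ with $X \subseteq N$ and $M \subseteq M_0$, $N; M_0 \Vtaint \cmd': \m{ok}$.
\end{enumerate}
The monotonicity in $N$ and $M$ is needed because the checking rules only care that $\omega$ is \emph{large enough} and that must-write sets are \emph{at least} the ones witnessed. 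The top-level lemma then follows by specializing to the empty initial context.

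The routine cases are the instruction-level and sequencing rules, where the checking side effectively mirrors the collection side: assignments that add to $M$ in collection are also the writes that checking sees, and taint propagation in $I$ is identical on both sides. The interesting cases concern control flow and checkpoints. In \rulename{CP-Tainted} (and its untainted analogue), collection restarts with empty $X, M, I$ on the continuation $\cmd$, obtains an exclusive may-write set $X'$, and emits $\m{checkpoint}(X')$. To discharge the checking premise for the rewritten $\m{checkpoint}(X');\cmd'$ we invoke the induction hypothesis with $N := X'$, which immediately gives $X'; \emptyset; \emptyset \Vdash_\rio \cmd' : \m{ok}$, matching what \rulename{RIO-Cp} requires.

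The main obstacle, and where the proof really has to work, is the branch cases \rulename{I/O-If-Dep} and \rulename{I/O-If-Tainted}. Here I must show that the collected set $(X_1 \cup X_2 \cup M_1 \cup M_2) \setminus (M_1 \cap M_2)$ is sufficient to make the check on each sub-branch go through when the surrounding context switches into $\Vtaint$ via \rulename{RIO-If-Dep}. Concretely, after an input-dependent branch the checking rule demands that every write on either sub-command either lies in $N$ or is a must-write of the whole branch. The must-write set delivered to the checker by \rulename{RIO-If-Dep}'s auxiliary $\Vdash^{\mt{mstWt}}$ judgment is $M_0 \cup (M_1 \cap M_2)$, exactly the intersection of writes common to both sides. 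Writes that do \emph{not} sit in this intersection are precisely those in $(M_1 \cup M_2) \setminus (M_1 \cap M_2)$, and every such write is included in the collected set $X'$; inductively, writes that were already in $X_i$ (e.g.\ from a deeper tainted branch) are also included. So the checker's obligation is met at exactly the places collection promises. A short auxiliary lemma is convenient here: $X;M \Vtaint \cmd \SeqStepsto{} \cmd': X';M'$ implies that every non-volatile write performed by $\cmd'$ lies in $M' \cup X' \cup M$, which lets me relate the $(X_i, M_i)$ produced by collection to the writes that checking will eventually observe.

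Once the branch cases are handled, the remaining cases (sequencing, non-dependent branches, bare instructions, and the taint-clearing assignment) are direct, using monotonicity of the hypotheses in $N, M, I$ and the fact that the instruction-level collection rules and instruction-level checking rules impose the same side conditions (e.g.\ \rulename{RIO-Assign-tainted} requires $x \in M \cup N$, which holds because collection just added $x$ to $M$). The top-level lemma then falls out as the special case of the strengthened statement.
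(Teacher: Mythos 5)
Your overall strategy is the same as the paper's: a mutual induction over the collection derivations for $\Vdash_\rio$ and $\Vtaint$, a strengthened statement that is monotone in the checkpointed set, an auxiliary lemma identifying the collector's must-write output with the checker's $\Vdash^{\mt{mstWt}}$ result, and the input-dependent branch as the crux. However, your strengthened statement has the quantification on the wrong side of the judgment, and as written the induction does not go through. You require $X \subseteq N$ and $M \subseteq M_0$ where $X$ and $M$ are the \emph{inputs} to the collection judgment; the paper instead requires $N \supseteq X'$ and (for the tainted case) $N \cup M_c \supseteq X' \cup M'$, i.e., containment of the \emph{outputs}. The difference matters because the set that actually gets installed at the checkpoint is the final output $X'$, and the locations the checker must find in $N$ are precisely those that collection \emph{adds} along the way, not those it started with.

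Concretely, two cases break under your statement. First, the instruction case \rulename{I/O-Arr-tainted} produces $X \cup a$ as output, and the corresponding check \rulename{RIO-Arr-tainted} demands $a \in N$; knowing only $X \subseteq N$ gives you nothing about $a$. Similarly, for a tainted assignment $x:=e$ where $x$ is a genuine exclusive may-write (written on one side of the branch only), $x$ lands in the final $X'$ rather than in the branch's must-write set, so the check $x \in M \cup N$ cannot be discharged from $X_{\mathrm{input}} \subseteq N$ and $M_{\mathrm{input}} \subseteq M_0$. Second, the sequencing case: for $\iota;\cmd$ the tail is collected starting from the head's output $X_1 \supseteq X$, so the inductive hypothesis on the tail needs $N \supseteq X_1$, which does not follow from $N \supseteq X$. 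The fix is exactly the paper's formulation: state the claim for all $N \supseteq X'$ (and, for $\Vtaint$, all $N_c, M_c$ with $N_c \cup M_c \supseteq X' \cup M'$), and prove a separate monotonicity lemma ($X' \supseteq X$ and $M' \cup X' \supseteq M \cup X$, together with the invariant that $X$ holds only arrays and $M$ only variables) so that the hypothesis for the whole derivation implies the hypothesis for each sub-derivation. Your prose around the branch case shows you see the right semantic content (every write not in the must-write intersection is in the collected set), but the formal induction hypothesis you propose cannot carry that content through the derivation.
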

The lemma states that if a command has been rewritten using the collection algorithm, 
the rewritten command is safe with respect to RIOs.  
Note that separating the collection algorithm from the checking $\Vdash_\rio
\cmd': \m{ok}$ enables modular proofs. A different collection
algorithm does not change the overall correctness proof as long as it
can be shown to be safe w.r.t. the checking rules. 

 \section{Equivalences Between Systems}
\label{sec:variants}
The checkpoint system presented is based on DINO~\cite{dino}, but many others
exist.  We additionally formalize four alternative implementations---undo
logging~\cite{chinchilla}, redo logging, idempotent regions~\cite{ratchet}, and
a task-based execution model Alpaca~\cite{alpaca}. Instead of reproving the
correctness theorems for each system, we define and prove a bi-simulation
relation between the basic system and each alternative showing that they are
equivalent. While these systems differ in mechanism and performance, the
equivalence result shows that their correctness criteria are the same, allowing us to implement the algorithm in Section~\ref{sec:rio-collection}
for the more performant Alpaca. We
relegate bi-simulation for undo logging and idempotent regions to Appendix~D, as DINO 
implements a conservative form of undo-logging, and idempotent regions differ from the basic 
model by constraints on checkpoint placement. 

\subsection{Redo Logging}
\label{sec:redo-log}
State restoration 
can be implemented with
redo-logging, which
works by logging memory updates during execution and committing
the log to memory upon reaching a checkpoint. A redo logging context $\rlctx$ is of the form $(\ulog, \vmem, \cmd, \omega)$, where $\ulog$ is a log and $\omega$ has 
the same meaning as before.

\begin{wrapfigure}{L}{0.50\textwidth}
  \centering
  \includegraphics[width=0.48\textwidth]{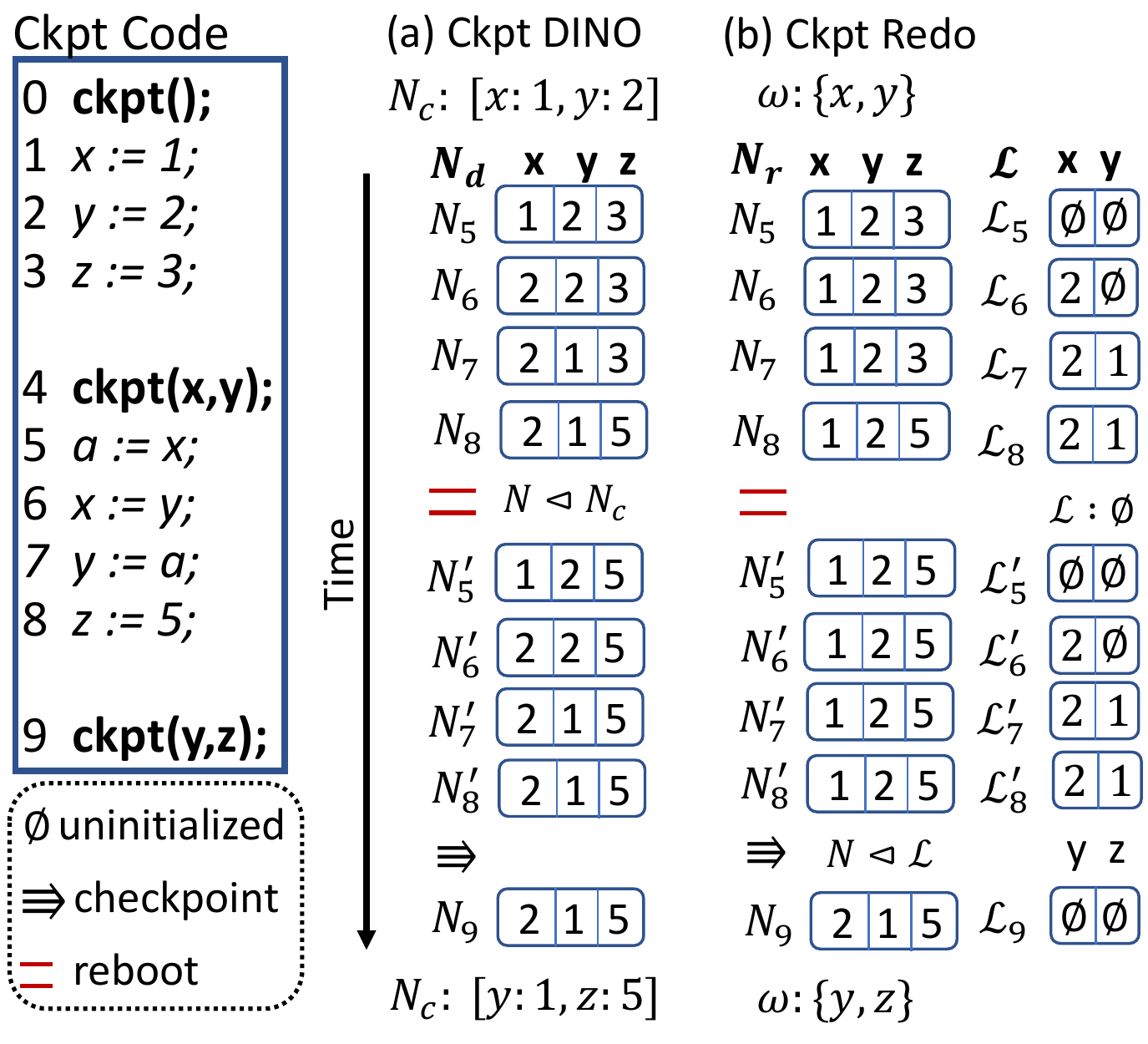}
  \vspace{-5pt}
  \caption{Illustrating the relation of DINO to redo-logging}
  \label{fig:equiv-redo}
\end{wrapfigure}

We illustrate 
the key behaviour of redo-logging and its relation to 
the basic model in Figure~\ref{fig:equiv-redo}. {\tt Ckpt Code} is a simple program 
that initializes the variables $x, y, z$, checkpoints $\{x, y\}$, swaps their values using a volatile variable $a$, and takes another checkpoint 
before continuing with the rest of the program. Columns (a-c) on the right show the program's state at each point in the execution.  Column (a) shows the execution of the basic model starting from the checkpoint at line 4. 
Column (b) shows a redo-logging execution. Volatile memory $\vmem$ and command $\cmd$ are equivalent at each step and omitted.
DINO starts with $\nvmem_c$ containing the values of $x,y$ at the checkpoint. Redo-logging starts 
with an empty log and $\omega = \{x, y\}$. In any checkpoint region, the 
domain of the log will be $\omega$.   
If the program contains an assignment to a non-volatile location in $\omega$ (lines 6,7), the update is placed
directly into the log, leaving non-volatile memory untouched.
Otherwise the variable is updated directly in non-volatile memory (line 8). On reboot, the redo log clears but leaves
non-volatile memory untouched, as all updates to locations in $\omega$
reside in the log only. The basic model updates non-volatile locations in $\omega$ 
to the values from the checkpoint. As the program re-executes, updates to locations in $\omega$ are redone, either directly to 
non-volatile memory (a) or to the log (b). When the program reaches the next checkpoint on line 9, the Redo model applies the log 
to non-volatile memory, committing the changes. 

A key part of the bi-simulation relation between DINO and redo-logging is that if the
value of a location in redo-log non-volatile memory $\nvmem_r$ is not
equal to the same location in DINO's non-volatile memory $\nvmem_d$, then
that location is in the domain of $\omega$ and therefore in the log. The value
in the log is equal to the value in $\nvmem_d$, as both reflect its latest
update. 
Consequently, $\nvmem_d = \nvmem_r \lhd \ulog$. 
At reboots and 
checkpoints $\ulog$ is empty, and
 the non-volatile memory, volatile memory, and command of both models are 
equal. We formalize this relation and prove bi-simulation in Appendix D.4.

\subsection{Task-Based Systems}
\label{sec:tasks}

Task-based systems~\cite{alpaca,capybara,coati,chain, mayfly} require
the programmer to structure an intermittent program as a series of
transaction-like tasks. Updates within a task are not be visible to
other tasks (including re-executions of the same task) until the task
commits. Task semantics rely on either 
undo or redo logging. 
The key
differences between checkpoint- and task-based systems are in their
memory abstraction and control structure.

\[
  \small
\begin{array}{llcl}

\textit{Task Map} & T & \bnfdef & \cdot\bnfalt T, i\mapsto (\omega, \cmd) 
\\
\textit{Instr.} & \iota & \bnfdef & \cdots\bnfalt \m{toTask}(i)

\end{array}
\]

A task-based
program has no checkpoints, and is instead a series of tasks. The context is augmented with a
task map $T$, from task IDs to a checkpoint set $\omega$ and command
for the corresponding task. The context $\tskctx$ for intermittent
execution is a pair $(T, i)$ consisting of the task map and the ID of the
current task. A
special instruction $\m{toTask}(j)$ ends the current
task, transitioning to task $j$ (rule \rulename{TSK-Trans}).

\begin{mathpar}
  \small
  \centering

 \inferrule*[right=TSK-Trans]{ 
  \tskctx = (T,i)
  \\ T(j) = (\omega, \cmd)
  }{
    (\tskctx, \tshared, \tpriv, \tlocal, \m{toTask}(j)) 
   \tskStepsto{\m{transition}}  ((T,j), \tshared \lhd \tpriv, \emptyset, \tlocal, \cmd) 
 } 
\end{mathpar}

The task-system has a memory abstraction of task-shared memory 
$\tshared$, task-local memory $\tlocal$, and task-private
memory $\tpriv$. A programmer assigns variables accessed in
multiple tasks to task-shared memory and variables used only in a
single task to task-local. Task shared memory must be non-volatile,
and task-local can be split into volatile and non-volatile sections,
$\tlocalv$ and $\tlocaln$ respectively.  Task-private variables are hidden from the
programmer and used to implement logging.  A task must initialize task-local variables
by writing them before reading them.  Like basic checkpoints, the
command in each task is well-formed given the checkpoint set:  $\omega\Vdash_\war \cmd: \m{ok}$  and  $\omega\Vdash_\rio \cmd: \m{ok}$.

\[
  \small
\begin{array}{llcl}
\textit{Commands} & \cmd & \bnfdef & \cdots \bnfalt \m{goto}~\ell
\\
\textit{Code context} & \Psi & \bnfdef & \Psi \cdot \bnfalt \ell_i:\m{checkpoint(\omega);\cmd}
\end{array}
\]
\begin{wrapfigure}{R}{0.55\textwidth}
  \centering
  \includegraphics[width=0.53\textwidth]{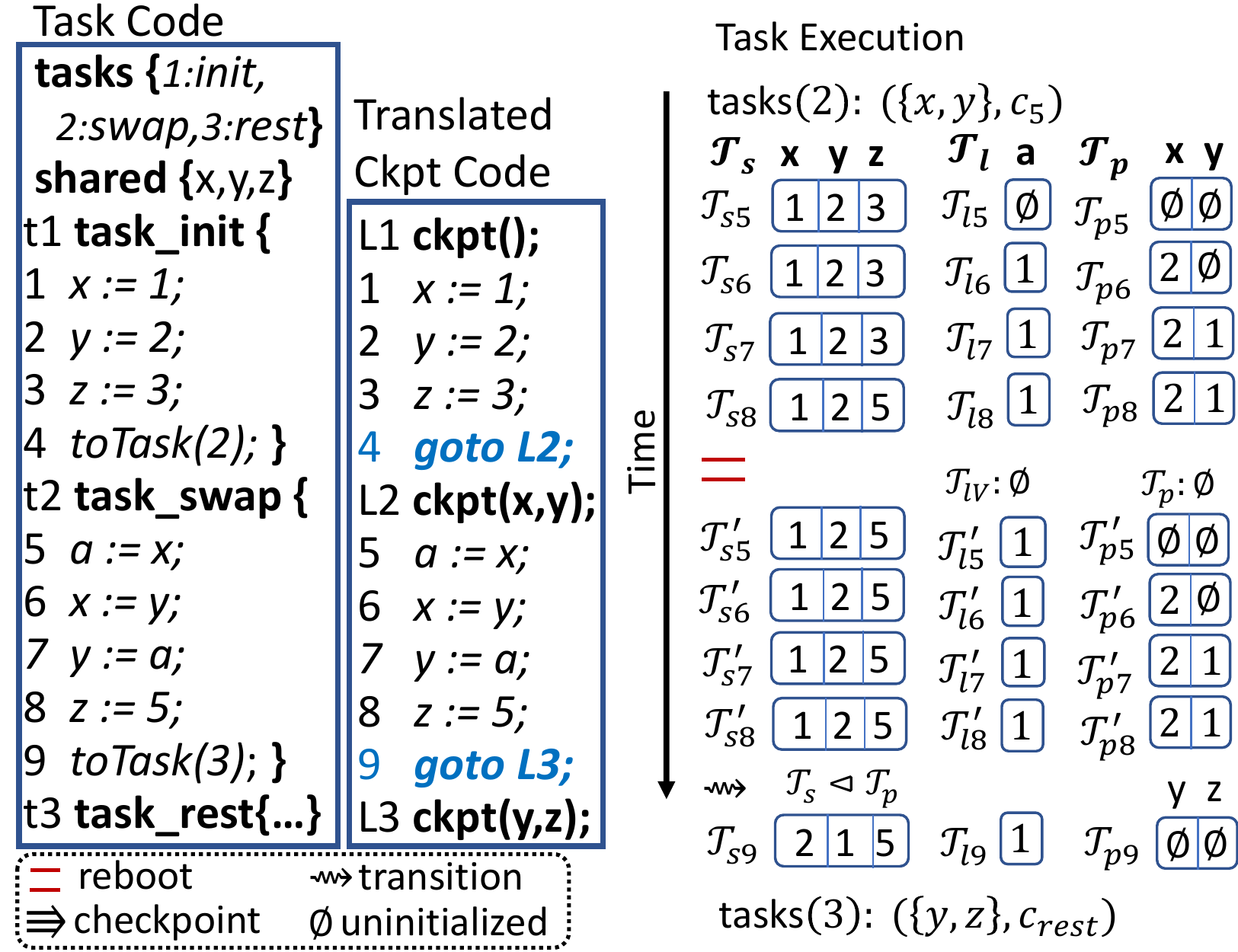}
 \vspace{-5pt}
  \caption{Relating checkpoint redo-logging to tasks}
  \label{fig:equiv-task}
\end{wrapfigure}
    To prove equivalence between a task-based and a checkpoint-based
    system, we first translate task transitions
    to checkpoint commands. We augment the
    basic language with a $\m{goto}$ command and 

a code context $\Psi$ that includes a set of labeled
    program points, each beginning with a checkpoint. 
Correctness still holds; extending the main proofs is trivial as $\m{goto}$
does not change memory.

We specify a translation relation from tasks to a redo-logging code
context, written $T \trel \Psi$. 
The key idea is that each task can
be translated to a checkpoint followed by the translated task command:
$i \mapsto (\omega, \cmd_t) \trel
\ell_i:\m{checkpoint}(\omega);\cmd_r$. Here
 $\cmd_r = \llbracket \cmd_t\rrbracket $ and task transitions are
 translated to $\m{goto}$s:
$\llbracket \m{toTask}(i) \rrbracket \trel \m{goto}~\ell_i$. The
translation of the rest of the constructs recursively translates
the sub-terms and returns the same construct when an instruction
is reached. 
We show a task-based version of the program to swap $x$ and $y$ and its translation 
to a checkpoint program in Figure~\ref{fig:equiv-task}.
The program is a series of three tasks: {\tt init}, {\tt swap}, and {\tt rest}. The 
variables $x,y,z$ are shared between the tasks. Variable $a$ is local to task {\tt swap}. In the translation, each {\tt toTask} is 
replaced by a {\tt goto} whose label points to a checkpoint followed by the translated task command.

Using these constructs and the translated program, we relate a configuration of a task-based
system to that of redo-logging. We show the formal relation below and an execution of the task program on the right side of Figure~\ref{fig:equiv-task}.
The redo-log execution of the translated program is the same as in column (b) of Figure~\ref{fig:equiv-redo}
 as the only difference to the original {\tt Ckpt} code is the addition of the blue {\tt goto} instructions.
 \[
  \small  
  \inferrule*{
  \rstate = (\rlctx, \nvmem_r, \vmem_r, \cmd_r)
  \\ \tstate = (\tskctx, \tshared, \tpriv, \tlocal, \cmd_t) 
  \\ \rcon = (\ulog, \vmem_c, \cmd_c, \omega_r)
  \\ \tcon = (T, i) \\ T(i) = (\omega_t, \cmd_\mt{tt})
  \\ T \trel \Psi
  \\ \llbracket \cmd_t \rrbracket = \cmd_r
  \\ \llbracket \cmd_\mt{tt} \rrbracket = \cmd_c
  \\ \omega_t = \omega_r
  \\ \tpriv = \ulog
  \\ \tlocal = \tlocalv, \tlocaln
  \\ \nvmem_r = \tshared \cup \tlocaln
  \\ \vmem_r \approx \tlocalv
   \\\m{dom}(\vmem_r) \subseteq \dom(\tlocalv) 
  }{
  \tstate \rel \Psi, \rstate
  }
  \]

The redo-log $\ulog$ and $\tpriv$ are equivalent. Any updates 
to locations in $\omega$ will be placed into $\tpriv$, not $\tshared$. If $\tlocal$ is entirely volatile, 
then $\tshared$ and $\nvmem_r$ will also be equivalent. Otherwise, if $a$ is stored in a non-volatile location, then $\nvmem_r$
will be equal to the union of $\tshared$ and $\tlocaln$. If $a$ in a volatile location, $\tlocalv$ will be equal to the 
redo log volatile memory, once $a$ has been initialized ($\vmem_r \approx \tlocalv$). This qualification is necessary as $\tlocalv$
is cleared on reboot (after line 8), whereas redo logging restores the checkpointed volatile memory. 
Translated commands will always be well-formed w.r.t. task-local memory, however, so there can be 
no \emph{read} to a volatile memory location before it is initialized. Thus any memory accesses 
on the two systems will be equivalent. This property does not hold for any arbitrary redo-log program; consider a redo-log program where 
the assignment to $a$ occurred before the checkpoint --- the first access to $a$ after the checkpoint would
be a read. 
At line 9, task {\tt swap} transitions to task {\tt rest}.
A task transition commits
$\tpriv$ to $\tshared$, resets $\tpriv$, and transfers control to the specified task, switching the
task reference in the context to the new task. The translated program jumps 
to the label $L3$, corresponding to the command $\m{checkpoint(y,z); 
\llbracket\cmd_{rest}\rrbracket}$.
When it executes this checkpoint, it updates $\nvmem_r$ with the log 
and the clears the log. As $\tpriv$ and $\ulog$ are equivalent, the updated non-volatile memory is still 
equivalent to the union of the task-shared and non-volatile task-local memories. 
Furthermore, as no writes are left to occur, $\vmem_r = \tlocalv$. 
We prove equivalence in  Appendix~E. 

 \section{Implementation}\label{sec:implementation}
We implemented the exclusive
may-write (EMW) collection algorithm in Section~\ref{sec:rio-collection}, which consists of both
write-set collection and taint-tracking, and combined
its output with Alpaca's 
runtime system, yielding an
intermittent execution runtime with safe access to I/O. 
We built two variants:
\emph{EMW}, which backs up 
EMW sets for all branches (correct, but conservative), and
\emph{taint-optimized EMW}, which calculates the EMW set for only input-dependent
branches. 
EMW requires no code changes, but 
backs up some unnecessary variables.
Taint-optimized EMW requires very minor code changes to annotate input
operations, but backs up a smaller, far less conservative variable set.  

\subsection{System}
We implemented EMW collection in LLVM~\cite{llvm} and to Alpaca, we added
support to back up EMW variables. Alpaca is a task-based system and can use
either undo or redo logging~\cite{alpacaarxiv}. As Section~\ref{sec:variants}
shows, a task-based program translates into an equivalent checkpoint-based
program with either style of logging. 
We use undo-log Alpaca since it is the most efficient Alpaca
variant. 

\subsection{Limitations Due to C Features}
The algorithm in Section~\ref{sec:rio-collection} is sound for our
simple modeling
language. However, Alpaca extends C, which has several features not present in
the modeling language, such as merging branches, arbitrary pointers, and non-recursive functions (which do not checkpoint state), 
leading to a few differences between the formal statement
of the algorithm and the implementation. Merging branches and functions do 
not require changes to the algorithm. All paths can be explored even if a branch merges
A write in an unconditionally executed block 
will execute on all paths and be in the must-write set. Functions are treated as inlined, leveraging the lack
of recursion. 

\paragraphb{Taint tracking}
Pointers and functions complicate taint tracking. Taint-optimized EMW collection is sound only if taint does not propagate
indirectly, as through pointer arithmetic (e.g., $y$ points to $N$, the address of a tainted location, $x$ points to $N-1$, $x++$. The algorithm would miss that x points 
to a tainted location). Our implementation propagates taint through function parameters and return
values.  A call with a tainted parameter taints the corresponding argument.  A
tainted return value taints the store of the return value in the
function's caller. A function may taint a reference parameter, and our
implemented algorithm taints the corresponding parameter in the function's caller (similar
to ~\cite{ibis}).
These aliasing limitations of taint tracking do not affect the soundness of taint-agnostic EMW
collection, and none of our test programs had indirect taint propagation, which would compromise soundness.

\paragraphb{Write set collection}
To compute write sets, we assume that task-shared variables must be stored 
to directly, and cannot be aliased through a task local pointer. All Alpaca 
applications followed this behaviour. This direct access of task-shared variables allows the algorithm to 
compute may and must write sets precisely, apart from arrays.
This limitation is due to our prototype implementation and is not inherent 
to the formal algorithm. To extend the prototype to compute safe EMW sets with complex aliasing, must-write sets should 
include must-alias only, and may-write sets should include may-alias locations. 
As in the formal algorithm, any array written to on a tainted branch is conservatively (safely)
put into the EMW set.

\subsection{Algorithm Implementation}
\label{sec:ibis-diff}
We implement taint tracking as a fixed-point dataflow analysis, propagating
data taint in a traversal.  At the end of each traversal, the algorithm
examines any instructions that introduced inter-procedural dataflow, and adds
any new sinks to a worklist from which to start future traversals.  When no new
inter-procedural flows are identified, the algorithm is at a fixed point and
stops.  The analysis returns a list of tainted instructions. As we
mentioned earlier, we could under-taint, though we did not observe any under-tainting.

EMW collection is a separate analysis directly implemented from our formal
description. 
Taint-enabled EMW collection narrows the scope of the analysis, using the taint
tracking analysis result and calculating the EMW sets for conditionals that
are tainted only.  EMW collection returns a map from a function to its 
computed EMW sets.

We modify Alpaca's undo-logging compiler analysis to use the EMW set
information.  Alpaca maintains a per-function set of WAR variables to undo-log
in their called task.  We modify Alpaca to include a function's EMW set with
the function's WAR variables, which are then passed together to Alpaca's
existing undo-logging instrumentation pass, which allocates undo log storage,
creates checkpoint metadata, and adds undo-logging instrumentation.  

\paragraphb{Comparison to IBIS' algorithm}
The specification of the algorithm in IBIS is unsound. Even assuming perfect pointer aliasing and 
taint propagation, IBIS could still miss bugs. IBIS detects RIO bugs 
by calculating the may-write sets of paths off tainted branches and comparing them. If 
the may-write sets are equal it reports no bug. Consider 
$\m{if}(tainted~e)~x:=1; \m{if}(e2)~y:=1~\m{else}~z:=1;
$ $\m{else}~x:=1; y:=1;z:=1$. IBIS would report 
no bugs as the may-write sets are the same, but $y$ and $z$ are in the EMW set and thus
potentially inconsistent. Additionally, conservatism, whether due to implementation decisions such as aliasing 
or inherent in static analyses (such as 
opaque path conditions) hampers the usability of IBIS. Any variable falsely identified as potentially inconsistent 
generates a confusing false-positive bug report that the programmer must 
reason through, whereas the EMW runtime safely adds it to the checkpoint at little runtime cost.

 \section{Evaluation} \label{sec:eval}

The goal of the evaluation is to show that modifying Alpaca to
correctly support input operations is practically efficient.  We
evaluate Alpaca's baseline system, a variant that checkpoints all data
identified by our exclusive may-write (EMW) analysis without taint
tracking, and a variant that checkpoints all data identified by our
EMW analysis refined with taint-tracking support.  Our data show that
our analysis provides correctness, through checkpointing both WAR and RIO variables, with low run time and memory
overheads.  EMW alone has very low overheads with {\em no programming
  effort} \limin{no programmer annotation effort?} and EMW plus taint
analysis has {\em virtually no time overhead} and very low memory
overhead, but asks the programmer to annotate input operations.  To
demonstrate the programmability benefit of our analysis, we perform
case studies, showing that it is non-trivial (sometimes complicated)
to fix RIO bugs manually, even using a state-of-the-art bug detection
tool (IBIS~\cite{ibis}), but trivial using our analysis.

\subsection{Benchmarks}
We use benchmarks from the IBIS paper~\cite{ibis}, obtained from the authors, as they run on Alpaca and have input bugs.
There are 11 programs: 7 drivers and
low-level applications from TI-RTOS~\cite{tirtos} and 4 from Alpaca~\cite{alpaca}. The TI-RTOS
programs are {\tt bmp} a pressure sensor driver, {\tt hdc} a humidity sensor
driver, {\tt elink}, a radio implementation, {\tt mpu}, a magnetometer
driver, {\tt opt}, an optical sensor driver, {\tt temp}, a temperature sensor
driver, and {\tt wsn}, a sensor data aggregator.  The Alpaca programs are {\tt
ar}, activity recognition, {\tt bc}, bit counting, {\tt cem}, a compressive logger, and {\tt cuckoo}, a cuckoo filter. IBIS
found RIO bugs in {\tt mpu}, {\tt opt}, {\tt temp}, and {\tt wsn}.

\begin{wrapfigure}{R}{0.52\textwidth}
  \centering
  \includegraphics[width=.52\textwidth]{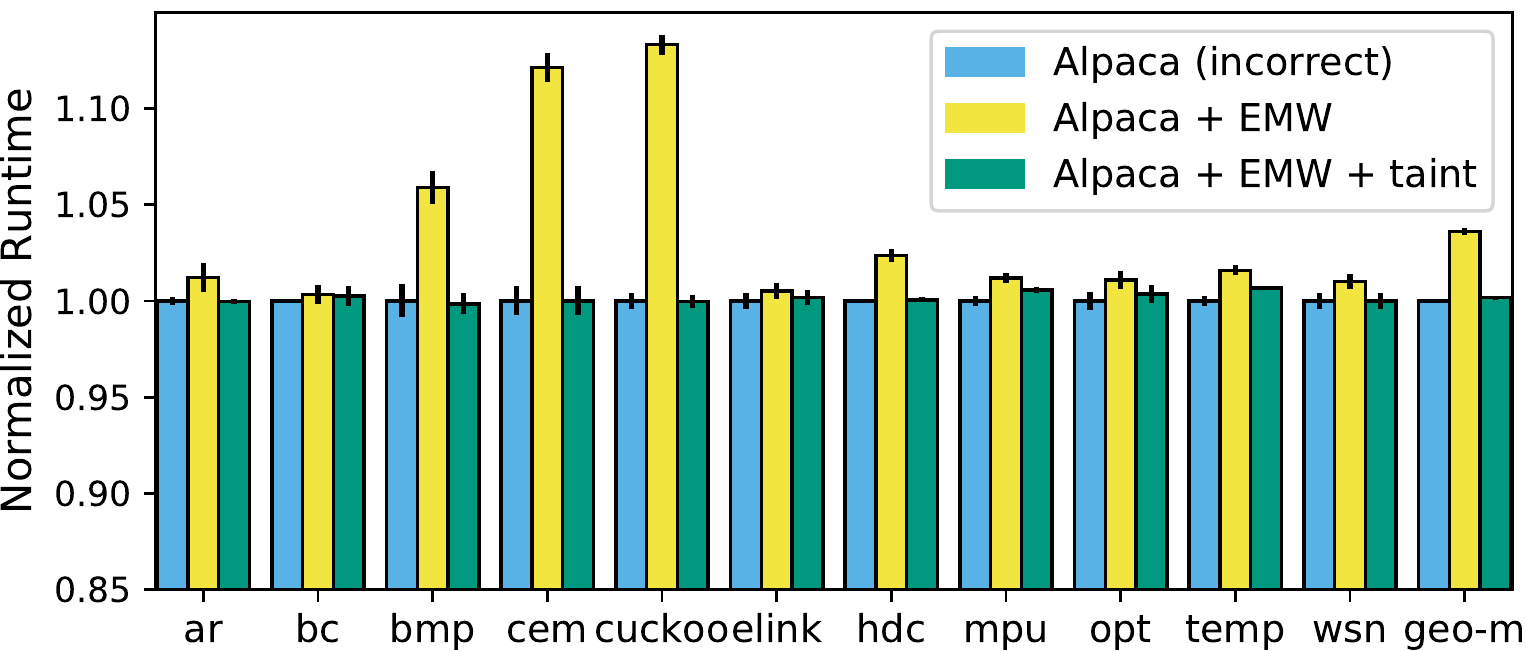}
  \vspace{-20pt}
  \caption{Normalized runtimes of Alpaca, Alpaca with EMW sets, and Alpaca with taint-optimized EMW sets}
  \label{fig:runtimes}
  \end{wrapfigure}

\subsection{Performance Overhead of EMW Tracking} Checkpointing data added to
$\omega$ by EMW analysis guarantees correctness and causes only low run time and memory
overheads. 
 Figure~\ref{fig:runtimes} shows run time normalized to Alpaca for
plain Alpaca (blue), Alpaca with checkpointing for EMW sets (yellow), and Alpaca
with checkpointing for I/O-tainted EMW data only (green).  Each bar averages
100 run times on continuous power with fixed inputs, and error bars are a 95\%
confidence interval.  Plain Alpaca is fastest, {\em but incorrect} because it
does not back up variables made inconsistent by RIOs.  EMW with taint-tracking has virtually no time overhead
(0 - 0.7\%) because the analysis checkpoints the few variables from the EMW
set required for correctness.  Checkpointing full EMW sets has higher overhead,
ranging from negligible to nearly 15\% for {\tt cuckoo}. The higher overheads
of EMW alone demonstrate the need for 
taint tracking, which
eliminates overheads, requiring only that the programmer annotate input
operations.

\begin{wrapfigure}{R}{0.52\textwidth}
  \centering
  \includegraphics[width=.52\textwidth]{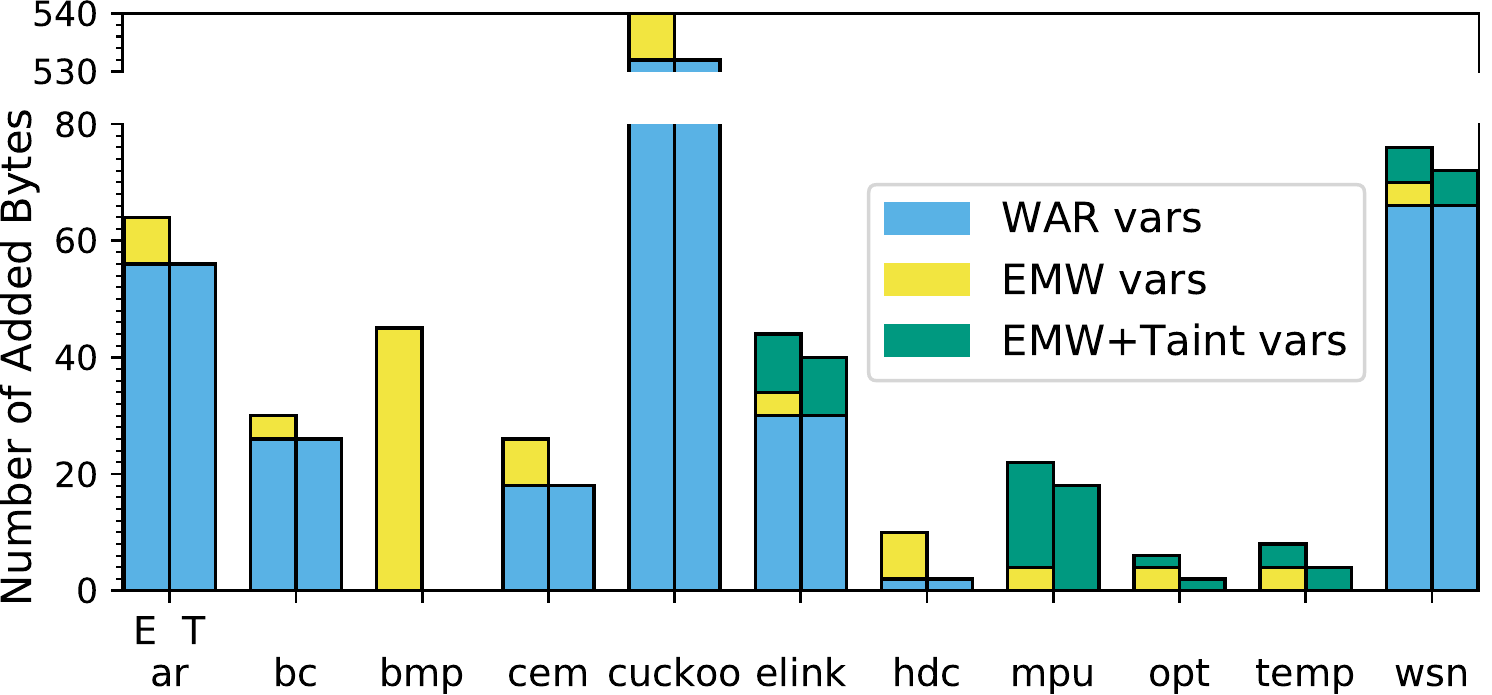}
  \vspace{-20pt}
  \caption{Space needed to back up variables identified by EMW and taint-optimized EMW analysis, 
  by category}
  \label{fig:mem}
  \end{wrapfigure}

\paragraphb{Checkpoint and Memory Overheads} 
Taint-optimized EMW analysis checkpoints only the few variables necessary to
avoid RIOs and WARs, while using EMW analysis conservatively requires
checkpointing many more variables, at a higher memory overhead. 
Figure~\ref{fig:mem} shows the bytes needed to back up variables identified by EMW
and taint-optimized EMW. The left bar in each pair is for EMW alone,
and the right is for taint-optimized EMW. Each bar is broken up into bytes due to variables in
a WAR dependence (blue), untainted variables conservatively in the
exclusive-may-write set (yellow), and tainted variables in the exclusive may-write
set (green). Any array is double-buffered, potentially causing a large difference in log size ({\tt bmp})
Together, the yellow and green segments in an EMW bar include all tainted and
untainted EMW variables. The taint-optimized EMW bar eliminates the yellow
segment, identifying input-tainted EMW variables only and illustrating the
conservatism in EMW alone that requires checkpointing more variables.

Figure~\ref{fig:memovhd} quantifies the normalized memory overhead caused by the increase in 
logged variables, accounting
for all checkpoint storage and metadata (including
array-size dependent metadata~\cite{alpaca}).  Taint-optimized EMW reduces
memory overheads significantly compared to EMW alone.

\paragraphb{Programmability Benefits of EMW}
Using taint-optimized EMW analysis is a simpler solution for repeated I/O
than manually changing code.
Prior work detects RIO bugs using an ad hoc approximation of our
taint-optimized EMW analysis~\cite{ibis}, suggesting that the programmer fix
bugs.
Taint-optimized EMW has low overheads and requires the programmer
to annotate input operations only, which is simple.
Manually finding and fixing bugs is relatively more complex.  
IBIS~\cite{ibis} advises re-initializing 
I/O-tainted variables at the start of the task that taints them. 
This strategy moves each variable into the {\em must-first-write} set, causing
it to be written on every task execution and eliminating the
need to include it in $\omega$.   However, unconditional initialization in a task may
change a program's meaning if the value overwritten by the initialization is important.
Instead, a programmer could create their own backup copy of the variable and save
its value on the first write in the task; doing so amounts to manually applying
undo-logging, guided by IBIS's bug report.

  \begin{wrapfigure}{R}{0.52\textwidth}
    \includegraphics[width=.52\textwidth]{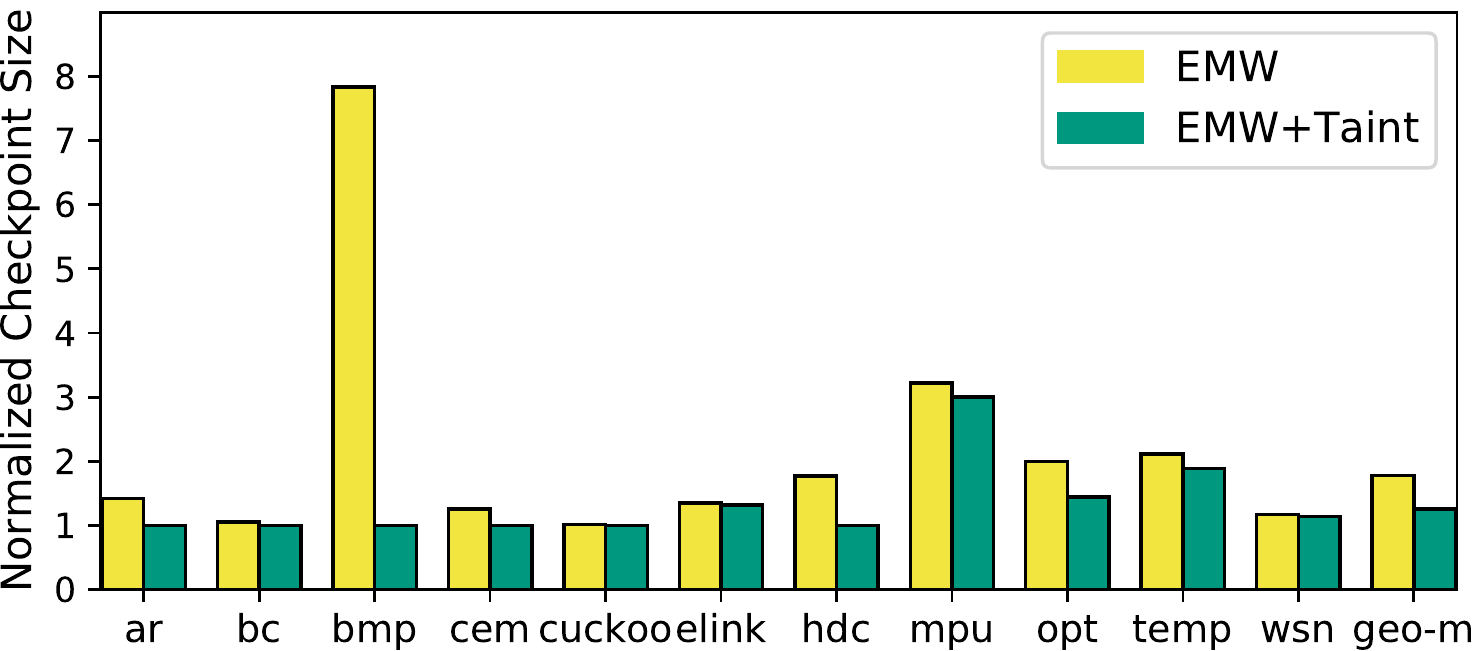}
    \vspace{-20pt}
    \caption{Normalized checkpoint size overhead caused by EMW tracking
    and taint-optimized EMW tracking.}
    \label{fig:memovhd}
    \end{wrapfigure}

Manual undo-logging requires modifying each use, and
adding an initialization and a backup operation, requiring changing at least $\sum
(\forall \loc \in \m{EMW}, 2 + \m{uses}(\loc))$ lines of code.
Concretely, {\tt mpu} required 22 changes, {\tt opt}
required four, {\tt temp} required seven, and {\tt wsn} required 17.  
As reported by IBIS, only these benchmarks had RIO bugs that required code changes to fix. 

Manual fixing is not only onerous, but ultimately duplicative: manual
backup introduces a WAR dependence on a task-shared variable
and any non-idempotent EMW variable will be added to the checkpoint anyway by
the WAR analysis after fixing manually. IBIS 
may also miss RIO bugs. Fixing only IBIS' reported bugs may result in still incorrect code. 
 Our taint-optimized EMW analysis
eliminates the risk of manual code fixes and directly backs up the
necessary data, ensuring correctness with reasonable overhead
and essentially no programming burden.

\section{Extending the Framework}
We use the presented framework to define and prove correctness with respect to 
memory consistency and input operations. There are more properties 
that must be reasoned about to truly develop provably correct, reliable intermittent systems.
This framework serves as a foundation that can be built upon to reach that 
ultimate goal.  We discuss the 
strategies to extend the framework to include other programming models or correctness properties.
Some programming models can be modeled by straightforward extensions, such 
as changes to the execution state or logging mechanisms (Section \ref{sec:variants}). Others 
require more significant changes to the current framework, such as guaranteeing forward progress. 

\subsection{Extending memory consistency to other programming models and architectures}
\label{sec:straight-ext}
Extending the memory consistency theorem to cover systems with different 
checkpointing algorithms or architectural state is straightforward.
The proof of the memory consistency theorem is built around two memory relations 1) 
current-to-initial execution point and 2) same execution point (Section ~\ref{sec:rio-relations}). 
If the intermittent and continuous states satisfy relation 1, the states remain 
related after each reboot. If they satisfy 2,  the states will have the same memory 
by the next checkpoint. To extend the framework to cover a new execution model,
 one can show that these relations hold directly, 
or one can show equivalence to the basic model as in Section~\ref{sec:variants}. 
A model with differing checkpoint placement 
or additional language instructions may satisfy these relations directly, such as \emph{just-in-time}
checkpointing ~\cite{hibernus,samoyed} or idempotent regions~\cite{ratchet}. 
If a model changes the state tuple or runtime constructs---e.g., the context or memory layout, 
additional architectural components---it is more practical to show correctness via bi-simulation, 
rather than tweaking the parameters of relations 1 and 2. 
In Section~\ref{sec:variants}, 
we showed equivalence to systems with differing program models. 
Below we sketch an extension to a different architecture, one that uses a write-back cache.

A write-back cache commits an update to memory only when the update is evicted from the cache, 
either due to an eviction policy or through explicit flushing. Thus, the order in which updates execute may differ 
from the order they are evicted and committed. Our target hardware has a write-through cache, allowing us to assume identical persist 
and execution order in the presentation of the basic model (Section~\ref{sec:framework}), but 
devices with write-back caches are reasonable future targets. We show that the framework can be simply extended
to handle this change in architecture.
The state tuple of the basic model should be extended with a cache $\cache$. An update to a location $l$ 
may be made to the cache instead of $\nvmem$, and a future eviction from the cache updates $\nvmem$ with 
the value, i.e., $\nvmem \lhd \cache[l]$. 
Note that this is almost exactly the behaviour of Redo logging (Section~\ref{sec:redo-log}); 
updates to checkpointed locations are logged, and the log is committed at a checkpoint. 
Updates to $\nvmem$  occur out of execution order. The key change to the semantics is 
that the commit must flush the newly added cache as well as the log, acting as a 
serialization point. Additionally, items in $\cache$ can be flushed before 
reaching a checkpoint. This behaviour does not introduce new inconsistencies, however, as 
region re-execution is idempotent. Consider an execution that caches updates 
$x$ and $y$, persisting only $y$ before failing. $y$ is after $x$ in execution order but 
before $x$ in persist order. If the first access to $y$ after reboot is a write, 
the previously persisted value is never accessed.  If the access is a read, 
then $y$ had a WAR dependence. $y$ is thus in $\omega$ and the update 
would have been made to the redo log.  Adding a write-back cache to a 
sequential redo-log model thus requires only minor changes to the semantics and 
bi-simulation relation, as the model already commits at checkpoints and 
safely redoes partial flushes. 

Undo-logging with a write-back cache is more complex. 
In the basic model presented, the back-up copies of variables are created at the checkpoint, 
so flushing after a checkpoint persists both cached updates from the previous region 
and all the backup copies. A more performant undo log that creates backup copies on demand 
 could become inconsistent if the update to general non-volatile memory 
persists before the update to the log, and power then fails.  A simple way to make caching correct 
is that any update to the log must be flushed immediately, so the log update always is persisted 
before the general update, but this destroys much of the benefit of having a cache for any 
variable in $\omega$. This issue of persisting log data before program data is a 
known and well studied problem with logging on persistent memory~\cite{raad2, crafty, atlas}.

Updating the checkpoint semantics to flush (and fence) the cache makes a checkpoint 
a serialization point between checkpointed regions. The WAR:ok and RIO:ok checks are static, 
and thus put any variable that could potentially be inconsistent 
(including those visible to a re-execution of the current checkpoint region) 
into the checkpoint set. Any out-of-order persists or variables from partial flushes 
are thus either over-written during the course of re-execution, were made to a log 
(redo model) or are undone when applying the checkpoint set (basic, undo-log model). 

As energy-harvesting devices develop to target more complex architectures, the modular next step is 
to add a hardware layer to the model to abstract away ordering details from the higher-level 
theorem. The instructions 
to fence and flush updates are ISA specific~\cite{raad2, raad3}. As shown above, 
changes to the architecture need not dramatically effect the memory consistency theorem. 
The interface with the hardware layer should provide certain assumptions, 
e.g., checkpoints are a serialization point, updates are linearizable. The proofs of these assumptions 
would not change the main correctness theorem, but can instead draw from formalizations in 
prior work~\cite{raad1, raad2, raad3, atlas}.

\subsection{Towards correctness properties beyond memory consistency}
Ensuring that programs are correct with respect to memory consistency 
is a crucial first step towards reliable intermittent computation, 
but there remain others, such as ensuring progress, correct timing, and concurrency 
(Section ~\ref{sec:other-props}). The presented 
framework can be extended modularly to reason about these properties.  

\paragraphb{Forward Progress}
To be able to guarantee forward progress, any possible trace between checkpoints must not 
consume more energy than can fit in the energy buffer. Reasoning about the energy consumption 
of a trace requires creating an energy model. This energy model must model the full system, 
including peripherals, as  energy consumption 
depends on all components on the board, not just the CPU.
The prior work CleanCut~\cite{cleancut} develops an energy model 
to guide programmers in creating appropriately sized tasks, but it is probabilistic, 
and furthermore does not consider the full system. Developing such a full-system, non-probabilistic energy model 
is a complex problem. 
While this energy model is necessary to reason about 
the forward progress property, it does not change the memory consistency correctness property presented in this 
work. Rather, an additional energy layer should be added to the framework. To be correct, any intermittent execution must 
correspond w.r.t. memory to some continuous execution, and additionally a trace between checkpoints 
must always take less energy than can fit in the energy buffer.
Thus, we anticipate that energy modelling can be added 
to the current framework modularly. 

\paragraphb{Concurrency through Interrupts}
While there are not yet multi-core intermittently-powered devices, some research~\cite{coati, ink}
addresses interrupt driven computation. In such execution models, inputs can be asynchronous and ephemeral
---after reboot, interrupts may not occur as they did before power-failure. Intermittent 
executions must be consistent, as in defined in the theorem presented here, but they must also correctly deal with 
concurrency, The 
updates to memory of any interrupt handlers---including those partially executed--and 
the main thread of program execution must be linearizable. Linearizability for persistent 
memory is a well-studied problem~\cite{ido, justdo, raad2,persistent-linearizability}.
Extending the framework requires adding interrupts to the semantics 
and adding linearizability to the proof. 

\paragraphb{Time-sensitivity}
In this work, as in ~\cite{reachability}, the continuous, non-crashy execution to which the 
intermittent, crashy execution corresponds can pause for arbitrary amounts of time, while 
the system recovers to a consistent state. Allowing these arbitrary pauses at any location 
can make the correctness definition too weak for programs whose behaviour depends on 
highly timing-sensitive input processing. The value an input operation returns depends on the time that it was 
gathered --- arbitrary pauses within a sequence of input operations can 
produce program behaviour not possible on a continuous execution without pauses. 
To be correct w.r.t timing of inputs, an intermittent execution 
must not only correspond to some continuous execution, but 
that continuous execution must be one without pauses in time-sensitive regions. 
Which regions are time-sensitive is frequently 
application dependent. Robust reasoning about time-sensitivity requires adding 
language constructs to describe the time-constraints on data, as well as mechanisms 
to preserve the constraints. As with the properties above, adding time-sensitivity to 
the framework 
does not change the underlying memory-consistency theorem, but adds another constraint 
to correct intermittent execution. 
\section{Conclusion}
We provide the first formal framework for examining the correctness of
intermittent systems, w.r.t memory consistency. We show the framework's usefulness by using it
to formalize intermittent systems with input operations, showing
that many existing systems do not meet reasonable correctness
criteria, and using the correctness invariants to implement a correct
runtime system. We further extend the framework to show that a variety
of existing systems are equivalent, indicating that the same correctness
properties hold for all the modeled systems. This framework lays the foundation for formally defining intermittent system correctness,
a crucial step towards the development of provably correct, reliable applications for intermittent systems.
 Future work should extend the framework to define properties beyond memory consistency, such as timeliness, forward progress, or concurrency.


\section*{Acknowledgements}
We thank the anonymous reviewers for their feedback, and members of 
the Abstract Research Lab for their insightful comments on initial drafts. 
We would like to thank Naomi Spargo for 
formalizing the theorem for the correctness of intermittent systems without inputs (Appendix C.2) 
in the Coq proof assistant, available at \url{https://github.com/misstaggart/intermittent_formalism}.
This work was generously funded through National Science Foundation Award 2007998
and National Science Foundation CAREER Award 1751029.

\begin{appendices}
\section{Syntax and Semantics of the Basic Checkpointing System}
\label{sec:checkpoint}
\subsection{Syntax}

We define a core calculus for
checkpoints.
We write $\omega$ to denote the set of
global variables and arrays that need to be stored across
checkpoints. Note that each array has a pre-defined, fixed bound. $a^n$ indicates
that array $a$ has length $n$. We omit the bounds for simplicity.
\[
\begin{array}{llcl}

\textit{values} & v & \bnfdef & n \bnfalt \m{true} \bnfalt \m{false} 
\\
\textit{expressions} & e & \bnfdef & x \bnfalt v \bnfalt e_1\;
                                     \m{bop}\; e_2 \bnfalt a[e] 
\\
\textit{war variables} & \omega & \bnfdef & \cdots \bnfalt \omega, x
                                            \bnfalt \omega, a^n
\\
\textit{instructions} & \iota &  \bnfdef& \m{skip} \bnfalt x:= e \bnfalt a[e]:= e' \bnfalt \m{checkpoint}(\omega) \bnfalt \m{reboot} 
\\
\textit{commands} & \cmd & \bnfdef& \iota  \bnfalt \iota;\cmd \bnfalt \m{if}\ e\ \m{then}\ \cmd_1\ \m{else}\ \cmd_2
\end{array}
\]

We distinguish between volatile and non-volatile memory. We write
$\context$ to denote checkpointed data, which is a triple consisting
of checkpointed non-volatile memory, volatile memory, and command to
execute at the time of the checkpoint instruction. Note that the
checkpointed non-volatile memory could be empty. 

\[
\begin{array}{llcl}
\textit{Memory locations} & \loc & \bnfdef & x\bnfalt  a[n] 
\\
\textit{Memory mapping} & M & \bnfdef & \m{Loc} \rightarrow \m{Val}
\\
\textit{Non-volatile memory} & \nvmem & : & M
\\
\textit{Volatile memory} & \vmem & : & M
\\
\textit{Continuous Conf.} & \sigma& \bnfdef & (\nvmem, \vmem,
                                           \cmd)
\\
\textit{Context} & \context & \bnfdef & (\nvmem, \vmem, \cmd)
\\
\textit{Intermittent Conf.} & \Sigma& \bnfdef & (\context, \nvmem, \vmem,
                                           \cmd)
\\
\textit{Read Observation} & r & \bnfdef & \m{rd}\ \loc\ v \bnfalt  r, r
\\
\textit{Observation} & o & \bnfdef & [r]\bnfalt \m{reboot} \bnfalt \m{checkpoint}
\\
\textit{Observation sequence} & O &  \bnfdef & \cdot \bnfalt O, o
\end{array}
\]

\subsection{Continuously-Powered Operational Semantics}

An observation sequence is determined by the following rules. We write
$N, V \vdash e \Downarrow_{r} v$ to denote that with memories $N$
and $V$, expression $e$ evaluates to value $v$ with observation
$r$. We write $\denote{v_1 \m{bop}\ v_2 }$ to denote the result of
computing a binary operation with values $v_1$, $v_2$.   

\begin{mathpar}
\inferrule*[right=Val]{ }{
   N, V \vdash v \Downarrow_{\cdot} v  
 }
\and
\inferrule*[right=BinOp]{ i \in [1, 2]\\N,V \vdash e_i
  \Downarrow_{r_i} v_i
  }{
  N, V \vdash e_1 \m{bop}\ e_2 \Downarrow_{r_1, r_2} \denote{v_1 \m{bop}\ v_2}
}
\and
\inferrule*[right=Rd-Var]{ N \cup V(x) = v}{
  N, V \vdash x \Downarrow_{\m{rd}\,x\,v} v
}
\and
\inferrule*[right=Rd-Arr]{ N,V \vdash e \Downarrow_{r_e} v_e
\\ N\cup V(a[v_e]) = v
}{
  N, V \vdash a[e] \Downarrow_{r_e, \m{rd}\,a[v_e]\, v} v
}

\end{mathpar}

We write $\proj{m}{\omega}$ to denote the part of $m$, whose domain is
 $\omega$. $a^n$ represents all the locations in the array $a$. That
 is: each $a^n$ in $\omega$ represents the set of locations
 $\{a[1], \cdots, a[n]\}$.
We write
$m[\loc\mapsto v]$ to denote the memory that is the same as $m$ except
that $\loc$ is mapped to $v$. We write $m_1\lhd m_2$ to denote the memory
resulted from updating $m_1$ with $m_2$.

We write $\sigma\SeqStepsto{O} \sigma'$ to
denote the semantics of sequential executions. The rules are the
same as those for the intermittent execution except that  the $\m{checkpoint}$
instruction behaves the same as $\m{skip}$, and that the state does
not need the checkpointed context $\context$, and that there are no
rules for fail or reboot.

~\\\noindent\framebox{$(\nvmem, \vmem, \cmd) 
 \SeqStepsto{O} (\nvmem', \vmem', \cmd')$}

\begin{mathpar}
\inferrule*[right=NV-Assign]{x \in \m{dom}(N)\\N, V \vdash e \Downarrow_{r} v }{
  (\nvmem, \vmem, x:=e) 
 \SeqStepsto{[r]}  (\nvmem[x\mapsto v], \vmem, \m{skip}) 
}%
\quad
\inferrule*[right=V-Assign]{x \in \m{dom}(V)\\N, V \vdash e \Downarrow_{r} v }{
  (\nvmem, \vmem, x:=e) 
 \SeqStepsto{[r]}  (\nvmem, \vmem[x\mapsto v], \m{skip}) 
}
\and
\inferrule*[right=Assign-Arr]{N, V \vdash e \Downarrow_{r} v \\N, V \vdash e' \Downarrow_{r'} v' }{
  (\nvmem, \vmem, a[e]:=e') 
 \SeqStepsto{[r, r']}  (\nvmem[a[v]\mapsto v'], \vmem, \m{skip}) 
}
\and
\inferrule*[right=CheckPoint]{ }{
  (\nvmem, \vmem, \m{checkpoint}(\omega);\cmd) 
 \SeqStepsto{\m{checkpoint}}  (\nvmem, \vmem, \cmd) 
}
\and
\inferrule*[right=Skip]{ }{
  (\nvmem, \vmem, \m{skip};\cmd) 
 \SeqStepsto{}  (\nvmem, \vmem, \cmd) 
}
\and
\inferrule*[right=Seq]{  (\nvmem, \vmem, \iota) 
 \SeqStepsto{o}  (\nvmem', \vmem', \m{skip})}{
  (\nvmem, \vmem, \iota;c) 
 \SeqStepsto{o}  (\nvmem', \vmem', c) 
}
\and
\inferrule*[right=If-T]{N, V \vdash e \Downarrow_{r} \m{true}}{
  (\nvmem, \vmem, \m{if}\ e\ \m{then}\ \cmd_1\ \m{else}\ \cmd_2) 
 \SeqStepsto{[r]}  (\nvmem, \vmem, \cmd_1) 
}
\and
\inferrule*[right=If-F]{N, V \vdash e \Downarrow_{r} \m{false}}{
  (\nvmem, \vmem, \m{if}\ e\ \m{then}\ \cmd_1\ \m{else}\ \cmd_2) 
 \SeqStepsto{[r]}  (\nvmem, \vmem, \cmd_2) 
}
\end{mathpar}

\subsection{Intermittent Operational Semantics}

We write $(\context, \nvmem, \vmem, \cmd) \Stepsto{O} (\context',
\nvmem', \vmem', \cmd')$ to denote the small-step operational
semantics of the core calculus. The rules are summarized below. 

~\\\noindent\framebox{$(\context, \nvmem, \vmem, \cmd) 
 \Stepsto{O} (\context', \nvmem', \vmem', \cmd')$}

\begin{mathpar}

\inferrule*[right=CP-PowerFail]{ }{
  (\context, \nvmem, \vmem, \cmd) 
 \Stepsto{}  (\context, \nvmem, \resetm(\vmem), \m{reboot}) 
}
\and
\inferrule*[right=CP-CheckPoint]{ }{
  (\context, \nvmem, \vmem, \m{checkpoint}(\omega);\cmd) 
 \Stepsto{\m{checkpoint}}  ((\proj{\nvmem}{\omega}, \vmem, \cmd), \nvmem, \vmem, \cmd) 
}
\and
\inferrule*[right=CP-Reboot]{ \context=(\nvmem, \vmem, \cmd)}{
  (\context, \nvmem', \vmem', \m{reboot}) 
 \Stepsto{\m{reboot}}  (\context, \nvmem'\lhd\nvmem, \vmem, \cmd) 
}
\end{mathpar}
\begin{mathpar}
\inferrule*[right=CP-NV-Assign]{x \in \m{dom}(N)\\N, V \vdash e \Downarrow_{r} v }{
  (\context, \nvmem, \vmem, x:=e) 
 \Stepsto{[r]}  (\context, \nvmem[x\mapsto v], \vmem, \m{skip}) 
}
\and
\inferrule*[right=CP-Assign-Arr]{N, V \vdash e \Downarrow_{r} v \\N, V \vdash e' \Downarrow_{r'} v' }{
  (\context, \nvmem, \vmem, a[e]:=e') 
 \Stepsto{[r, r']}  (\context, \nvmem[a[v]\mapsto v'], \vmem, \m{skip}) 
}
\and
\inferrule*[right=CP-V-Assign]{x \in \m{dom}(V)\\N, V \vdash e \Downarrow_{r} v }{
  (\context, \nvmem, \vmem, x:=e) 
 \Stepsto{[r]}  (\context, \nvmem, \vmem[x\mapsto v], \m{skip}) 
}
\and
\inferrule*[right=CP-Skip]{ }{
  (\context, \nvmem, \vmem, \m{skip};\cmd) 
 \Stepsto{}  (\context, \nvmem, \vmem, \cmd) 
}
\and
\inferrule*[right=CP-Seq]{  (\context, \nvmem, \vmem, \iota) 
 \Stepsto{o}  (\context, \nvmem', \vmem', \m{skip})}{
  (\context, \nvmem, \vmem, \iota;c) 
 \Stepsto{o}  (\context, \nvmem', \vmem', c) 
}
\and
\inferrule*[right=CP-If-T]{N, V \vdash e \Downarrow_{r} \m{true}}{
  (\context, \nvmem, \vmem, \m{if}\ e\ \m{then}\ \cmd_1\ \m{else}\ \cmd_2) 
 \Stepsto{[r]}  (\context, \nvmem, \vmem, \cmd_1) 
}
\and
\inferrule*[right=CP-If-F]{N, V \vdash e \Downarrow_{r} \m{false}}{
  (\context, \nvmem, \vmem, \m{if}\ e\ \m{then}\ \cmd_1\ \m{else}\ \cmd_2) 
 \Stepsto{[r]}  (\context, \nvmem, \vmem, \cmd_2) 
}
\end{mathpar}

\section{Checkpointed Data (WAR)}
\label{sec:cp-data}

\subsection{Algorithm for Checking Checkpointed Data (WAR Variables)}~\\
\label{sec:war-checking}

Rules of the form $N; W; R \Vdash \iota: W';R'$ check each memory
access is well-formed w.r.t. the checkpointed variables in $N$,
written set $W$, and read set $R$, and returns newly written set $W'$
and newly read set $R'$.

\noindent\framebox{$N; W; R \Vdash_\war \iota: W';R'$}
\begin{mathpar}
\inferrule*[right=WAR-Skip]{ }{
  N; W;R \Vdash_\war \m{skip}: W; R
}
\and
\inferrule*[right=WAR-NoRd]{ R' = R \cup rd(e) \\ x \notin R'}{
  N; W;R \Vdash_\war x:= e : W \cup x; R' 
}
\and
\inferrule*[right=WAR-Checkpointed]{ R' = R \cup rd(e) 
\\ x \in R' 
\\ x \notin W
\\ x \in N
}{
 N; W;R \Vdash_\war x:= e : W \cup x; R' 
}
\and
\inferrule*[right=WAR-Wt]{ R' = R \cup rd(e) \\ x \in R' \\ x \in W}{
  N;W;R \Vdash_\war x:= e : W; R' 
}
\and
\inferrule*[right=WAR-NoRd-Arr]{ R' = R \cup rd(e) \cup rd(e') \\ a \notin R'}{
  N; W;R \Vdash_\war a[e'] := e : W \cup a; R' 
}
\and
\inferrule*[right=WAR-Checkpointed-Arr]{ R' = R \cup rd(e) \cup rd(e') 
\\  a \in R' \\ a \in N }{
  N; W;R \Vdash_\war a[e']:= e : W \cup a; R' 
}
\end{mathpar}

Judgment $N;W;R \Vdash_\war \cmd:\m{ok}$ means that all of $\cmd$'s WAR
variables are in $N$, given $R$ is the set of variables that are read
from the most recent checkpoint, $W$ is the set of written variables, 
and $N$ is the set of checkpointed
variables in the most recent checkpoint instruction.

\noindent\framebox{$N;W;R \Vdash_\war \cmd:\m{ok}$}
\begin{mathpar}
\inferrule*[right=WAR-I]{   N;W;R \Vdash_\war \iota: W'; R'}{
  N;W;R \Vdash_\war \iota: \m{ok}
}
\and
\inferrule*[right=WAR-Cp]{\omega;\emptyset;\emptyset \Vdash_\war \cmd: \m{ok}}{
  N;W; R \Vdash_\war \m{checkpoint} (\omega);\cmd : \m{ok}
}
\and
\inferrule*[right=WAR-Seq]{N;W; R \Vdash_\war \iota: W';R' \\ N;W'; R' \Vdash_\war \cmd: \m{ok}}{
  N;W;R \Vdash_\war \iota;\cmd : \m{ok}
}

\inferrule*[right=WAR-If]{ R' = R \cup rd(e) \\ N, W, R' \vdash \cmd_1: \m{ok}
 \\ N, W, R' \vdash \cmd_2: \m{ok}}{
  N;W;R \Vdash_\war \m{if}\ e\ \m{then}\ \cmd_1\ \m{else}\ \cmd_2: \m{ok} 
}

\end{mathpar}

\subsection{Algorithms for Collecting Checkpointed Locations}
\label{sec:war-collection}
Existing systems typically implement an algorithm for identifying WAR
variables. We show two variants here and prove that both of the
algorithms produce programs that pass the WAR checking defined in the
previous section.

\paragraphb{Algorithm used by Dino}~\\

\noindent\framebox{$N;W;R \Vdash_\mt{DINO} \iota: N'; W', R' $}
\begin{mathpar}
\inferrule*[right=D-WAR-Skip]{ }
{N;W;R \Vdash_\mt{DINO} \m{skip}: N;W; R
}
\quad
\inferrule*[right=D-WAR-Written]{ R' = R \cup rd(e) \\ x \notin R'}{
  N;W;R \Vdash_\mt{DINO} x:= e : N;W \cup x; R' 
}
\and
\inferrule*[right=D-WAR-CP-Asgn]{ R' = R \cup rd(e) \\ x \in R' \\ x \notin W}{
  N;W;R \Vdash_\mt{DINO} x:= e : N \cup x; W \cup x; R' 
}
\and
\inferrule*[right=D-WAR-WtDom]{ R' = R \cup rd(e) \\ x \in R' \\ x \in W}{
  N;W;R \Vdash_\mt{DINO} x:= e : N;W; R' 
}
\and
\inferrule*[right=D-WAR-Wt-Arr]{ R' = R \cup rd(e) \cup rd(e') \\ a \notin R'}{
  N;W;R \Vdash_\mt{DINO} a[e'] := e : N; W \cup a; R' 
}
\and
\inferrule*[right=D-WAR-CP-Arr]{ R' = R \cup rd(e) \cup rd(e') \\ a \in R'}{
  N; W;R \Vdash_\mt{DINO} a[e']:= e : N \cup a; W \cup a; R' 
}
\end{mathpar}

\noindent\framebox{$N;W;R \Vdash_\mt{DINO} \cmd \longrightarrow \cmd' : N'$}
\begin{mathpar}
\inferrule*[right=D-WAR-Instr]{ N;W;R \Vdash_\mt{DINO} \iota : N';W';R' }{
  N;W;R \Vdash_\mt{DINO} \iota \longrightarrow \iota: N'
}
\and
\inferrule*[right=D-WAR-Seq]{ N;W;R \Vdash_\mt{DINO} \iota: N';W'; R'
 \\ N';W';R'  \Vdash_\mt{DINO} \cmd \longrightarrow \cmd' : N''}{
  N;W;R \Vdash_\mt{DINO} \iota;\cmd \SeqStepsto{ } \iota;\cmd' : N'' 
}
\end{mathpar}
\begin{mathpar}
%
\inferrule*[right=D-WAR-If]{ R' = R \cup rd(e) 
\\ N, W, R' \vdash \cmd_i  \longrightarrow \cmd_i': N_i  
 \\ i \in [1, 2]}{
  N;W;R \Vdash_\mt{DINO} \m{if}\ e\ \m{then}\ \cmd_1\ \m{else}\ \cmd_2
  \longrightarrow \m{if}\ e\ \m{then}\ \cmd_1'\ \m{else}\ \cmd_2': N_1 \cup N_2
}
\and
\inferrule*[right=D-WAR-CP]{\emptyset;\emptyset;\emptyset \Vdash_\mt{DINO} \cmd \SeqStepsto{} \cmd': N'}{
  N;W;R \Vdash_\mt{DINO} \m{checkpoint()};\cmd \SeqStepsto{} \m{checkpoint(N')};\cmd' : N
}
\and

\end{mathpar}

\begin{lem}\label{lem:war-dino-superset}
\begin{enumerate}
\item If $\ee::N;W;R \Vdash_\mt{DINO} \iota: N'; W', R' $ then $N'\supseteq N$
\item If $\ee::N;W;R \Vdash_\mt{DINO} \cmd \mapsto \cmd' : N'$
then $N'\supseteq N$. 
\end{enumerate}
\end{lem}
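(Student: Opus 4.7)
The plan is to prove the two parts by a straightforward mutual induction on the structure of the derivations $\mathcal{E}$. Part (1) is a pure case analysis over the six instruction rules of $\Vdash_\mt{DINO}$ (no induction needed since the instruction judgment has no recursive premises), and part (2) is an induction on command derivations that invokes part (1) whenever an instruction appears as a premise.

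For part (1), I would inspect each instruction rule and read off the resulting $N'$: the rules \textsc{D-WAR-Skip}, \textsc{D-WAR-Written}, \textsc{D-WAR-WtDom}, and \textsc{D-WAR-Wt-Arr} all yield $N' = N$; the rules \textsc{D-WAR-CP-Asgn} and \textsc{D-WAR-CP-Arr} yield $N' = N \cup \{x\}$ and $N' = N \cup \{a\}$ respectively. In every case $N' \supseteq N$ is immediate.

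For part (2), I would induct on the command derivation. In case \textsc{D-WAR-Instr}, the result is directly part (1). In case \textsc{D-WAR-Seq}, part (1) gives $N' \supseteq N$ on the instruction premise, and the IH on the command premise gives $N'' \supseteq N'$; composing these yields $N'' \supseteq N$. In case \textsc{D-WAR-If}, the IH applied to each branch gives $N_i \supseteq N$ for $i \in \{1,2\}$, and hence $N_1 \cup N_2 \supseteq N$. In case \textsc{D-WAR-CP}, the returned set is literally $N$, so the conclusion is trivial (note that the inner collection starts from $\emptyset$, but that does not matter because the result reported to the caller is the original $N$).

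There is no real obstacle here: the lemma is essentially a monotonicity observation that the DINO collection algorithm never discards previously-collected WAR variables. The only minor subtlety is to remember that the \textsc{D-WAR-CP} case resets the accumulator inside the nested checkpoint region but still returns the outer $N$ unchanged, so it does not interfere with the superset property at the outer level. Because every rule either leaves $N$ intact or extends it by a single location, no strengthening of the induction hypothesis is required.
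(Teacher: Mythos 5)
Your proof is correct and follows the same route as the paper, which proves the lemma by induction over the structure of $\ee$ with part~(2) invoking part~(1). The case analysis over the six instruction rules and the handling of \rulename{D-WAR-CP} (returning the outer $N$ unchanged) are exactly what the paper's sketch intends.
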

\begin{proofsketch} By induction over the structure of $\ee$. (2) uses (1).
\end{proofsketch}

\begin{lem}\label{lem:war-dino-iota}
If $\ee::N;W;R \Vdash_\mt{DINO}  \iota: N'; W', R' $ then 
$\forall$ $N_1\supseteq N'$, $N_1;W;R \Vdash_\war  \iota: W', R'$ 
\end{lem}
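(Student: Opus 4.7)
The plan is to prove this by case analysis on the derivation $\ee$, since $\iota$ is a single instruction and each rule of $\Vdash_\mt{DINO}$ for instructions corresponds (essentially directly) to a rule of $\Vdash_\war$ for instructions. The strategy will be: for each DINO rule, read off the values of $N'$, $W'$, $R'$, observe that the side conditions on the instruction (e.g.\ whether $x\in R'$, whether $x\in W$) precisely match a side condition of one of the $\Vdash_\war$ rules, and then verify the remaining hypothesis of that rule using $N_1\supseteq N'$.

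More concretely, I would step through the six DINO instruction rules as follows. For \rulename{D-WAR-Skip} and the write-dominated or never-read cases (\rulename{D-WAR-Written}, \rulename{D-WAR-WtDom}, \rulename{D-WAR-Wt-Arr}), the DINO rule sets $N' = N$ and the corresponding $\Vdash_\war$ rule (\rulename{WAR-Skip}, \rulename{WAR-NoRd}, \rulename{WAR-Wt}, \rulename{WAR-NoRd-Arr}) has no membership requirement on $N_1$, so any $N_1 \supseteq N' = N$ works. The interesting cases are \rulename{D-WAR-CP-Asgn} and \rulename{D-WAR-CP-Arr}, where DINO adds the offending location into the checkpointed set: $N' = N \cup x$ (respectively $N \cup a$). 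The matching WAR rule \rulename{WAR-Checkpointed} (respectively \rulename{WAR-Checkpointed-Arr}) requires $x\in N_1$ (respectively $a\in N_1$); this follows immediately from $N_1 \supseteq N' = N \cup x$ (respectively $N \cup a$), together with the matching side conditions $x \in R'$, $x \notin W$, which are shared verbatim with the DINO premise. The output sets $W'$ and $R'$ produced by each matching WAR rule are syntactically identical to those in the DINO rule, so the conclusion $N_1; W; R \Vdash_\war \iota : W', R'$ holds.

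There is no real obstacle here: the lemma is essentially bookkeeping, because the DINO collection algorithm is a constructive witness for the existential content of the WAR check (namely, ``there exists some choice of $N$ making the instruction well-formed''), and the monotonicity in $N_1$ falls out of the fact that each $\Vdash_\war$ rule only asks for membership of a single location in the checkpointed set. The one subtlety to get right in the write-up is being careful that where \rulename{D-WAR-CP-Arr} unconditionally adds $a$ to $N'$ when $a \in R'$ (regardless of whether $a \in W$), the corresponding WAR rule \rulename{WAR-Checkpointed-Arr} has no $a \notin W$ premise to worry about, so the case goes through cleanly; we do not need a separate write-dominated case for arrays.

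This lemma is then the natural stepping stone for the companion statement on commands (Lemma used implicitly in the soundness of the collection algorithm, analogous to the one stated in Section~\ref{sec:rio-collection} for the RIO collection), which would be proved by induction on the command derivation using part~(1) of Lemma~\ref{lem:war-dino-superset} to propagate the superset condition through the context sets as one traverses sequences and branches.
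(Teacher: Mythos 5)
Your proposal is correct and matches the paper's own argument, which is a one-line sketch (``by induction over the structure of $\ee$'') that unfolds to exactly the six-way case analysis you give: each DINO instruction rule shares its side conditions verbatim with the matching $\Vdash_\war$ rule, and the only nontrivial obligations --- $x\in N_1$ in \rulename{WAR-Checkpointed} and $a\in N_1$ in \rulename{WAR-Checkpointed-Arr} --- follow from $N_1\supseteq N'$ since DINO adds exactly those locations to $N'$. Your observation about the absence of an $a\notin W$ premise in the array case is accurate and is the right detail to flag in a full write-up.
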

\begin{proofsketch} 
By induction over the structure of $\ee$. 
\end{proofsketch}

\begin{lem}[DINO WAR collection algorithm is correct]
\label{lem:war-dino-correct}
If $\ee::N;W;R \Vdash_\mt{DINO} \cmd \longrightarrow \cmd' : N'$
then  $\forall$ $N_1\supseteq N'$,   $N_1;W;R \Vdash_\war \cmd' : \m{ok}$.
\end{lem}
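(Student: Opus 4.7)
The plan is to prove the lemma by induction on the structure of the derivation $\ee$ of $N;W;R \Vdash_\mt{DINO} \cmd \longrightarrow \cmd' : N'$. There are three top-level rules to consider: \rulename{D-WAR-Instr}, \rulename{D-WAR-Seq}, \rulename{D-WAR-If}, and \rulename{D-WAR-CP}. In each case, I will fix an arbitrary $N_1 \supseteq N'$ and build the corresponding $\Vdash_\war$ derivation by applying the analogous WAR checking rule, appealing to the induction hypotheses on sub-derivations and to Lemma~\ref{lem:war-dino-iota} to discharge instruction-level obligations.

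For the base case \rulename{D-WAR-Instr}, the rewritten command is $\iota$ itself and the DINO premise gives $N;W;R \Vdash_\mt{DINO} \iota : N';W';R'$. Lemma~\ref{lem:war-dino-iota} then directly yields $N_1;W;R \Vdash_\war \iota : W';R'$ for any $N_1 \supseteq N'$, and one application of \rulename{WAR-I} closes the case. For \rulename{D-WAR-Seq}, the DINO derivation splits $\iota;\cmd$ into an instruction premise with output set $N'$ and a sub-derivation on $\cmd$ with output $N''$. Lemma~\ref{lem:war-dino-superset} gives $N'' \supseteq N'$, so $N_1 \supseteq N''$ also satisfies $N_1 \supseteq N'$. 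I then invoke Lemma~\ref{lem:war-dino-iota} for the instruction and the induction hypothesis for the tail, and combine them via \rulename{WAR-Seq}.

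The \rulename{D-WAR-If} case is handled by observing that the output set is $N_1^\mt{br} \cup N_2^\mt{br}$ from the two branches, so any $N_1 \supseteq N_1^\mt{br} \cup N_2^\mt{br}$ dominates each branch's output. The induction hypothesis applied to each branch with this same $N_1$, combined with the extended read set $R' = R \cup rd(e)$, yields two well-formed sub-derivations that are assembled by \rulename{WAR-If}. The \rulename{D-WAR-CP} case is the one that justifies the $\forall N_1 \supseteq N'$ quantification: the rewritten command is $\m{checkpoint}(N'');\cmd'$ where the inner derivation uses empty $N,W,R$ and produces output $N''$. The target rule \rulename{WAR-Cp} ignores the ambient $N_1,W,R$ entirely and demands only $N'';\emptyset;\emptyset \Vdash_\war \cmd' : \m{ok}$, which follows from the induction hypothesis with the superset instantiated to $N''$ itself.

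The main obstacle I anticipate is keeping the bookkeeping straight across the different WAR bookkeeping components: the DINO rules thread $(N,W,R)$ monotonically through sub-derivations, while the checking rules $\Vdash_\war$ only thread $(W,R)$ and rely on a fixed ambient $N$. The use of Lemma~\ref{lem:war-dino-superset} is essential to ensure that the superset assumption at the outer level implies the superset assumptions needed by inner induction hypotheses, and Lemma~\ref{lem:war-dino-iota} is essential for lifting DINO's instruction-level judgments to WAR's instruction-level judgments under an enlarged checkpoint set. Once these two lemmas are in hand, each inductive case reduces to a mechanical syntactic correspondence between the collection rule used and its checking counterpart.
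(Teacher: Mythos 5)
Your proof is correct and follows the same route as the paper's own (sketched) argument: induction over the structure of the collection derivation $\ee$, invoking Lemma~\ref{lem:war-dino-superset} to propagate the superset assumption through sequential composition and Lemma~\ref{lem:war-dino-iota} to discharge the instruction-level cases. The case analysis and the observation that \rulename{WAR-Cp} discards the ambient $N_1$, $W$, $R$ (so the checkpoint case reduces to the inner induction hypothesis instantiated at $N''$) match what the paper intends.
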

\begin{proofsketch} 
By induction over the structure of $\ee$. Uses
Lemma~\ref{lem:war-dino-superset} and Lemma~\ref{lem:war-dino-iota}. 
\end{proofsketch}

\section{Correctness of Basic Checkpointed System}
\label{sec:cp-correctness}
\subsection{Auxiliary definitions}
\label{sec:cp-correct-defs}
We first define relations between traces emitted by continuous
executions and traces emitted by intermittent executions. We write
$O_1 \oleq O_2$ to mean that $O_1$, emitted by an intermittent
execution, is a prefix of $O_2$, emitted by a continuous execution of
the same program .  Here, the intermittent execution has not
experienced a power failure nor a checkpoint. Note that
both $O_1$ and $O_2$ only contain read actions.
We write $O_1 \oleqm O_2$ to mean that $O_1$, emitted by an intermittent
execution, is a sequence of prefixes of
$O_2$, emitted by a continuous execution of the same program . 
Here, the intermittent execution may have experienced a number of
power failures and subsequent reboots, but have not passed a
checkpoint. Here, $O_1$ may contain have reboot and read actions and
$O_2$ only has reads.
We write $O_1 \oleqmc O_2$ to mean that $O_1$, emitted by an
intermittent execution, is a sequence of sequences of prefixes of
fragments of $O_2$, emitted by
a continuous execution of the same program.  Here, the intermittent
execution may have experienced a number of power failures and
subsequent reboots and a number of checkpoints. $O_1$ can have reads, reboots
and checkpoints, $O_2$ has (no-op) checkpoints and read observations.

\begin{mathpar}

\inferrule*[right=Rb-Base]{ }{
  O \oleqm O
}
\and 
\inferrule*[right=Rb-Ind]{O_1 \oleq O_2 \\ O_1' \oleqm O_2}{
  O_1, \m{reboot}, O_1' \oleqm O_2
}
\\
\inferrule*[right=Cp-Base]{ O_1\oleqm O_2}{
  O_1 \oleqmc O_2
}
\and
\inferrule*[right=Cp-Ind]{O_1 \oleqm O_2 \\ O_1' \oleqmc O_2'}{
  O_1, \m{checkpoint}, O_1' \oleqmc O_2, \m{checkpoint},  O_2'
}

\end{mathpar}

We write $\mt{CP}(\Sigma)$ to mean that the first instruction in
 $\Sigma$ is $\m{checkpoint}$. We write $\sigma\MSeqStepsto{} \m{CP}$ to denote the trace from $\sigma$
to the nearest checkpoint. We write $\Sigma\MStepsto{} \m{CP}$ to denote the trace from $\Sigma$
to the nearest checkpoint.

\begin{defn}[Relating memories at the same execution point]
\label{def:mem-same_exec} 
$N_0, V_0, c_0, c' \vdash \nint \sim \ncont$ iff 
\\ $\m{dom}(\nint)= \m{dom}(\ncont)$, and
let $T_1 = (N_0, V_0, c_0) \MSeqStepsto{}  (\nint, V', c')$, 
$T_2 = (\nint, V', c') \MSeqStepsto{} \m{CP}$, where $T_1$ does not contain checkpoints, 
$\forall \loc \in \nint$ s.t. $\nint(\loc) \neq \ncont(\loc)$,  
$\loc \in \mt{Wt}(T_2)\cap\mt{FstWt}(T_1\cdot T_2)$ and $\loc\notin \mt{WT}(T_1)$.
\end{defn} 

\begin{defn}[Relating memories between current and initial execution point]
\label{def:mem-initial}  
  $N_c, N_{\mt{rb}}, V_0, c_0 \vdash \nint \sim \ncont$ iff 
  $\m{dom}(\nint)= \m{dom}(\ncont)$, 
 $N_c\subseteq \ncont$, 
  $\forall \loc \in \nint$ s.t. $\nint(\loc) \neq \ncont(\loc)$,
  $\loc \in \mt{FstWt}(T)\cup\m{dom}(N_c)$, where $T_1 = (N_{\mt{rb}}, V_0, c_0) \MSeqStepsto{}  (\nint, V', c')$, 
  $T_2 = (\nint, V', c') \MSeqStepsto{} \m{CP}$, and $T= T_1 \cdot T_2$. 
\end{defn} 

\begin{defn}[Related configurations]
$N_{\mt{rb}} \vdash (\context, \nint, \vmem_1, \cmd_1) \sim (\ncont, \vmem_2, \cmd_2)$ 
iff $\context = (\nvmem_c, \vmem_0, \cmd_0)$, $\nvmem_c \subseteq N_{\mt{rb}}$, 
and $N_{\mt{rb}}, \vmem_0, \cmd_0, \cmd_1 \vdash \nint \sim \ncont$,
$\vmem_1=\vmem_2$, and $\cmd_1=\cmd_2$.
\end{defn} 

\begin{defn}[Erased configuration]
$\erase{(\context, \nvmem, \vmem, \cmd)} = ( \nvmem, \vmem, \cmd)$ 
\end{defn} 

\begin{lem}[Solid to Dash]\label{lem:mem-relation-relation}
If $N_c, N_0, V_0, c_0\vdash \nint \sim \ncont$, $N_c; \emptyset; \emptyset \Vdash 
c_0: \m{ok}$, $N_c\subseteq \nvmem_0$, $N_c\subseteq \nint$ and  $N_c \subseteq \ncont$ 
then $\nint, V_0, c_0, c_0 \vdash \nint \sim \ncont$. 
\end{lem}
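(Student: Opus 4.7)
The plan is to unfold both Definition 3.3 (solid line) on the hypothesis side and Definition 3.2 (dashed line) on the conclusion side, and then exhibit the required conditions directly. The key structural observation is that the conclusion $\nint, V_0, c_0, c_0 \vdash \nint \sim \ncont$ instantiates the dashed-line relation with the current command equal to the starting command and the starting non-volatile memory equal to the current intermittent memory. Consequently the trace $T_1^{\mt{new}}$ from $(\nint, V_0, c_0)$ to the ``current'' state with command $c_0$ collapses to the zero-step trace. This immediately gives $\mt{Wt}(T_1^{\mt{new}}) = \emptyset$, so the clause $\loc \notin \mt{Wt}(T_1^{\mt{new}})$ is vacuous, and $T_1^{\mt{new}} \cdot T_2^{\mt{new}} = T_2^{\mt{new}}$, where $T_2^{\mt{new}} = (\nint, V_0, c_0) \MSeqStepsto{} \m{CP}$.

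First I would peel off domain equality $\m{dom}(\nint) = \m{dom}(\ncont)$, which is inherited directly from the solid-line hypothesis. Next, pick any $\loc$ with $\nint(\loc) \neq \ncont(\loc)$; by the solid-line hypothesis, $\loc \in \mt{FstWt}(T^{\mt{old}}) \cup \m{dom}(N_c)$, where $T^{\mt{old}} = T_1^{\mt{old}} \cdot T_2^{\mt{old}}$ and $T_1^{\mt{old}}$ is the continuous trace from $(N_0, V_0, c_0)$ reaching the current state. The disjunct $\loc \in \m{dom}(N_c)$ is ruled out using the hypotheses $N_c \subseteq \nint$ and $N_c \subseteq \ncont$: both would force $\nint(\loc) = N_c(\loc) = \ncont(\loc)$, contradicting the assumption. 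Hence $\loc \in \mt{FstWt}(T^{\mt{old}})$.

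The heart of the proof is to transfer $\loc \in \mt{FstWt}(T^{\mt{old}})$ to $\loc \in \mt{FstWt}(T_2^{\mt{new}})$. Both traces execute the same command $c_0$, but $T^{\mt{old}}$ starts from $N_0$ while $T_2^{\mt{new}}$ starts from $\nint$; the starting memories agree on $N_c$ (by $N_c \subseteq \nvmem_0$ and $N_c \subseteq \nint$) but may differ elsewhere. This is exactly where the WAR well-formedness hypothesis $N_c;\emptyset;\emptyset \Vdash_\war c_0: \m{ok}$ earns its keep: it guarantees that every location the program reads is either already first-written earlier in the trace (so the starting value is irrelevant) or lies in $N_c$ (so the two starting memories agree on it). An inductive argument on the structure of $c_0$, carrying along the invariant that the two traces take the same control-flow path and produce the same first-write set on non-$N_c$ locations, then yields $\mt{FstWt}(T^{\mt{old}}) = \mt{FstWt}(T_2^{\mt{new}})$ on the locations of interest. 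From $\loc \in \mt{FstWt}(T_2^{\mt{new}})$ we immediately get $\loc \in \mt{Wt}(T_2^{\mt{new}})$ by the inclusion $\mt{FstWt} \subseteq \mt{Wt}$, completing the conditions of Definition 3.2.

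The main obstacle is this first-write transfer step. The naive identity $\mt{FstWt}(T^{\mt{old}}) = \mt{FstWt}(T_2^{\mt{new}})$ does not hold unconditionally: a location could be first-written inside the $T_1^{\mt{old}}$ prefix but then read in $T_2^{\mt{old}}$. The WAR check is exactly what precludes such read-after-first-write patterns from straddling the $T_1^{\mt{old}}/T_2^{\mt{old}}$ boundary in a way that would spoil the transfer, but formalizing this requires an auxiliary idempotent-re-execution lemma covering assignments, array writes, and branches. I would expect this to be the bulk of the proof effort, with the rest of Lemma C.1 reducing to a short bookkeeping argument once that lemma is in hand.
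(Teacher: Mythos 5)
Your proposal matches the paper's argument: the paper's proof of this lemma is a one-liner (``by examining the two relations and Lemma~\ref{lem:wt-fst-version}''), and the steps you spell out --- inheriting domain equality, discharging the $\m{dom}(N_c)$ disjunct via $N_c \subseteq \nint$ and $N_c \subseteq \ncont$, collapsing the new $T_1$ to the zero-step trace so that the $\loc \notin \mt{Wt}(T_1)$ clause is vacuous, and concluding with $\mt{FstWt} \subseteq \mt{Wt}$ --- are exactly the ``examination'' it alludes to. The one place you do noticeably more work than the paper is the first-write transfer step you call the heart of the proof: in every application of this lemma the solid relation is instantiated with the reboot memory equal to $\nint$ itself, so the trace underlying the hypothesis and the trace $T_2$ in the conclusion are the same trace to the next checkpoint and no idempotent-re-execution lemma is needed; your extra WAR-based machinery is not wrong, but it addresses a generality (distinct starting memories $N_0 \neq \nint$) that the paper's uses of the lemma never exercise.
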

\begin{proof} 
  By examining the two relations and Lemma~\ref{lem:wt-fst-version}.
\end{proof}

\begin{lem}[Dash to Solid]\label{lem:mem-same-to-init}
  If $\nint, V_0, c_0, c_0 \vdash \nint \sim \ncont$,  and $N_c; \emptyset; \emptyset \Vdash 
  c_0: \m{ok}$,  $N_c\subseteq \nint$ and  $N_c \subseteq \ncont$,
  then $N_c, \nint, V_0, c_0 \vdash \nint \sim \ncont$
\end{lem}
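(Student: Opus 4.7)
The plan is to unfold both relations and observe that every clause required by the solid-line (current-to-initial) relation follows directly from the clauses already guaranteed by the dashed-line (same-execution-point) hypothesis. The parameters align very nicely: in the hypothesis $\nint, V_0, c_0, c_0 \vdash \nint \sim \ncont$ we have $N_0 = \nint$, $V_0$, $c_0$, and current command $c' = c_0$, while in the desired conclusion $N_c, \nint, V_0, c_0 \vdash \nint \sim \ncont$ we have $N_{\mt{rb}} = \nint$, $V_0$, $c_0$. So the trace decompositions $T_1, T_2$ appearing in Definitions~C.1 and~C.2 start from exactly the same configuration $(\nint, V_0, c_0)$, which makes them matchable segment-by-segment.

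First I would argue that because the current command in the hypothesis is $c_0$ itself --- i.e. we are still at the first instruction following the most recent checkpoint/reboot --- the trace $T_1$ from $(\nint, V_0, c_0)$ to $(\nint, V', c_0)$ must be the empty trace. This follows from the shape of the semantics: no construct in our language can return control to the same command without at least one transition, so in particular $V' = V_0$, $\mt{Wt}(T_1) = \emptyset$, and $\mt{FstWt}(T_1 \cdot T_2) = \mt{FstWt}(T_2)$. The suffix $T_2$ reaching the next checkpoint is then identical to the trace used to instantiate the conclusion.

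Next I would verify the three conjuncts of Definition~C.2 one by one. The domain equation $\dom(\nint) = \dom(\ncont)$ is copied verbatim from the hypothesis. The inclusion $N_c \subseteq \ncont$ is given as an assumption. For the remaining per-location obligation, take an arbitrary $\loc$ with $\nint(\loc) \neq \ncont(\loc)$. The dashed hypothesis yields $\loc \in \mt{Wt}(T_2) \cap \mt{FstWt}(T_1 \cdot T_2)$ together with $\loc \notin \mt{Wt}(T_1)$. Since $T_1$ is empty, the last clause is vacuous and what remains is $\loc \in \mt{FstWt}(T_2) = \mt{FstWt}(T_1 \cdot T_2)$, which immediately implies $\loc \in \mt{FstWt}(T_1 \cdot T_2) \cup \dom(N_c)$, as required.

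I do not anticipate a real obstacle: the lemma is essentially definitional weakening, since the dashed condition $\mt{Wt}(T_2) \cap \mt{FstWt}(T_1 \cdot T_2)$ is strictly stronger than the solid condition $\mt{FstWt}(T_1 \cdot T_2) \cup \dom(N_c)$. The only subtle point is justifying that $T_1$ is the empty trace from the hypothesis $c' = c_0$; if a more permissive reading of the relation is intended, one would instead appeal to Lemma~\ref{lem:wt-fst-version} (as Lemma~C.4 does) to push the writes in $T_1$ into $\mt{FstWt}(T_1\cdot T_2)$. The remaining premises $N_c; \emptyset; \emptyset \Vdash c_0: \m{ok}$ and $N_c \subseteq \nint, N_c \subseteq \ncont$ are not strictly needed for the argument, but I would retain them to keep the statement symmetric with its companion Lemma~C.4 and to make invocation at checkpoint boundaries uniform.
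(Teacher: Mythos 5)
Your proof is correct and takes essentially the same approach as the paper, whose entire proof is ``By examining the two relations'': you unfold both definitions at the common starting configuration $(\nint, V_0, c_0)$ and observe that the dashed per-location condition $\loc \in \mt{Wt}(T_2)\cap\mt{FstWt}(T_1\cdot T_2)$ trivially implies the solid one $\loc \in \mt{FstWt}(T)\cup\m{dom}(N_c)$. One small slip in your closing remark: $N_c \subseteq \ncont$ \emph{is} needed, since Definition~\ref{def:mem-initial} includes that inclusion as one of its conjuncts --- which you in fact discharge correctly earlier in your own argument.
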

\begin{proof}
By examining the two relations.
\end{proof}

\begin{lem}[Solid to Solid after Reboot]\label{lem:solid-reboot}
  If $N_{\mt{ckpt}},N_{\mt{rb_0}}, V_0, c_0 \vdash \nint \sim \ncont$
  , $N_{\mt{rb}} = \nint \lhd N_{\mt{ckpt}}$,
  and $N_{\mt{ckpt}}; \emptyset; \emptyset \Vdash 
  c_0: \m{ok}$, 
  then $N_{\mt{ckpt}},N_{\mt{rb}}, V_0, c_0 \vdash N_{\mt{rb}} \sim \ncont$
\end{lem}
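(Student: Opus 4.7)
The plan is to prove the lemma by unpacking Definition~\ref{def:mem-initial} on both the hypothesis and the goal, then verifying each clause. The key observation is that immediately after a reboot, the execution restarts from $(N_{\mt{rb}}, V_0, c_0)$, so in the new relation the ``current'' state coincides with the ``initial'' state. Hence the trace $T_1$ in the new relation is the empty (single-state) trace, and the total trace $T$ reduces to $T_2 = (N_{\mt{rb}}, V_0, c_0) \MSeqStepsto{} \m{CP}$.

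First, I would dispatch the easy clauses. Domain equality follows because $N_{\mt{rb}} = \nint \lhd N_{\mt{ckpt}}$ preserves the domain of $\nint$ (since $N_{\mt{ckpt}}$ is a projection of an earlier non-volatile memory and hence $\m{dom}(N_{\mt{ckpt}}) \subseteq \m{dom}(\nint)$), combined with $\m{dom}(\nint) = \m{dom}(\ncont)$ from the hypothesis. The inclusion $N_{\mt{ckpt}} \subseteq \ncont$ is inherited directly from the hypothesis.

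The main content is the third clause. For any $\loc$ with $N_{\mt{rb}}(\loc) \neq \ncont(\loc)$, I would case split on membership in $\m{dom}(N_{\mt{ckpt}})$. If $\loc \in \m{dom}(N_{\mt{ckpt}})$, the required conclusion $\loc \in \mt{FstWt}(T) \cup \m{dom}(N_{\mt{ckpt}})$ holds trivially. Otherwise, the overlay $\lhd$ leaves $\loc$ untouched, so $N_{\mt{rb}}(\loc) = \nint(\loc)$, giving $\nint(\loc) \neq \ncont(\loc)$; applying the hypothesis yields $\loc \in \mt{FstWt}(T') \cup \m{dom}(N_{\mt{ckpt}})$, and since $\loc$ is not checkpointed, $\loc \in \mt{FstWt}(T')$, where $T' = T_1' \cdot T_2'$ is the old trace from $(N_{\mt{rb_0}}, V_0, c_0)$ through $(\nint, V', c')$ to the next checkpoint.

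The hard part is bridging from $\loc \in \mt{FstWt}(T')$ to $\loc \in \mt{FstWt}(T)$: the two traces start from different non-volatile memories ($N_{\mt{rb_0}}$ versus $N_{\mt{rb}}$) and could in principle follow different control flow. This is exactly where the premise $N_{\mt{ckpt}}; \emptyset; \emptyset \Vdash c_0: \m{ok}$ is essential. WAR-wellformedness guarantees that every non-checkpointed location involved in a write-after-read is write-dominated on every path, so the set of first-written non-checkpointed locations is invariant under changes of memory restricted to such locations. I would appeal to a helper lemma (the \emph{wt-fst-version} lemma referenced in the proof of Lemma~\ref{lem:mem-relation-relation}) to formalize this invariance, then combine it with the observation that $N_{\mt{rb}}$ and $N_{\mt{rb_0}}$ only differ at locations in $\mt{FstWt}(T') \cup \m{dom}(N_{\mt{ckpt}})$ (a consequence of executing $T_1'$ and then re-overlaying $N_{\mt{ckpt}}$) to conclude $\loc \in \mt{FstWt}(T)$, completing the case.
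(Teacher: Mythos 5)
Your proposal is correct and matches the paper's (one-line) proof, which likewise proceeds by examining the two instances of Definition~\ref{def:mem-initial} and invoking Lemma~\ref{lem:wt-fst-version} to handle exactly the bridging step you identify as the hard part. Your write-up is simply a more explicit elaboration of the same argument.
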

\begin{proof}
By examining the two relations and Lemma~\ref{lem:wt-fst-version}.
\end{proof}

\subsection{Correctness Proofs}
\label{sec:cp-correct-proofs}

The Correctness Theorem for intermittent systems without Inputs follows from the following Lemma.
\begin{lem}[Correctness]
\label{lem:correctness-basic}
If
  $T = (\context, \nvmem, \vmem, \cmd) \MStepsto{O_1} \Sigma$,
$\context = (\nvmem_0, \vmem, \cmd)$,  $\nvmem_0\subseteq \nvmem$,
$ \nrb = \m{nearestRb(T)}$
and  $\nvmem_0, \emptyset, \emptyset \Vdash \cmd: \m{ok}$ 
then $\exists O_2, \sigma$ s.t. 
\begin{enumerate}
\item   $(\nvmem, \vmem, \cmd) \MSeqStepsto{O_2}
  \sigma$ and $\nrb \vdash \Sigma \sim \sigma$
and
\item $\forall T' = \Sigma \MStepsto{O} \Sigma'$, $\mt{CP}(\Sigma')$, 
and $T'$ does not contain checkpoints or reboots
implies
$\sigma \MSeqStepsto{O} \erase{(\Sigma')}$
and $O_1, O\oleqmc O_2, O$.
\end{enumerate}
\end{lem}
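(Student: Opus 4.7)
The proof plan is to proceed by strong induction on the length of the intermittent trace $T$, with the key case split being on whether the last transition is a reboot, a checkpoint-free step, or whether $T$ contains any reboots at all. The overall goal is to construct the witness continuous execution $(\nvmem,\vmem,\cmd) \MSeqStepsto{O_2} \sigma$ incrementally, maintaining as an invariant the ``solid line'' relation of Definition~\ref{def:mem-initial}, i.e., $N_{\mt{ckpt}}, \nrb, \vmem, \cmd \vdash \nint \sim \ncont$, where $\nint$ and $\ncont$ are the current non-volatile memories of the intermittent and continuous traces respectively, and $\nrb$ is the non-volatile memory at the most recent reboot (or the initial $\nvmem$ if no reboot has occurred). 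Part (2) will then be obtained by an additional, shorter induction on the suffix trace $T'$ using the ``dashed line'' relation of Definition~\ref{def:mem-same_exec}, which is the appropriate invariant when the intermittent and continuous executions are stepping in lockstep without a reboot intervening.

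For the base case, where $T$ has length zero, we take $O_2 = \cdot$ and $\sigma = (\nvmem, \vmem, \cmd)$; the solid-line relation holds trivially since $\nint = \ncont = \nvmem$ and the domains agree. For the inductive step on part (1), I distinguish three subcases by inspecting the last transition of $T$. If it is an ordinary (non-reboot, non-checkpoint) step, I apply the IH to get a continuous $\sigma_0$ related to the pre-last intermittent state, and then invoke part~(2) of the IH (or an analogous sub-lemma) to extend the continuous execution by one step, preserving the dashed-line invariant and then converting back to the solid-line invariant via Lemma~\ref{lem:mem-same-to-init}. If it is a \rulename{I/O-CP-CkPt} step, the checkpoint observation is simply appended on both sides, the relation is re-established with the updated $\context$ (whose non-volatile portion is contained in the new $\nrb$ by construction), and the well-formedness hypothesis $\nvmem_0,\emptyset,\emptyset \Vdash \cmd: \m{ok}$ transfers to the sub-command via rule \rulename{WAR-Cp}. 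If it is a \rulename{I/O-CP-PowerFail} followed by \rulename{I/O-CP-Reboot} (I treat the reboot pair together, since neither alone makes semantic progress on the continuous side), then $\nrb$ is updated to $\nint' \lhd \nvmem_c$; here I invoke Lemma~\ref{lem:solid-reboot} to re-establish the solid-line relation against the same continuous witness $\sigma$, with the observation sequence on the continuous side unchanged, matching the design of $\oleqm$ via rule \rulename{I-Rb-Ind}.

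The main obstacle I expect is the bookkeeping around the observation relation $\oleqmc$ at reboot boundaries: the intermittent trace may accumulate observation prefixes that are to be ``discarded'' by \rulename{I-Rb-Ind}, so I cannot naively maintain $O_1 \oleqmc O_2$ as a strengthened invariant pointwise along the trace. The clean way around this is to maintain only the post-last-reboot prefix as being in the $\oleqm$ relation with a prefix of the continuous observation sequence, and to rebuild the full $\oleqmc$ statement at checkpoint boundaries only. Concretely, I will strengthen the induction hypothesis of part (1) to state: the post-last-reboot observations $O_1^{\mt{post}}$ satisfy $O_1^{\mt{post}} \oleqm O_2^{\mt{suf}}$ for some suffix $O_2^{\mt{suf}}$ of $O_2$ that corresponds to the continuous execution from $\nrb$; then at reboot, we discard $O_1^{\mt{post}}$ entirely (so \rulename{I-Rb-Ind} applies), and at checkpoint, we glue together the accumulated segments via \rulename{Cp-Ind}.

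For part (2), given the related pair $\Sigma \sim \sigma$ (solid line), I convert to the dashed-line relation by Lemma~\ref{lem:mem-relation-relation}, using the well-formedness hypothesis and the fact that $N_c \subseteq \nint, \ncont$ follows from the solid-line relation. Under the dashed-line relation, I induct on the checkpoint-and-reboot-free trace $T'$: for each ordinary step of the intermittent execution I can take the corresponding step on the continuous side, preserving dashed equality of memories on all locations written so far and maintaining read-observation equality. The read observations agree because any location read must either (a) not differ between $\nint$ and $\ncont$ by the dashed-line relation, or (b) have been written earlier in $T_1$, but this is excluded by the $\mt{FstWt}$ side condition in Definition~\ref{def:mem-same_exec} combined with $\Vdash_\war \cmd: \m{ok}$, which guarantees that any non-first-write read of a WAR-relevant location is to a checkpointed variable. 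When we hit the concluding checkpoint $\mt{CP}(\Sigma')$, the dashed relation collapses to memory equality (since no further writes remain in $T_2$), giving $\sigma \MSeqStepsto{O} \erase{\Sigma'}$ and completing the proof.
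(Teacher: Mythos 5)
Your overall architecture---the two memory relations of Definitions~\ref{def:mem-initial} and~\ref{def:mem-same_exec}, the solid-to-dash, dash-to-solid, and solid-after-reboot conversions, lockstep one-step preservation for the final re-execution, and gluing of observation segments at checkpoints---is the same as the paper's; you merely flatten its nested induction (outer on the number of checkpoints in this lemma, inner on the number of reboots in Lemma~\ref{lem:one-cp-m-f}) into a single induction on trace length with a last-transition case split. That reorganization is workable, but your reboot case has a concrete gap: you appeal to \rulename{I-Rb-Ind}, which discards the pre-reboot observation prefix unconditionally. That rule belongs to the input-enabled system. For the basic system of this lemma the applicable rule is \rulename{Rb-Ind}, whose premise additionally demands $O_1 \oleq O_2$: the observations of the \emph{discarded, failed} attempt must themselves be a prefix of the continuous observation sequence. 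Your plan of ``keep only the post-last-reboot prefix and drop everything at a reboot'' never discharges this obligation, so the concluding judgment $O_1, O \oleqmc O_2, O$ cannot be derived. The paper closes this by extending each failed attempt to a hypothetical completed run (progress plus determinism of the semantics), reusing the reboot-free argument to show that completion's observations extend the continuous ones, and then truncating; some such step is unavoidable here.

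A secondary wiring problem: you state the part~(1) invariant as the solid relation between the \emph{current} memories of both traces, but the conclusion $\nrb \vdash \Sigma \sim \sigma$ unfolds to the dashed, same-point relation of Definition~\ref{def:mem-same_exec}, while the solid relation is only ever taken against the continuous memory at the \emph{region-initial} point---that is what survives partial runs and reboots, via Lemma~\ref{lem:partial-run} and Lemma~\ref{lem:solid-reboot}, and Lemma~\ref{lem:mem-same-to-init} converts dash to solid only when the current command equals the region's initial command. Relatedly, after a reboot you cannot keep ``the same continuous witness $\sigma$'': the rebooted configuration is back at $\cmd_0$ while $\sigma$ sits mid-region, so the commands differ and the related-configurations judgment fails; the continuous steps built during the failed attempt must be discarded and the witness restarted from the region's initial configuration. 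Both points are repairable by carrying the solid relation against the initial continuous memory as an auxiliary invariant alongside the dashed relation to the current witness, but as written your ordinary-step and reboot cases do not compose.
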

\ifproofs
\begin{proof} By induction on the number of checkpoints in $O_1$.
\begin{description}
\item[Base case:] $O_1$ does not include any checkpoint, directly
  apply Lemma~\ref{lem:one-cp-m-f}.
\item[Inductive case:] $O_1$ contains $k+1$ checkpoints where $k\geq
  0$
\begin{tabbing}
By assumption, 
\\\qquad \= (1)~~ \= 
$T= (\context, \nvmem, \vmem, \cmd) \MStepsto{O_1}
(\context, \nvmem_1, \vmem_1, \m{checkpoint}(\omega);\cmd_1) 
\Stepsto{\m{checkpoint}} \Sigma' \MStepsto{O_2} \Sigma''$,
\\\>\> where $O_1$ does not contain checkpoints
\\By Lemma~\ref{lem:one-cp-m-f} 
\\\>(2)\>$(\nvmem, \vmem, \cmd) \MSeqStepsto{O'_1}
(\nvmem_1, \vmem_1, \m{checkpoint}(\omega);\cmd_1) $ and $O_1 \oleqm
O'_1$
\\By \rulename{CP-CheckPoint} and (1)
\\\>(3)\> $\Sigma' =(\context_1, \nvmem_1, \vmem_1, \cmd_1)$ where
$\context_1 =  (\proj{\nvmem_1}{\omega}, \vmem_1, \cmd_1)$
\\By \rulename{CheckPoint} and (2)
\\\>(4)\> $(\nvmem_1, \vmem_1, \m{checkpoint}(\omega);\cmd_1)
\SeqStepsto{\m{checkpoint}} (\nvmem_1, \vmem_1, \cmd_1)$ 
\\By  $\nvmem_0, \emptyset , \emptyset \Vdash \cmd: \m{ok}$ and
Lemma~\ref{lem:cmd-runtime-ok}
\\\>(5)\> exists $W'. R'$ s.t. $\nvmem_0, W', R' \vDash
\m{checkpoint}(\omega);\cmd_1:\m{ok}$
\\By inversion of (5)
\\\>(6)\>  $\omega, \emptyset, \emptyset  \Vdash \cmd_1: \m{ok}$
\\By I.H. on the tail of $T$, (3), and (6)
\\\>(7)\>  $(\nvmem_1, \vmem_1, \cmd_1) \MSeqStepsto{O'_2}
  \sigma''$ and $\nrb \vdash \Sigma'' \sim \sigma''$
and
\\\>(8)\> and 
 $\Sigma'' \MStepsto{O_3} (\context''', \nvmem''', \vmem''', \m{checkpoint}(\omega');\cmd''')$
implies
\\\>\>$\sigma'' \MSeqStepsto{O_3} (\nvmem''', \vmem''', \m{checkpoint}(\omega');\cmd''')$,
and $O_2, O_3\oleqmc O'_2, O_3$.
\\By (2) and (8) and \rulename{Cp-Ind}
\\\>(9)\> $O_1, \m{checkpoint}, O_2, O_3\oleqmc O'_1, \m{checkpoint}, O'_2, O_3$ 
\\By connection of the executions the conclusion holds
\end{tabbing}
\end{description}
\end{proof}
\fi

\begin{lem}\label{lem:cmd-runtime-ok}
  If $\nvmem_0, W, R \Vdash \cmd: \m{ok}$ and
  $(\nvmem, \vmem, \cmd)\MSeqStepsto{O} (\nvmem', \vmem', \cmd')$, and $O$ does not
  contain checkpoints, then $\exists$ $W'$, $R'$ s.t.
  $\nvmem_0, W', R' \Vdash \cmd': \m{ok}$.
\end{lem}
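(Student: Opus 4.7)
\begin{proofsketch}
The plan is to prove this by induction on the length $k$ of the derivation $(\nvmem, \vmem, \cmd)\MSeqStepsto{O} (\nvmem', \vmem', \cmd')$. The base case $k=0$ is immediate: $\cmd' = \cmd$, so we take $W' = W$ and $R' = R$. For the inductive step, it suffices to establish a single-step preservation lemma, which is the real heart of the argument:

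\medskip

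\noindent\textbf{Single-step preservation.} If $\nvmem_0, W, R \Vdash_\war \cmd: \m{ok}$ and $(\nvmem, \vmem, \cmd) \SeqStepsto{o} (\nvmem', \vmem', \cmd'')$ with $o \neq \m{checkpoint}$, then there exist $W''$, $R''$ with $\nvmem_0, W'', R'' \Vdash_\war \cmd'': \m{ok}$. With this in hand, composing it with the inductive hypothesis applied to the residual $k-1$ step trace gives the full result.

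\medskip

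I would prove the single-step lemma by induction on the structure of the derivation $(\nvmem, \vmem, \cmd) \SeqStepsto{o} (\nvmem', \vmem', \cmd'')$, case-analyzing by the semantic rule. The \rulename{If-T}/\rulename{If-F} cases follow directly from inversion on \rulename{WAR-If}, which already typechecks each branch $\cmd_i$ under $(\nvmem_0, W, R \cup rd(e))$. The \rulename{Seq} case steps $\iota;\cmd$ to $\cmd$ after reducing $\iota$ to $\m{skip}$ (via the inner instruction step); inversion on \rulename{WAR-Seq} gives us $\nvmem_0, W, R \Vdash_\war \iota: W'; R'$ together with $\nvmem_0, W', R' \Vdash_\war \cmd: \m{ok}$, so we pick $W''=W'$, $R''=R'$. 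The single-instruction case (treating $\iota$ as a command via \rulename{WAR-I}) reduces to showing $\nvmem_0, W', R' \Vdash_\war \m{skip}: \m{ok}$, which follows from \rulename{WAR-Skip} wrapped by \rulename{WAR-I}. The checkpoint rule is excluded because $o \neq \m{checkpoint}$. Assignments and array assignments do not change $\cmd$ structurally beyond producing skip, so they behave like the instruction case.

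\medskip

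The main obstacle I anticipate is bookkeeping around the sequential composition rule when the leading instruction itself is a compound skip-elimination (e.g., \rulename{CP-Skip} reducing $\m{skip};\cmd$ to $\cmd$): the WAR rule for sequence requires a write/read set for the prefix, so one must observe that \rulename{WAR-Skip} supplies the unchanged $W;R$ and that inversion on \rulename{WAR-Seq} then hands back exactly $\nvmem_0, W, R \Vdash_\war \cmd: \m{ok}$. A related subtlety is that the lemma only claims the existence of \emph{some} $W', R'$, not the ones that would actually arise in a more refined semantic trace; this looseness is precisely what makes the induction go through, since the checking judgment only grows its write/read sets monotonically and we never need to relate them to the specific locations actually read or written along the trace. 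No auxiliary lemma beyond inversion on the WAR rules is needed.
\end{proofsketch}
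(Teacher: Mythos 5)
Your proposal is correct and is essentially the standard subject-reduction argument the paper intends: the paper's one-line sketch inducts over the checking derivation $\nvmem_0, W, R \Vdash \cmd: \m{ok}$, while you induct on trace length with a single-step preservation lemma proved by case analysis on the semantic rule plus inversion on the WAR rules, but these amount to the same case split (\rulename{WAR-Seq}, \rulename{WAR-If}, \rulename{WAR-I}, with checkpoints excluded by the observation condition) and both go through.
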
 
\begin{proofsketch}
By induction over the derivation $\nvmem_0, W, R \Vdash \cmd:
\m{ok}$. 
\end{proofsketch}

\begin{lem}[One checkpoint, multiple failures]
\label{lem:one-cp-m-f}
If
  $T = (\context, \nvmem_1, \vmem, \cmd) \MStepsto{O_1} \Sigma$
  where $T$ does not execute checkpoint, $\nrb = \m{nearestRb(T)}$
 $\context = (\nvmem_0, \vmem, \cmd)$, $\nvmem_0\subseteq\nvmem_1$, $\nvmem_0\subseteq\nvmem_2$,
 and $\nvmem_0, \nvmem_1, \vmem, \cmd \vdash \nvmem_1 \sim \nvmem_2$ 
 and  $\nvmem_0, \emptyset, \emptyset \Vdash \cmd: \m{ok}$ 
then
\begin{itemize}
\item  $\exists O_2, \sigma$ s.t. $(\nvmem_2, \vmem, \cmd) \MSeqStepsto{O_2}
  \sigma$ and $\nrb \vdash \Sigma \sim \sigma$ and
\item 
$\forall T' = \Sigma \MStepsto{O} \Sigma'$ where $\Sigma' =
(\context', \nvmem', \vmem', \m{checkpoint}(\omega);\cmd')$ and $T'$
does not contain checkpoints or reboots
implies
$\sigma \MSeqStepsto{O} \erase{(\Sigma')}$,
and $O_1, O\oleqm O_2, O$.
\end{itemize}
\end{lem}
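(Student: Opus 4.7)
The plan is to proceed by induction on the number of reboots appearing in $T$, combining a step-by-step simulation argument in the reboot-free case with the reboot-handling lemmas for the general case. Throughout, the two memory invariants from Section~\ref{sec:cp-correct-defs}---the \emph{solid-line} (current-to-initial) relation of Definition~\ref{def:mem-initial} and the \emph{dashed-line} (same-point) relation of Definition~\ref{def:mem-same_exec}---serve as the carried invariants.

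For the base case (no reboots in $T$), I would first convert the hypothesized solid-line relation $\nvmem_0, \nvmem_1, \vmem, \cmd \vdash \nvmem_1 \sim \nvmem_2$ into the dashed-line relation at $(\nvmem_1, \vmem, \cmd)$ via Lemma~\ref{lem:mem-relation-relation}. Then I would perform an inner induction on the length of $T$: each intermittent step that is neither a failure nor a reboot is matched by a continuous step producing the same observation, because the dashed-line relation forces any location where $\nint$ and $\ncont$ differ to lie in the must-first-write set of the remaining trace and hence to be as-yet-unread. Writes in turn restore such differing locations to agreement. At the end of the simulation, Lemma~\ref{lem:mem-same-to-init} yields the required relation $\nrb \vdash \Sigma \sim \sigma$ with $\nrb = \nvmem_1$, and $O_1 \oleqm O_2$ follows by rule \rulename{Rb-Base}. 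Part~(2) of the conclusion follows from the same simulation argument applied to the reboot-free, checkpoint-free continuation $T'$ from $\Sigma$.

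For the inductive step, I would split $T$ at its first reboot as $T_\mt{pre}$ followed by a failure-reboot pair followed by $T_\mt{post}$, where $T_\mt{pre}$ is reboot-free and $T_\mt{post}$ contains one fewer reboot. Apply the base-case reasoning to $T_\mt{pre}$ to obtain a matching continuous prefix with observations $O_\mt{pre}$ and a dashed-line at the pre-failure state, then convert back to a solid-line via Lemma~\ref{lem:mem-same-to-init}. The failure and reboot steps produce a new non-volatile memory of the form $\nvmem'_\mt{pre} \lhd \nvmem_0$, to which Lemma~\ref{lem:solid-reboot} applies, giving a fresh solid-line relation between this post-reboot memory and $\nvmem_2$. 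This discharges the hypotheses of the induction hypothesis on $T_\mt{post}$, which supplies the continuous execution $(\nvmem_2, \vmem, \cmd) \MSeqStepsto{O_2} \sigma$ with $\nrb \vdash \Sigma \sim \sigma$ and $O_\mt{post} \oleqm O_2$. The compound observation then satisfies $O_1 \oleqm O_2$ via rule \rulename{Rb-Ind}, using determinism of continuous execution to see that $O_\mt{pre}$ is a prefix of $O_2$.

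The main obstacle is establishing preservation of the dashed-line relation under single steps, together with the matching of read observations. This rests on connecting the dynamic ``first-write in the remainder of the trace'' notion with the static WAR well-formedness $\nvmem_0 \Vdash_\war \cmd: \m{ok}$, which syntactically forbids reading any location involved in an uncheckpointed WAR dependence. An auxiliary preservation lemma for $\Vdash_\war$ under single-step reduction (analogous in spirit to Lemma~\ref{lem:cmd-runtime-ok}) will be needed, as will careful tracking of the shifting parameters in the memory relations---notably the trace $T_1$ in Definition~\ref{def:mem-same_exec}---as execution advances.
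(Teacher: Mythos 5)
Your overall skeleton matches the paper's: induction on the number of reboots, with the base case handled by converting the solid-line relation to the dashed-line one (Lemma~\ref{lem:mem-relation-relation}) and then running a step-by-step simulation (the paper packages this as Lemmas~\ref{lem:one-f} and~\ref{lem:one-f-completion}), and the inductive case handled by splitting at the first reboot, applying Lemma~\ref{lem:solid-reboot}, invoking the induction hypothesis on the tail, and closing the observation relation with \rulename{Rb-Ind} plus determinism. The base case as you describe it is sound.

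The genuine gap is in how you re-establish the solid-line hypothesis for the induction hypothesis after the first failure. You propose to simulate the pre-failure segment $T_{\mathit{pre}}$ on both sides, obtain a dashed-line relation at the pre-failure point, and then ``convert back to a solid-line via Lemma~\ref{lem:mem-same-to-init}.'' This does not work for two reasons. First, Lemma~\ref{lem:mem-same-to-init} only converts dash to solid at the \emph{initial} point of a checkpoint region (its statement has $c_0, c_0$, i.e., the current command equals the initial command); the pre-failure state is in general mid-region, so the lemma does not apply there. Second, and more fundamentally, even a correct mid-trace conversion would relate the pre-failure intermittent memory $\nvmem_1'$ to the \emph{advanced} continuous memory reached by your matching continuous prefix, whereas the induction hypothesis on the post-reboot tail must be discharged against the \emph{original} $\nvmem_2$: after \rulename{CP-Reboot} the intermittent side restarts from $(\nvmem_1'\lhd\nvmem_0, \vmem, \cmd)$ and the conclusion of the lemma fixes the continuous run to start at $(\nvmem_2, \vmem, \cmd)$. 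The paper avoids this entirely with Lemma~\ref{lem:partial-run}, which shows that the solid-line relation $\nvmem_0, \nvmem_1, \vmem, \cmd \vdash \nvmem_1' \sim \nvmem_2$ is preserved across the checkpoint-free, reboot-free partial run \emph{without ever advancing the continuous side}, using the fact (Lemma~\ref{lem:wt-fst-version}) that every non-volatile write in the region is either a first write or to a checkpointed location. That is the missing ingredient in your plan; your closing remark about connecting first-writes to $\Vdash_\war$ is aimed at preserving the dashed relation, but it is the solid relation against the unchanged $\nvmem_2$ that needs this argument. A minor additional point: the configuration relation $\nrb \vdash \Sigma \sim \sigma$ is defined directly in terms of the same-point (dashed) relation, so no dash-to-solid conversion is needed to conclude the first bullet of the lemma either.
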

\ifproofs
\begin{proof}
By induction on the number of reboots in $O_1$.
\begin{description}
\item[Base case:] $O_1$ does not include any reboot, first apply
  Lemma~\ref{lem:mem-relation-relation} to show $N_1, V, c, c \vdash
  N_1 \sim N_2$; then directly
  apply Lemma~\ref{lem:one-f} and Lemma~\ref{lem:one-f-completion}.
\item[Inductive case:] $O_1$ contains $k+1$ reboots where $k\geq
  0$
\begin{tabbing}
By assumption
\\\qquad \= (1)\quad \= 
$T= (\context, \nvmem_1, \vmem, \cmd) \MStepsto{O'_1}
(\context, \nvmem_1', \vmem_1, \m{reboot}(\omega)) 
\Stepsto{\m{reboot}} \Sigma' \MStepsto{O''_1} \Sigma''$,
\\\>\> and $O'_1$ does not contain reboots or checkpoints
\\ Use Lemma~\ref{lem:partial-run} to bring the relation of
non-volatile memories to satisfy the precondition after reboot.
\\By assumption $\nvmem_0, \nvmem_1, \vmem, \cmd \vdash \nvmem_1 \sim \nvmem_2$
\\By Lemma~\ref{lem:partial-run} 
\\\>(2)\> $\nvmem_0, \nvmem_1, \vmem, \cmd \vdash \nvmem_1' \sim \nvmem_2$.  
It follows that  $\nvmem_0, \nvmem_1, \vmem, \cmd \vdash \nvmem_1' \lhd\nvmem_0 \sim \nvmem_2$
\\By (2) and Lemma~\ref{lem:solid-reboot}
\\\>(2b) $\nvmem_0, \nvmem_1' \lhd\nvmem_0, \vmem, \cmd \vdash \nvmem_1' \lhd\nvmem_0 \sim \nvmem_2$
\\By \rulename{CP-Reboot} and (1)
\\\>(3)\> $\Sigma' = (\context,\nvmem_1' \lhd \nvmem_0,\vmem,\cmd)$
\\By I.H. on the tail of T, (2) and (2b)
\\\> (4)\> exists $O_2$ and $\sigma$  s.t. $(\nvmem_2, \vmem, \cmd) \MSeqStepsto{O_2}
  \sigma$  and $\nrb \vdash \Sigma'' \sim \sigma $
\\\> (5)\> $\Sigma'' \MStepsto{O} (\context', \nvmem', \vmem', \m{checkpoint}(\omega);\cmd')$
\\\>\>implies
$\sigma \MSeqStepsto{O} (\nvmem', \vmem',
\m{checkpoint}(\omega);\cmd')$ and $ O''_1, O\oleqm O_2, O$
\\ Next, we reuse the proof of the base case to show $O'_1 \oleq O_2, O$
\\ By (1) and expanding the first part of $T$
\\\>(6)\> $T = (\context, \nvmem_1, \vmem, \cmd) \MStepsto{O'_1}
\Sigma_{f} \Stepsto{} (\context, \nvmem_1', \vmem_1, \m{reboot}()) $
\\By assumption that configurations can always make progress 
\\\>(7)\>  exists $O_f$ s.t. $\Sigma_f \MStepsto{O_f} 
  (\context'_f, \nvmem'_f, \vmem'_f, \m{checkpoint}(\omega);\cmd'_f)$
\\By the base case proof
\\\>(8)\> exists $O'_2$ s.t. $(\nvmem_2, \vmem, \cmd) \MSeqStepsto{O'_2}
(\nvmem'_f, \vmem'_f, \m{checkpoint}(\omega);\cmd'_f)$ and $O'_1, O_f
\oleq O'_2$
\\By the semantics are deterministic, (8) and (4), (5)
\\\>(9)\> $O'_2 = O_2, O$
\\By  (8), (9) and the definition of $\oleq$
\\\>(10)\> $O'_1 \oleq O_2, O$
\\By (5) and (10) and \rulename{Rb-Ind}
\\\>(11)\>$ O'_1, \m{reboot}, O''_1, O\oleqm O_2, O$
\\By (5), (11) the conclusion holds.
\end{tabbing}
\end{description}
\end{proof}
\fi

\begin{lem}[Partial run relates to initial state (Multi-steps)]
\label{lem:partial-run}
If
$T= (\context, \nvmem_1, \vmem, \cmd_0) \MStepsto{O} (\context,
\nvmem'_1, \vmem', \cmd)$,  $T$ does not
  execute checkpoint or reboot, $\nvmem_0; \emptyset; \emptyset\Vdash \cmd_0: \m{ok}$,
$\context = (\nvmem_0, \vmem, \cmd_0)$,  and $\nvmem_0, \nvmem_1, \vmem, \cmd_0 \vdash \nvmem_1 \sim \nvmem_2$ 
then 
 $\nvmem_0, \nvmem_1, \vmem,\cmd_0 \vdash \nvmem'_1 \sim \nvmem_2$. 
\end{lem}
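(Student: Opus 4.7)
The plan is to proceed by induction on the length of the intermittent trace $T$, reducing to a single-step version of the lemma. In the base case (zero steps), we have $\nvmem'_1 = \nvmem_1$ and the conclusion is exactly the hypothesis. In the inductive case, decompose $T$ as $T' \cdot s$, where $T'$ goes from $(\context,\nvmem_1,\vmem,\cmd_0)$ to some intermediate $(\context,\nvmem''_1,\vmem'',\cmd'')$ and $s$ is a single intermittent step to $(\context,\nvmem'_1,\vmem',\cmd)$. Apply the induction hypothesis to $T'$ to get $\nvmem_0, \nvmem_1, \vmem, \cmd_0 \vdash \nvmem''_1 \sim \nvmem_2$, and then invoke the single-step version on $s$ to obtain the final relation.

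The single-step sub-lemma is where the real work happens. Given $(\context,\nvmem''_1,\vmem'',\cmd'') \Stepsto{o} (\context,\nvmem'_1,\vmem',\cmd)$ with $o$ neither a checkpoint nor a reboot, I would case analyze on the instruction executed. For $\m{skip}$, branches, and volatile assignments, the non-volatile memory is unchanged, so $\nvmem'_1 = \nvmem''_1$ and the relation is trivially preserved (the domain is unchanged, $N_c \subseteq \ncont$ is unaffected, and the set of mismatched locations is the same). For a non-volatile assignment $x := e$ (and analogously for $a[e'] := e$), the only potentially new mismatch is at location $x$ itself, so it suffices to show $x \in \mt{FstWt}(T^{\mt{full}}) \cup \m{dom}(N_c)$, where $T^{\mt{full}}$ is the full continuous trace from $(\nvmem_1,\vmem,\cmd_0)$ to the next checkpoint. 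Crucially, because $T^{\mt{full}}$ is defined over the \emph{entire} checkpoint region and is fixed by the initial state, its $\mt{FstWt}$ set does not depend on how many intermittent steps we have taken; this is what makes the solid-line relation stable under repeated stepping.

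To discharge the $x \in \mt{FstWt}(T^{\mt{full}}) \cup \m{dom}(N_c)$ obligation, I would lift the WAR well-formedness judgment to the runtime command using Lemma~\ref{lem:cmd-runtime-ok}, obtaining $\nvmem_0; W''; R'' \Vdash_\war \cmd'': \m{ok}$ for some $W'', R''$ tracking the already-written and already-read locations along the intermittent prefix. Inverting this judgment on the current assignment gives three possibilities: (i) $x \notin R''$, in which case $x$ has not been read along this trace, so by determinism of continuous execution from the same initial memory $x$ is first-written in $T^{\mt{full}}$ and thus $x \in \mt{FstWt}(T^{\mt{full}})$; (ii) $x \in R''$ but $x \in W''$, i.e. the read was preceded by a write, which again puts $x \in \mt{FstWt}(T^{\mt{full}})$; or (iii) $x \in R''$ with no earlier dominating write, in which case the WAR check forces $x \in \nvmem_0 = \m{dom}(N_c)$. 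An analogous three-way split handles array writes, using the convention that a tainted or unwritten array index classifies the whole array.

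The main obstacle will be rigorously connecting the \emph{intermittent} read/write bookkeeping carried by $W'', R''$ (which describe only what has happened up to the current point in $T$) to the \emph{continuous} first-write set $\mt{FstWt}(T^{\mt{full}})$ that the relation actually mentions. This requires an auxiliary observation that the read and write behavior of an intermittent execution without reboots agrees, prefix-for-prefix, with the continuous execution from the same starting memory, so that $W''$, $R''$ record exactly the writes and reads of the corresponding prefix of $T^{\mt{full}}$. With that bridge in place, the WAR case split above directly yields membership in $\mt{FstWt}(T^{\mt{full}}) \cup \m{dom}(N_c)$, and the $N_c \subseteq \ncont$ and domain-preservation clauses of Definition~\ref{def:mem-initial} are both immediate since a single assignment neither removes locations nor touches $\ncont$.
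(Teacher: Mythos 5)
Your plan matches the paper's proof essentially step for step: induction on the length of $T$ with the work deferred to a one-step lemma (the paper's Lemma~\ref{lem:partial-run-one-step}), a case analysis on the executed instruction in which only non-volatile writes matter, and discharge of the $x \in \mt{FstWt}(T_c)\cup\m{dom}(N_c)$ obligation by lifting the WAR judgment along the trace (the paper's Lemma~\ref{lem:ok-read-same} and Lemma~\ref{lem:wt-fst-version}) and using determinism to identify the intermittent prefix with a prefix of the continuous trace to the next checkpoint. The ``main obstacle'' you flag is exactly the bridge the paper crosses with its determinism remark, so your proposal is correct and not a genuinely different route.
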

\begin{proof}
By induction over the length of $T$. The base case is trivial. The
inductive case uses I.H. and Lemma~\ref{lem:partial-run-one-step}. 
\end{proof}

\begin{lem}[Relating syntactic and semantic read]~\label{lem:ok-read-same}
If 
$\ee:: \nvmem_0; W_0; R_0\Vdash \cmd_0: \m{ok}$, 
and $T = (N_1, V_0, \cmd_0) \MSeqStepsto{}  (N, V, c)$, $T$ does not
contain any checkpoints
then 
$\nvmem_0; Wt(T) \cup W_0; Rd(T) \cup R_0 \Vdash \cmd: \m{ok}$.
\end{lem}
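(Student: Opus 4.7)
The plan is to prove this by induction on the length (number of steps) of the trace $T$, with a one-step preservation lemma as the key auxiliary result. The one-step lemma will say: if $\nvmem_0; W_0; R_0 \Vdash c_0: \m{ok}$ and $(N_1, V_0, c_0) \SeqStepsto{o} (N', V', c')$ where $o$ is not a checkpoint observation, then $\nvmem_0; \mt{Wt}(o) \cup W_0; \mt{Rd}(o) \cup R_0 \Vdash c': \m{ok}$.

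For the base case of the outer induction, $T$ has length zero, so $c = c_0$, $\mt{Wt}(T) = \emptyset$, $\mt{Rd}(T) = \emptyset$, and the conclusion is exactly the assumption. For the inductive step, I would split $T$ as a single step $(N_1, V_0, c_0) \SeqStepsto{o} (N_1', V_0', c_0')$ followed by a trace $T'$ of length one less. The one-step lemma gives $\nvmem_0; \mt{Wt}(o) \cup W_0; \mt{Rd}(o) \cup R_0 \Vdash c_0' : \m{ok}$, and the inductive hypothesis applied to $T'$ with this new checking judgment yields $\nvmem_0; \mt{Wt}(T') \cup \mt{Wt}(o) \cup W_0; \mt{Rd}(T') \cup \mt{Rd}(o) \cup R_0 \Vdash c : \m{ok}$, which is the desired result since $\mt{Wt}(T) = \mt{Wt}(o) \cup \mt{Wt}(T')$ and similarly for reads.

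The one-step lemma itself would proceed by case analysis on the semantic rule applied. For the sequencing rule \rulename{Seq} with an assignment instruction, I invert \rulename{WAR-Seq} in $\ee$ to extract $\nvmem_0; W_0; R_0 \Vdash \iota : W'; R'$ and $\nvmem_0; W'; R' \Vdash c: \m{ok}$, then observe that for the assignment rules the semantic $\mt{Wt}(o)$ and $\mt{Rd}(o)$ coincide with the written location and $rd(e)$, matching the $W'$ and $R'$ produced by the instruction-level rules (\rulename{WAR-NoRd}, \rulename{WAR-Checkpointed}, \rulename{WAR-Wt}, and their array variants). For \rulename{If-T}/\rulename{If-F}, inversion of \rulename{WAR-If} gives that each branch $c_i$ is already checked with $W_0$ and $R_0 \cup rd(e)$; since the semantic step records $\mt{Rd}(o) = rd(e)$ and no writes, this is precisely what the lemma requires. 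The \rulename{Skip} case needs weakening of the checking judgment in the write/read sets, which follows by a straightforward auxiliary lemma (the checking rules are monotone in $W$ and $R$).

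The main obstacle is the alignment between the syntactic computation of $W'$ and $R'$ in the instruction-level rules and the semantic accumulation of $\mt{Wt}$ and $\mt{Rd}$ in a single step, particularly for array writes where the rule produces $W \cup a$ (the whole array) while the execution writes only a single index $a[v]$. I would handle this either by interpreting $\mt{Wt}(T)$ at array granularity (consistent with how $\omega$ treats whole arrays) or by strengthening the conclusion with a monotonicity/weakening principle stating that if $\nvmem_0; W; R \Vdash c: \m{ok}$ then $\nvmem_0; W \cup W''; R \cup R'' \Vdash c: \m{ok}$ for any $W'', R''$. This weakening lemma, proved by a straightforward induction on the checking derivation, also smooths over the \rulename{Skip} case and any mismatch introduced by the conservative instruction rules.
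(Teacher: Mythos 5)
Your overall strategy---induction on the length of $T$ with a one-step preservation lemma, doing inversion on the checking derivation in each case---is essentially the paper's proof, which is stated simply as induction over the derivation $\ee$; the case analysis you describe for assignments and branches is the right one, and the alignment $W' = W_0 \cup \{x\}$, $R' = R_0 \cup rd(e)$ with the semantic one-step read/write sets does go through exactly as you say.

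The genuine problem is the auxiliary weakening lemma you lean on at the end: the claim that $\nvmem_0; W; R \Vdash_\war c: \m{ok}$ implies $\nvmem_0; W \cup W''; R \cup R'' \Vdash_\war c: \m{ok}$ for \emph{arbitrary} $W'', R''$ is false. The judgment is monotone in $W$ but not in $R$: take $c = (x := 1)$ with $x \notin \nvmem_0$, $W = R = \emptyset$. This checks by \rulename{WAR-NoRd} since $x \notin R'$. Now add $x$ to $R$: \rulename{WAR-NoRd} no longer applies, \rulename{WAR-Wt} fails because $x \notin W$, and \rulename{WAR-Checkpointed} requires $x \in \nvmem_0$, which fails. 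This is not an incidental technicality---the entire purpose of the judgment is that enlarging the read set imposes \emph{new} obligations on later writes, so no such weakening principle can hold. Fortunately you do not actually need it anywhere you invoke it. In the \rulename{Skip} case the step contributes empty read and write sets, so the sets from inverting \rulename{WAR-Seq}/\rulename{WAR-Skip} already match the required ones exactly. In the \rulename{If-T}/\rulename{If-F} case, inversion of \rulename{WAR-If} gives each branch checked against $W_0$ and $R_0 \cup rd(e)$, which again is exactly $\mt{Wt}(o) \cup W_0$ and $\mt{Rd}(o) \cup R_0$. The only real mismatch is the array one, and there your first option is the correct fix: interpret $\mt{Wt}(T)$ and $\mt{Rd}(T)$ at whole-array granularity, which is consistent with how $\omega$, $a^n$, and the downstream uses of this lemma (e.g., the inversion step in the proof of Lemma~\ref{lem:wt-fst-version}, which concludes $a \in \nvmem_0$ rather than $a[v] \in \nvmem_0$) already treat arrays. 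Drop the weakening lemma, adopt array granularity, and the proof is complete.
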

\begin{proofsketch}
Induction over the derivation $\ee$. 
\end{proofsketch}

\begin{lem}[Writes are either First Writes or Checkpointed]
\label{lem:wt-fst-version}
If  $\nvmem_0; \emptyset; \emptyset\Vdash \cmd_0: \m{ok}$, 
let $T_c$ be a trace from $\cmd_0$ to the nearest checkpoint:
 $T_c = T_0\cdot (N, V,  x:=e; \cmd_1)  \MSeqStepsto{O}  (N', V', \m{checkpoint}(\omega);c')$ 
then $x\in \mt{FstWt}(T_c)\cup \nvmem_0$
\end{lem}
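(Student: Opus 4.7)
The plan is to reduce the claim to a property of the \emph{earliest} write to $x$ in the trace $T_c$ and then do case analysis on the WAR-checking rule that licenses that write. The main tool is Lemma~\ref{lem:ok-read-same}, which lifts the static well-formedness $\nvmem_0; \emptyset; \emptyset \Vdash_\war \cmd_0 : \m{ok}$ to any intermediate configuration along a checkpoint-free prefix of the trace, instantiating the static write- and read-set parameters with the semantic $\mt{Wt}$ and $\mt{Rd}$ accumulated so far.

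Concretely, let $T'$ be the prefix of $T_c$ strictly before its earliest write to $x$, so that by minimality $x \notin \mt{Wt}(T')$. Applying Lemma~\ref{lem:ok-read-same} along $T'$ (which contains no checkpoint, since $T_c$ ends at the nearest one) yields a judgment of the form $\nvmem_0; \mt{Wt}(T'); \mt{Rd}(T') \Vdash_\war x := e'; \cmd': \m{ok}$, from which inverting the sequencing rule \rulename{WAR-Seq} gives a WAR derivation for the single instruction $x := e'$. I would then case-split on the three assignment rules. Rule \rulename{WAR-Wt} is ruled out because its premise requires $x \in \mt{Wt}(T')$. Rule \rulename{WAR-Checkpointed} gives $x \in \nvmem_0$ immediately. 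Rule \rulename{WAR-NoRd} gives $x \notin \mt{Rd}(T') \cup rd(e')$, so no read of $x$ occurs in $T'$ or in evaluating $e'$; this earliest write to $x$ therefore strictly precedes every read of $x$ in $T_c$, which by definition means $x \in \mt{FstWt}(T_c)$. Since both $\mt{FstWt}(T_c)$ and $\nvmem_0$ depend only on the whole trace and on the initial checkpoint set, the conclusion transfers to the particular write $x := e$ singled out in the statement.

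The main obstacle I expect is the bookkeeping around Lemma~\ref{lem:ok-read-same}: the trace proceeds through sequencing and if-branches, and I need to argue that the accumulated semantic $\mt{Wt}$ and $\mt{Rd}$ at the chosen position are exactly the static parameters threaded through the checking judgment, in particular that the read set grows monotonically across all rules so that ``$x$ never read before $T'$'' is preserved. An equivalent, more mechanical presentation would replace the ``earliest write'' step by strong induction on $|T_0|$, with the inductive step in the \rulename{WAR-Wt} case invoking the hypothesis at a strictly earlier occurrence of a write to $x$; this version folds the minimality argument into the induction but introduces no new ideas.
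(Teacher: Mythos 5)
Your proposal is correct and follows essentially the same route as the paper's proof: the paper inducts on the number of writes to $x$ in $T_0$ to reduce to the earliest such write, then uses Lemma~\ref{lem:ok-read-same} plus inversion of the WAR judgment to split into the ``$x$ already read, hence checkpointed'' case and the ``$x$ never read, hence first-written'' case, exactly matching your case analysis on \rulename{WAR-Checkpointed} versus \rulename{WAR-NoRd} (with \rulename{WAR-Wt} excluded by minimality). The alternative strong-induction formulation you sketch at the end is in fact the one the paper adopts.
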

\begin{proof}
By induction over the number of writes to $x$ in $T_0$
\begin{description}
\item[Base case:]  $T_0$ does not contain any writes to $x$
\begin{tabbing}
\\We consider two subcases: (I) $x\in \mt{RD}(T_0)\cup \mt{rd}(e)$
  and (II) $x\notin \mt{RD}(T_0)\cup \mt{rd}(e)$ 
\\{\bf Subcase (I)} $x\in \mt{RD}(T_0)\cup \mt{rd}(e)$, 
\\By Lemma~\ref{lem:ok-read-same} 
\\\quad \= (1)\quad \= 
 $\ee':: \nvmem_0; Wt(T_0); Rd(T_0) \Vdash x:=e; \cmd_1: \m{ok}$,
 and 
\\By inversion of $\ee'$, assumption that $x \notin \m{Wt}(T_0)$ and (1), $x\in \nvmem_0$ 
\\{\bf Subcase (II)} $x\notin \mt{RD}(T_0)\cup \mt{rd}(e)$
\\By definition of $\mt{FstWt}$, $x\in \mt{FstWt}(T_c)$
\end{tabbing}
\item[Inductive case:] $T_0$ contains $n+1$ writes to $x$
\begin{tabbing}
By assumptions
\\\quad \= (1)\quad \=
  $T_0 = T_1\cdot (N, V,  x:=e; \cmd_2)  \MSeqStepsto{O_1} (N, V,
  x:=e; \cmd_1)$
\\By I.H. on $T_1$, the conclusion holds.
\end{tabbing}
\end{description}
\end{proof}

\begin{lem}[Partial run relates to initial state (One step)]
\label{lem:partial-run-one-step}
If  all of the following hold
\begin{itemize}
\item $T= (\context, \nvmem, \vmem, \cmd) \Stepsto{o} (\context,
\nvmem', \vmem', \cmd')$,  where  $\context = (\nvmem_0, \vmem_0, \cmd_0)$,  
\item $o$ is not checkpoint or reboot,  
\item  $\nvmem_0; \emptyset; \emptyset\Vdash \cmd_0: \m{ok}$, 
\item $T_0 = (N_1, V_0, \cmd_0) \MSeqStepsto{}  (N, V, \cmd)$, $T_0$ does not
contain any checkpoints, 
\item $\nvmem_0, \nvmem_1, \vmem_0, \cmd_0 \vdash \nvmem\sim \nvmem_2$ 
\end{itemize}
then 
 $\nvmem_0, \nvmem_1, \vmem_0, \cmd_0 \vdash \nvmem' \sim \nvmem_2$. 
\end{lem}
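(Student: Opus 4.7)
The plan is to prove this one-step lemma by case analysis on the semantic rule used for the intermittent step $(\context, \nvmem, \vmem, \cmd) \Stepsto{o} (\context, \nvmem', \vmem', \cmd')$, noting that checkpoint and reboot rules are excluded by hypothesis. The remaining cases split into two groups: steps that do not modify non-volatile memory (volatile assignment, skip, seq, both if-branches), and steps that do (non-volatile variable assignment, non-volatile array assignment).

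For the no-NV-write cases, I would argue that $\nvmem' = \nvmem$, so the conditions $\dom(\nvmem') = \dom(\nvmem_2)$ and $\nvmem_0 \subseteq \nvmem_2$ carry over verbatim. The only subtle point is trace bookkeeping: the relation requires a continuous trace $T_1'$ from $(\nvmem_1, \vmem_0, \cmd_0)$ to the current configuration, together with a tail $T_2'$ to the next checkpoint. By determinism of the sequential semantics, $T_1' = T_0 \cdot (\text{one step})$ and $T_2'$ is one step shorter than the old $T_2$, so $T_1' \cdot T_2' = T_1 \cdot T_2$ as a sequence of states and reads. Hence $\mt{FstWt}$ of the combined trace is unchanged and the old bound on locations where $\nvmem'$ and $\nvmem_2$ differ still applies.

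For the NV-write cases, say $x := e$ (the array case is analogous using the indexed location $a[v]$). I would first confirm $\dom(\nvmem') = \dom(\nvmem)$ and $\nvmem_0 \subseteq \nvmem_2$ are unaffected. For any location $\loc \neq x$, the value is unchanged in $\nvmem'$ while $\nvmem_2$ is fixed, and by the trace-equality argument above, the FstWt set of the overall trace is the same, so the old witness in $\mt{FstWt}(T) \cup \dom(\nvmem_0)$ still works. The interesting sub-case is $\loc = x$ with $\nvmem'(x) \neq \nvmem_2(x)$. Here I would instantiate Lemma~\ref{lem:wt-fst-version} on the full trace $T_c = T_0 \cdot (\nvmem, \vmem, x{:=}e; \cmd_1) \MSeqStepsto{} (\ldots, \m{checkpoint}(\omega); c'')$, which exists by combining the already-accumulated $T_0$ with the executed write and the subsequent sequential run to the next checkpoint (guaranteed by progress and used in the other proofs in this appendix). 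The lemma immediately yields $x \in \mt{FstWt}(T_c) \cup \nvmem_0$, which is exactly the condition required for the updated relation, since $T_c$ is (the continuous representation of) the new combined trace.

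The main obstacle I anticipate is aligning the bookkeeping of the two trace fragments $T_1$ and $T_2$ across the step: one must justify that appending the intermittent step to the continuous prefix $T_1$ really yields a valid continuous trace ending at $(\nvmem', \vmem', \cmd')$, and that the remaining continuous run from there to the next checkpoint exists and composes with the prefix to give a trace equal to the old combined $T$. This rests on the fact that, in the absence of reboot and checkpoint steps, the intermittent rules coincide with the sequential rules on identical memories, plus determinism. Once that alignment is in hand, the write-case uses Lemma~\ref{lem:wt-fst-version} as a black box and everything else reduces to bookkeeping, so no further WAR-specific reasoning should be needed.
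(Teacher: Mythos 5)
Your proposal follows the paper's proof almost exactly: both proceed by case analysis on the final rule of the step, dismiss the cases that do not touch non-volatile memory (the paper simply notes the conclusion "trivially holds" there, relying on the fact that the combined trace $T_1\cdot T_2$ to the next checkpoint is unchanged and only the split point moves), and for a non-volatile variable assignment reduce everything to showing $x\in \mt{FstWt}(T_c)\cup\nvmem_0$ via Lemma~\ref{lem:wt-fst-version}, using determinism of the sequential semantics to identify the extended prefix with the unique trace to the next checkpoint.

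The one place your plan as written would snag is the array case. Lemma~\ref{lem:wt-fst-version} is stated only for assignments $x:=e$, so you cannot "instantiate" it on $a[e]:=e'$ with the indexed location $a[v]$; the paper instead argues this case inline, splitting on whether $a[v]\in \mt{RD}(T_0)\cup\mt{rd}(e)\cup\mt{rd}(e')$. If it was read, Lemma~\ref{lem:ok-read-same} plus inversion of the WAR judgment forces $a\in\nvmem_0$, because the array rules (\rulename{WAR-NoRd-Arr} / \rulename{WAR-Checkpointed-Arr}) have no write-dominated escape clause and track reads at whole-array granularity; if it was not read, $a[v]\in\mt{FstWt}(T_c)$ by definition. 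So the array argument is not merely "analogous": it is structurally different (no induction over prior writes is needed, and the conclusion in the read subcase is membership in $\nvmem_0$ at the granularity of $a$ rather than $a[v]$). This is a small, fixable omission rather than a flaw in the overall strategy.
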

\ifproofs
\begin{proof} 
We case on  $T$. We only show the cases where
 non-volatile memory is updated; as the conclusion trivially hold in
 other cases. 
\begin{description}
\item [Case] $T$ ends in a variable assignment (when $x:=e$ is the
  last instruction, $\cmd_1 = \m{skip}$)
\begin{tabbing}
By assumption
\\\qquad \= (1)~~ \= $
  (\context, \nvmem, \vmem, x:=e; \cmd_1) 
 \Stepsto{[r]}  (\context, \nvmem[x\mapsto v], \vmem, \cmd_1) $
\\\>\> $x \in \m{dom}(N)$, and $N, V \vdash e \Downarrow_{r} v$
\\\>(2)\> $\nvmem_0; \emptyset; \emptyset\Vdash \cmd_0: \m{ok}$
\\ Let $T_c$ be a trace from $\cmd_0$ to the nearest checkpoint
\\\>(3)\>$T_c = (\nvmem_1, \vmem_0, \cmd_0)
  \MSeqStepsto{}  (N', V', \m{checkpoint}(\omega);c')$ 
\\By the semantic rules are deterministic
\\\>(4)\> $T_c = T_0\cdot (N, V,  x:=e; \cmd_1)  
 \SeqStepsto{[r]}  (\context, \nvmem[x\mapsto v], \vmem, \cmd_1)   
 \MSeqStepsto{}  (N', V', \m{checkpoint}(\omega);c')$ 
\\We only need to show that $x\in \mt{FstWt}(T_c)\cup \nvmem_0$. 
\\By Lemma~\ref{lem:wt-fst-version}, $x\in \mt{FstWt}(T_c)\cup \nvmem_0$. 
\end{tabbing}

\item[Case] $T$ ends in array assignment 
\begin{tabbing}
By assumption
\\\qquad \= (1)~~ \= $(\context, \nvmem, \vmem, a[e]:=e'; \cmd_1) 
 \Stepsto{[r,r']}  (\context, \nvmem[a[v]\mapsto v'], \vmem, \cmd_1) $
\\\>\> $N, V \vdash e \Downarrow_{r} v $ and $N, V \vdash e' \Downarrow_{r'} v'$
\\\>(2)\> $\nvmem_0; \emptyset; \emptyset\Vdash \cmd_0: \m{ok}$
\\ Let $T_c$ be a trace from $\cmd_0$ to the nearest checkpoint
\\\>(3)\>$T_c = (\nvmem_1, \vmem_0, \cmd_0)
  \MSeqStepsto{}  (N', V', \m{checkpoint}(\omega);c')$ 
\\By the semantic rules are deterministic
\\\>(4)\> $T_c = T_0\cdot (N, V,  a[e]:=e'; \cmd_1)  
 \SeqStepsto{[r,r']}  (\context, \nvmem[a[v]\mapsto v], \vmem, \cmd_1)   
 \MSeqStepsto{}  (N', V', \m{checkpoint}(\omega);c')$ 
\\We only need to show that $a[v]\in \mt{FstWt}(T_c)\cup \nvmem_0$.
\\We consider two subcases: (I) $a[v]\in \mt{RD}(T_0)\cup \mt{rd}(e) \cup \mt{rd}(e')$ 
and (II) $a[v]\notin \mt{RD}(T_0)\cup \mt{rd}(e) \cup \mt{rd}(e')$ 
\\{\bf Subcase (I)} $a[v]\in \mt{RD}(T_0)\cup \mt{rd}(e)\cup \mt{rd}(e')$
\\By Lemma~\ref{lem:ok-read-same} and (2) and (4)
\\\>(5)\> $\exists$ $W$ and $R$ s.t. $\ee':: \nvmem_0; W; R \Vdash a[e]:=e'; \cmd_1: \m{ok}$,
 and 
 \\\>(6)\>$\mt{RD}(T_0)  \subseteq R$
\\By inversion of $\ee'$ and (6), $a\in \nvmem_0$ 
\\{\bf Subcase (II)} $a[v]\notin \mt{RD}(T_0)\cup \mt{rd}(e)\cup \mt{rd}(e')$
\\By definition of $\mt{FstWt}$, $a[v]\in \mt{FstWt}(T_c)$
\end{tabbing}
\end{description}
\end{proof}
\fi

\begin{lem}[One failure]
\label{lem:one-f}
If
\begin{itemize}
\item $T= (\context, \nvmem_1, \vmem, \cmd) \MStepsto{O} (\context,
\nvmem'_1, \vmem', \cmd')$,  $T$ does not
  execute checkpoint or reboot,
\item $\context = (\nvmem_0, \vmem_0, \cmd_0)$,  and 
\item $T_0 = (N, V_0, \cmd_0) \MSeqStepsto{}  (N_1, V, \cmd)$, $T_0$ does not
contain any checkpoints, 
\item $N, \vmem_0, \cmd_0, \cmd \vdash \nvmem_1 \sim \nvmem_2$ 
\end{itemize}
then 
$ (\nvmem_2, \vmem, \cmd) \MSeqStepsto{O} (\nvmem'_2, \vmem', \cmd')$, 
 and $\nvmem, \vmem_0, \cmd_0, \cmd' \vdash \nvmem'_1 \sim \nvmem'_2$. 
\end{lem}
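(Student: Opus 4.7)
The plan is to proceed by induction on the length of $T$, reducing the inductive step to a single-step simulation lemma whose statement mirrors that of Lemma~\ref{lem:one-f} for a single intermittent transition. The lemma~\ref{lem:partial-run-one-step} already establishes that one intermittent step preserves the initial-to-current (solid-line) relation; what we need here is the analogous preservation for the same-point (dashed-line) relation together with the existence of a matching continuous step emitting the same observation.

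\textbf{Base case.} When $T$ has zero steps we take the empty continuous trace $(N_2, V, c) \MSeqStepsto{} (N_2, V, c)$. The conclusion is immediate from the hypothesis $N, V_0, c_0, c \vdash N_1 \sim N_2$.

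\textbf{Inductive step.} Decompose $T$ as $T' \cdot \bigl((\context, N_1^{m}, V^{m}, c^{m}) \Stepsto{o} (\context, N_1', V', c')\bigr)$, where $T'$ has one fewer step and the last transition is neither a checkpoint nor a reboot. Apply the induction hypothesis to $T'$ to obtain a continuous trace $(N_2, V, c) \MSeqStepsto{} (N_2^{m}, V^{m}, c^{m})$ with $N, V_0, c_0, c^{m} \vdash N_1^{m} \sim N_2^{m}$. Then invoke the single-step simulation lemma on the final transition to obtain $(N_2^{m}, V^{m}, c^{m}) \SeqStepsto{o} (N_2', V', c')$ and $N, V_0, c_0, c' \vdash N_1' \sim N_2'$, and concatenate.

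The single-step simulation lemma is the real technical work and is proved by case analysis on the intermittent rule (variable assignment to volatile or non-volatile memory, array assignment, conditional, skip, sequencing). For each case, two things must be checked. First, every read during the step must return identical values in $N_1^{m}$ and $N_2^{m}$, so that the continuous rule fires with the same observation $o$, the same branch, and the same written value; this is where the same-point relation is used. By Definition~\ref{def:mem-same_exec}, any location $\ell$ where $N_1^{m}(\ell) \neq N_2^{m}(\ell)$ lies in $\mt{Wt}(T_2) \cap \mt{FstWt}(T_1 \cdot T_2) \setminus \mt{Wt}(T_1)$, i.e.\ it must be first written somewhere in $T_2$, so it cannot be read by the current step. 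Second, if the step writes location $\ell'$, then both executions write the same value, so $\ell'$ drops out of the differing set.

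The main obstacle is re-establishing the dashed-line relation at $c'$: for every $\ell$ that still differs, we must show (i) $\ell \notin \mt{Wt}(T_1')$, where $T_1'$ extends $T_1$ by the new step—immediate because anything written by the new step is no longer differing; (ii) $\ell \in \mt{FstWt}(T_1' \cdot T_2')$, which is unchanged because the combined trace is the same total trace as $T_1 \cdot T_2$, merely split one step later; and (iii) $\ell \in \mt{Wt}(T_2')$, where $T_2'$ is the continuous tail from $(N_1', V', c')$ to the next checkpoint. For (iii), the first step of $T_2$ is the very step we just took (the continuous semantics is deterministic and matches the intermittent step by the argument above), so $T_2' $ is exactly the tail of $T_2$; if $\ell$ was written by that removed first step then $\ell$ is no longer differing, otherwise the write witnessing $\ell \in \mt{Wt}(T_2)$ persists in $T_2'$. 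This invariant bookkeeping, rather than any novel construction, is the crux, and it leans on the same-point relation's subtle use of $\mt{FstWt}$ to block reads of differing locations.
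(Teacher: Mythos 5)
Your proposal is correct and follows essentially the same route as the paper: the paper proves Lemma~\ref{lem:one-f} by induction over the length of $T$, invoking the single-step simulation Lemma~\ref{lem:related-to-related-one-step}, which is itself proved by case analysis on the final rule and uses Lemma~\ref{lem:related-n-eq-var} to show that reads cannot touch differing locations (exactly the $\mt{FstWt}$ contradiction argument you sketch). The invariant bookkeeping you describe for re-establishing the same-point relation matches the paper's treatment within that single-step lemma.
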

\begin{proof}
By induction over the length of $T$, apply Lemma~\ref{lem:related-to-related-one-step}.
\end{proof}

\begin{lem}[related NV step to equal NV by checkpoint]
\label{lem:one-f-completion}
  If   $T= (\context, \nvmem_1, \vmem, \cmd) \MStepsto{O} \Sigma'$, 
$\mt{CP}(\Sigma')$
and $T$ does not
  execute checkpoint or reboot,
 $\context = (\nvmem_0, \vmem_0, \cmd_0)$, 
$T_0 = (N, V_0, \cmd_0) \MSeqStepsto{}  (N_1, V, \cmd)$, $T_0$ does not
contain any checkpoints, 
 and $\nvmem, \vmem_0, \cmd_0, \cmd\vdash \nvmem_1 \sim \nvmem_2$, 
 then
  $ (\nvmem_2, \vmem, \cmd) \MSeqStepsto{O} \erase{\Sigma'}$.
\end{lem}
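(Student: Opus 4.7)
The plan is to reduce this lemma to Lemma~\ref{lem:one-f} and then exploit the fact that $\Sigma'$ begins with a checkpoint instruction. First I would instantiate $\Sigma' = (\context, \nvmem'_1, \vmem', \cmd')$ with $\cmd' = \m{checkpoint}(\omega);\cmd''$ for some $\omega$ and $\cmd''$ (justified by $\mt{CP}(\Sigma')$). Since all the other hypotheses of Lemma~\ref{lem:one-f} are already in hand (no checkpoints or reboots in $T$, the same context $\context = (\nvmem_0, \vmem_0, \cmd_0)$, the preliminary continuous trace $T_0$ reaching $(\nvmem_1, \vmem, \cmd)$, and the same-point memory relation $\nvmem, \vmem_0, \cmd_0, \cmd \vdash \nvmem_1 \sim \nvmem_2$), applying that lemma yields a continuous execution
\[
(\nvmem_2, \vmem, \cmd) \MSeqStepsto{O} (\nvmem'_2, \vmem', \cmd')
\]
together with the refined relation $\nvmem, \vmem_0, \cmd_0, \cmd' \vdash \nvmem'_1 \sim \nvmem'_2$.

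It remains to show that $\nvmem'_1 = \nvmem'_2$, so that $\erase{\Sigma'} = (\nvmem'_1, \vmem', \cmd') = (\nvmem'_2, \vmem', \cmd')$. For this I would unfold Definition~\ref{def:mem-same_exec} at the current configuration. The relation splits the trace from the original command $\cmd_0$ into $T_1$ (from $\cmd_0$ to the current point) and $T_2$ (from the current point to the nearest checkpoint). But the current command is $\cmd' = \m{checkpoint}(\omega);\cmd''$, so the ``nearest checkpoint'' is already here: $T_2$ is the empty (zero-step) trace, whence $\mt{Wt}(T_2) = \emptyset$. The relation then forces the set of locations where $\nvmem'_1$ and $\nvmem'_2$ may differ to be empty, so $\nvmem'_1 = \nvmem'_2$, which gives the conclusion.

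The only mild subtlety is checking that the semantics really treat a configuration whose head instruction is $\m{checkpoint}(\omega)$ as the start of $T_2$ (so that $T_2$ has no prior steps), which is exactly how $\mt{CP}(\cdot)$ is defined in Appendix~\ref{sec:cp-correct-defs}. Apart from that, the argument is essentially a one-line instantiation of the stronger Lemma~\ref{lem:one-f} combined with the boundary behavior of the $\sim$ relation at a checkpoint, so I do not anticipate any real obstacle: the heavy lifting (propagating the relation through a single intermittent step, and across reboots) has already been done in Lemmas~\ref{lem:partial-run-one-step}, \ref{lem:partial-run}, and~\ref{lem:one-f}; this lemma just cashes in the invariant once execution reaches a checkpoint.
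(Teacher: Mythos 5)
Your proposal is correct, and it rests on exactly the two ideas the paper uses: (i) the same-point relation is preserved step-by-step along a checkpoint-free segment, and (ii) once the current command begins with $\m{checkpoint}(\omega)$, the continuation trace $T_2$ to the nearest checkpoint in Definition~\ref{def:mem-same_exec} is empty, so $\mt{Wt}(T_2)=\emptyset$ forces the two non-volatile memories to coincide. The only difference is organizational: the paper re-runs the induction over $|T|$ inside this lemma, invoking Lemma~\ref{lem:related-to-related-one-step} at each step and putting the $\mt{Wt}(\emptyset)=\emptyset$ argument in the base case, whereas you delegate the entire inductive simulation to Lemma~\ref{lem:one-f} (which is itself that same induction) and then apply the checkpoint-boundary argument once at the end. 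Your factoring is arguably cleaner since it avoids duplicating the induction, at the cost of needing to check that the endpoint $\Sigma'$ matches the shape required by Lemma~\ref{lem:one-f} — which it does, with $\cmd' = \m{checkpoint}(\omega);\cmd''$. No gap.
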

\ifproofs
\begin{proof}
By induction over the length of $T$.
\begin{description}
\item[Base case:] $|T| = 0$
\begin{tabbing}
By assumption
\\\qquad \= (1)~~ \= $\cmd= \m{checkpoint}(\omega);\cmd'$
\\ \>(2)\> $\forall \loc \in N_1$ s.t. $N_1(\loc) \neq N_2(\loc)$,  
$\loc \in \mt{Wt}(\emptyset)\cap\mt{FstWt}(T_0)$ and $\loc\notin \mt{WT}(T_0)$
\\By (2), there is no $\loc$ 
\\\>(3)\> $N_1 = N_2$
\\ The continuous powered execution also takes $0$ steps and the
conclusion holds
\end{tabbing}

\item[Inductive case:]
~
\begin{tabbing}
By assumption
\\\qquad \= (1)~~ \= 
$T= (\context, \nvmem_1, \vmem, \cmd) \Stepsto{o} (\context,
  \nvmem'_1, \vmem_1, \cmd_1) \MStepsto{O} (\context,
  \nvmem', \vmem', \m{checkpoint}(\omega);\cmd')$, 
\\By Lemma~\ref{lem:related-to-related-one-step}
\\\>(2)\>  $(\nvmem_2, \vmem, \cmd) \SeqStepsto{o} (\nvmem'_2,
\vmem_1, \cmd_1) $ and $N, \vmem_0, \cmd_0, \cmd_1 \vdash \nvmem'_1 \sim \nvmem'_2$.
\\By I.H. on the tail of $T$ 
\\\>(3)\>$(\nvmem'_2,\vmem_1, \cmd_1) \MSeqStepsto{O} (\nvmem',
\vmem', \m{checkpoint}(\omega);\cmd')$
\\By (2) and (3)
\\\> The conclusion holds
\end{tabbing}
\end{description}
\end{proof}
\fi

\begin{lem}[related NV step to related NV]
\label{lem:related-to-related-one-step}
  If   $T= (\context, \nvmem_1, \vmem, \cmd) \Stepsto{o} (\context,
  \nvmem'_1, \vmem', \cmd')$, and $T$ does not
  execute checkpoint or reboot,
 $\context = (\nvmem_0, \vmem_0, \cmd_0)$, 
$T_0 = (N, \vmem_0, \cmd_0) \MSeqStepsto{}  (N_1, \vmem, \cmd)$, $T_0$ does not
contain any checkpoints, 
 and $\nvmem, \vmem_0, \cmd_0, \cmd\vdash \nvmem_1 \sim \nvmem_2$, 
 then
  $ (\nvmem_2, \vmem, \cmd) \SeqStepsto{o} (\nvmem'_2, \vmem', \cmd')$
and $\nvmem, \vmem_0, \cmd_0, \cmd' \vdash \nvmem'_1 \sim \nvmem'_2$.
\end{lem}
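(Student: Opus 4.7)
The plan is to proceed by case analysis on the rule deriving the intermittent step $(\context, \nvmem_1, \vmem, \cmd) \Stepsto{o} (\context, \nvmem'_1, \vmem', \cmd')$. Since $T$ contains no checkpoint or reboot, the step must be derived by \rulename{CP-NV-Assign}, \rulename{CP-V-Assign}, \rulename{CP-Assign-Arr}, \rulename{CP-Skip}, \rulename{CP-Seq}, \rulename{CP-If-T}, or \rulename{CP-If-F}. For each rule, I will first exhibit a matching continuous step $(\nvmem_2, \vmem, \cmd) \SeqStepsto{o} (\nvmem'_2, \vmem', \cmd')$ and then re-establish $N, \vmem_0, \cmd_0, \cmd' \vdash \nvmem'_1 \sim \nvmem'_2$.

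The key auxiliary observation, which I will state first and use uniformly, is a \emph{read-agreement sub-claim}: every non-volatile location $\loc$ read by the step satisfies $\nvmem_1(\loc) = \nvmem_2(\loc)$. Suppose otherwise. Let $T_2$ denote the continuous trace from $(\nvmem_1, \vmem, \cmd)$ to its nearest checkpoint. By Definition~\ref{def:mem-same_exec}, $\loc \in \mt{Wt}(T_2) \cap \mt{FstWt}(T_0 \cdot T_2)$ and $\loc \notin \mt{Wt}(T_0)$. Because $\loc$ is not written in $T_0$ and $\loc \in \mt{FstWt}(T_0 \cdot T_2)$, the first access to $\loc$ in $T_0 \cdot T_2$ must be a write lying in $T_2$. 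But the step under consideration is the first step of $T_2$ and reads $\loc$ before (if at all) writing $\loc$, forcing the first access to $\loc$ to be a read---a contradiction. With the sub-claim in hand, every subexpression evaluates to the same value under $\nvmem_1$ and $\nvmem_2$ while emitting identical read observations, so the continuous rule matching the intermittent one fires and produces the same $o$, $\vmem'$, and $\cmd'$; any non-volatile update assigns the \emph{same} value in both executions.

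Preservation of $\sim$ follows by examining the decomposition $T'_0 = T_0 \cdot \mathrm{step}$ and $T'_2 = T_2$ with its first step removed, so that $T'_0 \cdot T'_2 = T_0 \cdot T_2$. This immediately preserves the $\mt{FstWt}$ condition. For any $\loc$ with $\nvmem'_1(\loc) \neq \nvmem'_2(\loc)$ the step cannot have written $\loc$ (else both executions would have written the same value, contradicting the disagreement), hence $\loc \notin \mt{Wt}(T'_0)$ extends from $\loc \notin \mt{Wt}(T_0)$, and $\loc \in \mt{Wt}(T'_2)$ follows because the write witnessing $\loc \in \mt{Wt}(T_2)$ must occur after the head step. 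The volatile memory and program point agree on both sides by construction.

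The main obstacle is the careful bookkeeping in the read-agreement sub-claim when a single step both reads and writes the same location, as in $x := x+1$ or $a[x] := x$. Here the read of $x$ during expression evaluation strictly precedes the write of $x$, so the read-before-write argument still yields the $\mt{FstWt}$ contradiction needed to close the sub-claim. Array updates additionally require applying the sub-claim to the index expression so that $\nvmem_1[a[v_e] \mapsto v]$ and $\nvmem_2[a[v_e] \mapsto v]$ touch the same cell with the same stored value; once that is established, the preservation argument above goes through uniformly in each case.
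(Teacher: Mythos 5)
Your proposal is correct and follows essentially the same route as the paper: case analysis on the step rule, with your ``read-agreement sub-claim'' being exactly the paper's Lemma~\ref{lem:related-n-eq-var} (proved by the same $\mt{FstWt}$ contradiction) combined with Lemma~\ref{lem:eq-var-eq-eval}, followed by preservation of the relation under the matching continuous step. You merely inline those two helper lemmas and spell out the trace-decomposition bookkeeping for preservation a bit more explicitly than the paper does.
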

\ifproofs
\begin{proof} 
By examining the structure of $T$. 
\begin{description}
\item[Case:] $T$ ends in \rulename{CP-Skip} rule. 
\begin{tabbing}
By assumption
\\\qquad \= (1)~~ \= 
$T= (\context, \nvmem_1, \vmem, \m{skip};\cmd) 
 \Stepsto{}  (\context, \nvmem_1, \vmem, \cmd) $ and
\\\>(2)\> $\context = (\nvmem_0, \vmem_0, \cmd_0)$,
  and $\vmem_0, \cmd_0, (\m{skip};\cmd)\vdash \nvmem_1 \sim \nvmem_2$
\\ By  \rulename{Skip} rule
\\\>(3)\> $ (\nvmem_2, \vmem, \m{skip};\cmd)  \SeqStepsto{}
(\nvmem_2, \vmem, \cmd)$
\\By (2) and $\m{skip}$ does not write to store
\\\>(4)\>  $N, \vmem_0, \cmd_0, \cmd\vdash \nvmem_1 \sim \nvmem_2$
\end{tabbing}

\item[Case:] $T$ ends in \rulename{CP-Seq} rule and uses
  \rulename{CP-NV-Assign} or ends in \rulename{CP-NV-Assign}
  rule. The proof for both of these two cases are the same except that
  in the latter, the resulting command is $\m{skip}$.
\begin{tabbing}
By assumption
\\\qquad \= (1)~~ \= 
$T= (\context, \nvmem_1, \vmem, x:=e;\cmd) 
 \Stepsto{[r_1]}  (\context, \nvmem_1[x \mapsto v_1], \vmem, \cmd) $ and
\\\>(2)\> $\nvmem_1; \vmem \vdash e \Downarrow_{r_1} v_1$ and 
\\\>(3)\> $\context = (\nvmem_0, \vmem_0, \cmd_0)$ and
  $\vmem_0, \cmd_0, (x:=e;\cmd)\vdash \nvmem_1 \sim \nvmem_2$
\\ By  \rulename{CP-NV-Assign} rule
\\\>(4)\> $ (\nvmem_2, \vmem, x:=e;\cmd)  \SeqStepsto{[r_2]}
(\nvmem_2[x \mapsto v_2], \vmem, \cmd)$ and
\\\>(5)\> $\nvmem_2; \vmem \vdash e \Downarrow_{r_2} v_2$
\\ By Lemma~\ref{lem:related-n-eq-var} and Lemma~\ref{lem:eq-var-eq-eval}
\\\>(6)\> $v_1 = v_2$ and $r_1 = r_2$
\\By (3) and (6)
\\\>(7)\>  $N, \vmem_0, \cmd_0, \cmd \vdash \nvmem_1[x \mapsto v_1] \sim \nvmem_2[x \mapsto v_2]$
\end{tabbing}

\item[Case:] $T$ ends in  \rulename{CP-Seq} rule and uses
  \rulename{CP-Arr-Assign} rule or ends in \rulename{CP-Arr-Assign} rule. 
\begin{tabbing}
By assumption
\\\qquad \= (1)~~ \= 
$T= (\context, \nvmem_1, \vmem, a[e']:=e;\cmd) 
 \Stepsto{[r_1',r_1]}  (\context, \nvmem_1[ (a[v_1']) \mapsto v_1], \vmem, \cmd) $ and
\\\>(2)\> $\nvmem_1; \vmem \vdash e \Downarrow_{r_1} v_1$ 
and $\nvmem_1; \vmem \vdash e' \Downarrow_{r_1'} v_1'$ 
\\\>(3)\> $\context = (\nvmem_0, \vmem_0, \cmd_0)$ and
  $\vmem_0, \cmd_0, (a[e']:=e;\cmd)\vdash \nvmem_1 \sim \nvmem_2$
\\ By  \rulename{CP-Ary-Assign} rule
\\\>(4)\> $ (\nvmem_2, \vmem, a[e']:=e;\cmd)  \SeqStepsto{[r_2',r_2]}
(\nvmem_2[(a[v_{2}']) \mapsto v_2], \vmem, \cmd)$ and
\\\>(5)\> $\nvmem_2; \vmem \vdash e' \Downarrow_{r_2'} v_{2}'$ and$\nvmem_2; \vmem \vdash e \Downarrow_{r_2} v_2$
\\ By Lemma~\ref{lem:related-n-eq-var} and Lemma~\ref{lem:eq-var-eq-eval}
\\\>(6)\> $v_1 = v_2$ and $v_1' = v_2'$ and $r_1 = r_2$ and $r_1' = r_2'$
\\By (3) and (6)
\\\>(7)\>  $N, \vmem_0, \cmd_0, \cmd \vdash \nvmem_1[a[v_1'] \mapsto v_1] \sim \nvmem_2[a[v_2'] \mapsto v_2]$
\end{tabbing}

\item[Case:] $T$ ends in  \rulename{CP-Seq} rule and uses
  \rulename{CP-V-Assign} rule or ends in \rulename{CP-V-Assign} rule.  
\begin{tabbing}
By assumption 
\\\qquad \= (1)~~ \= 
$T= (\context, \nvmem_1, \vmem, x:=e;\cmd) 
 \Stepsto{[r_1]}  (\context, \nvmem_1, \vmem[x\mapsto v_1], \cmd) $
 and $\nvmem_1; \vmem \vdash e \Downarrow_{r_1} v_1$
\\\>(2)\> $\context = (\nvmem_0, \vmem_0, \cmd_0)$,
and  $\vmem_0, \cmd_0, (x:=e;\cmd)\vdash \nvmem_1 \sim \nvmem_2$
\\ By  \rulename{CP-V-Assign} rule
\\\>(3)\> $ (\nvmem_2, \vmem, x:=e;\cmd)  \SeqStepsto{[r_2]}
(\nvmem_2, \vmem[x\mapsto v_2] , \cmd)$ and  $\nvmem_2; \vmem \vdash e
\Downarrow_{r_2} v_2$
\\ By Lemma~\ref{lem:related-n-eq-var} and Lemma~\ref{lem:eq-var-eq-eval}
\\\>(4)\> $v_1 = v_2$ and $r_1 = r_2$
\\By (4), $\vmem[x\mapsto v_1] =\vmem[x\mapsto v_2]$ 
\\By (2) and \rulename{CP-V-Assign} does not write to non-volatile store
\\\>(4)\>  $N, \vmem_0, \cmd_0, \cmd \vdash \nvmem_1 \sim \nvmem_2$
\end{tabbing}

\item[Case:] $T$ ends in \rulename{CP-If-T} rule. 
\begin{tabbing}
By assumption
\\\qquad \= (1)~~ \= 
$T= (\context, \nvmem_1, \vmem, \m{if}~e~\m{then}~c_1~\m{else}~c_2) 
 \Stepsto{[r_1]}  (\context, \nvmem_1, \vmem, \cmd_1) $ and $\nvmem_1,
 \vmem  \vdash e \Downarrow_{r_1} \m{true}$ and
\\\>(2)\> $\context = (\nvmem_0, \vmem_0, \cmd_0)$, and 
  $\vmem_0, \cmd_0, \m{if}~e~\m{then}~c_1~\m{else}~c_2\vdash \nvmem_1 \sim \nvmem_2$
\\Let's assume
\\\> (3)\>  $\nvmem_2, \vmem  \vdash e \Downarrow_{r_2} v$
\\By Lemma~\ref{lem:related-n-eq-var} and Lemma~\ref{lem:eq-var-eq-eval}
\\\>(4)\> $v = \m{true}$ and $r_1 = r_2$
\\ By  \rulename{CP-If-T} rule and (4)
\\\>(5)\> $ (\nvmem_2, \vmem, \m{if}~e~\m{then}~c_1~\m{else}~c_2)  \SeqStepsto{[r_2]}
(\nvmem_2, \vmem, \cmd_1)$
\\ By (5) and \rulename{CP-If-T} doesn't write to storage
\\\>(6)\>  $N, \vmem_0, \cmd_0, \cmd_1\vdash \nvmem_1 \sim \nvmem_2$
\end{tabbing}

\item[Case:] $T$ ends in   \rulename{CP-If-F} rule. Similar to the
  previous case. 
\end{description}
\end{proof}
\fi

\begin{lem} \label{lem:related-n-eq-var}
If 
$T= (\nvmem, \vmem_0, \cmd_0) \MSeqStepsto{} (\nvmem_1, \vmem', \iota;\cmd)$,  $T$ does not
  execute checkpoint,
$\nvmem, \vmem_0, \cmd_0, (\iota;\cmd)\vdash \nvmem_1 \sim \nvmem_2$ where
$\iota\neq \m{checkpoint}$, 
and $\loc\in \mt{rd}(\iota)\cap \nvmem_1$,
then $\nvmem_1(\loc) = \nvmem_2(\loc)$.
\end{lem}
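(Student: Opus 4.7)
The plan is to argue by contradiction, reducing the statement to a straightforward examination of the same-point memory relation (Definition 4 in Appendix C.1, the version used in this basic, input-free part of the framework). Assume, for contradiction, that $\nvmem_1(\loc)\neq\nvmem_2(\loc)$. Since $\loc\in\nvmem_1$ and the memories are related by $\nvmem,\vmem_0,\cmd_0,(\iota;\cmd)\vdash\nvmem_1\sim\nvmem_2$, unfolding the definition yields the two traces $T_1=(\nvmem,\vmem_0,\cmd_0)\MSeqStepsto{}(\nvmem_1,\vmem',\iota;\cmd)$ (which is $T$ from the hypothesis) and $T_2=(\nvmem_1,\vmem',\iota;\cmd)\MSeqStepsto{}\m{CP}$, and the conjunction $\loc\in\mt{Wt}(T_2)\cap\mt{FstWt}(T_1\cdot T_2)$ together with $\loc\notin\mt{Wt}(T_1)$.

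Next, I would exploit the fact that $\iota$ is the very first instruction executed along $T_2$. Since $\iota\neq\m{checkpoint}$ and $\iota$ is a single atomic instruction in the grammar (an assignment, array write, input, or $\m{skip}$), evaluating $\iota$ can only issue reads of the locations in $\mt{rd}(\iota)$ before performing any write. In particular, the read of $\loc\in\mt{rd}(\iota)$ occurs at the initial step of $T_2$, strictly before any write in $T_2$ touches $\loc$. Combined with $\loc\notin\mt{Wt}(T_1)$, this means that in the concatenated trace $T_1\cdot T_2$ the first action on $\loc$ is a read, not a write.

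That directly contradicts $\loc\in\mt{FstWt}(T_1\cdot T_2)$, whose meaning is that every read of $\loc$ in the trace is preceded by a write. So the assumption $\nvmem_1(\loc)\neq\nvmem_2(\loc)$ is untenable, and the conclusion $\nvmem_1(\loc)=\nvmem_2(\loc)$ follows.

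I expect the only subtle point to be carefully justifying the claim ``the read of $\loc$ by $\iota$ happens strictly before any write of $T_2$ to $\loc$.'' This is essentially a small case analysis on the shape of $\iota$: for $x:=e$ and $a[e']:=e$ the expression evaluation in $\Downarrow_r$ is performed before the store step in the corresponding semantic rule, so all reads precede the single write; for $x:=\m{IN}()$ and $\m{skip}$ there are no non-volatile reads to worry about. Once this is handled, the rest is just unfolding the relation. No induction is needed—the argument is a one-step consequence of the definition of $\mt{FstWt}$ applied at the boundary between $T_1$ and $T_2$.
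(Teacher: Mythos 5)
Your proposal is correct and matches the paper's own argument essentially step for step: both assume $\nvmem_1(\loc)\neq\nvmem_2(\loc)$, unfold Definition~\ref{def:mem-same_exec} to obtain $\loc\in\mt{Wt}(T_2)\cap\mt{FstWt}(T\cdot T_2)$ and $\loc\notin\mt{Wt}(T)$, and then derive the contradiction from $\loc\in\mt{rd}(\iota)$ forcing a read of $\loc$ before any write in the concatenated trace, which rules out $\loc\in\mt{FstWt}(T\cdot T_2)$. The extra case analysis you flag on the shape of $\iota$ is a fine elaboration of what the paper leaves implicit in its appeal to ``the definition of $\mt{FstWt}$,'' but it does not change the route.
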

\ifproofs
\begin{proof}
We assume that  $\nvmem_1(\loc) \neq \nvmem_2(\loc)$ then derive a
contradiction.
\begin{tabbing}
By assumption, 
\\\qquad \= (1)~~ \= $\nvmem_1(\loc) \neq \nvmem_2(\loc)$  and
$\vmem_0, \cmd_0, (\iota;\cmd)\vdash \nvmem_1 \sim \nvmem_2$.
\\ By definition~\ref{def:mem-same_exec}
\\\>(2)\> 
$T_2 = (N_1, V', \iota;\cmd) \MSeqStepsto{}  (N'', V'', \m{checkpoint}(\omega);
c'')$, where $T_2$ do not contain checkpoints, 
\\\>\>$\forall x \in N_1$ s.t. $N_1(x) \neq N_2(x)$,  
$x \in \mt{Wt}(T_2)\cap\mt{FstWt}(T\cdot T_2)$ and $x\notin \mt{WT}(T)$.
\\By (1) and (2)
\\\>(3)\> $\loc \in \mt{Wt}(T_2)\cap\mt{FstWt}(T\cdot T_2)$ and $\loc
\notin \mt{WT}(T)$.
\\By $\loc\in \mt{rd}(\iota)$, $\loc \notin \mt{WT}(T)$, and the definition of $\mt{FstWt}$
\\\>(4)\> $\loc\notin \mt{FstWt}(T\cdot T_2)$
\\(3) and (4) derive a contradiction, so the conclusion holds.
\end{tabbing}
\end{proof}
\fi

\begin{lem} \label{lem:eq-var-eq-eval}
If $\forall \loc\in \mt{rd}(e)$, $\nvmem_1\cup\vmem_1(\loc) =  \nvmem_2\cup
\vmem_2(\loc)$, and $\nvmem_1,\vmem_1 \vdash e \Downarrow_{r_1} v_1$ 
$\nvmem_2, \vmem_2 \vdash e \Downarrow_{r_2} v_2$ 
then $r_1 = r_2$ and $v_1 = v_2$.
\end{lem}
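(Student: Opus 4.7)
The plan is to prove this lemma by structural induction on the expression $e$ (equivalently, by induction on the derivation of $\nvmem_1,\vmem_1 \vdash e \Downarrow_{r_1} v_1$). The overall strategy is routine because the evaluation judgment is deterministic in the memories, so two evaluations from ``point-wise equal'' memories on $\mt{rd}(e)$ must coincide.

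For the base cases, the rule \rulename{Val} yields $r_1 = r_2 = \cdot$ and $v_1 = v_2 = v$ immediately, with no hypothesis needed. For $e = x$, we have $\mt{rd}(e) = \{x\}$, so the hypothesis gives $\nvmem_1\cup\vmem_1(x) = \nvmem_2\cup\vmem_2(x)$; then rule \rulename{Rd-Var} forces $v_1 = v_2$ and $r_1 = \m{rd}\,x\,v_1 = \m{rd}\,x\,v_2 = r_2$.

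For the inductive step with $e = e_1~\m{bop}~e_2$, note that $\mt{rd}(e) = \mt{rd}(e_1) \cup \mt{rd}(e_2)$, so the hypothesis restricts to each sub-expression. Applying the IH to $e_1$ and $e_2$ yields matching values and observations; rule \rulename{BinOp} then produces the concatenated observation $r_1, r_2$ and the value $\denote{v_1\ \m{bop}\ v_2}$, which are the same on both sides. For $e = a[e']$, I would first apply the IH to $e'$ to get that the indices $v_e$ agree and that the observations $r_e$ agree. The key point here is that once the index is resolved, the location $a[v_e]$ lies in $\mt{rd}(e)$ (since $\mt{rd}$ for arrays must include the dynamically read cell, matching the observation recorded in rule \rulename{Rd-Arr}), so the hypothesis guarantees the two memories agree at $a[v_e]$, giving equal values and equal trailing observations $\m{rd}\,a[v_e]\,v$.

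The main obstacle, and the only subtle point, is the array case: it depends on how $\mt{rd}(e)$ is interpreted for array expressions, since the relevant location is index-dependent and only determined at evaluation time. The clean way to handle this is to formalize $\mt{rd}$ so that it includes all locations appearing in the derivation's read observations (i.e., the semantic read set), which is consistent with how the hypothesis is used elsewhere in the development; under this reading the induction goes through without complication.
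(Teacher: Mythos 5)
Your proposal is correct and matches the paper's proof, which is exactly a structural induction on $e$; the case analysis you give (\rulename{Val}, \rulename{Rd-Var}, \rulename{BinOp}, \rulename{Rd-Arr}) is the intended one. The array subtlety you flag is resolved in the paper by the convention that $a^n \in \mt{rd}(e)$ stands for all locations $a[1],\dots,a[n]$, so the dynamically selected cell $a[v_e]$ is always covered by the hypothesis and the induction goes through as you describe.
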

\begin{proof}
By induction over the structure of $e$.
\end{proof}

\section{Checkpointing Variants}
\label{sec:cp-variants}

\subsection{Idempotent Regions}
\label{sec:variants-ratchet}

An alternative to logging is to place checkpoints  
so that all operations in an inter-checkpoint region safely re-execute. One
strategy previously used for intermittent system Ratchet is to break
WAR dependences by inserting a checkpoint before the WAR's write. 
The idea is to rewrite a program without checkpoints into one with
a checkpoint instruction that has an empty $\omega$ (i.e., $\m{checkpoint}(\,)$) at a subset of program locations. 
We define the rewriting rules of the form: $W;R \Vdash \cmd
\SeqStepsto{}: \cmd' $ , where $W$ and $R$ is the variables written
and respectively read from the beginning of the program to $\cmd$. 
$\cmd$ is rewritten into $\cmd'$. 

~\\
\noindent\framebox{$W;R \Vdash \cmd \longrightarrow \cmd' $}

\begin{mathpar}
\inferrule*[right=R-WAR-Written]{ R' = R \cup rd(e) \\ x \notin R'
\\   W \cup x;R' \Vdash\cmd \SeqStepsto{} \cmd'   }{
  W;R \Vdash x:= e;\cmd \SeqStepsto{}  x:= e;\cmd'
}
\and
\inferrule*[right=R-WAR-Cp-Simple]{ 
\\ x \in  R
 \\ x \notin W \\x \notin \mt{rd}(e)
\\x;  \mt{rd}(e)\Vdash\cmd \SeqStepsto{} \cmd' }{
  W;R \Vdash x:= e;\cmd  \SeqStepsto{} \m{checkpoint}();x:=e;\cmd'
}
\and
\inferrule*[right=R-WAR-Cp-Split]{ R' = R \cup rd(e)
\\ x \in R \\x \notin W \\x \in rd(e) \\\mt{fresh}(x')
\\x; x' \Vdash\cmd \SeqStepsto{} \cmd' }{
  W;R \Vdash x:= e;\cmd  \SeqStepsto{} x' :=e;\m{checkpoint}();x:=x';\cmd' 
}
\and
\inferrule*[right=R-WAR-WriteDom]{ R' = R \cup rd(e) 
\\ x \in R' \\ x \in W
\\   W ;R' \Vdash\cmd \SeqStepsto{} \cmd'   }{
  W\cup\{x\};R \Vdash x:= e;\cmd \SeqStepsto{} x:=e; \cmd'
}
\and
\inferrule*[right=R-WAR-Written-Arr]{ R' = R \cup rd(e) \cup rd(e') 
 \\ a \notin R'
\\ W \cup a; R' \Vdash\cmd \SeqStepsto{} \cmd' 
 }{
  W;R \Vdash a[e'] := e;\cmd \SeqStepsto{} a[e'] := e; \cmd'
}
\and
\inferrule*[right=R-WAR-Cp-Arr-Simple]{ 
 R' =  rd(e) \cup rd(e') 
 \\ a \in R \\a \notin R'
\\a; R' \Vdash\cmd \SeqStepsto{} \cmd' }{
  W;R \Vdash a[e']:= e;\cmd \SeqStepsto{}  \m{checkpoint}();a[e']:= e;\cmd' 
}
\end{mathpar}
\begin{mathpar}
\inferrule*[right=R-WAR-Cp-Arr-Split]{ 
R' = rd(e)  \cup rd(e') 
 \\ a \in R' \\ \mt{fresh}(x,y)
\\ a; \{x,y\}  \Vdash\cmd \SeqStepsto{} \cmd'  }{
  W;R \Vdash a[e']:= e;\cmd \SeqStepsto{}  x:=e; y:= e'; \m{checkpoint}();
  a[y]:= x; \cmd' 
}
\and
\inferrule*[right=WAR-If]{ R' = R \cup rd(e) \\ W;R' \Vdash \cmd_1
  \SeqStepsto{} \cmd_1' 
 \\ W;R' \Vdash \cmd_2 \SeqStepsto{} \cmd_2' }{
  W;R \Vdash \m{if}\ e\ \m{then}\ \cmd_1\ \m{else}\ \cmd_2
  \SeqStepsto{} 
 \m{if}\ e\ \m{then}\ \cmd_1'\ \m{else}\ \cmd_2' 
}

\end{mathpar}

\begin{lem}
If $W; R \Vdash c \SeqStepsto{ } c'$ then $\emptyset; W; R \Vdash_\war c': \m{ok}$
\end{lem}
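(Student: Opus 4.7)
The plan is to proceed by structural induction on the derivation $\ee :: W; R \Vdash c \SeqStepsto{} c'$, showing in each case that $\emptyset; W; R \Vdash_\war c' : \m{ok}$. The central invariant is that the rewriting algorithm guarantees that every assignment $x := e$ appearing in $c'$ falls into one of only two checking cases: either it is \emph{not} a WAR (i.e.\ $x \notin R \cup \mt{rd}(e)$, handled by \rulename{WAR-NoRd}) or it is write-dominated (i.e.\ $x \in W$, handled by \rulename{WAR-Wt}). The case \rulename{WAR-Checkpointed}, which requires $x \in N$, will never be needed because $N = \emptyset$ throughout — the rewriting inserts $\m{checkpoint}()$ with an empty $\omega$ precisely at each point where a WAR would have occurred.

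For the non-checkpointing rules (\rulename{R-WAR-Written}, \rulename{R-WAR-WriteDom}, \rulename{R-WAR-Written-Arr}, and \rulename{WAR-If}), the proof is immediate: I apply \rulename{WAR-NoRd} / \rulename{WAR-Wt} / \rulename{WAR-If} to the leading instruction using the same side conditions already present in the rewriting rule, then invoke the induction hypothesis on the continuation (or on each branch). For the simple checkpoint insertion rules (\rulename{R-WAR-Cp-Simple} and \rulename{R-WAR-Cp-Arr-Simple}), I apply \rulename{WAR-Cp} with $\omega = \emptyset$, which restarts the check with $W = R = \emptyset$; then the immediately-following write $x := e$ satisfies $x \notin \mt{rd}(e)$ (a side condition of the rewriting rule), so \rulename{WAR-NoRd} applies and produces sets $\{x\}; \mt{rd}(e)$ that exactly match the context used in the recursive rewriting premise, licensing the induction hypothesis.

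The split cases \rulename{R-WAR-Cp-Split} and \rulename{R-WAR-Cp-Arr-Split} are the main technical burden. I need to reason about the fresh auxiliary variables ($x'$, or $x, y$ for arrays) introduced to move the read of the WAR variable before the checkpoint. Freshness gives $x' \notin R \cup \mt{rd}(e)$, so the write $x' := e$ passes \rulename{WAR-NoRd} producing $W \cup \{x'\}; R \cup \mt{rd}(e)$; the following $\m{checkpoint}()$ then resets via \rulename{WAR-Cp}; after the checkpoint the copy $x := x'$ is itself not a WAR (since $x$ has not been read yet in the new region) and yields the context $\{x\}; \{x'\}$ which again matches the recursive premise. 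Threading these equalities through a few applications of \rulename{WAR-Seq} completes each case.

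The main obstacle I anticipate is purely bookkeeping: making sure that in each split/array case the $W$ and $R$ sets produced by sequentially checking the inserted instructions coincide exactly with the $W, R$ used in the recursive rewriting premise, so that the induction hypothesis is directly applicable. A small auxiliary observation — that $\emptyset; W; R \Vdash_\war \iota : W'; R'$ depends only on whether $x$ (resp.\ $a$) lies in $W$ and in $R \cup \mt{rd}(e)$, not on the precise contents of those sets — lets me apply the checking rules even when $W$ and $R$ are larger than strictly needed, which is what smooths over the bookkeeping in the sequential compositions. No extension or strengthening of the induction hypothesis seems necessary.
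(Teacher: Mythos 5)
Your proof is correct and takes essentially the same approach as the paper, which proves this lemma only by the one-line sketch ``by induction over the structure of the rewriting derivation.'' Your elaboration is consistent with the rules: with $N=\emptyset$ the rule \rulename{WAR-Checkpointed} can never fire, so every rewritten assignment must check via \rulename{WAR-NoRd} or \rulename{WAR-Wt}, and your threading of the $W$/$R$ sets through the \rulename{R-WAR-Cp-Simple} and split cases (where freshness of $x'$, resp.\ $x,y$, makes the post-checkpoint write a non-WAR and yields exactly the context of the recursive rewriting premise) matches what the rules require.
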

\begin{proofsketch}
By induction over the structure of the rewriting derivation. 
\end{proofsketch}

\subsection{JIT Checkpointing}
\label{sec:variants-jit}
Just-In-Time (JIT) checkpointing system relies on the runtime to
detect low power and then the system checkpoints the volatile memory
and program to be executed on the spot.  We model such a system based
on Hibernas. The tricky part is to deal with situations where the
checkpointing action fails due to powerfailure. Hibernas implements a
flag, indicating whether the checkpoint has succeeded. We model it
using the checkpoint context $\kappa$ as follows.
\[
\begin{array}{llcl}
\textit{context} & \context & \bnfdef & \m{fail} \bnfalt \m{success}(\vmem, \cmd)
\end{array}
\]
The context can either be $\m{fail}$, which means that the checkpoint
did not complete; or $\m{success}(\vmem,\cmd)$, meaning the checkpoint
is successful and the volatile memory and the command at the
checkpoint is $\vmem$ and $\cmd$ respectively.

The powerfail, reboot, and checkpoint rules are defined below. 
~\\\\
\noindent\framebox{$(\timestamp, \context, \nvmem, \vmem, \cmd) 
 \stepsto (\timestamp', \context', \nvmem', \vmem', \cmd')$}

\begin{mathpar}

\inferrule*[right=JIT-LowPower]{ \mt{PowerLow}}{
  (\timestamp, \context, \nvmem, \vmem, \cmd) 
 \stepsto  (\timestamp+1, \context, \nvmem, \vmem, \m{checkpoint}();\cmd) 
}

\and
\inferrule*[right=JIT-CP-Success]{ \m{pick}(n)}{
  (\timestamp, \context, \nvmem, \vmem, \m{checkpoint}();\cmd) 
 \stepsto  (\timestamp+1, \m{success}(\vmem, \cmd), \nvmem, \vmem, \m{reboot}(n)) 
}
\and
\inferrule*[right=JIT-CP-Fail]{ \m{pick}(n)}{
  (\timestamp, \context, \nvmem, \vmem, \m{checkpoint}();\cmd) 
 \stepsto  (\timestamp+1, \m{fail}, \nvmem, \vmem, \m{reboot}(n)) 
}
\and
\inferrule*[right=JIT-Restore]{ }{
  (\timestamp, \m{success}(\vmem, \cmd), \nvmem, \vmem', \m{reboot}(n)) 
 \stepsto  ( \timestamp+n, \m{success}(\vmem, \cmd), \nvmem, \vmem, \cmd) 
}
\and
\inferrule*[right=JIT-Restart]{ }{
  (\timestamp, \m{fail}, \nvmem, \vmem', \m{reboot}(n)) 
 \stepsto  (\timestamp+n, \m{fail}, \nvmem_0, \resetm(\vmem), \cmd_0) 
}
\end{mathpar}
When power is low (\rulename{JIT-LowPower}), a checkpoint is
inserted. Rule \rulename{JIT-CP-Success} applies when the checkpoint
instruction completes. Rule \rulename{JIT-CP-Fail} applies when the checkpoint
instruction fails due to power failure. Upon reboot, if the checkpoint
was a success, the volatile memory and command are restored
(\rulename{JIT-Restore}); otherwise, the system starts from its
initial state. 

The correctness proof of JIT is much simpler as the JIT checkpoints
allow the system to continue at the exact point of power failure (no
roll back at all). The only somewhat interesting case is the
restart. The correctness requires the system start from the initial
state. For embedded systems, this requires the initial memory $N_0$ to
be rewritten, which was not mentioned in the Hibernas. 

\subsection{Undo Logging}
\label{sec:variants-undo}

An undo logging system stores a potentially inconsistent location's value into a log immediately before
the first write to that location. 
Upon reboot after a power failure, the
log is applied to the non-volatile memory, and thus the system rolls back
(undoes) the effects of the failed intermittent execution. 
Our basic checkpoint-based model (i.e., DINO) is a conservative, static
form of undo logging, which stores all initial values to the context
$\context$ at checkpoint time, rather than on-demand when a write
is encountered. 

\paragraphb{Syntax and Operational Semantics}
The undo logging context, denoted $\ulctx$, additionally includes a
log $\ulog$, residing in non-volatile memory, a list of
already-logged locations $\loggedl$, and the list of locations that will need
to be logged $\omega$. 
\[
\begin{array}{llcl}
\textit{log} & \ulog & : & M
\\
\textit{logged locations} & \loggedl & \bnfdef & \cdot\bnfalt
                                                 \loggedl, \loc
\\
\textit{context} & \ulctx & \bnfdef & (\ulog, \vmem, \cmd, \omega, \loggedl)
\end{array}
\]
We summarize the semantics for undo-logging below. 
~\\\\
\noindent\framebox{$(\ulctx, \nvmem, \vmem, \cmd) 
 \ulStepsto{} (\ulctx', \nvmem', \vmem', \cmd')$}
\begin{mathpar}
\inferrule*[right=UL-PowerFail]{ }{
  (\ulctx, \nvmem, \vmem, \cmd) 
 \ulStepsto{}  (\ulctx, \nvmem, \resetm(\vmem), \m{reboot}()) 
}
\and
\inferrule*[right=UL-Commit]{ }{
  ((\ulog', \vmem', \cmd', \omega', \loggedl), \nvmem, \vmem, \m{checkpoint}(\omega);\cmd) 
  \ulStepsto{\m{checkpoint}}  ((\emptyset, \vmem, \cmd, \omega, \emptyset), \nvmem, \vmem, \cmd) 
}
\and
\inferrule*[right=UL-Reboot]{ }{
  ((\ulog, \vmem, \cmd, \omega, \loggedl), \nvmem, \vmem', \m{reboot}()) 
  \ulStepsto{\m{reboot}}  ((\emptyset, \vmem, \cmd, \omega, \emptyset), \nvmem\lhd\ulog, \vmem, \cmd) 
}
\and
\inferrule*[right=UL-NV-Log]{ x\in\omega \\ x\notin \loggedl 
\\\nvmem, \vmem \vdash e \Downarrow_r v}{
  ((\ulog, \vmem_c, \cmd_c, \omega, \loggedl), \nvmem, \vmem, x := e) 
 \\\\\ulStepsto{[r]}  ((\ulog[x\mapsto\nvmem(x)], \vmem_c, \cmd_c, \omega, \loggedl \cup \{x\}), 
  \nvmem[x\mapsto v], \vmem, \m{skip}) 
}

\and
\inferrule*[right=UL-NV-Assign]{ x \in \m{dom}(N) \\ (x\notin\omega \lor x \in \loggedl)
\\\nvmem, \vmem \vdash e \Downarrow_r v}{
  (\ulctx, \nvmem, \vmem, x := e) 
  \ulStepsto{[r]}  (\ulctx, 
  \nvmem[x\mapsto v], \vmem, \m{skip}) 
}
\and
\inferrule*[right=UL-V-Assign]{ x \in \m{dom}(V) 
\\\nvmem, \vmem \vdash e \Downarrow_r v}{
  (\ulctx, \nvmem, \vmem, x := e) 
  \ulStepsto{[r]}  (\ulctx, 
  \nvmem, \vmem[x\mapsto v], \m{skip}) 
}
\and
\inferrule*[right=UL-Arr-Log]{ a \in\omega\\ a[v] \notin \loggedl
\\\nvmem, \vmem \vdash e \Downarrow_r v \\\nvmem, \vmem \vdash e' \Downarrow_{r'} v'}{
  ((\ulog, \vmem_c, \cmd_c, \omega, \loggedl), \nvmem, \vmem, a[e] := e') 
  \ulStepsto{[r,r']}  
\\((\ulog[a[v] \mapsto\nvmem(a[v])], \vmem_c, \cmd_c, \omega, \loggedl \cup \{a[v]\}), 
  \nvmem[a[v]\mapsto v'], \vmem, \m{skip}) 
}

\and
\inferrule*[right=UL-Arr-Assign]{ (a\notin\omega \lor a[v] \in \loggedl)
\\\nvmem, \vmem \vdash e \Downarrow_r v \\ \nvmem, \vmem \vdash e' \Downarrow_{r'} v'}{
  (\ulctx, \nvmem, \vmem, a[e] := e') 
  \ulStepsto{[r,r']}  (\ulctx, 
  \nvmem[a[v]\mapsto v'], \vmem, \m{skip}) 
}
\end{mathpar}
\begin{mathpar}
\and
\inferrule*[right=UL-Seq]{  (\ulctx, \nvmem, \vmem, \iota) 
 \ulStepsto{o}  (\ulctx, \nvmem', \vmem', \m{skip})}{
  (\ulctx, \nvmem, \vmem, \iota;c) 
 \ulStepsto{o}  (\ulctx, \nvmem', \vmem', c) 
}
\and
\inferrule*[right=UL-If-T]{N, V \vdash e \Downarrow_{r} \m{true}}{
  (\ulctx, \nvmem, \vmem, \m{if}\ e\ \m{then}\ \cmd_1\ \m{else}\ \cmd_2) 
 \ulStepsto{[r]}  (\ulctx, \nvmem, \vmem, \cmd_1) 
}
\and
\inferrule*[right=UL-If-F]{N, V \vdash e \Downarrow_{r} \m{false}}{
  (\ulctx, \nvmem, \vmem, \m{if}\ e\ \m{then}\ \cmd_1\ \m{else}\ \cmd_2) 
 \ulStepsto{[r]}  (\ulctx, \nvmem, \vmem, \cmd_2) 
}
\end{mathpar}

The \rulename{UL-Checkpoint} rule 
is analogous to the \rulename{CP-Checkpoint} rule.  The rule 
updates $\omega$ to $\omega'$ in the context, and clears
$\ulog$ and $\loggedl$.  The execution fills the log with values to roll back
on reboot.  
The \rulename{UL-NV-Log} rule applies to the first assignment to a
variable in $\omega$ since the last checkpoint. The current value of
the assigned location is written to the log, the location is added to
the logged list, and non-volatile memory is updated with the new
value. 
On rebooting, the \rulename{UL-Reboot} rule applies
the log to non-volatile memory, reverting changes to a subset of variables in $\omega$.
The log and logged list are again reset, and the program starts to re-execute
the command stored in the context.

In contrast, the \rulename{UL-NV-Assign} rule simply updates
non-volatile memory, applying to subsequent (i.e., non-first)
assignments of variables in $\omega$ and to assignments of variables
not in $\omega$.
Variables already in the 
log don't need any extra tracking because on reboot their values
will correctly revert to whatever they were at the last checkpoint.

Assignments to arrays follow the same pattern as assignments to
non-volatile memory.  
Unlike the basic checkpointing system, which adds the entire array to
$\omega$ if any array element could be involved in WAR dependence, undo
logging can more precisely handle arrays by adding a single array element to
the log dynamically {\em at the write} operation.
avoiding checkpointing the entire array at potentially high checkpointing time cost.  

\paragraphb{Equivalence to basic checkpoint system}
We define a binary relation between basic checkpoint and undo-logging
state, written $\dstate \rel \ustate$, as follows. We then prove it's
a bisimulation relation.
\[
\inferrule*{
\dstate = (\context, \nvmem, \vmem, \cmd) 
\\ \ustate = (\ulctx, \nvmem_u, \vmem_u, \cmd_u)
\\ \dcon = (\nvmem_c, \vmem_c, \cmd_c)
\\ \ucon = (\ulog, \vmem_c, \cmd_c, \omega, \loggedl)
\\ \cmd= \cmd_u
\\ \nvmem = \nvmem_u
\\ \vmem = \vmem_u
\\ \omega = \m{dom}(N_c)
\\ \loggedl = \m{dom}(\ulog)
\\ \forall \loc\in \loggedl, \ulog(\loc) = \nvmem_c(\loc)
\\ \forall \loc\in \omega\backslash\loggedl, \nvmem_u(\loc) = \nvmem_c(\loc)
}{
\dstate \rel \ustate
}
\]
The key idea is that the log $\ulog$ is a subset of the checkpointed
data $N_c$, and, for any location in $\omega$ that is not yet logged,
its checkpointed value ($N_c(\loc)$) is the same as its value in
non-volatile memory of the undo log system ($N_u(\loc)$).  The last
property holds because the location is unwritten if it is not yet
logged.

The equivalence proofs need to show that
after a reboot, the non-volatile memories 
are equal, even though a reboot on the basic system results
$\nvmem \lhd \nvmem_c$ and the undo-logging system results in
$\nvmem_u \lhd \ulog$.  
The log is always a subset of the checkpointed
memory so all locations in $\nvmem_c \cap \ulog$ will match after the
application.  Any checkpointed locations not in the log have not been
updated yet, so their value in $\nvmem_u$ matches their value in the
checkpointed set, and so the two memories are equal after a reboot.

\begin{lem}[Undo logging simulates basic check point system]\label{lem:sim-undo-dino}
$\dstate \rel \ustate$ and $\dstate \Stepsto{o_1} \dstate' $
then $\exists \ustate'$ s.t. $\ustate \ulStepsto{o_2}\ustate'$ and
$\dstate' \rel \ustate'$ and $o_1 = o_2$.
\end{lem}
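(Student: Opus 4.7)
The plan is to prove the simulation by case analysis on the derivation of the DINO step $\dstate \Stepsto{o_1} \dstate'$, exhibiting in each case the matching undo-log step and then verifying that the relation $\rel$ is re-established. Because the relation $\rel$ forces $\cmd = \cmd_u$, $\nvmem = \nvmem_u$, and $\vmem = \vmem_u$, the undo-log configuration has the same syntactic command to step on, so the matching rule is essentially dictated by which DINO rule fired. Observation agreement ($o_1 = o_2$) follows because, once the memories coincide, expression evaluation yields identical read observations (by Lemma~\ref{lem:eq-var-eq-eval} applied to $\nvmem \cup \vmem$).

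The straightforward cases are \rulename{CP-Skip}, \rulename{CP-Seq}, \rulename{CP-V-Assign}, \rulename{CP-If-T}, \rulename{CP-If-F}, and \rulename{CP-PowerFail}: none of these modify $\nvmem$, $\context$, $\ulog$, $\loggedl$, or $\omega$, so the matching undo-log rule fires with the same observation and the relation is preserved pointwise. For \rulename{CP-CheckPoint}, DINO installs a fresh context $(\proj{\nvmem}{\omega'}, \vmem, \cmd)$ while \rulename{UL-Commit} installs $(\emptyset, \vmem, \cmd, \omega', \emptyset)$; the relation is re-established because the new $\ulog$ and $\loggedl$ are empty, the new $\omega'$ equals $\m{dom}(\proj{\nvmem}{\omega'})$, and the universally quantified conditions hold vacuously on $\loggedl$ and trivially on $\omega' \setminus \loggedl$ since both sides read from the same $\nvmem$.

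The interesting cases are the writes to non-volatile memory (and their array analogues) and \rulename{CP-Reboot}. For \rulename{CP-NV-Assign} on a location $x$, we split on whether $x \in \omega$ and whether $x \in \loggedl$. If $x \notin \omega$ or $x \in \loggedl$, the undo-log side uses \rulename{UL-NV-Assign}, both systems update $\nvmem$ at $x$ to the same $v$, and all invariants are preserved: for $x \in \loggedl$ the stored log value is unchanged and still equals $\nvmem_c(x)$; for $x \notin \omega$ no invariant mentions $x$. If $x \in \omega$ and $x \notin \loggedl$, the undo-log side uses \rulename{UL-NV-Log}, extending $\ulog$ with $x \mapsto \nvmem(x)$ and adding $x$ to $\loggedl$. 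By the pre-step invariant $\nvmem(x) = \nvmem_c(x)$ (since $x \in \omega \setminus \loggedl$), so the new log entry satisfies $\ulog(x) = \nvmem_c(x)$; and the universal statement for $\omega \setminus \loggedl$ loses exactly the element $x$, which is the only one whose value changed. The array cases are identical modulo notation, with the array index $v$ playing the role of $x$ and using Lemma~\ref{lem:eq-var-eq-eval} to equate the evaluations of $e$ and $e'$ on both sides.

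The main obstacle is \rulename{CP-Reboot}, where DINO produces $\nvmem \lhd \nvmem_c$ and \rulename{UL-Reboot} produces $\nvmem_u \lhd \ulog$; we must show these memories are equal and that the post-reboot contexts still relate. For any $\loc \in \m{dom}(\ulog) = \loggedl$, DINO's result at $\loc$ is $\nvmem_c(\loc)$ (since $\loggedl \subseteq \omega$) and the undo-log result is $\ulog(\loc)$, which equals $\nvmem_c(\loc)$ by the invariant $\forall \loc \in \loggedl.\,\ulog(\loc) = \nvmem_c(\loc)$. For $\loc \in \omega \setminus \loggedl$, DINO yields $\nvmem_c(\loc)$ and the undo-log side keeps $\nvmem_u(\loc)$, which equals $\nvmem_c(\loc)$ by the invariant $\forall \loc \in \omega \setminus \loggedl.\,\nvmem_u(\loc) = \nvmem_c(\loc)$. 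For $\loc \notin \omega$, both sides leave $\nvmem(\loc) = \nvmem_u(\loc)$ untouched. Thus the two reboots produce identical non-volatile memories, and since \rulename{UL-Reboot} also empties $\ulog$ and $\loggedl$ while keeping $\omega$ and restoring $(\vmem_c, \cmd_c)$, the post-step configurations again satisfy $\rel$. The induction is not on depth (single step), so no further structural induction is needed; the whole proof is thus a finite enumeration of rules, with the reboot case being the only one that uses the full strength of the invariant relating $\ulog$, $\loggedl$, $\omega$, and $\nvmem_c$.
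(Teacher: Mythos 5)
Your proposal is correct and follows essentially the same route as the paper's proof: a case analysis on the rule deriving the DINO step, with the matching undo-log rule forced by the shared command and memories, the invariant $\forall \loc\in\omega\setminus\loggedl.\ \nvmem_u(\loc)=\nvmem_c(\loc)$ discharging the \rulename{UL-NV-Log} subcase, and the reboot case splitting locations into $\loggedl$, $\omega\setminus\loggedl$, and the complement of $\omega$ exactly as the paper does. One small correction: \rulename{CP-Seq} is a congruence rule whose premise is a step on the inner instruction (which may well modify $\nvmem$ or the log), so it is discharged by appealing to the corresponding instruction case --- the paper phrases this as applying the induction hypothesis on the sub-derivation --- rather than by observing that it leaves memory unchanged.
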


\begin{proof}

By induction over $\ee::\dstate \Stepsto{o_1} \dstate' $

\begin{description}
\item[Cases:] $\ee$ ends in \rulename{CP-Skip}, \rulename{CP-V-Assign}, \rulename{CP-If-T}
  \rulename{CP-If-F} rule. The non-volatile memory and checkpointed
  data is not altered. The undo logging system
  takes corresponding \rulename{UL-Skip}, \rulename{UL-V-Assign}, \rulename{UL-If-T}
  \rulename{UL-If-F} to reach a related configuration.

\item[Case:] $\ee$ ends in \rulename{CP-Seq} rule. We apply
  I.H. directly. 

\item[Case:] $\ee$ ends in \rulename{CP-NV-Assign} rule  
\begin{tabbing}
By assumption
\\\qquad \= (1)\quad \= $\dstate \rel \ustate$
\\ By  \rulename{CP-NV-Assign} rule
\\\>(2)\> $((\nvmem_c, \vmem_c, \cmd_c), \nvmem, \vmem, x:=e) 
\Stepsto{[r]}  ((\nvmem_c, \vmem_c, \cmd_c), \nvmem[x\mapsto v], \vmem, \m{skip}) $
\\\> (3)\> $\nvmem, \vmem \vdash e \Downarrow_r v$
\\We consider two subcases: (I) $x \notin \omega$ or $x \in \ulog$ 
 and (II) $x \in \omega$ and $x \notin \ulog$ 
\\ {\bf Subcase (I)} $x \notin \omega$ or $x\in\ulog$
\\ By  \rulename{UL-Assign} rule
\\\>(I1)\> $(\ucon, \nvmem_u, \vmem_u, x:=e) 
\ulStepsto{[r_u]}  (\ucon, \nvmem_u[x\mapsto v_u], \vmem_u,  \m{skip}) $
\\\> (I2)\> $\nvmem_u, \vmem_u \vdash e \Downarrow_{r'} v_u$
\\By assumption
\\\> (I3)\> $\nvmem_u = \nvmem$ and $\vmem_u = \vmem$
\\ By expression evaluation is deterministic
\\\>(I4)\> $v= v_u$, $r=r_u$, 
\\ By (I3) and (I4)
\\\> (I5)\>  $\nvmem[x\mapsto v] = \nvmem_u[x\mapsto v_u]$
\\ By (I5) and Volatile memory and log do not change, $\dstate' \rel \ustate'$
\\ {\bf Subcase (II)} $x \in \omega$ and $x \notin \ulog$
\\ By  \rulename{UL-NV-Log} rule
\\\>(II1)\> $((\ulog, \vmem_c,\cmd_c, \omega, \ulog), \nvmem_u, \vmem_u, x:=e) $
\\\>\> $\ulStepsto{[r_u]}  ((\ulog[x \mapsto \nvmem_u(x)], \vmem_c,\cmd_c, \omega, \ulog \cup x), 
\nvmem_u[x\mapsto v_u], \vmem_u,  \m{skip}) $
\\\>(II2)\> $\nvmem_u, \vmem_u \vdash e \Downarrow_{r_u} v_u$
\\ By assumption that $\dstate \rel \ustate$
\\\>(II3)\>$\nvmem = \nvmem_u$ and $\vmem_u = \vmem$
\\By (II3)
\\\>(II4)\> $v= v_u$, $r=r_u$, 
\\By (II3) and (II4)
\\\> (II5)\> $\nvmem[x\mapsto v] = \nvmem_u[x\mapsto v_u]$
\\By assumption that $\dstate \rel \ustate$
\\\>(II6)\> $\nvmem_u(x) = \nvmem_c(x)$
\\By (II6)
\\\>(II7)\> $(\ulog[x \mapsto \nvmem_u(x)])(x) = \nvmem_c(x)$
\\ By (II5) and (II) and other parts of the log and volatile memories do not change, $\dstate' \rel \ustate'$
\end{tabbing}

\item[Case:] $\ee$ ends in \rulename{CP-Assign-Arr} rule
\begin{tabbing}
By assumption
\\\qquad \= (1)\quad \= $\dstate \rel \ustate$
\\ By  \rulename{CP-Assign-Arr} rule
\\\>(2)\> $((\nvmem_c, \vmem_c, \cmd_c), \nvmem, \vmem, a[e]:=e') 
    \Stepsto{[r, r']}  ((\nvmem_c, \vmem_c, \cmd_c), \nvmem[a[v]\mapsto v'], \vmem,  \m{skip}) $
\\\> (3)\> $\nvmem, \vmem \vdash e \Downarrow_r v$
\\\> (4)\> $\nvmem, \vmem \vdash e' \Downarrow_{r'} v'$
\\We consider two subcases: and (I) $a \notin \omega$ or $a[v] \in \loggedl$ or (II) $a \in \omega$  
    
\\ {\bf Subcase (I)} $a \notin \omega$ or $a[v] \in \loggedl$
\\ By  \rulename{UL-Arr-Assign} rule
\\\>(I1)\> $(\ucon, \nvmem_u, \vmem_u, a[e]:=e') 
    \ulStepsto{[r_u,r_u']}  (\ucon, \nvmem_u[a[v_u]\mapsto v_u'], \vmem_u,  \m{skip}) $
\\\> (I2)\> $\nvmem_u, \vmem_u \vdash e \Downarrow_{r_u} v_u$
\\\> (I3)\> $\nvmem_u, \vmem_u \vdash e' \Downarrow_{r_u'} v_u'$
\\By assumption that $\dstate \rel \ustate$
\\\> (I4)\> $\nvmem_u = \nvmem$ and $\vmem_u = \vmem$
\\ By expression evaluation is deterministic
\\\>(I5)\> $v_u = v$ and $v_u' = v'$ and $r_u = r$ and $r_u' = r'$
\\By (I4) and (I5)
\\\> (I6)\> $\nvmem[a[v]\mapsto v'] = \nvmem_u[a[v_u]\mapsto v_u']$
\\ By (I6) and volatile memories and log do not change, $\dstate' \rel \ustate'$

\\ {\bf Subcase (II)} $a \in \omega$ and $a[v]\notin \loggedl$
\\\>(II1)\>\ $\nvmem_u, \vmem_u \vdash e \Downarrow_{r_u} v_u$
\\\>(II2)\>\ $\nvmem_u, \vmem_u \vdash e' \Downarrow_{r_u'} v_u'$
\\ By assumption that $\dstate \rel \ustate$, 
\\\>(II3)\>\ $\nvmem = \nvmem_u$ and $\vmem = \vmem_u$
\\By expression evaluation is deterministic
\\\> (II4)\>\ $v_u = v$ and $v_u' = v'$ and $r_u = r$ and $r_u' = r'$
\\ By  \rulename{UL-Arr-Log} rule
\\\>(II5)\>\ $((\ulog, \vmem_c,\cmd_c, \omega, \loggedl), \nvmem_u, \vmem_u, a[e]:=e') $
\\\>\>
    $\ulStepsto{[r_u,r_u']}  ((\ulog[a[v_u] \mapsto \nvmem_u(a[v_u])], \vmem_c,\cmd_c, \omega, \loggedl \cup a[v_u]), 
    \nvmem_u[a[v_u]\mapsto v_u'], \vmem_u, \m{skip}) $
\\By (II3) and (II4) 
\\\> (II6)\> $\nvmem[a[v]\mapsto v'] = \nvmem_u[a[v_u]\mapsto v_u']$
\\By assumption that $\dstate \rel \ustate$
\\\>(II7)\> $\nvmem_u(a[v_u]) = \nvmem_c(a[v_u])$ 
\\By (II7)
\\\>(II8) $\ulog[a[v_u] \mapsto \nvmem_u(a[v_u])](a[v_u]) = \nvmem_c(a[v_u])$
\\By (II6) and (II8) and other parts of the log and volatile memories do not change, 
$\dstate' \rel \ustate'$
\end{tabbing}

\item[Case:] $\ee$ ends in \rulename{CP-Checkpoint} rule  
\begin{tabbing}
By assumption
\\\qquad \= (1)~~ \= $\dstate \rel \ustate$
\\ By  \rulename{CP-Checkpoint} rule
\\\>(2)\> $((\nvmem_c, \vmem_c, \cmd_c), \nvmem, \vmem, \m{checkpoint}(\omega;\cmd)) 
\Stepsto{\m{checkpoint}}  ((\proj{\nvmem}{\omega}, \vmem, \cmd), \nvmem, \vmem, \cmd)$
\\ By \rulename{UL-Commit} rule
\\\>(3) \> $((\ulog, \vmem_c, \cmd_c, \omega', \loggedl), \nvmem_u, \vmem_u, \m{checkpoint}(\omega);\cmd) 
\ulStepsto{\m{checkpoint}}  ((\emptyset, \vmem_u, \cmd, \omega, \emptyset), \nvmem_u, \vmem_u, \cmd)$
\\By assumption and memory doesn't change
\\\>(4)\> $\nvmem = \nvmem_u$ and $\vmem = \vmem_u$
\\By (2), (3) and (4)  
\\\>(5)\> $\forall \loc \in \omega, \nvmem_u(\loc) = \proj{\nvmem}{\omega}(\loc)$
\\By (5), memories do not change, and observation is the same, $\dstate' \rel \ustate'$
\end{tabbing}

\item[Case:] $\ee$ ends in \rulename{CP-reboot} rule 
\begin{tabbing}
By assumption
\\\qquad \= (1)~~ \= $\dstate \rel \ustate$
\\ By  \rulename{CP-reboot} rule
\\\>(2)\> $((\nvmem_c, \vmem_c, \cmd_c), \nvmem, \vmem, \m{reboot}()) 
\Stepsto{\m{reboot}}  ((\nvmem_c, \vmem_c, \cmd_c), \nvmem\lhd\nvmem_c, \vmem_c, \cmd_c)$
\\By \rulename{UL-reboot} rule
\\\>(3)\> $((\ulog, \vmem_c, \cmd_c, \omega, \loggedl), \nvmem_u, \vmem_u, \m{reboot}()) 
\ulStepsto{\m{reboot}}  ((\emptyset, \vmem_c, \cmd_c, \omega, \emptyset), \nvmem_u\lhd\ulog, \vmem_c, \cmd_c)$
\\By assumption that $\dstate \rel \ustate$
\\\>(4)\> $\nvmem_u = \nvmem$ and $\vmem_u = \vmem$
\\\>(5)\> $\forall \loc\in \loggedl, \ulog(\loc) = \nvmem_c(\loc)$ and
$LL= \m{dom}(\ulog)$
\\\>(6)\> $\forall \loc\in \omega\backslash\loggedl, \nvmem_u(\loc) = \nvmem_c(\loc)$
\\T.S. that $\nvmem_u\lhd\ulog = \nvmem\lhd\nvmem_c$
\\By (5) 
\\\>(7)\> $\forall \loc : \loc \in \ulog \land \loc \in \omega 
\Rightarrow \nvmem_u\lhd\ulog(\ulog) = \nvmem\lhd\nvmem_c(\loc)$
\\By (6)
\\\>(8)\> $\forall \loc : \loc \notin \ulog \land \loc \in \omega 
\Rightarrow \nvmem_u(\loc) = \nvmem_c(\loc)$
\\By (7) and (8)
\\\>(9)\> $\nvmem_u\lhd\ulog = \nvmem\lhd\nvmem_c$
\\By (9), $\omega \in \mt{dom}(\nvmem_c)$ and volatile memories do not change
\\\>(11) $\dstate' \rel \ustate'$

\end{tabbing}
\end{description}
\end{proof}

\begin{lem}[Basic check point system simulates undo logging]\label{lem:sim-dino-undo}
$\dstate \rel \ustate$ and $\ustate \ulStepsto{o_1} \ustate' $
then $\exists \dstate'$ s.t. $\dstate \Stepsto{o_2}\dstate'$ and
$\dstate' \rel \ustate'$ and $o_1 = o_2$.
\end{lem}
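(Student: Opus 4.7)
The plan is to mirror the structure of Lemma~\ref{lem:sim-undo-dino}, now inducting on the derivation $\ee:: \ustate \ulStepsto{o_1} \ustate'$ and, for each UL rule, exhibiting a corresponding basic-checkpoint step so that the resulting configurations are still related. Since $\rel$ requires $\nvmem = \nvmem_u$, $\vmem = \vmem_u$, and $\cmd = \cmd_u$, whenever the two semantics take structurally analogous steps the volatile/command components line up trivially; the interesting work is tracking how the log $\ulog$, logged list $\loggedl$, the basic checkpoint image $\nvmem_c$, and the non-volatile memories evolve.

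I would dispatch the easy cases first. The \rulename{UL-Skip}, \rulename{UL-V-Assign}, \rulename{UL-If-T}, \rulename{UL-If-F}, and \rulename{UL-Seq} cases are straightforward: none of them touch $\nvmem_u$, $\ulog$, or $\loggedl$, so the matching \rulename{CP-*} rules preserve the relation (the \rulename{UL-Seq} case reduces to the IH, as in Lemma~\ref{lem:sim-undo-dino}). For expression evaluation on both sides, equality of $\nvmem$ with $\nvmem_u$ and $\vmem$ with $\vmem_u$ yields $v_u = v$, $r_u = r$ by determinacy of evaluation, so $o_1 = o_2$ in each case. The \rulename{UL-PowerFail} case simply matches \rulename{CP-PowerFail}.

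The subtle cases are the logging and reboot rules. For \rulename{UL-NV-Log} (fired when $x\in\omega\setminus\loggedl$), the basic system still takes \rulename{CP-NV-Assign}, leaving $\nvmem_c$ unchanged. I need to re-establish the relation after adding $x$ to $\loggedl$ with $\ulog(x) = \nvmem_u(x)$. This is where I would use the hypothesis that $x \in \omega \setminus \loggedl$ in the pre-state, which by $\dstate \rel \ustate$ forces $\nvmem_u(x) = \nvmem_c(x)$; hence the freshly added log entry satisfies $\ulog'(x) = \nvmem_c(x)$, restoring the invariant on the expanded $\loggedl$. The symmetric argument handles \rulename{UL-Arr-Log}. \rulename{UL-Commit} matches \rulename{CP-CheckPoint}: both install the current non-volatile image as the new checkpointed context (for CP, $\proj{\nvmem}{\omega}$; for UL, the empty log with $\omega$ refreshed), after which $\loggedl = \emptyset = \m{dom}(\emptyset)$ and the ``unlogged equals checkpoint'' clause holds vacuously for every $\loc \in \omega$ since both memories agree.

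The main obstacle, and the case I would save for last, is \rulename{UL-Reboot} matched against \rulename{CP-Reboot}: I must prove that $\nvmem_u \lhd \ulog = \nvmem \lhd \nvmem_c$. This is exactly the equality derived in steps (7)--(9) of the proof of Lemma~\ref{lem:sim-undo-dino}, and I would reuse it verbatim. Split $\m{dom}(\nvmem_c) = \omega$ into $\loggedl$ and $\omega \setminus \loggedl$: on $\loggedl$ the relation gives $\ulog(\loc) = \nvmem_c(\loc)$, so both sides agree; on $\omega \setminus \loggedl$ the relation gives $\nvmem_u(\loc) = \nvmem_c(\loc)$, so again both sides agree; outside $\omega$, neither $\ulog$ nor $\nvmem_c$ overrides $\nvmem_u = \nvmem$. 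After the reboot both contexts reset $\loggedl$ and $\ulog$, and $\vmem_c, \cmd_c$ are shared, so $\dstate' \rel \ustate'$ follows. With all cases dispatched and observations matched, the lemma is proved.
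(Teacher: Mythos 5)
Your proposal is correct and follows essentially the same route as the paper's proof: induction on the derivation of the undo-logging step, matching each \rulename{UL-*} rule with the corresponding \rulename{CP-*} rule, using the ``unlogged locations agree with the checkpoint image'' clause of $\rel$ to re-establish the invariant in the \rulename{UL-NV-Log}/\rulename{UL-Arr-Log} cases, and splitting $\omega$ into $\loggedl$ and $\omega\setminus\loggedl$ to show $\nvmem_u\lhd\ulog = \nvmem\lhd\nvmem_c$ in the reboot case. The only cosmetic quibble is that in the \rulename{UL-Commit} case the unlogged-equals-checkpoint clause holds because $\nvmem_u=\nvmem$ restricted to $\omega$ equals the new $\nvmem_c$, not ``vacuously.''
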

    
    \begin{proof}
By induction over $\ee::\ustate \ulStepsto{o_1} \ustate' $
\begin{description}
\item[Cases:] $\ee$ ends in \rulename{UL-Skip}, \rulename{UL-V-Assign}, \rulename{UL-If-T}
  \rulename{UL-If-F} rule. The non-volatile memory and checkpointed
  data is not altered. The related basic check point system
  takes the corresponding \rulename{CP-Skip}, \rulename{CP-V-Assign}, \rulename{CP-If-T}
  \rulename{CP-If-F} to reach a related configuration.

\item[Case:] $\ee$ ends in \rulename{UL-Seq} rule. We apply
  I.H. directly. 
    \item[Case:] $\ee$ ends with \rulename{UL-NV-Assign}   
    \begin{tabbing}
    By assumption
    \\\qquad \= (1)~~ \= $\dstate \rel \ustate$
    \\ By  \rulename{UL-NV-Assign} rule
    \\\>(2)\> $(\ucon, \nvmem_u, \vmem_u, x:=e) 
    \ulStepsto{[r_u]}  (\ucon, \nvmem_u[x\mapsto v_u], \vmem_u, \m{skip}) $
    \\\> (3)\> $\nvmem_u, \vmem_u \vdash e \Downarrow_{r_u} v_u$
    \\ By  \rulename{CP-NV-Assign} rule
    \\\>(4)\> $((\nvmem_c, \vmem_c, \cmd_c), \nvmem, \vmem, x:=e ) 
    \Stepsto{[r]}  ((\nvmem_c, \vmem_c, \cmd_c), \nvmem[x\mapsto v], \vmem, \m{skip}) $
    \\\> (5)\> $\nvmem, \vmem \vdash e \Downarrow_r v$
    \\By assumption that $\dstate \rel \ustate$
    \\\> (6) $\vmem = \vmem_u$ and $\nvmem = \nvmem_u$
    \\ By expression evaluation is deterministic
    \\\> (7)\> $v_u = v$ and $r_u = r$
    \\By (7) 
    \\\> (8)\> $\nvmem[x\mapsto v] = \nvmem_u[x\mapsto v_u]$
    \\ By (8) and volatile memory and log do not change, $\dstate' \rel \ustate'$
    \end{tabbing}

    \item[Case:] $\ee$ ends in \rulename{UL-NV-Log}   
    \begin{tabbing}
    By assumption
    \\\qquad \= (1)~~ \= $\dstate \rel \ustate$
    \\ By  \rulename{UL-NV-Log} rule
    \\\>(2)\> $((\ulog, \vmem_c,\cmd_c, \omega, \loggedl), \nvmem_u, \vmem_u, x:=e) $
\\\>\> $   \ulStepsto{[r_u]}  ((\ulog[x \mapsto \nvmem_u(x)], \vmem_c,\cmd_c, \omega, \loggedl \cup {x}), 
    \nvmem_u[x\mapsto v_u], \vmem_u, \m{skip}) $
    \\\> \> $\nvmem_u, \vmem_u \vdash e \Downarrow_{r_u} v_u$ and 
    \\\> (3) \>  $x \in \omega$ and $x \notin \loggedl$ 
    \\ By  \rulename{CP-NV-Assign} rule
    \\\>(4)\> $((\nvmem_c, \vmem_c, \cmd_c), \nvmem, \vmem, x:=e) 
    \Stepsto{[r]}  ((\nvmem_c, \vmem_c, \cmd_c), \nvmem[x\mapsto v], \vmem \m{skip}) $
    \\\>  $\nvmem, \vmem \vdash e \Downarrow_r v$
    \\By assumption that $\dstate \rel \ustate$
    \\\> (5)\> $\vmem = \vmem_u$ and $\nvmem = \nvmem_u$
    \\ By expression evaluation is deterministic
    \\\> (6)\> $v_u = v$ and $r_u = r$
    \\By (6) 
    \\\> (7)\>$\nvmem[x\mapsto v] = \nvmem_u[x\mapsto v_u]$
   \\By (3) and assumption that $\forall \loc\in \omega\backslash\loggedl, \nvmem_u(\loc) = \nvmem_c(\loc)$
   \\\> (8)\> $\ulog[x \mapsto \nvmem_u(x)](x) = \nvmem_c(x)$
    \\ By (4), (8), and volatile memories do not change, $\dstate' \rel \ustate'$
    \end{tabbing}
    
    \item[Case:] $\ee$ ends in \rulename{UL-Arr-Assign} rule  
    \begin{tabbing}
    By assumption
    \\\qquad \= (1)~~ \= $\dstate \rel \ustate$
    \\ By  \rulename{UL-Arr-Assign} rule
    \\\>(2)\> $(\ucon, \nvmem_u, \vmem_u, a[e]:=e') 
        \ulStepsto{[r_u,r_u']}  (\ucon, \nvmem_u[a[v_u]\mapsto v_u'], \vmem_u, \m{skip}) $
    \\\> $\nvmem_u, \vmem_u \vdash e \Downarrow_{r_u} v_u$ and 
    $\nvmem_u, \vmem_u \vdash e' \Downarrow_{r_u'} v_u'$
    \\ By  \rulename{CP-Arr-Assign} rule
    \\\>(3)\> $((\nvmem_c, \vmem_c, \cmd_c), \nvmem, \vmem, a[e]:=e') 
        \Stepsto{[r, r']}  ((\nvmem_c, \vmem_c, \cmd_c), \nvmem[a[v]\mapsto v'], \vmem, \m{skip}) $
    \\\> $\nvmem, \vmem \vdash e \Downarrow_r v$ and $\nvmem, \vmem \vdash e' \Downarrow_{r'} v'$
    \\By assumption that $\dstate \rel \ustate$
    \\\> (4)\> $\vmem = \vmem_u$ and $\nvmem = \nvmem_u$
    \\ By expression evaluation is deterministic
    \\\> (5)\> $v_u = v$ and $v_u' = v'$ and $r_u = r$ and $r_u' = r'$
    \\By (5)
    \\\> (6)\> $\nvmem[a[v]\mapsto v'] = \nvmem_u[a[v_u]\mapsto v_u']$
    \\ By (6) and volatile memories and log do not change, $\dstate' \rel \ustate'$
    \end{tabbing}

    \item[Case:] $\ee$ ends in \rulename{UL-Arr-Log} rule  
    \begin{tabbing}
    By assumption
    \\\qquad \= (1)~~ \= $\dstate \rel \ustate$

    \\ By  \rulename{UL-Arr-Log} rule
    \\\>(2)\> $((\ulog, \vmem_c,\cmd_c, \omega, \loggedl), \nvmem_u, \vmem_u, a[e]:=e')$ 
       \\\>\>$ \ulStepsto{[r_u,r_u']}  ((\ulog[a[v_u] \mapsto \nvmem_u(a[v_u])], 
        \vmem_c,\cmd_c, \omega, \loggedl \cup a[v_u]), 
        \nvmem_u[a[v_u]\mapsto v_u'], \vmem_u, \m{skip}) $
    \\\>\>$\nvmem_u, \vmem_u \vdash e \Downarrow_{r_u} v_u$ 
    and $\nvmem_u, \vmem_u \vdash e' \Downarrow_{r_u'} v_u'$
    \\\>(3)\> $a \in \omega$ and $a[v_u] \notin \loggedl$ 
    \\ By  \rulename{CP-Arr-Assign} rule
    \\\>(3)\> $((\nvmem_c, \vmem_c, \cmd_c), \nvmem, \vmem, a[e]:=e') 
        \Stepsto{[r, r']}  ((\nvmem_c, \vmem_c, \cmd_c), \nvmem[a[v]\mapsto v'], \vmem, \m{skip}) $
    \\\> $\nvmem, \vmem \vdash e \Downarrow_r v$ and $\nvmem, \vmem \vdash e' \Downarrow_{r'} v'$
    \\By assumption that $\dstate \rel \ustate$
    \\\> (4) \> $\vmem = \vmem_u$ and $\nvmem = \nvmem_u$
   \\By (3) and assumption that 
    $\forall \loc\in \omega\backslash\loggedl, \nvmem_u(\loc) = \nvmem_c(\loc)$
    \\\>(6) \>$\nvmem_u(a[v_u]) = \nvmem_c(a[v_u])$
    \\ By expression evaluation is deterministic
    \\\> (7)\> $v_u = v$ and$v_u' = v'$ and $r_u = r$ and $r_u' = r'$
    \\By (7)
    \\\> (8)\> $\nvmem[a[v]\mapsto v'] = \nvmem_u[a[v_u]\mapsto v_u']$
    \\By (6)
    \\\>(9) $\ulog[a[v_u] \mapsto \nvmem_u(a[v_u])](a[v_u]) = \nvmem_c(a[v_u])$
    \\By (8), (9), and rest of log and volatile memories do not change, $\dstate' \rel \ustate'$
    \end{tabbing}
    
    \item[Case:] $\ee$ ends in \rulename{UL-Commit} rule  
    \begin{tabbing}
    By assumption
    \\\qquad \= (1)~~ \= $\dstate \rel \ustate$
    \\ By \rulename{UL-Commit} rule
    \\\>(2) \> $((\ulog, \vmem_c, \cmd_c, \omega', \loggedl, \nvmem_u, \vmem_u, \m{checkpoint}(\omega);\cmd) 
    \ulStepsto{\m{checkpoint}}  ((\emptyset, \vmem_u, \cmd, \omega, \emptyset), \nvmem_u, \vmem_u, \cmd)$
    \\ By  \rulename{CP-Checkpoint} rule
    \\\>(3)\> $((\nvmem_c, \vmem_c, \cmd_c), \nvmem, \vmem, \m{checkpoint(\omega)};\cmd) 
    \Stepsto{\m{checkpoint}}  ((\proj{\nvmem}{\omega}, \vmem, \cmd), \nvmem, \vmem, \cmd)$
    \\By assumption and memory doesn't change
    \\\>(4)\> $\nvmem = \nvmem_u$
    \\\>(5)\> $\vmem = \vmem_u$
    \\By (2), (3), and (4) 
    \\\>(6)\> $\nvmem_u(\omega) = \proj{\nvmem}{\omega}$
    \\By (6), observation is the same, and memories do not change, $\dstate' \rel \ustate'$
    \end{tabbing}

    \item[Case:] $\ee$ ends in \rulename{UL-Reboot} rule 
    \begin{tabbing}
    By assumption
    \\\qquad \= (1)~~ \= $\dstate \rel \ustate$
    \\By \rulename{UL-reboot}
    \\\>(2)\> $((\ulog, \vmem_c, \cmd_c, \omega, \loggedl), \nvmem_u, \vmem_u', \m{reboot}()) 
    \ulStepsto{\m{reboot}}  ((\emptyset, \vmem_c, \cmd_c, \omega, \emptyset), \nvmem_u\lhd\ulog, \vmem_c, \cmd_c)$
    \\ By  \rulename{CP-reboot} rule
    \\\>(3)\> $((\nvmem_c, \vmem_c, \cmd_c), \nvmem, \vmem, \m{reboot}) 
    \Stepsto{\m{reboot}()}  ((\nvmem_c, \vmem_c, \cmd_c), \nvmem\lhd\nvmem_c, \vmem_c, \cmd_c)$
    \\By assumption
    \\\>(4)\> $\nvmem_u = \nvmem$
    \\\>(5)\> $\forall \loc\in \loggedl, \ulog(\loc) = \nvmem_c(\loc)$
    and $\loggedl=\m{dom}{\ulog}$
    \\\>(6)\> $\forall \loc\in \omega\backslash\loggedl, \nvmem_u(\loc) = \nvmem_c(\loc)$
    \\T.S. that $\nvmem_u\lhd\ulog = \nvmem\lhd\nvmem_c$
    \\By (5), (6) and $\omega = \mt{dom}(\nvmem_c)$
    \\\>(7)\> $\forall \loc \in \omega: \loc \in \ulog 
    \Rightarrow \nvmem_u\lhd\ulog(log) = \nvmem\lhd\nvmem_c(\loc)$
    \\\>(8)\> $\forall \loc \in \omega : \loc \notin \ulog
    \Rightarrow \nvmem_u(\loc) = \nvmem_c(\loc)$
    \\By (7) and (8)
\\\>(9)\> $\nvmem_u\lhd\ulog = \nvmem\lhd\nvmem_c$
\\By (9), $\omega = \mt{dom}(\nvmem_c)$, observation is the same, and volatile memories do not change
\\\>(11) $\dstate' \rel \ustate'$
    
    \end{tabbing}

    \end{description}
    \end{proof}

\begin{cor}[Correctness of Undo Logging]
 If $(\emptyset, \nvmem, \vmem, \cmd) \ulMStepsto{O_1} \Sigma$,
  $\mt{CP}(\Sigma)$ 
    and  $\Vdash \cmd: \m{ok}$ 
    then $\exists O_2, \sigma$ s.t. 
  $(\nvmem, \vmem, \cmd) \MSeqStepsto{O_2}\sigma$, 
  $\erase{\Sigma} = \sigma$
    and  $O_1\oleqmc O_2$.
\end{cor}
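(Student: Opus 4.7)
The plan is to factor the argument through the basic checkpoint system, composing the multi-step lift of the bi-simulation between undo logging and the basic system (Lemma~\ref{lem:sim-dino-undo}) with the basic correctness theorem (Lemma~\ref{lem:correctness-basic}). This lets us reuse the entire correctness argument for DINO without redoing anything at the undo-log level.

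First, I would check that the initial basic configuration $(\emptyset,\nvmem,\vmem,\cmd)$ with empty checkpoint context $(\emptyset,\vmem,\cmd)$ is related by $\rel$ to the initial undo-logging configuration $(\emptyset,\nvmem,\vmem,\cmd)$ with $\ulog = \emptyset$, $\omega = \emptyset$, and $\loggedl = \emptyset$: every universally-quantified clause of $\rel$ ranges over an empty set and so holds vacuously, while the memories and command agree by construction. I would then lift the single-step Lemma~\ref{lem:sim-dino-undo} to a multi-step statement by a straightforward induction on the length of the undo-logging trace, obtaining a basic checkpoint trace $(\emptyset,\nvmem,\vmem,\cmd) \MStepsto{O_1} \dstate'$ with $\dstate' \rel \Sigma$ under the identical observation sequence. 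Because $\rel$ preserves the current command, the hypothesis $\mt{CP}(\Sigma)$ propagates to $\mt{CP}(\dstate')$.

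Next, I would apply Lemma~\ref{lem:correctness-basic} to this basic trace. Its hypotheses hold: $\Vdash \cmd : \m{ok}$ is assumed, and the initial context's checkpointed memory (empty) is trivially a subset of $\nvmem$. Instantiating the lemma's second conclusion with the trivial empty extension $T'$ (taking $\Sigma' = \dstate'$) --- permissible because $\mt{CP}(\dstate')$ holds and an empty trace contains no checkpoints or reboots --- yields a continuous execution $(\nvmem,\vmem,\cmd) \MSeqStepsto{O_2} \sigma$ with $\sigma = \erase{\dstate'}$ and $O_1 \oleqmc O_2$. Finally, $\rel$ forces $\nvmem$, $\vmem$, and the current command of $\dstate'$ and $\Sigma$ to agree, so $\erase{\dstate'} = \erase{\Sigma}$ and hence $\sigma = \erase{\Sigma}$ as required.

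The main obstacle will be the careful treatment of the checkpoint context through the multi-step bi-simulation lift, and verifying that the empty-extension instantiation of Lemma~\ref{lem:correctness-basic} really does pin $\sigma$ down to equality with the erased final state (rather than merely $\sim$-relatedness). If the statement of Lemma~\ref{lem:correctness-basic} does not explicitly permit $T'$ to be a zero-step trace, I would instead route through a one-step checkpoint extension and exploit the fact that the checkpoint step leaves $\nvmem$ and $\vmem$ unchanged to still recover the desired equality of erasures.
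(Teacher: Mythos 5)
Your proposal is correct and matches the paper's own argument: the paper's proof sketch likewise transfers the undo-logging trace to a basic-checkpoint trace with identical observations and erased configurations via the bi-simulation relation, then invokes the correctness lemma for the basic system. Your additional care about the initial relatedness, the multi-step lift, and the zero-step instantiation of the extension $T'$ just fills in details the paper leaves implicit.
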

\begin{proofsketch}
  By the simulation relation, for any trace in undo logging, there is
  a trace in basic checkpoint system with the same observations and
  erased configurations, and the correctness follows from that.
\end{proofsketch}

\subsection{Redo Logging}
\label{sec:variants-redo}
\paragraphb{Syntax and Operational Semantics}
The checkpointed context $\rlctx$ is defined as follows. 
\[
\begin{array}{llcl}
\textit{context} & \rlctx & \bnfdef & (\ulog, \vmem, \cmd, \omega)
\end{array}
\]
We summarize the semantics for redo-logging below. 

\noindent\framebox{$(\rlctx, \nvmem, \vmem, \cmd) 
\rlStepsto{} (\rlctx', \nvmem', \vmem', \cmd')$}
\begin{mathpar}
\inferrule*[right=RL-PowerFail]{ }{
  (\rlctx, \nvmem, \vmem, \cmd) 
 \rlStepsto{}  (\rlctx, \nvmem, \resetm(\vmem), \m{reboot}()) 
}
\and
\inferrule*[right=RL-CheckPoint]{ }{
  ((\ulog, \vmem', \cmd', \omega'), \nvmem, \vmem, \m{checkpoint}(\omega);\cmd) 
 \rlStepsto{\m{checkpoint}}  ((\emptyset, \vmem, \cmd, \omega), \nvmem \lhd \ulog, \vmem, \cmd) 
}
\and
\inferrule*[right=RL-Reboot]{ }{
  ((\ulog, \vmem, \cmd, \omega), \nvmem, \vmem', \m{reboot}()) 
  \rlStepsto{\m{reboot}}  ((\emptyset, \vmem, \cmd, \omega), \nvmem, \vmem, \cmd) 
}
\and
\inferrule*[right=RL-NV-Log]{ x\in\m{dom}(N)\\ x\in\omega\\\nvmem \lhd \ulog, \vmem \vdash e \Downarrow_r v}{
  ((\ulog, \vmem_c, \cmd_c, \omega), \nvmem, \vmem, x := e) 
 \rlStepsto{[r]}  ((\ulog[x\mapsto v], \vmem_c, \cmd_c, \omega), 
  \nvmem, \vmem, \m{skip}) 
}

\and
\inferrule*[right=RL-NV-Assign]{  x\in\m{dom}(N)\\ x\notin\omega \\\nvmem\lhd \ulog, \vmem \vdash e \Downarrow_r v}{
  ((\ulog, \vmem_c, \cmd_c, \omega), \nvmem, \vmem, x := e) 
 \rlStepsto{[r]}  ((\ulog, \vmem_c, \cmd_c, \omega), 
  \nvmem[x\mapsto v], \vmem, \m{skip}) 
}

\and
\inferrule*[right=RL-Arr-Log]{ a\in\omega
\\\nvmem \lhd \ulog, \vmem \vdash e \Downarrow_r v \\\nvmem\lhd \ulog, \vmem  \vdash e' \Downarrow_{r'} v'}{
  ((\ulog, \vmem_c, \cmd_c, \omega), \nvmem, \vmem, a[e] := e') 
 \rlStepsto{[r,r']}  ((\ulog[a[v]\mapsto v'], \vmem_c, \cmd_c, \omega), 
  \nvmem, \vmem, \m{skip}) 
}
\end{mathpar}
\begin{mathpar}
\inferrule*[right=RL-Arr-Assign]{  a\in\m{dom}(N)\\ a\notin\omega 
\\\nvmem \lhd \ulog, \vmem \vdash e \Downarrow_r v \\\nvmem \lhd \ulog, \vmem \vdash e' \Downarrow_{r'} v'}{
  ((\ulog, \vmem_c, \cmd_c, \omega), \nvmem, \vmem, a[e] := e';\cmd) 
 \rlStepsto{[r,r']}  ((\ulog, \vmem_c, \cmd_c, \omega), 
  \nvmem[a[v]\mapsto v'], \vmem, \cmd) 
}
\end{mathpar}

\begin{mathpar}
\inferrule*[right=RL-Seq]{  (\rlctx, \nvmem, \vmem, \iota) 
 \rlStepsto{o}  (\rlctx, \nvmem', \vmem', \m{skip})}{
  (\rlctx, \nvmem, \vmem, \iota;c) 
 \rlStepsto{o}  (\rlctx, \nvmem', \vmem', c) 
}
\and
\inferrule*[right=RL-If-T]{ \rlctx = (\ulog, \vmem, \cmd, \omega)\\N \lhd \ulog, V \vdash e \Downarrow_{r} \m{true}}{
  (\rlctx, \nvmem, \vmem, \m{if}\ e\ \m{then}\ \cmd_1\ \m{else}\ \cmd_2) 
 \rlStepsto{[r]}  ((\ulog, \vmem, \cmd, \omega), \nvmem, \vmem, \cmd_1) 
}
\and
\inferrule*[right=RL-If-F]{\rlctx = (\ulog, \vmem, \cmd, \omega)\\N \lhd \ulog, V \vdash e \Downarrow_{r} \m{false}}{
  (\rlctx, \nvmem, \vmem, \m{if}\ e\ \m{then}\ \cmd_1\ \m{else}\ \cmd_2) 
 \rlStepsto{[r]}  (\rlctx, \nvmem, \vmem, \cmd_2) 
}
\end{mathpar}

\paragraphb{Equivalence to basic checkpoint system}
We define a binary relation between basic checkpoint and redo-logging
state, written $\dstate \rel \rstate$, as follows.  We prove that it
is a bisimulation relation.
\[
\inferrule*{
\dstate = (\context, \nvmem, \vmem, \cmd) 
\\ \rstate = (\rcon, \nvmem_r, \vmem_r, \cmd_r)
\\ \dcon = (\nvmem_c, \vmem_c, \cmd_c)
\\ \rcon = (\ulog, \vmem_c, \cmd_c, \omega)
\\ \cmd= \cmd_r
\\ \vmem = \vmem_r
\\ \omega = \m{dom}(N_c)
\\ \omega = \m{dom}(\ulog)
\\ \forall \loc. \nvmem_r(\loc) \neq \nvmem(\loc), \loc \in \ulog
\\ \forall \loc\in \ulog, \ulog(\loc) = \nvmem(\loc)
\\ \forall \loc\in \omega, \nvmem_r(\loc) = \nvmem_c(\loc)
}{
\dstate \rel \rstate
}
\]

\begin{lem}
$\dstate \rel \rstate$ and $\dstate \Stepsto{o_1} \dstate' $
then $\exists \rstate'$ s.t. $\rstate \rlStepsto{o_2}\rstate'$ and
$\dstate' \rel \rstate'$ and $o_1 = o_2$.
\end{lem}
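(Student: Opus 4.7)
\begin{proofsketch}
The plan is to proceed by induction on the derivation $\ee :: \dstate \Stepsto{o_1} \dstate'$ and case analyze on the last rule of $\ee$, following the template established for undo logging in Lemma~\ref{lem:sim-undo-dino}. Before splitting into cases, I would establish one auxiliary equality that drives every case: under the relation $\dstate \rel \rstate$, the ``current view'' of non-volatile memory in the redo-log system coincides with DINO's non-volatile memory, i.e.\ $\nvmem = \nvmem_r \lhd \ulog$. This follows because (i) for $\loc \in \m{dom}(\ulog)$, the invariant $\ulog(\loc) = \nvmem(\loc)$ gives the equality directly; (ii) for $\loc \notin \m{dom}(\ulog)$, the invariant ``if $\nvmem_r(\loc) \neq \nvmem(\loc)$ then $\loc \in \ulog$'' yields $\nvmem_r(\loc) = \nvmem(\loc)$. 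Combined with $\vmem = \vmem_r$ and determinism of expression evaluation (Lemma~\ref{lem:eq-var-eq-eval}), this ensures that whatever expression DINO evaluates produces the same value and the same observation on the redo-log side, so $o_1 = o_2$ in every case.

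The case analysis then breaks into three groups. Rules that touch only volatile memory or control flow (\rulename{CP-Skip}, \rulename{CP-V-Assign}, \rulename{CP-If-T}, \rulename{CP-If-F}) are matched by the identically-named redo rules; the related configurations are immediate from the auxiliary equality plus preservation of $\vmem$ and $\cmd$. The sequencing case \rulename{CP-Seq} follows directly from the induction hypothesis. The interesting memory cases \rulename{CP-NV-Assign} and \rulename{CP-Assign-Arr} require splitting on whether the written location is in $\omega$: if $x \in \omega$, the redo system applies \rulename{RL-NV-Log} (resp.\ \rulename{RL-Arr-Log}), extending $\ulog$ with $x \mapsto v$ while leaving $\nvmem_r$ untouched, and we check that $\ulog[x\mapsto v](x) = v = \nvmem[x\mapsto v](x)$ restores the invariant; if $x \notin \omega$, redo applies \rulename{RL-NV-Assign} (resp.\ \rulename{RL-Arr-Assign}), updating $\nvmem_r$ in lockstep with DINO's $\nvmem$. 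Finally \rulename{CP-PowerFail} is matched by \rulename{RL-PowerFail} with no memory change, so the invariant is trivially preserved.

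The two structural cases, \rulename{CP-CheckPoint} and \rulename{CP-Reboot}, are the main obstacle. For \rulename{CP-CheckPoint}, DINO installs $\proj{\nvmem}{\omega'}$ as the new $\nvmem_c$, whereas redo commits the log by moving to $\nvmem_r \lhd \ulog$ and resetting $\ulog$ to $\emptyset$ with the new $\omega'$. Using the auxiliary equality, $\nvmem_r \lhd \ulog = \nvmem$, so after the step the new redo non-volatile memory agrees with DINO's (unchanged) $\nvmem$, while the new $\nvmem_c = \proj{\nvmem}{\omega'}$ agrees with the new redo $\nvmem_r$ on all of $\omega'$; the empty log and the fresh $\omega'$ make $\m{dom}(\ulog) \subseteq \omega'$ hold trivially. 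For \rulename{CP-Reboot}, DINO overwrites $\nvmem$ with $\nvmem_c$ on the checkpointed locations, yielding $\nvmem \lhd \nvmem_c$, while redo leaves $\nvmem_r$ alone and merely clears $\ulog$. Verifying that the resulting memories are related amounts to showing $\nvmem \lhd \nvmem_c = \nvmem_r$: on $\loc \in \omega$ both sides equal $\nvmem_c(\loc)$ (the right-hand side by the invariant $\nvmem_r(\loc) = \nvmem_c(\loc)$); on $\loc \notin \omega$, $\nvmem \lhd \nvmem_c$ leaves $\nvmem(\loc)$ unchanged, which equals $\nvmem_r(\loc)$ by the ``difference implies in $\ulog$'' invariant. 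Along with $\vmem$ and $\cmd$ being reset from the common context in both systems, this re-establishes $\dstate' \rel \rstate'$ and produces the same $\m{reboot}$ observation. The hardest part is being precise about what the relation invariant $\omega = \m{dom}(\ulog)$ really means during execution (it is only maintained as $\m{dom}(\ulog) \subseteq \omega$ between checkpoints, with equality up to the set of locations written since the last checkpoint), and then showing this weaker invariant still suffices for the auxiliary memory equality to go through.
\end{proofsketch}
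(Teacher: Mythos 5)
Your proposal matches the paper's proof essentially step for step: the same induction on the step derivation, the same case split on whether the written location is in $\omega$ for non-volatile assignments (\rulename{RL-NV-Log} versus \rulename{RL-NV-Assign}), and the same two key equalities, $\nvmem_r \lhd \ulog = \nvmem$ at commit and $\nvmem \lhd \nvmem_c = \nvmem_r$ at reboot. Your closing observation that the relation's clause $\omega = \m{dom}(\ulog)$ can only be maintained as $\m{dom}(\ulog) \subseteq \omega$ in the middle of a checkpoint region is a fair reading of a point the paper glosses over, and the weaker invariant you identify is indeed all the argument needs.
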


\begin{proof}

By induction over $\ee::\dstate \Stepsto{o_1} \dstate' $

\begin{description}
\item[Cases:] $\ee$ ends in \rulename{CP-Skip}, \rulename{CP-V-Assign}, \rulename{CP-If-T}
  \rulename{CP-If-F} rule. The non-volatile memory and checkpointed
  data is not altered. The undo logging system
  takes corresponding \rulename{RL-Skip}, \rulename{RL-V-Assign}, \rulename{RL-If-T}
  \rulename{RL-If-F} to reach a related configuration.

\item[Case:] $\ee$ ends in \rulename{CP-Seq} rule. We apply
  I.H. directly. 

\item[Case:]$\ee$ ends in  \rulename{CP-NV-Assign} rule
    \begin{tabbing}
    By assumption
    \\\qquad \= (1)~~ \= $\dstate \rel \rstate$
    \\ By  \rulename{CP-NV-Assign} rule
    \\\>(2)\> $((\nvmem_c, \vmem_c, \cmd_c), \nvmem, \vmem, x:=e) 
    \Stepsto{[r]}  ((\nvmem_c, \vmem_c, \cmd_c), \nvmem[x\mapsto v], \vmem,\m{skip}) $
    \\\> (3)\> $\nvmem, \vmem \vdash e \Downarrow_r v$
    \\We consider two subcases: (I) $x \in \omega$ 
    and (II) $x \notin \omega$
    \\ {\bf Subcase (I)} $x \notin \omega$
    \\ By  \rulename{RL-NV-Assign} rule
    \\\>(I1)\> $(\rcon, \nvmem_r, \vmem_r, x:=e) 
    \rlStepsto{[r_r]}  (\rcon, \nvmem_r[x\mapsto v_r], \vmem_r,\m{skip}) $
    \\\> (I2)\> $\nvmem_r \lhd \ulog, \vmem_r \vdash e \Downarrow_{r_r} v_r$
    \\ By assumption that $\dstate \rel \rstate$
    \\\> (I3) $\vmem_r = \vmem$
    \\ By (I3) and expression evaluation is deterministic
    \\\> (I4)\> $v_r = v$ and $r = r_r$ 
    \\By (I4) 
    \\\>(I5)$\nvmem[x\mapsto v] = \nvmem_r[x\mapsto v_r]$
    \\By (I5), volatile memory and log do not change, $\dstate' = \rstate'$
    \\ {\bf Subcase (II)} $x \in \omega$
    \\ By  \rulename{RL-NV-Log} rule
    \\\>(II1)\>\ $((\ulog, \vmem_c,\cmd_c, \omega), \nvmem_r, \vmem_r, x:=e) 
    \rlStepsto{[r_r]}  ((\ulog[x \mapsto v_r], \vmem_c,\cmd_c, \omega), 
    \nvmem_r, \vmem_r,\m{skip}) $
    \\\>(II2)\>\ $\nvmem_r \lhd \ulog, \vmem_r \vdash e \Downarrow_{r_r} v_r$ 
    \\ By assumption that $\dstate \rel \rstate$
    \\\> (II3) $\vmem_r = \vmem$
    \\ By (II3) and expression evaluation is deterministic
    \\\> (II4)\>\ $v_r = v$ and $r = r_r$ 
    \\By (II4)
    \\\>(II5)\>~$\nvmem[x\mapsto v] = \ulog[x\mapsto v_r]$
    \\By assumption and $\nvmem_r$ does not change
    \\\>(II6)\> $\forall \loc \in \omega : \nvmem_r[\loc] = \nvmem_c[\loc]$
    \\By volatile memory does not change, $\nvmem_r$ does not change,
     $\dstate' = \rstate'$
    \end{tabbing}
    
    \item[Case:] $\ee$ ends in \rulename{CP-Assign-Arr} rule
    \begin{tabbing}
        By assumption
        \\\qquad \= (1)\quad \= $\dstate \rel \ustate$
        \\ By  \rulename{CP-Assign-Arr} rule
        \\\>(2)\> $((\nvmem_c, \vmem_c, \cmd_c), \nvmem, \vmem, a[e]:=e') 
        \Stepsto{[r, r']}  ((\nvmem_c, \vmem_c, \cmd_c), \nvmem[a[v]\mapsto v'], \vmem, \m{skip}) $
        \\\> (3)\> $\nvmem, \vmem \vdash e \Downarrow_r v$
        \\\> (4)\> $\nvmem, \vmem \vdash e' \Downarrow_{r'} v'$
        \\We consider two subcases: (I) $a \in \omega$ 
        and (II) $a \notin \omega$
        \\ {\bf Subcase (I)} $a \notin \omega$
        \\ By  \rulename{RL-Arr-Assign} rule
        \\\>(I1)\> $(\rcon, \nvmem_r, \vmem_r, a[e]:=e') 
        \rlStepsto{[r_r,r_r']}  (\rcon, \nvmem_r[a[v_r]\mapsto v_r'], \vmem_r, \m{skip}) $
        \\\> (I2)\> $\nvmem_r \lhd \ulog, \vmem_r \vdash e \Downarrow_{r_r} v_r$
        \\\> (I3)\> $\nvmem_r \lhd \ulog, \vmem_r \vdash e' \Downarrow_{r_r'} v_r'$
        \\ By assumption
        \\\> (I4) $\vmem = \vmem_r$ and $\ulog \in \nvmem$
        \\By (I4),  expression evaluation is deterministic
        \\\>(I5)\>~ $v_r = v$ and $v_r' = v'$ and $r_r = r$ and $r_r' = r'$
        \\By (I5)
        \\\> (I6)\> $\nvmem[a[v]\mapsto v'] = \nvmem_r[a[v_r]\mapsto v_r']$
        \\ By (I6), volatile memory and log do not change, $\dstate' \rel \rstate'$.
    
        \\ {\bf Subcase (II)} $a \in \omega$
        \\ By  \rulename{RL-Arr-Update} rule
        \\\>(II1)\> $((\ulog, \vmem_c,\cmd_c, \omega), \nvmem_r, \vmem_r, a[e]:=e') 
        \rlStepsto{[r,r']}  ((\ulog[a[v_r] \mapsto v_r'], \vmem_c,\cmd_c, \omega), 
        \nvmem_r, \vmem_r, \m{skip}) $
        \\\>(II2)\> $\nvmem_r \lhd \ulog, \vmem_r\vdash e \Downarrow_r v_r$
        \\\>(II3)\> $\nvmem_r \lhd \ulog, \vmem_r\vdash e' \Downarrow_{r'} v_r'$
        \\ By assumption
        \\\> (II4) $\vmem = \vmem_r$ and $\ulog \in \nvmem$
        \\By (II4), expression evaluation is deterministic
        \\\>(II5)\> $v_r = v$ and $v_r' = v'$ and $r_r = r$ and $r_r' = r'$
        \\By (II5)
        \\\>(II6)\> $\nvmem[a[v]\mapsto v'] = \ulog[a[v_r]\mapsto v_r']$
        \\By volatile memory does not change, $\nvmem_r$ does not change,
     $\dstate' = \rstate'$
    \end{tabbing}
    
    \item[Case:] $\ee$ ends in \rulename{CP-Checkpoint} rule
    \begin{tabbing}
    By assumption
    \\\qquad \= (1)~~ \= $\dstate \rel \rstate$
    \\ By  \rulename{CP-Checkpoint} rule
    \\\>(2)\> $((\nvmem_c, \vmem_c, \cmd_c), \nvmem, \vmem, \m{checkpoint}(\omega);\cmd) 
    \Stepsto{\m{checkpoint}}  ((\proj{\nvmem}{\omega}, \vmem, \cmd), \nvmem, \vmem, \cmd)$
    \\ By \rulename{RL-Commit} rule
    \\\>(3)\> $((\ulog, \vmem_c, \cmd_c, \omega'), \nvmem_r, \vmem_r, \m{checkpoint}(\omega);\cmd) 
    \rlStepsto{\m{checkpoint}}  ((\emptyset, \vmem_r, \cmd, \omega), 
    \nvmem_r \lhd \ulog, \vmem_r, \cmd)$
    \\By assumption that $\ulog \subseteq \nvmem$ and 
    $\forall \loc$ s.t. $\nvmem_r(\loc) \neq \nvmem(\loc) : \loc \in \ulog$
    \\\> (4) $\nvmem_r \lhd \ulog = \nvmem$
    \\By (4), volatile memory does not change, and observation is the same, $\dstate' = \rstate'$.
    \end{tabbing}

    \item[Case:] $\ee$ ends in \rulename{CP-reboot} rule  
    \begin{tabbing}
    By assumption
    \\\qquad \= (1)~~ \= $\dstate \rel \ustate$
    \\ By  \rulename{CP-reboot} rule
    \\\>(2)\> $((\nvmem_c, \vmem_c, \cmd_c), \nvmem, \vmem, \m{reboot}) 
    \Stepsto{\m{reboot}()}  ((\nvmem_c, \vmem_c, \cmd_c), \nvmem\lhd\nvmem_c, \vmem_c, \cmd_c)$
    \\By \rulename{RL-reboot} rule
    \\\>(3)\> $((\ulog, \vmem_c, \cmd_c, \omega), \nvmem_r, \vmem_r, \m{reboot}()) 
    \rlStepsto{\m{reboot}}  ((\emptyset, \vmem_c, \cmd_c, \omega), \nvmem_r, \vmem_c, \cmd_c)$
    \\By assumption
    \\\>(4) $\forall \loc $s.t. $\nvmem_r(\loc) \neq \nvmem(\loc), \loc \in \ulog$
    \\\>(5) $\forall \loc \in \omega, \nvmem_r(\loc) = \nvmem_c(\loc)$
    \\By (4) and semantics are deterministic 
    \\\>(6) $\forall \loc$ s.t. $\nvmem_r(\loc) \neq \nvmem(\loc), \loc \in \omega$
    \\By (5) and (6)
    \\\>(7) $\forall \loc$ s.t. $\nvmem_r(\loc) \neq \nvmem(\loc), \nvmem_r(\loc) = \nvmem_c(\loc)$
    \\By (7)
    \\\> (8) $\nvmem\lhd\nvmem_c = \nvmem_r$
    \\By (8), volatile memories are the same, and observation is the same, $\dstate' \rel \rstate'$
    
    \end{tabbing}

    \end{description}
    
\end{proof}

\begin{lem}
  $\dstate \rel \rstate$ and $\rstate \rlStepsto{o_1} \rstate' $
  then $\exists \dstate'$ s.t. $\dstate \Stepsto{o_2}\dstate'$ and
  $\dstate' \rel \rstate'$ and $o_1 = o_2$.
  \end{lem}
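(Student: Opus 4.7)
The plan is to prove the reverse simulation by induction on the derivation $\ee :: \rstate \rlStepsto{o_1} \rstate'$, case-analyzing on the final rule used. For each rule of the redo-logging semantics, I will exhibit a corresponding DINO step and verify that the relation $\rel$ is preserved, in direct parallel with the forward direction proved just above.

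For the ``passive'' cases --- \rulename{RL-Skip}, \rulename{RL-V-Assign}, \rulename{RL-If-T}, \rulename{RL-If-F} --- neither non-volatile memory nor the log changes, and DINO takes the matching rule. The only subtlety is that the expression $e$ in the branch/assign cases is evaluated in redo-logging against $\nvmem_r \lhd \ulog$ and in DINO against $\nvmem$. The relation invariants ($\ulog(\loc) = \nvmem(\loc)$ for $\loc \in \ulog$, and $\nvmem_r(\loc) = \nvmem(\loc)$ for $\loc \notin \ulog$) give $\nvmem_r \lhd \ulog = \nvmem$, so by Lemma~\ref{lem:eq-var-eq-eval} the two evaluations agree in value and observation. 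The \rulename{RL-Seq} case is handled directly by the induction hypothesis.

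The interesting cases are the memory-affecting ones. For \rulename{RL-NV-Log} ($x \in \omega$), redo-logging updates $\ulog$ with $x \mapsto v_r$ while leaving $\nvmem_r$ unchanged; DINO takes \rulename{CP-NV-Assign} to update $\nvmem$ with $x \mapsto v$. Since $\nvmem_r \lhd \ulog = \nvmem$, the evaluations agree and $v = v_r$, so the new log entry $\ulog[x \mapsto v_r](x) = \nvmem[x \mapsto v](x)$ and the invariant ``locations differing between $\nvmem_r$ and $\nvmem$ lie in $\ulog$'' is preserved ($x$ may now be newly differing, but it is also newly in $\ulog$). The array case \rulename{RL-Arr-Log} is analogous. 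For \rulename{RL-NV-Assign} / \rulename{RL-Arr-Assign} ($x \notin \omega$), redo-logging updates $\nvmem_r$ and DINO updates $\nvmem$ with equal values; since $x \notin \omega$ means $x \notin \m{dom}(\ulog)$, the log invariants are trivially maintained, and $\nvmem_r(x)$ and $\nvmem(x)$ are updated in lockstep.

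The checkpoint and reboot cases are where the log interacts globally with non-volatile memory. For \rulename{RL-CheckPoint}, redo-logging commits via $\nvmem_r \lhd \ulog$ and clears the log, while DINO stores $\proj{\nvmem}{\omega}$ into its checkpoint context; using $\nvmem_r \lhd \ulog = \nvmem$, the committed memory equals $\nvmem$ and the new checkpoint context agrees with $\nvmem$ on $\omega$, re-establishing the relation with an empty log. For \rulename{RL-Reboot}, the log is simply cleared (leaving $\nvmem_r$ as it stood) and DINO overwrites $\nvmem$ with $\nvmem_c$ on $\omega$; the required equality $\nvmem_r = \nvmem \lhd \nvmem_c$ follows from the two invariants $\nvmem_r(\loc) = \nvmem_c(\loc)$ for $\loc \in \omega$ and $\nvmem_r(\loc) = \nvmem(\loc)$ for $\loc \notin \ulog \supseteq \omega^c$. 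The \rulename{RL-PowerFail} case is immediate since only volatile memory resets. I expect the main obstacle to be bookkeeping the log carefully in the two logging cases, specifically confirming that the invariant ``$\loc$ differs in $\nvmem_r$ vs.~$\nvmem$ implies $\loc \in \ulog$'' is preserved when a write to $x \in \omega$ introduces (or refreshes) a log entry; this amounts to reusing the calculation from Lemma~\ref{lem:sim-dino-undo} with the roles of $\nvmem_r$ and $\ulog$ swapped.
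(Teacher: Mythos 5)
Your proposal is correct and follows essentially the same route as the paper's proof: induction on the redo-logging step derivation, case analysis on the last rule, exhibiting the matching DINO step, and using the invariants of $\rel$ (in particular $\nvmem_r \lhd \ulog = \nvmem$ together with $\dom(\ulog)=\omega$ and $\nvmem_r|_\omega = \nvmem_c$) to equate expression evaluations and re-establish the relation, with the reboot case resolved by exactly the same split on $\loc\in\omega$ versus $\loc\notin\ulog$ that the paper uses. If anything, you are slightly more explicit than the paper in justifying why evaluation against $\nvmem_r\lhd\ulog$ agrees with evaluation against $\nvmem$ via Lemma~\ref{lem:eq-var-eq-eval}, where the paper only appeals to determinism of evaluation.
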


  \begin{proof}
  
  By induction over $\ee::\rstate \Stepsto{o_1} \rstate' $
  
  \begin{description}
  \item[Cases:] $\ee$ ends in \rulename{RL-Skip}, \rulename{RL-V-Assign}, \rulename{RL-If-T},
    \rulename{RL-If-F} rule. The non-volatile memory and checkpointed
    data is not altered. The checkpoint system
    takes corresponding \rulename{CP-Skip}, \rulename{CP-V-Assign}, \rulename{CP-If-T},
    \rulename{CP-If-F} to reach a related configuration.

  \item[Case:] $\ee$ ends in \rulename{RL-Seq} rule. We apply
    I.H. directly. 
  
  \item[Case:]$\ee$ ends in  \rulename{RL-NV-Assign} rule
      \begin{tabbing}
      By assumption
      \\\qquad \= (1)~~ \= $\dstate \rel \rstate$
      \\ By  \rulename{RL-NV-Assign} rule
      \\\>(2)\> $(\rcon, \nvmem_r, \vmem_r, x:=e) 
      \rlStepsto{[r_r]}  (\rcon, \nvmem_r[x\mapsto v_r], \vmem_r, \m{skip}) $
      \\\> (3)\> $\nvmem_r \lhd \ulog, \vmem_r \vdash e \Downarrow_{r_r} v_r$
      \\ By  \rulename{CP-NV-Assign} rule
      \\\>(4)\> $((\nvmem_c, \vmem_c, \cmd_c), \nvmem, \vmem, x:=e) 
      \Stepsto{[r]}  ((\nvmem_c, \vmem_c, \cmd_c), \nvmem[x\mapsto v], \vmem, \m{skip}) $
      \\\> (5)\> $\nvmem, \vmem \vdash e \Downarrow_r v$
      \\ By assumption that $\dstate \rel \rstate$
      \\\> (6\>) $\vmem_r = \vmem$
      \\ By (6), and expression evaluation is deterministic
      \\\> (7)\> $v_r = v$ and $r = r_r$ 
      \\By (7) 
      \\\>(8)\> $\nvmem_r[x\mapsto v_r] = \nvmem[x\mapsto v]$
      \\By (8), volatile memory and log do not change, $\dstate' = \rstate'$
      \end{tabbing}

  \item[Case:]$\ee$ ends in  \rulename{RL-NV-Log} rule
      \begin{tabbing}
      By assumption
      \\\qquad \= (1)~~ \= $\dstate \rel \rstate$
      \\ By  \rulename{RL-Update} rule
      \\\>(2)\>\ $((\ulog, \vmem_c,\cmd_c, \omega), \nvmem_r, \vmem_r, x:=e) 
      \rlStepsto{[r_r]}  ((\ulog[x \mapsto v_r], \vmem_c,\cmd_c, \omega), 
      \nvmem_r, \vmem_r, \m{skip}) $
      \\\>(3)\>\ $\nvmem_r \lhd \ulog, \vmem_r \vdash e \Downarrow_{r_r} v_r$ 
      \\ By  \rulename{CP-NV-Assign} rule
      \\\>(4)\> $((\nvmem_c, \vmem_c, \cmd_c), \nvmem, \vmem, x:=e) 
      \Stepsto{[r]}  ((\nvmem_c, \vmem_c, \cmd_c), \nvmem[x\mapsto v], \vmem, \m{skip}) $
      \\\> (5)\> $\nvmem, \vmem \vdash e \Downarrow_r v$
      \\ By assumption that $\dstate \rel \rstate$
      \\\> (6)\> $\vmem_r = \vmem$ 
      \\ By (6), and expression evaluation is deterministic
      \\\> (7)\>\ $v_r = v$ and $r = r_r$ 
      \\By (7)
      \\\>(8)\>$\nvmem[x\mapsto v] = \ulog[x\mapsto v_r]$
      \\By volatile memory does not change, $\nvmem_r$ does not change,
       $\dstate' = \rstate'$
      \end{tabbing}
      
  \item[Case:] $\ee$ ends in \rulename{RL-Assign-Arr} rule
    \begin{tabbing}
        By assumption
        \\\qquad \= (1)\quad \= $\dstate \rel \ustate$
        \\ By  \rulename{RL-Arr-Assign} rule
        \\\>(2)\> $(\rcon, \nvmem_r, \vmem_r, a[e]:=e') 
          \rlStepsto{[r_r,r_r']}  (\rcon, \nvmem_r[a[v_r]\mapsto v_r'], \vmem_r, \m{skip}) $
        \\\> (3)\> $\nvmem_r \lhd \ulog, \vmem_r\vdash e \Downarrow_{r_r} v_r$
        \\\> (4)\> $\nvmem_r \lhd \ulog, \vmem_r\vdash e' \Downarrow_{r_r'} v_r'$
        \\ By  \rulename{CP-Assign-Arr} rule
        \\\>(5)\> $((\nvmem_c, \vmem_c, \cmd_c), \nvmem, \vmem, a[e]:=e') 
          \Stepsto{[r, r']}  ((\nvmem_c, \vmem_c, \cmd_c), \nvmem[a[v]\mapsto v'], \vmem, \m{skip}) $
        \\\> (6)\> $\nvmem, \vmem \vdash e \Downarrow_r v$
        \\\> (7)\> $\nvmem, \vmem \vdash e' \Downarrow_{r'} v'$
         \\ By assumption
        \\\> (8)\> $\vmem = \vmem_r$
        \\By (8), and expression evaluation is deterministic
        \\\>(9)\> $v_r = v$ and $v_r' = v'$ and $r_r = r$ and $r_r' = r'$
        \\By (9)
        \\\> (10)\> $\nvmem[a[v]\mapsto v'] = \nvmem_r[a[v_r]\mapsto v_r']$
        \\ By (10), volatile memory and log do not change, $\dstate' \rel \rstate'$.
    \end{tabbing}
      
    \item[Case:]$\ee$ ends in  \rulename{RL-Arr-Update} rule
      \begin{tabbing} 
        By assumption
        \\\qquad \= (1)\quad \= $\dstate \rel \ustate$
        \\ By  \rulename{RL-Arr-Update} rule
        \\\>(2)\> $((\ulog, \vmem_c,\cmd_c, \omega), \nvmem_r, \vmem_r, a[e]:=e') 
          \rlStepsto{[r,r']}  ((\ulog[a[v_r] \mapsto v_r'], \vmem_c,\cmd_c, \omega), 
          \nvmem_r, \vmem_r, \m{skip}) $
        \\\>(3)\> $\nvmem_r \lhd \ulog, \vmem_r\vdash e \Downarrow_r v_r$
        \\\>(4)\> $\nvmem_r \lhd \ulog, \vmem_r\vdash e' \Downarrow_{r'} v_r'$
        \\ By  \rulename{CP-Assign-Arr} rule
        \\\>(5)\> $((\nvmem_c, \vmem_c, \cmd_c), \nvmem, \vmem, a[e]:=e') 
          \Stepsto{[r, r']}  ((\nvmem_c, \vmem_c, \cmd_c), \nvmem[a[v]\mapsto v'], \vmem, \m{skip}) $
        \\\> (6)\> $\nvmem, \vmem \vdash e \Downarrow_r v$
        \\\> (7)\> $\nvmem, \vmem \vdash e' \Downarrow_{r'} v'$
        \\ By assumption
        \\\> (8) $\vmem = \vmem_r$
        \\By (9), and expression evaluation is deterministic
        \\\>(10)\> $v_r = v$ and $v_r' = v'$ and $r_r = r$ and $r_r' = r'$
        \\By (10)
        \\\>(11)\> $\nvmem[a[v]\mapsto v'] = \ulog[a[v_r]\mapsto v_r']$
        \\By (11), volatile memory does not change, 
        $\nvmem_r$ does not change,
      $\dstate' = \rstate'$
      \end{tabbing}
      
      \item[Case:] $\ee$ ends in \rulename{RL-Commit} rule
      \begin{tabbing}
      By assumption
      \\\qquad \= (1)\quad  \= $\dstate \rel \rstate$
      \\ By \rulename{RL-Commit} rule
      \\\>(2)\> $((\ulog, \vmem_c, \cmd_c, \omega'), \nvmem_r, \vmem_r,
       \m{checkpoint}(\omega);\cmd) 
       \rlStepsto{\m{checkpoint}}  ((\emptyset, \vmem_r, \cmd, \omega), 
      \nvmem_r \lhd \ulog, \vmem_r, \cmd)$
      \\ By  \rulename{CP-Checkpoint} rule
      \\\>(3)\> $((\nvmem_c, \vmem_c, \cmd_c), \nvmem, \vmem, \m{checkpoint}(\omega);\cmd) 
      \Stepsto{\m{checkpoint}}  ((\proj{\nvmem}{\omega}, \vmem, \cmd), \nvmem, \vmem, \cmd)$
      \\By assumption that $\ulog \subseteq \nvmem$ and $\forall \loc$ 
      s.t. $\nvmem_r(\loc) \neq \nvmem(\loc), \loc \in \ulog$ 
      \\\> (4) $\nvmem_r \lhd \ulog = \nvmem$
      \\By (4), volatile memory does not change, and observation is the same, $\dstate' = \rstate'$.
      \end{tabbing}
      
      \item[Case:] $\ee$ ends in \rulename{RL-reboot} rule  
      \begin{tabbing}
      By assumption
      \\\qquad \= (1)\quad \= $\dstate \rel \ustate$
      \\By \rulename{RL-reboot} rule
      \\\>(2)\> $((\ulog, \vmem_c, \cmd_c, \omega), \nvmem_r, \vmem_r, \m{reboot}()) 
      \rlStepsto{\m{reboot}}  ((\emptyset, \vmem_c, \cmd_c, \omega), \nvmem_r, \vmem_c, \cmd_c)$
      \\ By  \rulename{CP-reboot} rule
      \\\>(3)\> $((\nvmem_c, \vmem_c, \cmd_c), \nvmem, \vmem, \m{reboot}) 
      \Stepsto{\m{reboot}()}  ((\nvmem_c, \vmem_c, \cmd_c), \nvmem\lhd\nvmem_c, \vmem_c, \cmd_c)$
      
      \\By assumption
      \\\>(4) $\forall \loc$ s.t. $\nvmem_r(\loc) \neq \nvmem(\loc), \loc \in \ulog$
      \\\>(5) $\forall \loc \in \omega, \nvmem_r(\loc) = \nvmem_c(\loc)$
      \\By (4) and semantics are deterministic
      \\\>(6) $\forall \loc$ s.t. $\nvmem_r(\loc) \neq \nvmem(\loc), \loc \in \omega$
      \\By (5) and (6)
      \\\>(7) $\forall \loc$ s.t. $\nvmem_r(\loc) \neq \nvmem(\loc), \nvmem_r(\loc) = \nvmem_c(\loc)$
      \\By (7)
      \\\> (8) $\nvmem\lhd\nvmem_c = \nvmem_r$
      \\By (8), volatile memories are the same, and observation is the same, $\dstate' \rel \rstate'$
      
      \end{tabbing}
  
      \end{description}
      
  \end{proof}

\begin{cor}[Correctness of Redo Logging]
 If $(\emptyset, \nvmem, \vmem, \cmd) \rlMStepsto{O_1} \Sigma$,
  $\mt{CP}(\Sigma)$ 
    and  $\Vdash \cmd: \m{ok}$ 
    then $\exists O_2, \sigma$ s.t. 
  $(\nvmem, \vmem, \cmd) \MSeqStepsto{O_2}\sigma$, 
  $\erase{\Sigma} = \sigma$
    and  $O_1\oleqmc O_2$.
\end{cor}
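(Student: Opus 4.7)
The plan is a direct reduction to the correctness result already established for the basic checkpoint system (Lemma~\ref{lem:correctness-basic}), bridging the two systems through the two bisimulation lemmas proved above for redo logging. No induction on the structure of $c$ is needed; all the heavy lifting is already done.

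First, I would observe that the initial redo configuration is $\rel$-related to the corresponding basic initial configuration: interpreting the shorthand $(\emptyset, N, V, c)$ as having empty log, empty $\omega$, and a trivial saved context, it pairs with the basic initial configuration $((\emptyset, V, c), N, V, c)$, and all conjuncts of $\rel$ hold vacuously because $\dom(\ulog) = \omega = \dom(N_c) = \emptyset$ and the non-volatile memories, volatile memories, and commands coincide.

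Second, I would promote the backward single-step simulation (the lemma showing that the basic checkpoint system simulates redo logging) to multi-steps by induction on the length of the redo trace $\rlMStepsto{O_1}^{*} \Sigma$. Each step invokes the lemma to produce a matching basic step with the same observation while preserving $\rel$, giving a basic trace $\Sigma_0 \Stepsto{O_1}^{*} \Sigma_d$ with $\Sigma_d \rel \Sigma$ and identical observation sequence $O_1$. Since $\rel$ equates commands, $\mt{CP}(\Sigma)$ entails $\mt{CP}(\Sigma_d)$, so Lemma~\ref{lem:correctness-basic} applies: from $\Vdash c:\m{ok}$ it produces $(N, V, c) \MSeqStepsto{O_2} \sigma$ with $O_1 \oleqmc O_2$, and instantiating its second clause with the empty continuation from $\Sigma_d$ forces $\sigma = \erase{\Sigma_d}$.

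The remaining step is to transfer the erasure from $\Sigma_d$ to $\Sigma$. The relation $\rel$ directly equates volatile memories and commands, so these components of $\erase{\Sigma_d}$ and $\erase{\Sigma}$ match. The subtler non-volatile component is the main obstacle: at a pending-checkpoint state, $\rel$ allows $N_r$ and $N$ to differ on logged locations, yet the invariant $\forall \loc\in \ulog,\, \ulog(\loc) = N(\loc)$ together with $\forall \loc. N_r(\loc) \neq N(\loc) \Rightarrow \loc \in \ulog$ implies $N_r \lhd \ulog = N$. The cleanest resolution is to strengthen the statement (or pass through one further commit step on both sides, justified by taking one more related transition via the bisimulation and Lemma~\ref{lem:correctness-basic}'s second clause) so that the log is flushed, at which point $\erase{\Sigma_d} = \erase{\Sigma}$ drops out immediately from the relation. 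The bulk of the work is therefore already discharged by the bisimulation; the proof obligation reduces to bookkeeping about the log's state at checkpoint boundaries.
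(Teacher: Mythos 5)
Your proposal matches the paper's own proof, which is a one-line sketch appealing to the bisimulation relation and the correctness lemma for the basic checkpoint system. You go further by correctly flagging that at a pending-checkpoint state the relation $\rel$ only gives $\nvmem_d = \nvmem_r \lhd \ulog$ rather than outright equality of non-volatile memories, a subtlety the paper's sketch silently elides when it claims the erased configurations coincide; your proposed repair (passing through the commit step, or strengthening the statement so the log is flushed at the boundary) is the right way to close that gap.
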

\begin{proofsketch}
By the bi-simulation relation, for any trace in undo logging, there is
  a trace in basic checkpoint system with the same observations and
  erased configurations, and the correctness follows from that. 
\end{proofsketch}

\section{Tasks}

\subsection{Syntax and Operational Semantics}
We define the syntax and semantics of a task-based system using redo
logging. 
\[
\begin{array}{llcl}
\textit{Task-shared memory} & \tshared & : & M
\\
\textit{Privatized memory} & \tpriv & : & M
\\
\textit{Task-local memory} & \tlocal & : & M
\\
\textit{Task IDs} & i & :  & \m{Int} 
\\
\textit{Instr.} & \iota & \bnfdef & \cdots\bnfalt \m{toTask}(i)
\\
\textit{Task Map} & T & \bnfdef & \cdot\bnfalt T, i\mapsto (\omega, \cmd)
\\
\textit{context} & \tskctx & \bnfdef & (T, i) 
\end{array}
\]

We first define a function to reset the volatile part of a memory
region. The $\mt{resetVol}$ function, given memory $M$, resets any volatile 
portion of $M$ and persists any non-volatile portion. 
\begin{mathpar}
  \inferrule*[]{ M = M_n, M_v \\\mt{persistent}(M_n)  \\ \mt{volatile}(M_v)}{
    \mt{resetVol}(M) = M_n, \mt{reset}(M_v)
  }
\end{mathpar}

\noindent\framebox{$(\context, \tshared, \tpriv, \tlocal, \cmd) 
 \tskStepsto{O} (\context', \tshared', \tpriv', \tlocal', \cmd')$}
\begin{mathpar}

\inferrule*[right=TSK-PowerFail]{ \tlocal' = \mt{resetVol}(\tlocal) }{
  (\tskctx, \tshared, \tpriv, \tlocal, \cmd) 
 \tskStepsto{}  (\tskctx, \tshared, \tpriv,
 \tlocal', \m{reboot})
}

\and

\and
\inferrule*[right=TSK-Reboot]{ 
\tskctx = (T, i)
\\ T(i) = (\omega, \cmd)
}{
  (\tskctx, \tshared, \tpriv, \tlocal, \m{reboot}) 
 \tskStepsto{}  (\tskctx, \tshared, \emptyset, \tlocal, \cmd) 
}
\and

\inferrule*[right=TSK-Trans]{ 
\tskctx = (T, i)
\\ T(j) = (\omega, \cmd)
}{
  (\tskctx, \tshared, \tpriv, \tlocal, \m{toTask}(j)) 
 \tskStepsto{}  ((T,j)), \tshared \lhd \tpriv, \emptyset, \tlocal, \cmd) 
}
\end{mathpar}
\begin{mathpar}
\inferrule*[right=TSK-Update-S]{\tskctx = (T, i)
\\ T(i) = (\omega, \cmd) \\\\ x\in\m{dom}(\tshared) 
\\ x \notin \omega \\ \tshared \lhd \tpriv, \tlocal \vdash e \Downarrow_{r_t} v_t}{
  (\tskctx, \tshared, \tpriv, \tlocal, x := e) 
 \tskStepsto{[r_t]}  (\tskctx, \tshared[x\mapsto v_t], \tpriv, \tlocal, \m{skip}) 
}

\and

\inferrule*[right=TSK-Update-S-Log]{\tskctx = (T, i)
\\ T(i) = (\omega, \cmd) \\\\ x\in\m{dom}(\tshared) 
\\x \in \omega \\ \tshared \lhd \tpriv, \tlocal \vdash e \Downarrow_{r_t} v_t}{
  (\tskctx, \tshared, \tpriv, \tlocal, x := e) 
 \tskStepsto{[r_t]}  (\tskctx, \tshared, \tpriv[x\mapsto v_t], \tlocal, \m{skip}) 
}

\and

\inferrule*[right=TSK-Update-L]{ x\in\m{dom}(\tlocal)  \\ \tshared \lhd \tpriv, \tlocal \vdash e \Downarrow_{r_t} v_t}{
  (\tskctx, \tshared, \tpriv, \tlocal, x := e) 
  \tskStepsto{[r_t]}  (\tskctx, \tshared, \tpriv, \tlocal[x\mapsto v_t], \m{skip}) 
}

\and

\inferrule*[right=TSK-Arr-S]{\tskctx = (T, i)
\\ T(i) = (\omega, \cmd) \\ a\in\m{dom}(\tshared) \\ a \notin \omega 
\\\\\tshared \lhd \tpriv, \tlocal \vdash e \Downarrow_{r_t} v_t 
\\\tshared \lhd \tpriv, \tlocal \vdash e' \Downarrow_{r_t} v_t'}{
  (\tskctx, \tshared, \tpriv, \tlocal, a[e] := e') 
  \tskStepsto{[r_t, r_t']}  (\tskctx, \tshared[a[v_t]\mapsto v_t'], \tpriv, \tlocal,\m{skip}) 
}
\and
\inferrule*[right=TSK-Arr-S-Log]{ \tskctx = (T, i)
\\ T(i) = (\omega, \cmd) \\a\in\m{dom}(\tshared) \\a \in \omega 
\\\\\tshared \lhd \tpriv, \tlocal \vdash e \Downarrow_{r_t} v_t 
\\\tshared \lhd \tpriv, \tlocal \vdash e' \Downarrow_{r_t} v_t'}{
  (\tskctx, \tshared, \tpriv, \tlocal, a[e] := e') 
  \tskStepsto{[r_t, r_t']}  (\tskctx, \tshared, \tpriv[a[v_t]\mapsto v_t'], \tlocal,\m{skip}) 
}
\and

\inferrule*[right=TSK-Arr-L]{ a\in\m{dom}(\tlocal) \\ \tshared \lhd \tpriv, \tlocal \vdash e \Downarrow_{r_t} v_t 
\\\tshared \lhd \tpriv, \tlocal \vdash e' \Downarrow_{r_t'} v_t'}{
  (\tskctx, \tshared, \tpriv, \tlocal, a[e] := e') 
  \tskStepsto{[r_t, r_t']}  (\tskctx, \tshared, \tpriv, \tlocal[a[v_t]\mapsto v_t'], \m{skip}) 
}

\end{mathpar}

\subsection{Well-formedness Checking}
We create a top-level well-formedness judgment for tasks $\tlocal \Vdash_\war T :\m{ok}$. 
A task-based program is well-formed if every task in the program is well-formed. We assume the 
$\tlocal$ set is given. 
\noindent\framebox{$\tlocal \Vdash_\war T:\m{ok}$}
\begin{mathpar}

\inferrule*[right=T-WAR-TSK]{
 \forall i \in \m{dom}(T), T(i) = (\omega, \cmd) 
 \\ \tlocal;\omega;\emptyset;\emptyset \Vdash_\war \cmd: \m{ok}}{
  \tlocal \Vdash_\war T : \m{ok}
}
\end{mathpar}

A task program is only well-formed if the first access to any location
in $\tlocal$ is a write, in addition to the usual checking of WAR
variables, which are task-shared locations. 
In other words, $\tlocal$ variables are not WAR variables as it is
never read before written to. 
We show the changed rule for the WAR variable checking judgment
below. The corresponding array access rule is changed analogously. 

\noindent\framebox{$\tlocal; N; W; R \Vdash_\war \iota: W';R'$}
\begin{mathpar}

\inferrule*[right=WAR-Checkpointed]{ R' = R \cup rd(e) 
\\ x \in R' 
\\ x \notin W
\\ x \in N
\\ x \notin \tlocal
}{
  \tlocal;N; W;R \Vdash_\war x:= e : W \cup x; R' 
}

\end{mathpar}
The rule ensure additionally that it will never be the 
case that a task-local location (e.g., $x$ in this rule) is read
before it's written to. It's trivial to prove the following lemma:
\begin{lem}\label{lem:tlocal-checkpoint}
If $\tlocal;\omega;\emptyset;\emptyset \Vdash_\war \cmd: \m{ok}$, then
$\m{dom}(\tlocal) \cap \omega = \emptyset$.
\end{lem}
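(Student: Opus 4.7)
The plan is to prove the disjointness by induction on the derivation of $\tlocal;\omega;\emptyset;\emptyset \Vdash_\war \cmd: \m{ok}$, strengthened so that the inductive hypothesis tracks the relevant state through subcommands. Concretely, I would prove the stronger statement: for any derivation $\tlocal; N; W; R \Vdash_\war \cmd: \m{ok}$, every variable $x \in N$ that is \emph{live} with respect to this judgment (i.e., actually referenced by a \rulename{WAR-Checkpointed} inference within the subderivations of $\cmd$) satisfies $x \notin \m{dom}(\tlocal)$. Since the lemma concerns the checkpoint set $\omega$ passed into a task's checking call---and this $\omega$ is intended to be the tight WAR set (see the collection lemma in Appendix~B.2)---the conclusion $\m{dom}(\tlocal) \cap \omega = \emptyset$ follows immediately.

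The inductive argument is mechanical once the rule structure is laid out. The base cases are the instruction-level judgments $\tlocal; N; W; R \Vdash_\war \iota: W'; R'$. The only rule that actually adds any element of $N$ to the $N$-relevant trace is \rulename{WAR-Checkpointed}, whose premises explicitly include $x \notin \tlocal$; all other instruction rules either (i) write-dominate the access (\rulename{WAR-Wt}), (ii) do not read the location (\rulename{WAR-NoRd}), or (iii) do not involve $N$ at all. The array variant is handled identically after adding the analogous $a \notin \tlocal$ side condition. The inductive cases (\rulename{WAR-Seq}, \rulename{WAR-If}, \rulename{WAR-Cp}) propagate $N = \omega$ unchanged into subderivations (or reset it at a new checkpoint), so the hypothesis carries through structurally.

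The step I expect to be the main obstacle is making the notion of ``variable in $\omega$ that is used'' precise enough that the lemma statement $\m{dom}(\tlocal) \cap \omega = \emptyset$ is actually supported by the checking judgment alone. If $\omega$ is permitted to contain superfluous variables that never appear in any WAR pattern, then the checking judgment gives no direct leverage---the disjointness must come from an external assumption that $\omega$ is exactly the WAR set (a consequence of the collection rules such as \rulename{D-WAR-CP-Asgn} in Appendix~B.2, which only add a variable to $N$ through an access that triggers \rulename{WAR-Checkpointed}). The cleanest way to discharge this is to appeal to the companion collection lemma (Lemma~\ref{lem:war-dino-correct}): any $\omega$ produced by the collection algorithm contains only variables that were added via a read-before-write site, and by the modified \rulename{WAR-Checkpointed} rule, every such site requires $x \notin \tlocal$.

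Thus the proof sketch is: (1) state and prove the strengthened invariant by structural induction on the $\Vdash_\war$ derivation, where the only non-trivial case is \rulename{WAR-Checkpointed} and its side condition gives the result directly; (2) appeal to the collection algorithm's soundness to ensure every $x \in \omega$ is introduced at such a site; (3) conclude $\m{dom}(\tlocal) \cap \omega = \emptyset$. The proof itself is short; the real work is pinning down which auxiliary assumption about $\omega$'s provenance licenses the conclusion.
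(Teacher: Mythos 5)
Your analysis is essentially right, and the ``obstacle'' you flag in your third paragraph is the real content here: the paper offers no proof of this lemma at all (it is asserted to be trivial), and as literally stated the lemma does \emph{not} follow from the checking judgment alone. Your implicit counterexample is correct: $\tlocal;\omega;\emptyset;\emptyset \Vdash_\war \m{skip} : \m{ok}$ holds for \emph{any} $\omega$, including one containing task-local locations, since no rule in the $\Vdash_\war$ system ever inspects elements of $N$ that are not touched by a \rulename{WAR-Checkpointed} (or its array analogue) inference. So the conclusion $\m{dom}(\tlocal)\cap\omega=\emptyset$ genuinely requires a provenance assumption on $\omega$, exactly as you say. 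Your strengthened invariant --- every $N$-element actually discharged by \rulename{WAR-Checkpointed} lies outside $\dom(\tlocal)$, by that rule's explicit side condition --- is correct and is the only part that the checking derivation can supply.

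Two refinements to your step (2). First, the collection rules in Appendix~B.2 (\rulename{D-WAR-CP-Asgn}, \rulename{D-WAR-CP-Arr}) are stated for the basic system and are not parameterized by $\tlocal$, so Lemma~\ref{lem:war-dino-correct} by itself does not exclude task-local variables from the collected set. What closes the argument is the combination of (a) the collection rules only ever add a location to $N$ at a read-before-write site ($x\in R'$, $x\notin W$), and (b) the task well-formedness discipline (the modified \rulename{WAR-Checkpointed} rule with premise $x\notin\tlocal$) makes any read-before-write access to a $\tlocal$ location fail the check outright, so no such site exists in a well-formed task. Hence a collected $\omega$ for a checkable task body contains no $\tlocal$ locations. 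Second, it would be cleaner to either restate the lemma with the hypothesis that $\omega$ is produced by the ($\tlocal$-aware) collection judgment, or to simply impose $\m{dom}(\tlocal)\cap\omega=\emptyset$ as a side condition of \rulename{T-WAR-TSK}; as written, the lemma is being used downstream (e.g., in the bi-simulation cases for \rulename{TSK-Update-L} and \rulename{RL-NV-Log}) as if it followed from checkability alone, which it does not.
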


\subsection{Translation from Task-based Systems to Checkpoint Systems}

To capture the behavior of of a task transition, we augment the redo-log 
semantics with a $\m{goto}$ command and code context.
\[
\begin{array}{llcl}
\textit{Commands} & \cmd & \bnfdef & \cdots \bnfalt \m{goto}~\ell
\\
\textit{Code context} & \Psi & \bnfdef & \Psi \cdot \bnfalt \ell:\m{checkpoint(\omega);\cmd}
\end{array}
\]
The small step rule for $\m{goto}$ is as follows. The
continuously-powered rule is similar. 
\begin{mathpar}
  \inferrule*[right=RL-Goto]{ \Psi(\ell) = \cmd}{
    \Psi, (\rlctx,\nvmem , \vmem, \m{goto}~\ell;\cmd') 
    \rlStepsto{}  \Psi, (\rlctx,\nvmem , \vmem, \cmd) 
  } 
\end{mathpar}  

We give a translation relation from a task to a redo log program, written $T \trel \Psi$. 

\begin{mathpar}
  \inferrule*[ ]{ }{
    \cdot \trel \cdot
  } 
\and
  \inferrule*[ ]{ T \trel \Psi \\ \llbracket \cmd_t \rrbracket = \cmd_r 
  }{
    T, i \mapsto (\omega, \cmd_t) \leadsto \Psi, \ell:\m{checkpoint}(\omega);\cmd_r
  } 

  \and
  \inferrule*[ ]{  }{
    \llbracket \m{toTask}(i) \rrbracket = \m{goto}~\ell_i
  } 
\and
  \inferrule*[ ]{ \iota \neq \llbracket \m{toTask}(i) \rrbracket }{
    \llbracket \iota \rrbracket = \iota
  } 
\and
  \inferrule*[ ]{  }{
    \llbracket \iota;\cmd \rrbracket = \llbracket \iota\rrbracket; \llbracket\cmd \rrbracket
  }   
\and
  \inferrule*[ ]{ }{
    \llbracket \ifthen{e}{\cmd_1}{\cmd_2} \rrbracket = \ifthen{e}{\llbracket \cmd_1 \rrbracket}{\llbracket \cmd_2 \rrbracket}
  } 
\end{mathpar}  

Finally, we extend the behavior of assignments to $\tlocal$ 
to capture whether or not a variable has been initialized. Each write to task-local memory writes a pair $(b,v)$, where 
$v$ is the value, as before, and $b$ is a bit $\in \{0,1\}$ indicating if the value has been written to. Assignments 
flip the bit of a location to 1, and resetting volatile memory on
power failure sets all bits to 0; that is every volatile location
$\loc$ becomes $\loc \mapsto (0, \mt{reset}())$ after reset.
\begin{mathpar} 
\inferrule*[right=TSK-Update-L]{ x\in\m{dom}(\tlocal)  \\ \tshared \lhd \tpriv, \tlocal \vdash e \Downarrow_{r_t} v_t}{
  (\tskctx, \tshared, \tpriv, \tlocal, x := e) 
  \tskStepsto{[r_t]}  (\tskctx, \tshared, \tpriv, \tlocal[x\mapsto (1,v_t)], \m{skip}) 
}
\end{mathpar}

We use this to define a relation between $\tlocalv$ and $\vmem$, the memory of each 
system that resides in the volatile memory, $\tlocalv \approx \vmem$. The relation holds if as long as a task-local variable has been initialized, it will equal the value in the 
redo-log system's volatile memory.

\begin{mathpar}
\inferrule*[]{ }{
  \tlocalv \approx \vmem
}
\and
\inferrule*[]{ \tlocalv \approx \vmem}{
  \tlocalv, \loc \mapsto (0, v_t) \approx \vmem, \loc \mapsto v_r
}
\and
\inferrule*[]{ \tlocalv \approx \vmem \\ v_t = v_r }{
  \tlocalv, \loc \mapsto (1, v_t) \approx \vmem, \loc \mapsto v_r
}
\end{mathpar}

We are now ready to define a binary relation between a task program and a redo-log program. 
written $\tstate \rel \Psi, \rstate$
\[
\inferrule*{
\rstate = (\rlctx, \nvmem_r, \vmem_r, \cmd_r)
\\ \tstate = (\tskctx, \tshared, \tpriv, \tlocal, \cmd_t) 
\\\\ \rcon = (\ulog, \vmem_c, \cmd_c, \omega_r)
\\ \tcon = (T, i) \\ T(i) = (\omega_t, \cmd_\mt{tt})
\\\\ T \trel \Psi
\\ \llbracket \cmd_t \rrbracket = \cmd_r
\\ \llbracket \cmd_\mt{tt} \rrbracket = \cmd_c
\\ \omega_t = \omega_r
\\ \tpriv = \ulog
\\ \tlocal = \tlocalv, \tlocaln
\\ \nvmem_r = \tshared \cup \tlocaln
\\ \vmem_r \approx \tlocalv
\\ \m{dom}(\vmem_r) \subseteq \dom(\tlocalv) 
}{
\tstate \rel \Psi, \rstate
}
\]
To aid in proving bi-simulation, we first prove a helper lemma that states that expression 
evaluation is equivalent between a related task and redo-log system.
\begin{lem}[Related undo-log and task expression evaluation are the same]\label{lem:eval-equiv}
  Given $\rstate=(\rlctx, \nvmem_r, \vmem_r, \cmd_r)$, $\tstate=(\tskctx, \tshared, \tpriv, \tlocal, \cmd_t) $,  and 
  an expression $e$ s.t. $\tstate \rel \Psi, \rstate$, $\forall\loc\in\mt{rd}(e)\cap
  \m{dom}(\tlocal)$, $\tlocal(\loc) = (1, \_)$ 
and $\tshared\lhd\tpriv, \tlocal \vdash e \Downarrow_{r_t} v_t$, and 
  $\nvmem_r\lhd\ulog, \vmem_r \vdash e \Downarrow_{r_r} v_r$, then $r_r = r_t$ and $v_r = v_t$  
\end{lem}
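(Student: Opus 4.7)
The plan is to proceed by structural induction on the expression $e$, or equivalently on the derivation of $\tshared\lhd\tpriv, \tlocal \vdash e \Downarrow_{r_t} v_t$. Values trivially produce empty observations and equal values on both sides; binary operations follow from applying the induction hypothesis to each subexpression after noting that $\mt{rd}(e_i)\subseteq\mt{rd}(e)$, so the initialization hypothesis restricts to each $e_i$ and the combined value $\denote{v_1\,\mt{bop}\,v_2}$ and concatenated read observation match. The interesting work is in the base case of a variable or array read, which proceeds by case analysis on which memory region the location lives in.

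For a read of variable $x$, the relation $\tstate \rel \Psi,\rstate$ supplies the key equalities: $\tpriv = \ulog$, $\nvmem_r = \tshared \cup \tlocaln$, $\vmem_r \approx \tlocalv$, and $\dom(\vmem_r)\subseteq\dom(\tlocalv)$. I case on $x$'s residence: (i) if $x\in\dom(\tshared)$, then whether or not $x\in\dom(\ulog)=\dom(\tpriv)$, the two overlay lookups $(\tshared\lhd\tpriv)(x)$ and $(\nvmem_r\lhd\ulog)(x)$ produce the same value because $\tshared\subseteq\nvmem_r$ and the log is shared between the systems; (ii) if $x\in\dom(\tlocaln)$, then by Lemma~\ref{lem:tlocal-checkpoint} $x\notin\omega_r$, so $x\notin\dom(\ulog)$, and $(\nvmem_r\lhd\ulog)(x)=\nvmem_r(x)=\tlocaln(x)$, which equals the task-side value; (iii) if $x\in\dom(\tlocalv)$, the hypothesis that $\tlocal(x)=(1,\_)$ together with the definition of $\tlocalv\approx\vmem_r$ forces $\vmem_r(x)$ to be defined and equal to the task-side value. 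In all three subcases the read observations $\m{rd}\,x\,v$ coincide because the value read coincides. The array read $a[e']$ case reduces to the variable case after invoking the induction hypothesis on $e'$ to show the indices agree, then repeating the residence analysis on $a[v_{e'}]$.

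The main technical obstacle will be subcase (iii): ensuring the redo-log lookup $\vmem_r(x)$ is actually defined when the task-side location is initialized. The relation $\tlocalv\approx\vmem_r$ as stated only relates locations that appear on both sides, so I must argue inductively that every write of a task-local volatile variable on the task side (which flips the initialization bit to $1$) is mirrored by a corresponding entry in $\vmem_r$ under the redo-log semantics. This is not a property of this expression-evaluation lemma per se, but of the bisimulation invariant that maintains $\tstate\rel\Psi,\rstate$ across steps; the present lemma simply assumes the relation holds and extracts pointwise equality of initialized locations from it. A minor secondary obstacle is handling arrays uniformly with scalars, which is resolved by treating $a[v]$ as just another location in whichever region $a$ was declared.
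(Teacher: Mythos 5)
Your proposal is correct and follows essentially the same route as the paper's proof: structural induction on $e$, with the work concentrated in the variable/array base cases, splitting on whether the location lives in non-volatile memory ($\tshared\cup\tlocaln$, where $\tpriv=\ulog$ and $\nvmem_r=\tshared\cup\tlocaln$ make the overlays agree) or in $\tlocalv$ (where the initialization bit and $\vmem_r\approx\tlocalv$ force equality), and reducing arrays to the variable case via the induction hypothesis on the index. Your finer three-way split and the explicit appeal to Lemma~\ref{lem:tlocal-checkpoint} for the $\tlocaln$ subcase is slightly more careful than the paper's two-way split, but it is the same argument, and you correctly observe that definedness of $\vmem_r(x)$ for initialized locations is a bisimulation invariant assumed here rather than proved here.
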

\begin{proof}

  By induction over the structure of $e$. 
  \begin{description} 
  \item[Case:] $e$ = $x$  
  \begin{tabbing}
  \\  We examine two subcases: (I) $x \in \m{dom}(\nvmem_r)$ and (II) $x \in \m{dom}(\vmem_r)$ 
  \\{\bf Subcase (I)}: $x \in \m{dom}(\nvmem_r)$
  \\ By assumption, $\nvmem_r = \tshared \cup \tlocaln$ and $\tpriv=\ulog$
  \\\qquad \=  (I1) ~~\=  $\nvmem_r\lhd\ulog, \vmem_r
  =\tshared\lhd\tpriv, \tlocal $
  \\ By (I1)
 \\\> (I2)\> $r_t = r_r = \m{rd}\ x\ v_t$ and $v_t = v_r = (\nvmem_r\lhd\ulog, \vmem_r)(x)$
  
  \\{\bf Subcase (II)}: $x \in \m{dom}(\vmem_r)$
  \\By assumption
  \\\>(II1) $x \in \dom(\tlocalv)$
  \\By assumption
     \\\>(II2)   $\tlocalv(x) = (1, v_t)$
  \\By (II2) and assumption that $\vmem_r \approx \tlocalv$
  \\\>(II3)\> $\nvmem_r(x) = v_r = v_t$
  \\By (II3) and rule \rulename{Rd-Var} 
  \\\> (II4)\> $r_t = r_r = \m{rd}\ x\ v_t$
  \end{tabbing}

  \item[Case:] $e$ = $a[e']$
  \\We apply the I.H. to $e'$ 
  \\This is similar to the previous case. Neither $a$ or variables in
  $e'$ can differ in $\tlocal$ and $\vmem_r$.
  
  \item[Case:] $e$ = $e_1 \m{bop} e_2$ 
  \\We apply the I.H. to $e_1$ and $e_2$ 
  \\By rule \rulename{BinOp} and expression evaluation is deterministic, $v_t = v_r$ and $r_t = r_r$

\end{description}
\end{proof}  

\begin{lem}[Redo logging simulates task-based
  system]\label{lem:sim-undo-task}
$\vdash_\war \tstate: \m{ok}$, 
$\tstate \rel \Psi, \rstate $ and $\tstate \tskStepsto{o_1} \tstate' $
then $\exists \rstate'$ s.t. $\rstate \rlMStepsto{o_2}\rstate'$ and
$\tstate' \rel \Psi, \rstate' $and $o_1 = o_2$.
\end{lem}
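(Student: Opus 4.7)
\begin{proofsketch}
The plan is to proceed by case analysis on the derivation of $\tstate \tskStepsto{o_1} \tstate'$, following the same strategy as the proof of Lemma~\ref{lem:sim-dino-undo}, but now pairing each task step with a (possibly multi-step) redo-log execution. For each rule concluding the step, I will exhibit a matching redo-log trace, show that the emitted observations coincide, and verify that every conjunct of the relation $\tstate' \rel \Psi, \rstate'$ is re-established.

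The easy cases are the updates. For \rulename{TSK-Update-S} (where $x \notin \omega$) I match with \rulename{RL-NV-Assign}; for \rulename{TSK-Update-S-Log} (where $x \in \omega$) I match with \rulename{RL-NV-Log}; and for \rulename{TSK-Update-L} I split on whether $x \in \m{dom}(\tlocaln)$ or $\m{dom}(\tlocalv)$, matching \rulename{RL-NV-Assign} or \rulename{RL-V-Assign} respectively. In each case, Lemma~\ref{lem:eval-equiv} yields $v_t = v_r$ and $r_t = r_r$; its preconditions on initialized task-local reads follow from $\vdash_\war \tstate:\m{ok}$, which guarantees that no location in $\tlocal$ is read before being first-written, so $\tlocalv(\loc) = (1,\_)$ at every read site. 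The array cases are analogous, using the I.H. on the index expression. \rulename{TSK-PowerFail} matches \rulename{RL-PowerFail}, after noting that $\mt{resetVol}(\tlocal)$ affects only $\tlocalv$, preserving $\nvmem_r = \tshared \cup \tlocaln$ and re-establishing $\vmem_r \approx \tlocalv$ vacuously (all bits reset to $0$). \rulename{TSK-Reboot} matches \rulename{RL-Reboot}: both clear their log ($\tpriv$ and $\ulog$ respectively) and restore the command for the current task, which coincides under the translation because $\cmd_c = \llbracket \cmd_\mt{tt} \rrbracket$. Control-flow cases (\rulename{TSK-Seq}, \rulename{TSK-If-T}, \rulename{TSK-If-F}) reduce to the I.H., again via Lemma~\ref{lem:eval-equiv}.

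The main obstacle is the task-transition rule \rulename{TSK-Trans}. A single task step here does two things at once: commits $\tpriv$ into $\tshared$ and transfers control to task $j$'s command, simultaneously clearing the log and resetting the task ID. The redo-log side must simulate this with \emph{two} reductions. Because $\llbracket \m{toTask}(j) \rrbracket = \m{goto}~\ell_j$ and $T \trel \Psi$ guarantees $\Psi(\ell_j) = \m{checkpoint}(\omega_j);\llbracket \cmd_j\rrbracket$, I will apply \rulename{RL-Goto} to jump to $\ell_j$ and then \rulename{RL-CheckPoint} to commit the log. The conclusion's use of $\rlMStepsto{}$ (rather than $\rlStepsto{}$) is exactly to accommodate this pair of steps; the $\m{goto}$ step emits no observation, and the checkpoint step emits $\m{checkpoint}$, which the task transition also emits (assuming \rulename{TSK-Trans} is annotated consistently with the equivalence setup, matching the paper's convention that a task transition is a commit point). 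After these two redo-log steps, the new log is empty, matching the reset $\tpriv=\emptyset$; the updated $\nvmem_r \lhd \ulog$ equals $(\tshared \lhd \tpriv) \cup \tlocaln$ because $\m{dom}(\ulog) = \omega_r = \omega_t \subseteq \m{dom}(\tshared)$ by Lemma~\ref{lem:tlocal-checkpoint}; the new task $j$'s command and checkpoint set are related by construction of $\trel$; and $\vmem_r \approx \tlocalv$ is preserved since neither step touches volatile memory.

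The remaining subtlety is maintaining the translation invariants $\llbracket\cmd_t\rrbracket = \cmd_r$ and $\llbracket\cmd_\mt{tt}\rrbracket = \cmd_c$ through sub-derivations. This requires an auxiliary inversion lemma on $\llbracket \cdot \rrbracket$ (e.g., that $\llbracket \iota;\cmd \rrbracket = \iota_r;\cmd_r$ implies $\llbracket\iota\rrbracket = \iota_r$ and $\llbracket\cmd\rrbracket = \cmd_r$), which is immediate from the translation rules but must be invoked at each structural step. Finally, I will need to show that $\vdash_\war \tstate':\m{ok}$ holds after each task step so that the I.H.\ applies in the control-flow cases; this is analogous to Lemma~\ref{lem:cmd-runtime-ok} and can be proved in parallel by induction on the task well-formedness judgment.
\end{proofsketch}
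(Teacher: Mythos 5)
Your proposal matches the paper's proof essentially step for step: the same case analysis on the task-step derivation, the same pairing of update rules with \rulename{RL-NV-Assign}/\rulename{RL-NV-Log}/\rulename{RL-V-Assign} via Lemma~\ref{lem:eval-equiv} (with the initialized-read precondition discharged by $\vdash_\war$ and Lemma~\ref{lem:tlocal-checkpoint}), and crucially the same treatment of \rulename{TSK-Trans} as a \rulename{RL-Goto} followed by \rulename{RL-Commit}, which is exactly why the conclusion uses $\rlMStepsto{}$. Your additional remarks about translation-inversion lemmas and preservation of well-formedness are details the paper elides but do not change the argument.
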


\begin{proof}

  By induction over $\ee::\tstate \tskStepsto{o_1} \tstate' $
  
  \begin{description}
  \item[Cases:] $\ee$ ends in \rulename{TSK-Skip}, \rulename{TSK-If-T}
    \rulename{TSK-If-F} rule. The non-volatile memory and task-shared data is not altered
    data is not altered. The redo logging system
    takes corresponding \rulename{RL-Skip}, \rulename{RL-If-T}
    \rulename{RL-If-F} to reach a related configuration.

  \item[Case:] $\ee$ ends in \rulename{TSK-Seq} rule. We apply
    I.H. directly. 
  
  \item[Case:] $\ee$ ends in \rulename{TSK-Update-S} rule  
  \begin{tabbing}
  By assumption
  \\\qquad \= (1)\quad \= $\tstate \rel \Psi, \ustate$
  \\ By  \rulename{TSK-Update-S} rule
  \\\>(2)\> $(\tskctx, \tshared, \tpriv, \tlocal, x := e) 
  \tskStepsto{[r_t]}  (\tskctx, \tshared[x\mapsto v_t], \tpriv,
  \tlocal, \m{skip})$
 \\\> (3)\> $T(i) = (\omega_t, \cmd_t) $, $x\notin \omega_t$, $x\in\m{dom}(\tshared)$
  \\\> (4)\> $\tshared\lhd\tpriv, \tlocal \vdash e \Downarrow_r v_t$
\\ By (1)
\\\>(5)\> $\rcon = (\ulog, \vmem_c, \cmd_c, \omega_r)$,
$\omega_r=\omega_t$, $\nvmem_r= \tshared, \tlocaln$
\\By (3), (5)
    \\\>(6)\>   $x\notin \omega_r$, $x\in\m{dom}(\nvmem_r)$
  \\ By  \rulename{RL-NV-Assign} rule
    \\\>(7)\> $\Psi, (\rcon, \nvmem_r, \vmem_r, x:=e) 
    \rlStepsto{[r_r]}  \Psi, (\rcon, \nvmem_r[x\mapsto v_r], \vmem_r,\m{skip}) $
    \\\> (8)\> $\nvmem_r \lhd \ulog, \vmem_r \vdash e \Downarrow_{r_r}
    v_r$
\\ By $\vdash_\war \tstate: \m{ok}$, 
\\\>(9)\> $\forall\loc\in\mt{rd}(e)\cap \tlocal$, $\tlocal(\loc) =(1,
\_)$
    \\By Lemma~\ref{lem:eval-equiv} and (9)
    \\\> (10)\> $v_t = v_r$ and $r_t = r_r$ 
    \\By (10) 
    \\\>(11)
  $\tshared[x\mapsto v_t] = \nvmem_r[x\mapsto v_r]$
    \\By (11) and  volatile memory and log do not change, $\tstate' \rel \Psi, \rstate'$
  \end{tabbing}

  \item[Case:] $\ee$ ends in \rulename{TSK-Update-S-Log} rule  
\begin{tabbing}
  By assumption
  \\\qquad \= (1)\quad \= $\tstate \rel \Psi, \ustate$
  \\ By  \rulename{TSK-Update-S-Log} rule
  \\\>(2)\> $(\tskctx, \tshared, \tpriv, \tlocal, x := e) 
  \tskStepsto{[r_t]}  (\tskctx, \tshared, \tpriv[x\mapsto v_t],
  \tlocal, \m{skip})$
 \\\> (3)\> $T(i) = (\omega_t, \cmd_t) $, $x\in \omega_t$, $x\in\m{dom}(\tshared)$
  \\\> (4)\> $\tshared\lhd\tpriv, \tlocal \vdash e \Downarrow_r v_t$
\\ By (1)
\\\>(5)\> $\rcon = (\ulog, \vmem_c, \cmd_c, \omega_r)$,
$\omega_r=\omega_t$, $\nvmem_r= \tshared, \tlocaln$
\\By (3), (5)
    \\\>(6)\>   $x\in \omega_r$, $x\in\m{dom}(\nvmem_r)$
  \\ By  \rulename{RL-NV-Log} rule
    \\\>(7)\> $\Psi, (\rcon, \nvmem_r, \vmem_r, x:=e) 
    \rlStepsto{[r_r]} \Psi, (\rcon', \nvmem_r, \vmem_r,\m{skip}) $ where 
  $\rcon' = (\ulog [x\mapsto v_r], \vmem_c, \cmd_c, \omega_r)$
    \\\> (8)\> $\nvmem_r \lhd \ulog, \vmem_r \vdash e \Downarrow_{r_r}
    v_r$
\\ By $\vdash_\war \tstate: \m{ok}$, 
\\\>(9)\> $\forall\loc\in\mt{rd}(e)\cap \tlocal$, $\tlocal(\loc) =(1,
\_)$
    \\By Lemma~\ref{lem:eval-equiv} 
    \\\> (10)\> $v_t = v_r$ and $r_t = r_r$ 
    \\By (10) 
    \\\>(11)
  $\tpriv[x\mapsto v_t] = \ulog[x\mapsto v_r]$
    \\By (11) and other parts of the state don't change, $\tstate' \rel \Psi, \rstate'$
  \end{tabbing}

  \item[Case:] $\ee$ ends in \rulename{TSK-Update-L} rule  
  \begin{tabbing}
  By assumption
  \\\qquad \= (1)\quad \= $\tstate \rel \Psi, \rstate$
  \\ By  \rulename{TSK-Update-L} rule
  \\\>(2)\> $(\tskctx, \tshared, \tpriv, \tlocal, x := e) 
  \tskStepsto{[r_t]}  (\tskctx, \tshared, \tpriv, \tlocal[x\mapsto v_t], \m{skip})$
  \\\> (3)\> $\tshared \lhd \tpriv, \tlocal \vdash e \Downarrow_{r_t}
  v_t$
\\ By (1)
\\\>(4)\> $\rcon = (\ulog, \vmem_c, \cmd_c, \omega_r)$,
 $\tpriv=\ulog$, $\tlocalv=\vmem_r$
$\omega_r=\omega_t$, $\nvmem_r= \tshared, \tlocaln$
  \\We consider two subcases: (I) $x \in \tlocaln$ or (II) $x \in \tlocalv$  
  \\ {\bf Subcase (I)} $x \in \tlocaln$
  \\ By Lemma~\ref{lem:tlocal-checkpoint} and $\vdash_\war \tstate: \m{ok}$, 
  \\\>(I1)\> $x \notin \omega_t$ 
\\By (4) and (I1)
\\\>(I2)\>  $x \notin \omega_r$ 
  \\ By  \rulename{RL-NV-Assign} rule
  \\\>(I3)\> $\Psi, (\rcon, \nvmem_r, \vmem_r, x:=e) 
  \rlStepsto{[r_r]} \Psi, (\rcon, \nvmem_r[x\mapsto v_r], \vmem_r,\m{skip}) $
  \\\> (I4)\> $\nvmem_r \lhd \ulog, \vmem_r \vdash e \Downarrow_{r_r}
  v_r$
\\ By $\vdash_\war \tstate: \m{ok}$, 
\\\>(I5)\> $\forall\loc\in\mt{rd}(e)\cap \tlocal$, $\tlocal(\loc) =(1,
\_)$
  \\By Lemma~\ref{lem:eval-equiv} 
  and expression evaluation is deterministic
  \\\> (I6)\> $v_t = v_r$ and $r_t = r_r$ 
  \\By (I6) 
  \\\>(I7)\>$\tlocaln[x\mapsto v_t] = \nvmem_r[x\mapsto v_r]$
  \\By (I7), volatile memory and log do not change, $\tstate' = \rstate'$
  \\ {\bf Subcase (II)} $x \in \tlocalv$
  \\ By  \rulename{RL-Assign-V} rule
  \\\>(II1)\> $\Psi, (\rcon, \nvmem_r, \vmem_r, x:=e) 
  \rlStepsto{[r_r]} \Psi, (\ucon, \nvmem_r, \vmem_r[x\mapsto v_r],  \m{skip}) $
  \\\> (II2)\> $\nvmem_r \lhd \ulog, \vmem_r \vdash e \Downarrow_{r_r}
  v_r$
\\ By $\vdash_\war \tstate: \m{ok}$, 
\\\>(II3)\> $\forall\loc\in\mt{rd}(e)\cap \tlocal$, $\tlocal(\loc) =(1,
\_)$
  \\By Lemma~\ref{lem:eval-equiv} and (II3)
  and expression evaluation is deterministic
  \\\>(II4)\> $v_t= v_r$, $r_t=r_r$, 
  \\ By (II4) 
  \\\> (II5)\>  $\tlocalv[x\mapsto (1,v_t)] = \vmem_r[x\mapsto v_r]$
  \\ By (II5) and task-shared, non-volatile memories, and log do not change, $\tstate' \rel \Psi, \rstate'$

  \end{tabbing}

  \item[Case:] $\ee$ ends in \rulename{TSK-Update-Arr-S} or
    \rulename{TSK-Update-Arr-S-Log}, or
    \rulename{TSK-Update-Arr-L}. These proofs are similar the previous three
    cases.   

  \item[Case:] $\ee$ ends in \rulename{TSK-Trans} rule
  \begin{tabbing}
  By assumption
  \\\qquad \= (1)~~ \= $\tstate \rel \Psi, \rstate$
  \\ By  \rulename{TSK-Trans} rule
  \\\>(2)\> $(\tskctx, \tshared, \tpriv, \tlocal, \m{toTask}(j)) 
  \tskStepsto{\m{checkpoint}}  ((T, \m{task}(j)), \tshared \lhd \tpriv, \emptyset, \tlocal, \cmd_t) $
  \\\>(3)\> $T(j)= (\omega, \cmd_t)$
  \\ By assumption, 
  \\\>(4)\> $\llbracket\m{toTask}(j)\rrbracket = \m{goto}~\ell_j$ 
  \\\>(5)\> 
  $\Psi(\ell_j)=\m{checkpoint}(\omega);\cmd_r$ and $\llbracket\cmd_t\rrbracket=\cmd_r$
  \\ By (4) and \rulename{RL-Goto} rule
  \\\>(6)\> $\Psi, (\rlctx, \nvmem_r, \vmem_r, \m{goto}~\ell_j) 
  \rlStepsto{} \Psi, (\rlctx, \nvmem_r, \vmem_r, \m{checkpoint}(\omega);\cmd_r)$
  \\ By (6), (5), and \rulename{RL-Commit} rule
  \\\>(7)\> $\Psi, ((\ulog, \vmem_c, \cmd_c, \omega'), \nvmem_r, \vmem_r, \m{checkpoint}(\omega);\cmd_r) 
  \rlStepsto{\m{checkpoint}} \Psi, ((\emptyset, \vmem_r, \cmd_r, \omega), \nvmem_r \lhd \ulog, \vmem_r, \cmd_r)$
  \\By assumption that $\ulog = \tpriv$ and $\nvmem_r = \tshared \cup \tlocaln$
  \\\> (8) $\tshared \lhd \tpriv \cup \tlocaln = \nvmem_r \lhd \ulog$
  \\By (8) volatile memory does not change, and the initial contexts
  relate to each other
\\\> (9)\> the observation is equivalent and $\tstate' \rel \Psi, \rstate'$.
  \end{tabbing}
  
  \item[Case:] $\ee$ ends in \rulename{TSK-reboot} rule  
  \begin{tabbing}
  By assumption
  \\\qquad \= (1)~~ \= $\tstate \rel \Psi, \rstate$
  \\ By  \rulename{TSK-reboot} rule
  \\\>(2)\> $(\tskctx, \tshared, \tpriv, \tlocal, \m{reboot}) 
  \tskStepsto{\m{reboot}}  (\tskctx, \tshared, \emptyset, \tlocal,
  \cmd_t) $
\\\>(3)\> $\tskctx=(T, i)$ and $T(i)= (\omega_t, \cmd_t)$
  \\By \rulename{RL-reboot} rule
  \\\>(4)\> $\Psi, ((\ulog, \vmem_c, \cmd_c, \omega), \nvmem_r, \vmem_r, \m{reboot}()) 
  \rlStepsto{\m{reboot}}  \Psi, ((\emptyset, \vmem_c, \cmd_c, \omega_c),
  \nvmem_r, \vmem_c, \cmd_c)$
  \\By (1)
\\\>(5)\> $\cmd_c=\llbracket\cmd_t\rrbracket$, $\omega_c=\omega_t$,
and $\tshared,\tlocaln=\nvmem_r$
  \\By power failure must precede reboot, and the definition of $\mt{resetVol}$
  \\\> (6)\> $\forall \loc \in \tlocalv, \tlocalv[\loc \mapsto (0, v_t)]$
  \\By (6)
  \\\>(7) $\vmem_c \approx \tlocalv$ 
  \\By (5), (6), and (7) and the new log and the $\tpriv$ are both empty, $\tstate' \rel \Psi, \rstate'$
  
  \end{tabbing}
  
  \end{description}
  \end{proof}

Here, we write $\Psi,\rstate \rlStepsto{o_1}^\dagger \Psi,\rstate' $ to include
all one-step transitions, except \rulename{RL-Commit} and include
instead \rulename{RL-Goto} followed by \rulename{Checkpoint} as one
atomic step. This is reasonable as these correspond to an atomic step in the task
setting. Further, there are no checkpoints in commands other than
those translated from $\m{toTask}$. 
  \begin{lem}[Task-based system simulates translated continuous
    program]\label{lem:sim-redo-task}
$\vdash_\war \tstate: \m{ok}$, 
    $\tstate \rel \Psi, \rstate $ and $\Psi,\rstate \rlStepsto{o_1}^\dagger \Psi,\rstate' $
    then $\exists \tstate'$ s.t. $\tstate \tskStepsto{o_2}\tstate'$ and
    $\tstate' \rel \Psi, \rstate' $and $o_1 = o_2$.
    \end{lem}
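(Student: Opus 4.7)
The plan is to mirror the structure of Lemma~\ref{lem:sim-undo-task}, but in the reverse direction: induction on the derivation $\ee :: \Psi,\rstate \rlStepsto{o_1}^\dagger \Psi,\rstate'$, with a case split on the last rule used. For each redo-logging rule I need to exhibit the corresponding task transition and verify that the twelve conjuncts of the binary relation $\tstate \rel \Psi,\rstate$ (the command translation, equality of logs/contexts, the $\tlocal$ decomposition, $\nvmem_r = \tshared \cup \tlocaln$, $\vmem_r \approx \tlocalv$, and the domain containment) continue to hold after the step, and that $o_1 = o_2$.

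First I would handle the straight-line cases. For \rulename{RL-NV-Assign} and \rulename{RL-NV-Log} with target $x$, I use the relation to see that $x \in \m{dom}(\nvmem_r) = \m{dom}(\tshared) \cup \m{dom}(\tlocaln)$, and split on which side $x$ is in. If $x \in \tshared$ and $x \notin \omega_r = \omega_t$, use \rulename{TSK-Update-S}; if $x \in \tshared$ and $x \in \omega_r$, use \rulename{TSK-Update-S-Log}; if $x \in \tlocaln$, Lemma~\ref{lem:tlocal-checkpoint} plus $\vdash_\war \tstate:\m{ok}$ gives $x \notin \omega_t$, so the redo-log rule must be \rulename{RL-NV-Assign}, and \rulename{TSK-Update-L} applies. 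For \rulename{RL-V-Assign}, the target must lie in $\m{dom}(\vmem_r) \subseteq \m{dom}(\tlocalv)$, and I step with \rulename{TSK-Update-L}, setting the init bit to $1$; the new $\tlocalv, x \mapsto (1,v_t)$ stays $\approx \vmem_r[x\mapsto v_r]$ because, by Lemma~\ref{lem:eval-equiv} (whose precondition that reads only touch initialised task-local locations is discharged by the $\Vtaint_\war$ check on the current task command), the two values agree. The array cases are strictly analogous to these. Skip, sequencing, and the two if-rules propagate through the relation without touching state.

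The two interesting cases are the atomic \rulename{RL-Goto}+\rulename{RL-Commit} and \rulename{RL-Reboot}. For the first, the translation $T \trel \Psi$ guarantees $\Psi(\ell_j) = \m{checkpoint}(\omega_j);\cmd_r$ with $\llbracket \cmd_{t,j} \rrbracket = \cmd_r$, so the matching task step is \rulename{TSK-Trans} to task $j$. On the redo side I obtain $\nvmem_r \lhd \ulog$; on the task side I obtain $\tshared \lhd \tpriv$ (with $\tlocaln$ unchanged). The relation tells me $\ulog = \tpriv$ and $\nvmem_r = \tshared \cup \tlocaln$, so the two resulting non-volatile pictures coincide; both sides reset the log/privatized memory to $\emptyset$, and the new context on the task side is $(T,j)$ while the redo context is $(\emptyset, \vmem_r, \cmd_r, \omega_j)$; the five equalities re-close the relation. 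For \rulename{RL-Reboot}, \rulename{TSK-PowerFail} followed by \rulename{TSK-Reboot} gives the matching two-step task move; I must check that after the reset, $\tlocalv$ (all entries $(0, \_)$) still satisfies $\vmem_c \approx \tlocalv$, which holds vacuously by the first case of $\approx$.

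The main obstacle I anticipate is the initialisation-bit bookkeeping for $\tlocalv$: the redo-log has a plain volatile memory $\vmem_r$, whereas the task side distinguishes initialised vs. uninitialised slots, and the relation $\approx$ only relates initialised slots to $\vmem_r$. This means I cannot simply lift an arbitrary read out of $\vmem_r$ into a read from $\tlocalv$; I must invoke Lemma~\ref{lem:eval-equiv}, whose hypothesis requires $\forall \loc \in \mt{rd}(e) \cap \tlocal,\ \tlocal(\loc) = (1,\_)$. Discharging this obligation uniformly throughout the proof is the technical heart of the argument, and it is exactly here that the well-formedness hypothesis $\vdash_\war \tstate:\m{ok}$ (together with Lemma~\ref{lem:tlocal-checkpoint} and the fact that the current task command $\cmd_t$ is the image under translation of a $\Vdash_\war$-well-formed task body) is consumed: the judgment ensures that any task-local location read by the current instruction has been first written on every path reaching it, so its init bit is already $1$. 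I would therefore prove an auxiliary invariant on traces inside a task asserting this initialisation property, threading it through the induction so that each case that performs a read can cite it before applying Lemma~\ref{lem:eval-equiv}.
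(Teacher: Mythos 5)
Your proposal matches the paper's proof essentially step for step: induction on the $\dagger$-step derivation, the same case analysis matching each redo-logging rule to its task counterpart (the three-way split on the assignment target using Lemma~\ref{lem:tlocal-checkpoint}, the atomic goto-plus-commit case discharged by $\ulog=\tpriv$ and $\nvmem_r=\tshared\cup\tlocaln$, and the reboot case closed by the all-zero initialisation bits making $\approx$ hold vacuously), with Lemma~\ref{lem:eval-equiv} doing the work for reads and its initialisation precondition discharged from $\vdash_\war \tstate:\m{ok}$ exactly as you describe. The only nit is that \rulename{RL-PowerFail} and \rulename{RL-Reboot} are separate $\dagger$-steps, each matched by a single task step (\rulename{TSK-PowerFail} and \rulename{TSK-Reboot} respectively), so your two-step pairing for the reboot case should be split into two one-step cases to respect the stated form of the simulation.
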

    
    \begin{proof}
    
      By induction over $\ee::\rstate \tskStepsto{o_1} \rstate' $
      
      \begin{description}
      \item[Cases:] $\ee$ ends in \rulename{RL-Skip}, \rulename{RL-If-T}
        \rulename{RL-If-F} rule. The memory
        is not altered. The task system
        takes corresponding \rulename{TSK-Skip}, \rulename{TSK-If-T}
        \rulename{TSK-If-F} to reach a related configuration.

      \item[Case:] $\ee$ ends in \rulename{RL-Seq} rule. We apply
        I.H. directly. 
      
      \item[Case:] $\ee$ ends in \rulename{RL-NV-Assign} rule  
      \begin{tabbing}
      By assumption
      \\\qquad \= (1)\quad \= $\tstate \rel \Psi, \rstate$
      \\ By  \rulename{RL-NV-Assign} rule
      \\\>(2)\> $\Psi, (\rcon, \nvmem_r, \vmem_r, x:=e) 
        \rlStepsto{[r_r]} \Psi, (\rcon, \nvmem_r[x\mapsto v_r], \vmem_r,\m{skip}) $
      \\\> (3)\> $\nvmem_r \lhd \ulog, \vmem_r \vdash e
      \Downarrow_{r_r} v_r$ and  $x\notin \omega_r$,
      $x\in\m{dom}(\nvmem_r)$
      \\By (1)
      \\\>(4)\> $\nvmem_r= \tshared, \tlocaln$, $\omega_r = \omega_t$,
      $\tpriv=\ulog$, $\vmem_r=\tlocalv$
      \\By (4)
      \\\> (5)\> $\tshared \lhd \tpriv, \tlocal \vdash e
      \Downarrow_{r_t} v_t$
      \\ By $\vdash_\war \tstate: \m{ok}$, 
      \\\>(6)\> $\forall\loc\in\mt{rd}(e)\cap \tlocal$, $\tlocal(\loc) =(1,
\_)$
      \\By assumption,(4), (5), (6)  and Lemma~\ref{lem:eval-equiv} 
      \\\> (7)\> $v_t = v_r$ and $r_t = r_r$ 

      \\We consider two subcases: (I) $x \in \m{dom}(\tshared)$ or (II) $x \in \m{dom}(\tlocaln)$
      \\ {\bf Subcase (I)} $x \in \m{dom}(\tshared)$
      \\ By  \rulename{TSK-Update-S} rule
      \\\>(I1)\> $(\tskctx, \tshared, \tpriv, \tlocal, x := e) 
      \tskStepsto{[r_t]}  (\tskctx, \tshared[x\mapsto v_t], \tpriv, \tlocal, \m{skip})$
      \\By (7) and (4) 
      \\\>(I2)$\tshared[x\mapsto v_t] = \nvmem_r[x\mapsto v_r]$
      \\By (I3), volatile memory and log do not change, $\tstate' = \rstate'$
      \\ {\bf Subcase (II)} $x \in \m{dom}(\tlocaln)$
      \\ By  \rulename{TSK-Update-L} rule
      \\\>(II1)\> $(\tskctx, \tshared, \tpriv, \tlocal, x := e) 
      \tskStepsto{[r_t]}  (\tskctx, \tshared, \tpriv, \tlocal[x\mapsto v_t], \m{skip})$
      \\By (7) 
      \\\>(II2)$\tlocaln[x\mapsto v_t] = \nvmem_r[x\mapsto v_r]$
      \\By (II2), volatile memories and log do not change, $\tstate' \rel \Psi, \rstate'$
      \end{tabbing}
    
      \item[Case:] $\ee$ ends in \rulename{RL-NV-Log} rule  
      \begin{tabbing}
      By assumption
      \\\qquad \= (1)\quad \= $\tstate \rel \Psi, \rstate$
      \\ By  \rulename{RL-NV-Log} rule
      \\\>(2)\>\ $\Psi, ((\ulog, \vmem_c,\cmd_c, \omega_r), \nvmem_r, \vmem_r, x:=e) 
      \rlStepsto{[r_r]}  \Psi, ((\ulog[x \mapsto v_r], \vmem_c,\cmd_c, \omega), 
        \nvmem_r, \vmem_r,\m{skip}) $
      \\\>(3)\>\ $\nvmem_r \lhd \ulog, \vmem_r \vdash e
      \Downarrow_{r_r} v_r$, $x\in\omega_r$, $x\in\m{dom}(\nvmem_r)$
      \\By (1)
      \\\>(4)\> $\nvmem_r= \tshared, \tlocaln$, $\omega_r = \omega_t$,
      $\tpriv=\ulog$, $\vmem_r=\tlocalv$
      \\By (3) and (4)
      \\\>(5)\> $x\in\omega_t$ and $x\in\tshared,\tlocaln$
      \\By Lemma~\ref{lem:tlocal-checkpoint} and $\vdash_\war \tstate: \m{ok}$, 
      \\\>(6)\> $x \notin \tlocaln$ 
      \\ By  \rulename{TSK-Update-S-Log} rule
      \\\>(7)\> $(\tskctx, \tshared, \tpriv, \tlocal, x := e) 
      \tskStepsto{[r_t]}  (\tskctx, \tshared, \tpriv[x\mapsto v_t], \tlocal, \m{skip}) $
        \\\> (8) $\tshared \lhd \tpriv, \tlocal \vdash e
        \Downarrow_{r_t} v_t$
        \\ By $\vdash_\war \tstate: \m{ok}$, 
        \\\>(9)\> $\forall\loc\in\mt{rd}(e)\cap \tlocal$, $\tlocal(\loc) =(1,\_)$
        \\ By assumption, (4), (8), (9) and Lemma~\ref{lem:eval-equiv}
        \\\> (10)\>\ $v_t = v_r$ and $r_t = r_r$ 
        \\By (4) and (10)
        \\\>(11)\>~$\tpriv[x\mapsto v_t] = \ulog[x\mapsto v_r]$
        \\By (11), $\tshared, \tlocal$ and $\nvmem_r, \vmem_r$ does not change and 
        observation is the same,
         $\tstate' \rel  \Psi, \rstate'$
      \end{tabbing}

      \item[Case:] $\ee$ ends in \rulename{RL-V-Assign} rule. The case
        is similar to the previous cases. Here, the corresponding
        \rulename{TSK-Update-L} rule is applied.

      \item[Case:] $\ee$ ends in \rulename{RL-Assign-Arr} or
        \rulename{RL-Arr-Log} or \rulename{RL-V-Assign-Arr} rule, the proofs are similar to the
        previous three cases.
    
      \item[Case:] $\ee$ ends in \rulename{RL-Goto} 
 followed by \rulename{RL-Commit} rule
      \begin{tabbing} 
      By assumption
      \\\qquad \= (1)~~ \= $\tstate \rel \Psi, \rstate$
      \\ By \rulename{RL-Goto} rule
      \\\>(2)\> $\Psi, (\rlctx, \nvmem_r, \vmem_r, \m{goto}~\ell_i) 
      \rlStepsto{} \Psi, (\rlctx, \nvmem_r, \vmem_r,
      \m{checkpoint}(\omega_r);\cmd_r)$
\\\>(3)\> $\Psi(\ell_i) = \m{checkpoint}(\omega_r);\cmd_r$

      \\ By (2) and \rulename{RL-Commit} rule
      \\\>(4)\> $\Psi, ((\ulog, \vmem_c, \cmd_c, \omega'), \nvmem_r, \vmem_r, \m{checkpoint}(\omega_r);\cmd_r) 
      \rlStepsto{\m{checkpoint}} \Psi, ((\emptyset, \vmem_r, \cmd_r,
      \omega_r), \nvmem_r \lhd \ulog, \vmem_r, \cmd_i)$
\\By (1)
\\\>(5)\> $\llbracket\m{toTask}(i)\rrbracket = \m{goto}~\ell_i$ and
$T(i)= (\omega_t, \cmd_t)$, $\omega_r=\omega_t$, and $\llbracket\cmd_t\rrbracket=\cmd_r$ 
      \\ By  \rulename{TSK-Trans} rule
      \\\>(6)\> $(\tskctx, \tshared, \tpriv, \tlocal, \m{toTask}(i)) 
      \tskStepsto{\m{checkpoint}}  ((T, i), \tshared \lhd \tpriv,
\emptyset, \tlocal, \cmd_t) $ 
      \\ By (4) and (5)
      \\\> (7)\> $\ulog' = \emptyset = \tpriv'$ 
      \\By assumption that $\ulog = \tpriv$ and $\tshared \cup \tlocaln = \nvmem_r$
      \\\> (8) $\tshared \lhd \tpriv \cup \tlocaln = \nvmem_r \lhd \ulog$
      \\By (7), (8), volatile memory does not change, and observation is equivalent, $\tstate' \rel \Psi, \rstate'$.
      \end{tabbing}

      \item[Case:] $\ee$ ends in \rulename{RL-reboot} rule  
  \begin{tabbing}
  By assumption
  \\\qquad \= (1)~~ \= $\tstate \rel \Psi, \rstate$
  \\By \rulename{RL-reboot} rule
  \\\>(2)\> $\Psi, ((\ulog, \vmem_c, \cmd_c, \omega_c), \nvmem_r, \vmem_r, \m{reboot}()) 
  \rlStepsto{\m{reboot}}  \Psi, ((\emptyset, \vmem_c, \cmd_c, \omega_c),
  \nvmem_r, \vmem_c, \cmd_c)$
\\\>(3)\> $\tskctx=(T, i)$ and $T(i)= (\omega_t, \cmd_t)$
  \\ By  \rulename{TSK-reboot} rule
  \\\>(4)\> $(\tskctx, \tshared, \tpriv, \tlocal, \m{reboot}) 
  \tskStepsto{\m{reboot}}  (\tskctx, \tshared, \emptyset,
  \tlocal, \cmd_t) $
\\By (1)
\\\>(5)\> $\cmd_c=\llbracket\cmd_t\rrbracket$, $\omega_c=\omega_t$,
and $\tshared,\tlocaln=\nvmem_r$
  \\ By power failure must precede reboot, and $\mt{resetVol}$
  \\\> (6)\> $\forall \loc \in \tlocalv, \tlocalv[\loc \mapsto (0, v_t)]$
  \\By (6)
  \\\>(7) $\vmem_c \approx \tlocalv$
  \\By (5), (6), and (7) and the new log and the $\tpriv$ are both empty, $\tstate' \rel \Psi, \rstate'$ 
  \end{tabbing}
      \end{description}
      \end{proof}

\section{Intermittent Computing with Inputs}

\subsection{Summary of Operational Semantics}

\noindent\framebox{$(\timestamp, \context, \nvmem, \vmem, \cmd) 
 \Stepsto{O} (\timestamp',\context', \nvmem', \vmem', \cmd')$}

\begin{mathpar}

\inferrule*[right=I/O-CP-PowerFail]{ \m{pick}(n)}{
  (\timestamp, \context, \nvmem, \vmem, \cmd) 
 \Stepsto{}  (\timestamp + 1, \context, \nvmem, \resetm(\vmem), \m{reboot}(n)) 
}
\and
\inferrule*[right=I/O-CP-CheckPoint]{ }{
  (\timestamp, \context, \nvmem, \vmem, \m{checkpoint}(\omega);\cmd) 
 \Stepsto{\m{checkpoint}}  (\timestamp +1, (\proj{\nvmem}{\omega}, \vmem, \cmd), \nvmem, \vmem, \cmd) 
}
\and
\inferrule*[right=I/O-CP-Reboot]{ \context=(\nvmem, \vmem, \cmd)}{
  (\timestamp, \context, \nvmem', \vmem', \m{reboot}(n)) 
 \Stepsto{\m{reboot}}  (\timestamp + n, \context, \nvmem'\lhd\nvmem, \vmem, \cmd) 
}
\and
\inferrule*[right=I/O-CP-V-Assign-I/O]{x \in \m{dom}(V)}{
  (\timestamp, \context, \nvmem, \vmem, x:=\m{IN}()) 
 \Stepsto{\m{in}(\timestamp)}  (\timestamp +1, \context, \nvmem, \vmem[x\mapsto \m{in}(\timestamp)], \m{skip}) 
}
\and
\inferrule*[right=I/O-NV-Assign]{x \in \m{dom}(N)\\N, V \vdash e \Downarrow_{r} v }{
  (\timestamp, \context, \nvmem, \vmem, x:=e) 
 \Stepsto{[r]}  (\timestamp + 1, \context, \nvmem[x\mapsto v], \vmem, \m{skip}) 
}
\and
\inferrule*[right=I/O-CP-Assign-Arr]{N, V \vdash e \Downarrow_{r} v \\N, V \vdash e' \Downarrow_{r'} v' }{
  (\timestamp, \context, \nvmem, \vmem, a[e]:=e') 
 \Stepsto{[r, r']}  (\timestamp +1, \context, \nvmem[a[v]\mapsto v'], \vmem, \m{skip}) 
}

\end{mathpar}
\begin{mathpar}
\inferrule*[right=I/O-CP-V-Assign]{x \in \m{dom}(V)\\N, V \vdash e \Downarrow_{r} v }{
  (\timestamp, \context, \nvmem, \vmem, x:=e) 
 \Stepsto{[r]}  (\timestamp +1, \context, \nvmem, \vmem[x\mapsto v], \m{skip}) 
}

\and
\inferrule*[right=I/O-CP-NV-Assign-In]{x \in \m{dom}(N) }{
  (\timestamp, \context, \nvmem, \vmem, x:=\m{IN}()) 
 \Stepsto{\m{in}(\timestamp)}  (\timestamp + 1, \context, \nvmem[x\mapsto \m{in}(\timestamp)], \vmem, \m{skip}) 
}
\and
\inferrule*[right=I/O-CP-Assign-Arr-In]{N, V \vdash e \Downarrow_{r} v }{
  (\timestamp, \context, \nvmem, \vmem, a[e]:=\m{IN}()) 
 \Stepsto{[r], \m{in}(\timestamp)}  (\timestamp +1, \context, \nvmem[a[v]\mapsto \m{in}(\timestamp)], \vmem, \m{skip}) 
}
\and
\inferrule*[right=I/O-CP-Skip]{ }{
  (\timestamp, \context, \nvmem, \vmem, \m{skip};\cmd) 
 \Stepsto{}  (\timestamp, \context, \nvmem, \vmem, \cmd) 
}
\and
\inferrule*[right=I/O-CP-Seq]{  (\timestamp, \context, \nvmem, \vmem, i) 
 \Stepsto{o}  (\timestamp + 1, \context, \nvmem', \vmem', \m{skip})}{
  (\timestamp, \context, \nvmem, \vmem, i;c) 
 \Stepsto{o}  (\timestamp + 1, \context, \nvmem', \vmem', c) 
}
\and
\inferrule*[right=I/O-CP-If-T]{N, V \vdash e \Downarrow_{r} \m{true}}{
  (\timestamp, \context, \nvmem, \vmem, \m{if}\ e\ \m{then}\ \cmd_1\ \m{else}\ \cmd_2) 
 \Stepsto{[r]}  (\timestamp + 1, \context, \nvmem, \vmem, \cmd_1) 
}
\and
\inferrule*[right=I/O-CP-If-F]{N, V \vdash e \Downarrow_{r} \m{false}}{
  (\timestamp, \context, \nvmem, \vmem, \m{if}\ e\ \m{then}\ \cmd_1\ \m{else}\ \cmd_2) 
 \Stepsto{[r]}  (\timestamp+1, \context, \nvmem, \vmem, \cmd_2) 
}
\end{mathpar}

\subsection{Checking RIO Variables}
\label{app:rio-checking}
We write $N;I;M \Vdash_\rio \iota: I';M'$ to mean 
that given the version set $N$, the input-dependent set $I$, write-set $M$,
 and control is not tainted, any
input-dependent variables in $\iota$ are in $I'$ and any variables written 
are in $M$. We write $N;M \Vtaint \iota:\m{ok}$ to mean 
that given the version set $N$, the must-write set $M$, and control is tainted, 
any variable updated in $\iota$ that is not in $M$ must be in $N$. Any array updated must be in $N$.
input-dependent variables in $\iota$ are in $I'$ 
~\\\\
\noindent\framebox{$N;I; M \Vdash_\rio \iota: I', M'$}

\begin{mathpar}
  \inferrule*[right=RIO-dep]{ I \cap \mt{rd}(e) \neq \emptyset }{
    N;I;M \Vdash_\rio x:= e : I \cup x; M \cup x
  }
  \and
  \inferrule*[right=RIO-dep-clear]{ I \cap \mt{rd}(e) = \emptyset \\ x \in I}{
    N;I;M \Vdash_\rio x:= e : I \setminus x;M \cup x
  }
  \and

  \inferrule*[right=RIO-NDep]{ I \cap \mt{rd}(e) = \emptyset }{
    N;I;M \Vdash_\rio x:= e : I; M \cup x
  }
  \and
  \inferrule*[right=RIO-Arr-NDep]{ I \cap (\mt{rd}(e)\cup \mt{rd}(e')) =\emptyset}{
    N;I;M \Vdash_\rio a[e] := e' : I ;M
  }
  \and
  \inferrule*[right=RIO-Arr-dep]{ I \cap \mt{rd}(e') \neq \emptyset}{
    N;I;M \Vdash_\rio a[e] := e' : I \cup a;M
  }
  \and
  \inferrule*[right=RIO-Get]{ }{
    N;I;M \Vdash_\rio x:= \m{IN}() :I \cup x;M \cup x 
  }
  \and
  \inferrule*[right=RIO-Arr-loc]{ I \cap \mt{rd}(e) \neq \emptyset \\ a \in N}{
    N;I;M \Vdash_\rio a[e]:= e' : I \cup a;M 
  }
 
\end{mathpar}

\noindent\framebox{$N;M \Vtaint \iota:\m{ok}$}
\begin{mathpar}
  \inferrule*[right=RIO-Assign-tainted]{ x \in (M \cup N)}{
    N;M \Vtaint x:= e : \m{ok} 
  }
  \and
  \inferrule*[right=RIO-Arr-tainted]{ a \in N}{
    N;M \Vtaint a[e]:= e' : \m{ok} 
  }
  \and
  \inferrule*[right=RIO-Get-tainted]{x \in (M \cup N)}{
    N;M \Vtaint x:= \m{IN}() : \m{ok} 
  }

\end{mathpar}

Judgment $N;I;M \Vdash_\rio \cmd:\m{ok}$ means that all of $\cmd$'s exclusive may-write
variables are in $N$, given $I$ is the set of input-dependent variables,
$N$ is the set of versioned variables from the most recent checkpoint,$M$ is 
the set of variables that must be written from the most recent checkpoint,  
and control is not tainted.
Judgement $N;M \Vtaint \cmd:\m{ok}$ means that all of $\cmd$'s exclusive may-write
variables are in $N$, given $M$ is the set of must-write variables,
$N$ is the set of versioned variables from the most recent checkpoint, and control is tainted. 
We say $\Vdash^{\mt{MstWt}} \cmd : M$ to denote the must write set $M$  
returned by the collection algorithm on the command $C$.
~\\\\
\noindent\framebox{$N;I; M \Vdash_\rio \cmd:\m{ok}$}

\begin{mathpar}
\inferrule*[right=RIO-$\iota$]{   N;I;M \Vdash_\rio \iota: I', M'}{
  N;I;M \Vdash_\rio \iota: \m{ok}
}
\and
\inferrule*[right=RIO-Cp]{\omega;\emptyset;\emptyset \Vdash_\rio \cmd: \m{ok}}{
  N;I;M \Vdash_\rio \m{checkpoint} (\omega);\cmd : \m{ok}
}
\and
\inferrule*[right=RIO-Seq]{N;I;M \Vdash_\rio \iota: I';M' \\ N;I';M' \Vdash_\rio \cmd: \m{ok}}{
  N;I;M \Vdash_\rio \iota;\cmd : \m{ok}
}
\and
\inferrule*[right=RIO-If-Dep]{I \cap \mt{rd}(e) \neq \emptyset \\
  M \Vdash^{\mt{MstWt}} \m{if}\ e\ \m{then}\ \cmd_1\ \m{else}\ \cmd_2:
  M' 
\\  \forall i \in [1, 2] , N;M' \Vtaint \cmd_i: \m{ok} }{
    N;I;M \Vdash_\rio \m{if}\ e\ \m{then}\ \cmd_1\ \m{else}\ \cmd_2: \m{ok} 
}
\and
 \inferrule*[right=RIO-If-NDep]{I \cap \mt{rd}(e) = \emptyset 
\\ 
N;I;M \Vdash_\rio \cmd_i: \m{ok}   \\ i \in [1, 2] }{
 N;I;M \Vdash_\rio \m{if}\ e\ \m{then}\ \cmd_1 \m{else}\ \cmd_2: \m{ok} 
}
\end{mathpar}

\noindent\framebox{$N;M \Vtaint \cmd:\m{ok}$}
\begin{mathpar}

\inferrule*[right=RIO-Cp-tainted]{\omega;\emptyset \Vdash_\rio \cmd: \m{ok}}{
  N;M \Vtaint \m{checkpoint} (\omega);\cmd : \m{ok}
}
\and

\inferrule*[right=RIO-Seq-tainted]{N;M \Vtaint \iota: \m{ok} \\ N;M \Vtaint \cmd: \m{ok}}{
  N;M \Vtaint \iota;\cmd : \m{ok}
}
\and

\inferrule*[right=RIO-If-tainted]{N;M \Vtaint \cmd_i: \m{ok}   \\ i \in [1, 2] }{
    N;M \Vtaint \m{if}\ e\ \m{then}\ \cmd_1\ \m{else}\ \cmd_2: \m{ok} 
}
\end{mathpar}

We write $M \Vdash^\mt{MstWt} \iota: M'$ to mean that given must-write set $M$ 
variables that must be written to in $\iota$ are in $M'$. 

\begin{mathpar}

\inferrule*[right=Must-NV-Assign]{ x~\mbox{is stored on non-volatile memory} }{
  M \Vdash^\mt{MstWt} x:= e : M \cup x 
}
\and
\inferrule*[right=Must-Assign-In]{ }{
  M \Vdash^\mt{MstWt} x:= \mt{IN}() : M \cup x 
}
\and
\inferrule*[right=Must-Assign-Arr]{ }{
  M \Vdash^\mt{MstWt} a[e]:= e' : M
}
\and

\inferrule*[right=Must-Seq]{ M \Vdash^\mt{MstWt} \iota : M' \\ 
M'\Vdash^\mt{MstWt} \cmd : M''}{
  M \Vdash^\mt{MstWt} \iota;\cmd : M''
}
\and
\inferrule*[right=Must-If]{M\vdash \cmd_i : M_i  
\\ i \in [1, 2]}{
  M \Vdash^\mt{MstWt} \m{if}\ e\ \m{then}\ \cmd_1\ \m{else}\ \cmd_2 : M_1 \cap M_2
}
\and
\inferrule*[right=Must-CP]{ }{
  M \Vdash^\mt{MstWt} \m{checkpoint()};\cmd : M
}

\end{mathpar}

\subsection{RIO Variable Collection Algorithm}

\noindent{\framebox{$X;M;I \Vdash_\rio \iota: X'; M'; I'$}}
\begin{mathpar}
  \inferrule*[right=I/O-skip]{ }{
    X;M;I \Vdash_\rio \m{skip} : X;M;I 
  }
  \and
  \inferrule*[right=I/O-Get]{ }{
    X;M;I \Vdash_\rio x:= \m{IN}() : X;M\cup x;I \cup x
  }
  \and
\inferrule*[right=I/O-Assign-Dep]{ I \cap \mt{rd}(e) \neq \emptyset}{
  X;M;I \Vdash_\rio x:= e : X;M \cup x;I \cup x
}
\and
\inferrule*[right=I/O-Assign-NDep]{ I \cap \mt{rd}(e) = \emptyset }{
  X;M;I \Vdash_\rio x:= e : X;M \cup x;I
}
\and
\inferrule*[right=I/O-dep-clear]{ I \cap \mt{rd}(e) = \emptyset \\ x \in I}{
  X;M;I \Vdash_\rio x:= e : X;M \cup x;I \setminus x
}
\and

\inferrule*[right=I/O-Arr-dep]{ I \cap \mt{rd}(e') \neq \emptyset}{
  X;M;I \Vdash_\rio a[e] := e' : X;M;I \cup a
}
\and
\inferrule*[right=I/O-Arr-loc]{ I \cap \mt{rd}(e) \neq \emptyset}{
  X;M;I \Vdash_\rio a[e]:= e' : X \cup a;M,I \cup a 
}
\and

\inferrule*[right=I/O-Arr-nodep]{ I \cap \mt{rd}(e') = \emptyset \\ I \cap \mt{rd}(e) = \emptyset}{
  X;M;I \Vdash_\rio a[e] := e' : X;M;I
}
\end{mathpar}

\noindent{\framebox{$X;M;I \Vdash_\rio \cmd \SeqStepsto{} \cmd': X'$}}
\begin{mathpar}
\inferrule*[right=I/O-If-NDep]{I \cap \mt{rd}(e) =\emptyset 
\\  X;M;I \Vdash_\rio \cmd_i \SeqStepsto{} \cmd_i': X_i
\\ i \in [1, 2]}{
X;M;I \Vdash_\rio \m{if}\ e\ \m{then}\ \cmd_1\ \m{else}\ \cmd_2 \SeqStepsto{ } 
\m{if}\ e\ \m{then}\ \cmd_1'\ \m{else}\ \cmd_2': 
X_1 \cup X_2
}
\and
\inferrule*[right=I/O-If-Dep]{I \cap \mt{rd}(e) \neq\emptyset
\\  X;M \Vtaint \cmd_i \SeqStepsto{} \cmd_i': X_i;M_i  \\ i \in [1, 2]}{
X;M;I \Vdash_\rio \m{if}\ e\ \m{then}\ \cmd_1\ \m{else}\ \cmd_2 \SeqStepsto{ } \\
\m{if}\ e\ \m{then}\ \cmd_1'\ \m{else}\ \cmd_2': 
(X_1 \cup X_2\cup M_1\cup M_2) \setminus (M_1 \cap M_2)
}

\and
\inferrule*[right=I/O-Seq]{ X;M;I \Vdash_\rio \iota : X';M';I' \\ 
X';M';I'\Vdash_\rio \cmd \SeqStepsto{ }\cmd' : X''}{
  X;M;I \Vdash_\rio \iota;\cmd\SeqStepsto{ } \iota;\cmd' : X'' 
}

\and
\inferrule*[right=Collect-CP]{\emptyset;\emptyset;\emptyset \Vdash_\rio \cmd 
\SeqStepsto{} \cmd': X'}{
  X;M;I \Vdash_\rio \m{checkpoint()};\cmd \SeqStepsto{} \m{checkpoint(X')};\cmd' : X
}
\and
\inferrule*[right=Collect-$\iota$]{ X;M;I \Vdash_\rio  \iota : X';M';I' }{
  X;M;I \Vdash_\rio \iota \SeqStepsto{} \iota: X'
}
\end{mathpar}

We write $X;M\Vtaint \iota: X';M'$ to mean 
that given that control is tainted, must-write set $M$, and exclusive may
set $X$,
variables that must be written to in $\iota$ are in $M'$, variables that may (exclusive) 
be written to in $\iota$ are in $X'$. 
We write $X;M \Vtaint \cmd \SeqStepsto{} \cmd': X';M'$ to mean that 
given must-write set $M$, and exclusive may set $X$, program $\cmd$, and control is 
tainted, 
$\cmd'$ is a rewritten program where variables that must be written to in $\cmd$ up to 
the next checkpoint are in $M'$, and
and variables that may (exclusive) be written to in $\cmd$ up to 
the next checkpoint are in $X'$.

\noindent\framebox{$X;M \Vtaint \iota: X';M' $}
\begin{mathpar}
  \inferrule*[right=I/O-skip-tainted]{ }{
    X;M \Vtaint \m{skip} : X;M
  }
  \and
\inferrule*[right=I/O-Get-tainted]{ }{
  X;M \Vtaint x:= \m{IN}() : X;M\cup x  }
  \and
\inferrule*[right=I/O-Assign-tainted]{ }{
  X;M\Vtaint x:= e : X;M \cup x 
}
\and
\inferrule*[right=I/O-Arr-tainted]{ }{
  X;M \Vtaint a[e] := e' : X \cup a;M 
}
\end{mathpar}

\noindent \framebox{$X;M \Vtaint \cmd\longrightarrow \cmd': X';M'$}
\begin{mathpar}
\inferrule*[right=I/O-If-tainted]{
X;M \Vtaint \cmd_i \SeqStepsto{} \cmd_i': X_i;M_i \\ i \in [1, 2]}{
X;M \Vtaint \m{if}\ e\ \m{then}\ \cmd_1\ \m{else}\ \cmd_2 \SeqStepsto{ } 
\\
\m{if}\ e\ \m{then}\ \cmd_1'\ \m{else}\ \cmd_2': 
((X_1 \cup X_2 \cup M_1 \cup M_2) \setminus (M_1 \cap M_2)) ;  (M_1 \cap M_2)
}

\and
\inferrule*[right=I/O-Seq-tainted]{ X;M \Vtaint \iota : X';M' \\ 
X';M'\Vtaint \cmd \SeqStepsto{ }\cmd' : X'';M''}{
  X;M \Vtaint \iota;\cmd\SeqStepsto{ } \iota;\cmd' : X'';M'' 
}

\and
\inferrule*[right=Collect-CP-tainted]{\emptyset;\emptyset;\emptyset \Vdash_\rio 
\cmd \SeqStepsto{} \cmd': X'}{
  X;M\Vtaint \m{checkpoint()};\cmd \SeqStepsto{} \m{checkpoint(X')};\cmd' : X; M
}
\and
\inferrule*[right=Collect-$\iota$-tainted]{ X;M\Vtaint  \iota : X';M' }{
  X;M\Vtaint \iota \SeqStepsto{} \iota: X';M'
}

\end{mathpar}

\begin{lem}
\begin{enumerate}
\item If $\ee::M_1 \Vdash^\mt{MstWt} \iota: M_2$ and 
 $X;M_1 \Vtaint \iota: X';M'_2$ 
then $M_2 = M'_2$
\item If $\ee::M_1 \Vdash^\mt{MstWt} \cmd: M_2$ and 
 $X;M_1 \Vtaint \cmd\longrightarrow \cmd': X';M'_2$ 
then $M_2 = M'_2$
\end{enumerate}
\end{lem}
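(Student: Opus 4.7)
\begin{proofsketch}
The plan is a straightforward simultaneous induction on the derivations, with part (1) serving as the base for part (2). For part (1), I would case-split on the instruction $\iota$ and observe that the must-write contribution of each instruction is defined identically by the two judgments: for variable assignments ($x:=e$ and $x:=\m{IN}()$) both rules yield $M\cup\{x\}$, while for array assignments ($a[e]:=e'$) and $\m{skip}$ both yield $M$ unchanged. The one subtlety is that $\Vdash^{\m{MstWt}}$'s rule \rulename{Must-NV-Assign} is restricted to non-volatile $x$; I would assume (as the surrounding development does) that the collection algorithm is invoked on well-typed programs so this side condition matches the $\Vtaint$ cases, and otherwise note that volatile writes contribute nothing additional in either judgment.

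For part (2), I would induct on the derivation $\ee$ of $M_1 \Vdash^{\m{MstWt}} \cmd:M_2$ and case on the last rule. The sequence case uses part (1) on the leading instruction to conclude $M_1' = M_1''$, after which the inductive hypothesis on the tail gives equality of the final must-write sets. The if-case is the only one that does nontrivial set arithmetic: \rulename{Must-If} produces $M_1\cap M_2$ and \rulename{I/O-If-tainted} produces $(M_1\cap M_2)$ as the must-write component of its output, so by applying the inductive hypothesis to each branch and taking intersections the two match. The checkpoint case is immediate since both \rulename{Must-CP} and \rulename{Collect-CP-tainted} pass $M$ through unchanged (the exclusive-may component $X'$ is computed but does not affect $M$). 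The $\iota$-only case reduces to part (1).

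The main obstacle I expect is bookkeeping around the switching between $\Vdash_\rio$ and $\Vtaint$ judgments at checkpoints: the collection rule for a checkpoint re-enters the untainted $\Vdash_\rio$ world to collect the next region, so one must be careful that the lemma statement only compares must-write sets \emph{up to the next checkpoint} and that the $\m{Wt}$-style must-write judgment $\Vdash^{\m{MstWt}}$ is also defined to stop at checkpoints (which it is, by rule \rulename{Must-CP}). Once this alignment is clear, the induction goes through without further complication.
\end{proofsketch}
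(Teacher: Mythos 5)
Your proposal is correct and follows essentially the same route as the paper, whose entire proof is ``by induction over the structure of $\ee$''; your case analysis simply spells out the rule-by-rule correspondence (assignments add $x$ to $M$ in both judgments, arrays and checkpoints pass $M$ through, branches take the intersection of the branch must-write sets) that this induction relies on. The subtleties you flag --- the non-volatile side condition on \rulename{Must-NV-Assign} and the fact that both judgments stop at checkpoints --- are handled appropriately and do not change the argument.
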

\begin{proofsketch}
By induction over the structure of $\ee$.
\end{proofsketch}

\begin{lem}~\\
\begin{itemize}
\item If $X;M;I \Vdash_\rio \iota: X'; M'; I'$ and all of the locations in $X$
  are array and all of the locations in $M$ are variables
  then  all of the locations in $X'$
  are array and all of the locations in $M'$ are variables and $X'  \supseteq X$
\item If $X;M;I \Vdash_\rio \cmd \SeqStepsto{} \cmd': X'$ and all of the locations in $X$
  are array and all of the locations in $M$ are variables then $X'
  \supseteq X$
\item If $X;M \Vtaint \iota: X';M' $ and all of the locations in $X$
  are array and all of the locations in $M$ are variables
  then all of the locations in $X'$
  are array and all of the locations in $M'$ are variables $X'  \supseteq X$ and $M'\cup X' \supseteq
  X\cup M$.
\item If $X;M \Vtaint \cmd\longrightarrow \cmd': X';M'$ and  
all of the locations in $X$
  are array and all of the locations in $M$ are variables
  then $X'\cap M'=\emptyset$, $X'  \supseteq X$, $M'\cup X' \supseteq
  X\cup M$.
\end{itemize}
\end{lem}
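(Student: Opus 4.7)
The plan is to prove all four clauses simultaneously by mutual induction on the derivations. Clauses (1) and (3) serve as base lemmas for instructions, while clauses (2) and (4) use them along with recursive invocations on sub-commands. Because the derivations for $\Vdash_\rio$ and $\Vtaint$ mutually invoke one another at the \rulename{I/O-If-Dep} and \rulename{I/O-If-NDep} rules, the induction should be a single mutual induction over the combined derivation structure, with the inductive statement being the conjunction of all four clauses.

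For the instruction clauses (1) and (3), I would case-split on the last rule. Every rule either leaves $X$ and $M$ unchanged or extends one of them: the array-update rules (\rulename{I/O-Arr-loc}, \rulename{RIO-Arr-tainted}) only add an array name to $X$, and the assignment and input rules (\rulename{I/O-Assign-Dep}, \rulename{I/O-Assign-NDep}, \rulename{I/O-Get}, \rulename{RIO-Assign-tainted}, \rulename{RIO-Get-tainted}) only add a variable name to $M$. Hence the invariants that $X$ contains only arrays and $M$ contains only variables are preserved locally, and monotonicity of $X$ (and of $X \cup M$ in the tainted case) follows by inspection. These are routine checks.

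The interesting work lies in the command clauses, especially the if-rules that combine write-sets from both branches. For \rulename{I/O-If-Dep} the resulting exclusive-may-write set is $X' = (X_1 \cup X_2 \cup M_1 \cup M_2) \setminus (M_1 \cap M_2)$. By the inductive hypothesis applied to each branch in $\Vtaint$, each $X_i$ contains only arrays, each $M_i$ contains only variables, and $X_i \cup M_i \supseteq X \cup M$. Since $M_1 \cap M_2$ contains only variables, subtracting it cannot remove any array in $X \subseteq X_1 \cup X_2$, giving $X' \supseteq X$ and $X'$ arrays-only. For \rulename{I/O-If-tainted} the same reasoning yields $X' \supseteq X$; additionally $M' = M_1 \cap M_2$ consists only of variables (preserving the variable/array separation), and $X' \cap M' = \emptyset$ holds by construction, while $M' \cup X' = X_1 \cup X_2 \cup M_1 \cup M_2 \supseteq X \cup M$ follows from the I.H. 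The \rulename{Collect-CP} and \rulename{Collect-CP-tainted} rules are trivial because the outgoing $(X,M)$ is simply the incoming $(X,M)$; the sequencing rules follow by chaining two applications of the I.H.

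The main obstacle I anticipate is bookkeeping rather than conceptual: the statement bundles four properties across two mutually recursive judgments, so the induction hypothesis must be invoked at exactly the right clause on the right sub-derivation (e.g., within \rulename{I/O-If-Dep}, the sub-commands are checked under $\Vtaint$, so clause (4) -- not clause (2) -- is the correct I.H.\ to apply). A secondary subtlety is ensuring that the disjointness property $X' \cap M' = \emptyset$ in clause (4) does not break under sequencing: in \rulename{I/O-Seq-tainted} the incoming $X', M'$ from the instruction already satisfy the invariant, and the command-level I.H.\ then preserves it. Once the clause-to-clause invocation pattern is laid out explicitly, each individual case reduces to straightforward set-theoretic manipulation.
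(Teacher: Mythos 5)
Your proposal is correct and takes essentially the same route as the paper, which proves this lemma by induction over the structure of the collection derivations (the paper gives only that one-line sketch; your mutual-induction organization across the four clauses and the two judgments is the natural way to carry it out). One small note: in the \rulename{I/O-If-Dep} case you justify $X'\supseteq X$ by saying each $M_i$ is variables-only, which clause (4) as stated does not conclude, but the argument survives because $X\subseteq X_i$ and $X_i\cap M_i=\emptyset$ already give $X\cap(M_1\cap M_2)=\emptyset$.
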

\begin{proofsketch}
By induction over the structure of the checkpointed locations collection derivations.
\end{proofsketch}

\begin{lem}~\\
\begin{itemize}
\item If $X;M;I \Vdash_\rio \iota: X'; M'; I'$ and all of the locations in $X$
  are array and all of the locations in $M$ are variables
  then $\forall N\supseteq X'$, $N; I; M \Vdash_\rio \iota: I'; M'$ 
\item If $X;M;I \Vdash_\rio \cmd \SeqStepsto{} \cmd': X'$ and all of the locations in $X$
  are array and all of the locations in $M$ are variables  then 
then $\forall N\supseteq X'$, $N; I; M \Vdash_\rio \cmd': \m{ok}$ 

\item If $X;M \Vtaint \iota: X';M' $ and all of the locations in $X$
  are array and all of the locations in $M$ are variables  
  then $\forall N_c\supseteq X'$, $\forall M_c$ s.t.
  $N_c\cup M_c \supseteq X'\cup M'$, 
  $N_c; M_c \Vtaint \iota: \m{ok}$ 

\item If $X;M \Vtaint \cmd\longrightarrow \cmd': X';M'$and all of the locations in $X$
  are array and all of the locations in $M$ are variables 
  then $\forall N_c\supseteq X'$, $\forall M_c$ s.t.
  $N_c\cup M_c \supseteq X'\cup M'$, 
  $N_c; M_c \Vtaint \cmd': \m{ok}$ 
\end{itemize}
\end{lem}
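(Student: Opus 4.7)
The plan is to prove all four items simultaneously by mutual induction on the structure of the four collection derivations, since items 1 and 2 invoke items 3 and 4 through the tainted-branch rules (e.g., \rulename{I/O-If-Dep} rewrites the two sub-commands using the $\Vtaint$ judgment) and items 3 and 4 call back to items 1 and 2 only through \rulename{Collect-CP-tainted}, which resets the contexts. Two earlier lemmas in the excerpt are essential: the monotonicity lemma ($X'\supseteq X$, $M'\cup X'\supseteq M\cup X$, and preservation of the array/variable separation), and the agreement lemma stating that the must-write set produced by the collection algorithm coincides with the one computed by $\Vdash^\mt{MstWt}$.

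For the instruction-level items (1 and 3), each case is a direct match: pick the collection rule that produced $X';M';I'$, read off the corresponding checking rule, and verify its side conditions using $N\supseteq X'$ (resp.\ $N_c\cup M_c\supseteq X'\cup M'$). For instance, in rule \rulename{I/O-Arr-loc} the collection places $a$ into $X'$, and the checking rule \rulename{RIO-Arr-loc} requires $a\in N$, which follows from $N\supseteq X'$. For the command-level sequential and skip cases, I.H.\ applies directly because the propagated contexts $X';M';I'$ remain supersets of the original, preserving the superset hypothesis on $N$. For \rulename{Collect-CP} and \rulename{Collect-CP-tainted}, the recursive call to collection restarts with $\emptyset;\emptyset;\emptyset$ and returns some $X_\omega$; choosing $\omega=X_\omega$ and invoking I.H.\ on item 2 gives $\omega;\emptyset;\emptyset\Vdash_\rio \cmd':\m{ok}$, which is exactly the premise of \rulename{RIO-Cp}.

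The main obstacle is the tainted-branch rules \rulename{I/O-If-Dep} and \rulename{I/O-If-tainted}, where one must reconcile the arithmetic on sets. In the untainted-to-tainted case, collection sets $X'=(X_1\cup X_2\cup M_1\cup M_2)\setminus(M_1\cap M_2)$ for the branch, and checking demands \rulename{RIO-If-Dep}, which needs (i) $M\Vdash^\mt{MstWt}\m{if}~e~\m{then}~\cmd_1~\m{else}~\cmd_2:M'$ with $M'=M\cup(M_1\cap M_2)$ and (ii) $N;M'\Vtaint \cmd_i':\m{ok}$ for each side. Condition (i) follows from the agreement lemma; for (ii), the I.H.\ on item 4 requires $N_c\supseteq X_i$ and $N_c\cup M_c\supseteq X_i\cup M_i$. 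With $N\supseteq X'$ and $M_c=M'=M\cup(M_1\cap M_2)$, we compute
\[
N\cup M_c \;\supseteq\; X'\cup(M_1\cap M_2) \;=\; X_1\cup X_2\cup M_1\cup M_2 \;\supseteq\; X_i\cup M_i,
\]
which discharges the hypothesis. The tainted-to-tainted case \rulename{I/O-If-tainted} is analogous; the must-write accumulator $M$ must be threaded through and the same arithmetic applied, using the fact that $N_c\cup M_c$ already covers $X\cup M$ on entry.

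The secondary subtlety is maintaining the array/variable partition of $X$ and $M$ throughout the induction, which is needed in the \rulename{RIO-Arr-*} checking rules (arrays live in $N$, not in $M$). This is handled once by the preceding structural lemma, so at each inductive step we reuse its conclusions without reopening the case analysis. With these pieces in place, every case reduces to a direct application of one checking rule together with an appeal to the inductive hypotheses on strictly smaller derivations.
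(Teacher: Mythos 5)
Your proposal is correct and matches the paper's approach: the paper proves this lemma only by the one-line sketch ``induction over the structure of the checkpointed locations collection derivations,'' and your mutual induction over the four collection judgments, supported by the preceding monotonicity/disjointness lemma and the $\Vdash^{\mt{MstWt}}$ agreement lemma, is exactly that argument filled in. The only detail worth double-checking is the side condition $N\supseteq X_i$ in the \rulename{I/O-If-Dep} case, which follows from the $X_i\cap M_i=\emptyset$ clause of the structural lemma (so $X_i\cap(M_1\cap M_2)=\emptyset$ and hence $X_i\subseteq X'$), consistent with what you cite.
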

\begin{proofsketch}
By induction over the structure of the checkpointed locations collection derivations.
\end{proofsketch}

\subsection{Tainting Semantics} 
We augment values with tainted values denoted $v^T$. The expression
evaluation now propagates the taint tag. 
\[
\begin{array}{llcl}
\textit{values} & \val & \bnfdef & v \bnfalt v^t
\\ \textit{Configuration} & \sigma& \bnfdef & (\timestamp,  \nvmem, \vmem, \cmd)
\end{array}
\]

We write $(\timestamp,  \nvmem, \vmem, \cmd) \SeqStepsto{O} (\timestamp', 
\nvmem', \vmem', \cmd')$ to denote the small-step operational
semantics of the core calculus. The rules are summarized below. 

~\\\noindent\framebox{$(\timestamp,  \nvmem, \vmem, \cmd) \SeqStepsto{O} (\timestamp', 
\nvmem', \vmem', \cmd')$}

\begin{mathpar}

\inferrule*[right=I/O-Tnt-CheckPoint]{ }{
  (\timestamp,  \nvmem, \vmem, \m{checkpoint}(\omega);\cmd) 
 \SeqStepsto{\m{checkpoint}}  (\timestamp +1,  \nvmem, \vmem, \cmd) 
}
\and
\inferrule*[right=I/O-Tnt-Skip]{ }{
  (\timestamp,  \nvmem, \vmem, \m{skip};\cmd) 
 \SeqStepsto{}  (\timestamp,  \nvmem, \vmem, \cmd) 
}
\and
\inferrule*[right=I/O-Tnt-NV-Assign]{x \in \m{dom}(N)\\N, V \vdash e \Downarrow_{r} \val }{
  (\timestamp,  \nvmem, \vmem, x:=e) 
 \SeqStepsto{[r]}  (\timestamp + 1,  \nvmem[x\mapsto \val], \vmem, \m{skip}) 
}
\and
\inferrule*[right=I/O-Tnt-Assign-Arr]{N, V \vdash e \Downarrow_{r} \val \\N, V \vdash e' \Downarrow_{r'} \val' }{
  (\timestamp,  \nvmem, \vmem, a[e]:=e') 
 \SeqStepsto{[r, r']}  (\timestamp +1,  \nvmem[a[\val]\mapsto \val'], \vmem, \m{skip}) 
}
\and
\inferrule*[right=I/O-Tnt-V-Assign]{x \in \m{dom}(V)\\N, V \vdash e \Downarrow_{r} \val }{
  (\timestamp,  \nvmem, \vmem, x:=e) 
 \SeqStepsto{[r]}  (\timestamp +1,  \nvmem, \vmem[x\mapsto \val], \m{skip}) 
}
\and
\inferrule*[right=I/O-Tnt-NV-Assign-In]{x \in \m{dom}(N) }{
  (\timestamp,  \nvmem, \vmem, x:=\m{IN}()) 
 \SeqStepsto{\m{in}(\timestamp)}  (\timestamp + 1,  \nvmem[x\mapsto \m{in}(\timestamp)^t], \vmem, \m{skip}) 
}

\and
\inferrule*[right=I/O-Tnt-Assign-Arr-In]{N, V \vdash e \Downarrow_{r} \val }{
  (\timestamp,  \nvmem, \vmem, a[e]:=\m{IN}()) 
 \SeqStepsto{[r], \m{in}(\timestamp)}  (\timestamp +1,  \nvmem[a[\val]\mapsto \m{in}(\timestamp)^t], \vmem, \m{skip}) 
}
\and
\inferrule*[right=I/O-Tnt-V-Assign-In]{x \in \m{dom}(V)
}{
  (\timestamp,  \nvmem, \vmem, x:=\m{IN}()) 
 \SeqStepsto{\m{in}(\timestamp)}  (\timestamp +1,  \nvmem, \vmem[x\mapsto \m{in}(\timestamp)^t], \m{skip}) 
}
\and
\inferrule*[right=I/O-Tnt-Seq]{  (\timestamp,  \nvmem, \vmem, i) 
 \SeqStepsto{o}  (\timestamp + 1,  \nvmem', \vmem', \m{skip})}{
  (\timestamp,  \nvmem, \vmem, i;c) 
 \SeqStepsto{o}  (\timestamp + 1,  \nvmem', \vmem', c) 
}
\and
\inferrule*[right=I/O-Tnt-IfT]{N, V \vdash e \Downarrow_{r} \val \\
  \val =\m{true} ~\mbox{or}~ \m{true}^t}{
  (\timestamp,  \nvmem, \vmem, \m{if}\ e\ \m{then}\ \cmd_1\ \m{else}\ \cmd_2) 
 \SeqStepsto{[r]}  (\timestamp + 1,  \nvmem, \vmem, \cmd_1) 
}
\end{mathpar}
\begin{mathpar}
\inferrule*[right=I/O-Tnt-IfF]{N, V \vdash e \Downarrow_{r} \val\\ 
\val = \m{false}~\mbox{or}~\m{false}^t}{
  (\timestamp,  \nvmem, \vmem, \m{if}\ e\ \m{then}\ \cmd_1\ \m{else}\ \cmd_2) 
 \SeqStepsto{[r]}  (\timestamp+1,  \nvmem, \vmem, \cmd_2) 
}
\and
\inferrule*[right=I/O-Tnt-Sleep]{ \timestamp' > \timestamp}{
  (\timestamp,  \nvmem, \vmem, \cmd) 
 \SeqStepsto{}  (\timestamp',  \nvmem, \vmem, \cmd) 
}
\end{mathpar}

\subsection{Auxiliary Definitions}

\paragraph{Idempotent reads}
 
\begin{mathpar}

\inferrule*[right=I-Rb-Base]{ }{
  O \oleqm O
}
\and 
\inferrule*[right=I-Rb-Ind]{ O_1' \oleqm O_2}{
  O_1, \m{reboot}, O_1' \oleqm O_2
}
\\
\inferrule*[right=Cp-Base]{ O_1\oleqm O_2}{
  O_1 \oleqmc O_2
}
\and
\inferrule*[right=Cp-Ind]{O_1 \oleqm O_2 \\ O_1' \oleqmc O_2'}{
  O_1, \m{checkpoint}, O_1' \oleqmc O_2, O_2'
}

\end{mathpar}

\paragraph{Related non-volatile memory}

\begin{defn}[Related non-volatile memories at the same execution point]
  \label{def:app-io-same_exec}
  $\tau, N_0, V_0, c_0, c', \inputs \vdash \nint \sim \ncont$ iff
\begin{itemize}
\item $\m{dom}(\nint)= \m{dom}(\ncont)$ and
\item $\forall \loc \in \nint$ s.t. $\nint(\loc) \neq \ncont(\loc)$, 
 \begin{itemize}
\item  $\loc\in\mt{MFstWt}(N_0, V_0, c_0)$
\item let $\{T\} = \runof{\sigma,\inputs, c'}$ where
  $\sigma=(\timestamp, N_0, V_0, c_0)$ and the last state of $T$ is
  $(\timestamp',  \nint, V, c')$
\begin{itemize}
\item  $\loc\in\mt{MstWt}(\nint, V, c')$ 
\item  $\loc\notin \mt{Wt}(T)$
\end{itemize}
\end{itemize}
 
\end{itemize}

\end{defn} 
  
\begin{defn}[Relating memories between current and initial execution point]
\label{def:app-io-initial}  
$N_c, N_{\mt{rb}}, V, c \vdash \nint \sim \ncont$ iff 
$\m{dom}(\nint)= \m{dom}(\ncont)$, 
\begin{itemize}
\item $\m{dom}(\nint)= \m{dom}(\ncont)$ and
\item $\forall \loc \in \nint$ s.t. $\nint(\loc) 
\neq \ncont(\loc)$, $\loc\in
  N_c \cup \mt{MFstWt}(N_{\mt{rb}}, V, \cmd)$
\end{itemize}

\end{defn} 

\begin{defn}[Related configurations]
$\timestamp_0, \inputs, \nrb \vdash (\tau_1,  \context, \nint, \vmem_1, \cmd_1) \sim (\tau_2,
 \ncont, \vmem_2, \cmd_2)$ 
iff $\context = (\nvmem_c, \vmem_0, \cmd_0)$, 
$\nvmem_c \subseteq \nrb$
and  $\timestamp_0,\nrb, \vmem_0,
\cmd_0, \cmd_1, \inputs \vdash \nint \sim \ncont$,
$\tau_1=\tau_2$, $\vmem_1=\vmem_2$, and $\cmd_1=\cmd_2$.
\end{defn} 

\begin{defn}[Erased configuration]
$\erase{(\tau, \context, \nvmem, \vmem, \cmd)} = (\tau, \nvmem, \vmem, \cmd)$ 
\end{defn} 

\begin{lem}[Solid to Dash]\label{lem:io-mem-relation-relation}
If $N_c, N_0, V_0, c_0\vdash \nint \sim \ncont$, $N_c \subseteq \nint$, $N_c; \emptyset; \emptyset \Vdash 
c_0: \m{ok}$, $N_c \subseteq \ncont$, $N_c \subseteq \nint$ and $N_c \subseteq \nvmem_0$
then $\forall \tau$, $\tau,\nint, V_0, c_0, c_0, \emptyset \vdash \nint \sim \ncont$. 
\end{lem}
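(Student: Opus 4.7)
The plan is to unfold Definitions~\ref{def:app-io-initial} (solid) and~\ref{def:app-io-same_exec} (dashed) and verify each conjunct of the dashed relation when its parameters are instantiated to $N_0 := \nint$, $c' := c_0$, and $\inputs := \emptyset$, at the given $\tau$. Domain equality transfers directly from the solid premise. Then I would fix an arbitrary $\loc$ with $\nint(\loc)\neq \ncont(\loc)$; since $\{T\} = \runof{(\tau, \nint, V_0, c_0),\emptyset, c_0}$ is the zero-step trace whose final state is the starting state itself, the conjunct $\loc \notin \mt{Wt}(T)$ is immediate, and the remaining obligations reduce to $\loc \in \mt{MFstWt}(\nint, V_0, c_0)$ and $\loc \in \mt{MstWt}(\nint, V_0, c_0)$. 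Because any location first-written on every run is also written on every run, $\mt{MFstWt}(\nint, V_0, c_0) \subseteq \mt{MstWt}(\nint, V_0, c_0)$, so it suffices to establish membership in $\mt{MFstWt}(\nint, V_0, c_0)$.

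From the solid premise, $\loc \in N_c \cup \mt{MFstWt}(N_0, V_0, c_0)$, so I would case-split. In the case $\loc \in \mt{MFstWt}(N_0, V_0, c_0)$, the argument exploits that $N_0$ and $\nint$ agree on $N_c$ (from $N_c \subseteq N_0$ and $N_c \subseteq \nint$) together with the check $N_c;\emptyset;\emptyset \Vdash c_0:\m{ok}$: the check restricts tainted branches via $\Vtaint$ so that any write on a tainted path targets either $N_c$ or a must-written variable, and the WAR fragment rules out reading a non-checkpointed, non-write-dominated location before it is written. These together ensure that every path reachable from $(\tau, N_0, V_0, c_0)$ under some input sequence is mirrored by a path reachable from $(\tau, \nint, V_0, c_0)$ under the same inputs, with identical first-write behaviour, so $\mt{MFstWt}$ is preserved. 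In the case $\loc \in N_c$, I would invoke an I/O lifting of Lemma~\ref{lem:wt-fst-version}: any checkpointed location whose value can differ between $\nint$ and $\ncont$ must, by the well-formedness of $c_0$, be first-written before being read on every path to the next checkpoint, which puts it in $\mt{MFstWt}(\nint, V_0, c_0)$.

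The hard part will be establishing stability of $\mt{MFstWt}$ under the shift of initial memory from $N_0$ to $\nint$, since branch conditions in the source program may in principle read arbitrary non-volatile locations and $N_0$, $\nint$ are only guaranteed to agree on $N_c$. Overcoming this requires leaning on the checking rules to show that any location whose value governs a branch outcome or first-write status has, by the time control reaches that point, either been must-written on the current trace segment or lies in $N_c$. I would factor this reasoning into an auxiliary lemma stated in the style of Lemma~\ref{lem:wt-fst-version} but quantified over input sequences, then use it uniformly in both sub-cases of the case split above.
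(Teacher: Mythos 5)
Your overall decomposition is the right one, and your identification of the main difficulty --- transferring $\mt{MFstWt}$ membership from the starting memory $N_0$ to $\nint$ --- is exactly what the paper's supporting machinery (Lemmas~\ref{lem:io-wt-must-version}, \ref{lem:io-wt-fst-version}, and \ref{lem:not-tainted-deterministic}, ``writes are either must-writes or versioned'' and the taint-determinism of control flow) is there to discharge; your proposed auxiliary lemma is essentially a restatement of those. The reduction to $\mt{MFstWt}(\nint, V_0, c_0)$ via $\mt{MFstWt} \subseteq \mt{MstWt}$ and the vacuity of $\loc \notin \mt{Wt}(T)$ for the zero-step trace are also correct.

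However, your handling of the sub-case $\loc \in N_c$ contains a genuine error. You claim that ``any checkpointed location whose value can differ between $\nint$ and $\ncont$ must, by the well-formedness of $c_0$, be first-written before being read on every path,'' and hence lies in $\mt{MFstWt}(\nint, V_0, c_0)$. This is false, and no lifting of Lemma~\ref{lem:wt-fst-version} can deliver it: that lemma's disjunction runs the other way (\emph{if} a location is written on a trace, \emph{then} it is first-written \emph{or} checkpointed). Checkpointed locations are checkpointed precisely because they may be read before written (WAR) or written on only some input-dependent paths (RIO); a location in $N_c$ that is never written on some run is simply not in $\mt{MFstWt}$, and if your claim were provable there would be no need to checkpoint anything. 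The step you are missing is that this sub-case cannot arise: the hypotheses $N_c \subseteq \nint$ and $N_c \subseteq \ncont$ are sub-\emph{mapping} containments, so $\nint$ and $\ncont$ agree on all of $\dom(N_c)$, and therefore every location with $\nint(\loc) \neq \ncont(\loc)$ falls into the $\mt{MFstWt}(N_0, V_0, c_0)$ disjunct of the solid relation. Those hypotheses exist in the lemma statement for exactly this purpose; your proof never uses them this way. (A further simplification you could note: at the lemma's sole point of use, in the base case of Lemma~\ref{lem:io-one-cp-m-f}, one has $N_0 = \nint$, so the memory-shift stability you flag as the hard part degenerates there; the paper's cited lemmas nonetheless quantify over the starting memory so the general statement goes through.)
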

\begin{proofsketch} 
By examining the two relations and Lemmas~\ref{lem:io-wt-must-version} and ~\ref{lem:io-wt-must-version-arr} (writes are 
either must-writes or versioned). 
\end{proofsketch}

\begin{lem}[Dash to Solid]\label{lem:io-mem-same-to-init}
  If $\tau, \nint, V_0, c_0, c_0, \emptyset \vdash \nint \sim \ncont$,  and $N_c; \emptyset; \emptyset \Vdash 
  c_0: \m{ok}$,  $N_c\subseteq \nint$ and  $N_c \subseteq \ncont$,
  then $N_c, \nint, V_0, c_0 \vdash \nint \sim \ncont$
  \end{lem}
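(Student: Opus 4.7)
The plan is to prove Lemma~\ref{lem:io-mem-same-to-init} by directly unfolding both memory relation definitions and observing that each clause of the current-to-initial relation (Definition~\ref{def:app-io-initial}) follows immediately from the corresponding clause of the same-point relation (Definition~\ref{def:app-io-same_exec}) under a suitable reading of the parameters.

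First I would line up the parameter correspondence. The premise is an instance of Definition~\ref{def:app-io-same_exec} with $N_0 \mapsto \nint$, $V_0 \mapsto V_0$, $c_0 \mapsto c_0$, $c' \mapsto c_0$, and $\inputs \mapsto \emptyset$. The conclusion is an instance of Definition~\ref{def:app-io-initial} with $N_{\mt{rb}} \mapsto \nint$, $V \mapsto V_0$, $\cmd \mapsto c_0$. Under this correspondence, the relevant must-first-write set is $\mt{MFstWt}(\nint, V_0, c_0)$ in both definitions, so no reindexing of $\mt{MFstWt}$ is required. I would also observe that since $c' = c_0$ and $\inputs = \emptyset$, the unique run $T = \runof{(\tau,\nint,V_0,c_0),\emptyset,c_0}$ is trivial: it performs no command advancement, so $\mt{Wt}(T) = \emptyset$, which makes the ``$\loc \notin \mt{Wt}(T)$'' clause of the premise trivially true and the $\mt{MstWt}$ clause an unused additional restriction.

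Then I would discharge the two clauses of the conclusion. The domain equality $\m{dom}(\nint) = \m{dom}(\ncont)$ is the first clause of both definitions and transfers verbatim. For any $\loc \in \nint$ with $\nint(\loc) \neq \ncont(\loc)$, the premise's first bullet gives $\loc \in \mt{MFstWt}(\nint, V_0, c_0)$, hence \emph{a fortiori} $\loc \in N_c \cup \mt{MFstWt}(\nint, V_0, c_0)$, which is precisely what Definition~\ref{def:app-io-initial} requires. The remaining bullets of the premise (about $\mt{MstWt}(\nint, V, c')$ and $\loc \notin \mt{Wt}(T)$) are not needed for this direction.

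The main (very minor) point to flag is that the auxiliary hypotheses $N_c; \emptyset; \emptyset \Vdash c_0: \m{ok}$, $N_c \subseteq \nint$, and $N_c \subseteq \ncont$ are not actually consumed by this proof --- they are present for symmetry with the converse Lemma~\ref{lem:io-mem-relation-relation} (Solid to Dash), which must invoke the structural ``writes are either must-writes or checkpointed'' lemmas (Lemmas~\ref{lem:io-wt-must-version} and~\ref{lem:io-wt-must-version-arr}) to move in the opposite direction. Consequently, there is no real obstacle here: the lemma is a one-line definitional observation, which I expect to be the shortest step in the overall correctness development.
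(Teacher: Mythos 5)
Your proof is correct and matches the paper's approach: the paper's own argument for this lemma is just ``by examining the two relations,'' and your unfolding of Definitions~\ref{def:app-io-same_exec} and~\ref{def:app-io-initial} --- transferring domain equality and weakening $\loc \in \mt{MFstWt}(\nint, V_0, c_0)$ to $\loc \in N_c \cup \mt{MFstWt}(\nint, V_0, c_0)$ --- is exactly the intended content. Your observation that the auxiliary hypotheses are not consumed in this direction is also consistent with the paper, which invokes the ``writes are must-writes or versioned'' lemmas only for the Solid-to-Dash direction.
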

  \begin{proofsketch} 
  By examining the two relations. 
  \end{proofsketch}

  \begin{lem}[Solid to Solid after Reboot]\label{lem:io-solid-reboot}
    If $N_{\mt{ckpt}},N_{\mt{rb_0}}, V_0, c_0 \vdash \nint \sim \ncont$
    , $N_{\mt{rb}} = \nint \lhd N_{\mt{ckpt}}$,
    and $N_{\mt{ckpt}}; \emptyset; \emptyset \Vdash 
    c_0: \m{ok}$, 
     then $N_{\mt{ckpt}},N_{\mt{rb}}, V_0, c_0 \vdash N_{\mt{rb}} \sim \ncont$
  \end{lem}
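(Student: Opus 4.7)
\begin{proofsketch}
The plan is to unfold Definition \ref{def:app-io-initial} for both the hypothesis and the goal, and then discharge each of the two conjuncts (domain equality and the characterization of differing locations) by direct case analysis on whether the witnessing location lies in $\dom(N_{\mt{ckpt}})$.

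For the domain condition, I would observe that $N_{\mt{rb}} = \nint \lhd N_{\mt{ckpt}}$ does not enlarge or shrink the domain so long as $\dom(N_{\mt{ckpt}}) \subseteq \dom(\nint)$, which holds because the checkpointed locations are by construction a subset of non-volatile memory (guaranteed at the checkpoint rule \rulename{I/O-CP-CkPt}). Combined with $\dom(\nint) = \dom(\ncont)$ from the hypothesis, this yields $\dom(N_{\mt{rb}}) = \dom(\ncont)$.

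For the second conjunct, fix $\loc$ with $N_{\mt{rb}}(\loc) \neq \ncont(\loc)$. I would split on $\loc \in \dom(N_{\mt{ckpt}})$. If $\loc \in \dom(N_{\mt{ckpt}})$, then $\loc \in N_{\mt{ckpt}}$ and we are in the left disjunct of the conclusion. Otherwise $N_{\mt{rb}}(\loc) = \nint(\loc)$, so $\nint(\loc) \neq \ncont(\loc)$, and the hypothesis yields $\loc \in N_{\mt{ckpt}} \cup \mt{MFstWt}(N_{\mt{rb_0}}, V_0, c_0)$; since $\loc \notin N_{\mt{ckpt}}$, we conclude $\loc \in \mt{MFstWt}(N_{\mt{rb_0}}, V_0, c_0)$.

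The main obstacle is the last step: the goal asks for $\loc \in \mt{MFstWt}(N_{\mt{rb}}, V_0, c_0)$ but the hypothesis only provides membership in $\mt{MFstWt}(N_{\mt{rb_0}}, V_0, c_0)$, and these two sets are indexed by different non-volatile memories which could, in principle, drive different control-flow paths and hence different first-write sets. I would close this gap with a lemma of the shape ``$\mt{MFstWt}$ is stable under perturbations inside $N_{\mt{ckpt}}$'', analogous to Lemma~\ref{lem:wt-fst-version} in the WAR development. Its proof invokes $N_{\mt{ckpt}}; \emptyset; \emptyset \Vdash c_0: \m{ok}$: any read of a location in $N_{\mt{ckpt}}$ that can influence control flow is either write-dominated before the read (so the differing initial value is irrelevant) or the read target is itself in $N_{\mt{ckpt}}$, and in the latter case any downstream write it influences is either already in $N_{\mt{ckpt}}$ (covered by the first disjunct) or is a must-write on both sides of the tainted branch (so present in $\mt{MFstWt}$ regardless of which path is taken). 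Thus for every $\loc \notin \dom(N_{\mt{ckpt}})$, membership in $\mt{MFstWt}$ is invariant, completing the proof.
\end{proofsketch}
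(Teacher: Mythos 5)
Your proposal is correct and follows essentially the same route as the paper, whose entire proof of this lemma is ``by examining the two relations and Lemma~\ref{lem:io-wt-must-version}'': unfold Definition~\ref{def:app-io-initial}, split on $\loc \in \dom(N_{\mt{ckpt}})$, and bridge $\mt{MFstWt}(N_{\mt{rb_0}}, V_0, c_0)$ to $\mt{MFstWt}(N_{\mt{rb}}, V_0, c_0)$ using the ``writes are must-(first-)writes or versioned'' family of lemmas derived from the $\m{ok}$ judgment. The one imprecision is that $N_{\mt{rb}}$ and $N_{\mt{rb_0}}$ may also differ on \emph{non}-checkpointed locations that were must-first-written during the failed attempt, not only ``inside $N_{\mt{ckpt}}$,'' but those perturbations are harmless for exactly the reason your stability argument gives (such locations are never read before being overwritten on any run), so the proof goes through.
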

  \begin{proof}
  By examining the two relations and Lemma~\ref{lem:io-wt-must-version}.
  \end{proof}

\subsection{Correctness Proofs}
The correctness theorem follows from the following lemma.
\begin{lem}[Correctness]
\label{lem:io-correctness-basic}
If
  $T= (\timestamp, \context, \nvmem, \vmem, \cmd) \MStepsto{O_1} \Sigma$,
$\context = (\nvmem_0, \vmem, \cmd)$,  $\nvmem_0\subseteq \nvmem$,
$ \nrb = \m{nearestRb(T)}$, 
and  $\nvmem_0, \emptyset, \emptyset \Vdash_\war \cmd: \m{ok}$,
 $\nvmem_0, \emptyset, \emptyset \Vdash_\rio \cmd: \m{ok}$ 
then $\exists O_2, \timestamp_2, \sigma$ s.t. 
\begin{enumerate}
\item   $(\timestamp_2,\nvmem, \vmem, \cmd) \MSeqStepsto{O_2}
  \sigma$, $\timestamp_2, O_2\bnfalt_\m{in}, \nrb \vdash \Sigma \sim
  \sigma$, 
and $\timestamp_2\geq\timestamp$, $O_1\oleqmc O_2$.
and
\item $\forall T' = \Sigma \MStepsto{O} \Sigma'$,
 $\mt{CP}(\Sigma' )$ 
and $T'$ does not contain checkpoints or reboots
implies
$\sigma \MSeqStepsto{O} \erase{(\Sigma')}$
\end{enumerate}
\end{lem}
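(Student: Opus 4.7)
\begin{proofsketch}
The plan is to prove this by induction on the number of checkpoints in $O_1$, mirroring the structure of the analogous no-input proof (Lemma~C.7), but threading through the additional machinery needed to handle non-deterministic inputs and timestamps. At the top level, a trace with $k+1$ checkpoints splits as a checkpoint-free prefix $T_0$, followed by a \rulename{I/O-CP-CkPt} step, followed by a tail $T_1$ with $k$ checkpoints; the inductive hypothesis applies to $T_1$ (starting from a fresh context whose non-volatile portion is exactly $\proj{\nvmem_1}{\omega}$), and it remains to find a continuous witness for the prefix that composes with the IH witness under $\oleqmc$ via rule \rulename{Cp-Ind}. Lemma~\ref{lem:cmd-runtime-ok} (lifted to $\Vdash_\war$ and $\Vdash_\rio$) gives well-formedness of the command at the checkpoint from the premises on $\cmd$, so the tail invocation of IH is justified.

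The real work is in the base case: a checkpoint-free segment that may contain arbitrarily many reboots. I would prove it as a separate lemma (the input analogue of Lemma~\ref{lem:one-cp-m-f}), by a nested induction on the number of reboots in $O_1$. For the inner base case (no reboots), the intermittent segment and a chosen continuous segment are in lock-step; here I would use the ``dashed'' same-point relation of Definition~\ref{def:app-io-same_exec} together with the input sequence observed so far to pick the unique continuous trace via $\runof{\sigma,\inputs,\cdot}$, and then step-by-step invoke a single-step preservation lemma (analogue of Lemma~\ref{lem:related-to-related-one-step}) that relies on Lemma~\ref{lem:eq-var-eq-eval} and Lemma~\ref{lem:related-n-eq-var} to show that each tainted-free read returns the same value in both executions, so observations match and reads are idempotent. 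Crucially, the RIO judgment $\Vdash_\rio$ guarantees that any branch whose guard is input-tainted writes its exclusive-may-write targets into $\omega$, so divergent control paths can only touch locations that will be restored at the next checkpoint, which is what lets the dashed-line relation be preserved across tainted branches.

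For the inner inductive step, I would split the reboot-containing trace into a sub-prefix ending at a power failure, then a reboot, then the remaining trace. Lemma~\ref{lem:io-solid-reboot} lets me pass from a ``solid-line'' relation (Definition~\ref{def:app-io-initial}) across a reboot to a solid-line relation against the same continuous witness, and Lemma~\ref{lem:io-mem-relation-relation} switches solid-to-dashed once execution resumes. At this point the IH on the shorter remaining trace supplies $O_2$ and the witness $\sigma$, and I pick the continuous time $\timestamp_2$ large enough to dominate the intermittent timestamp after reboots (each reboot bumps $\tau$ by a non-deterministic $n$, which is harmless because the continuous execution can stall arbitrarily via \rulename{I/O-Tnt-Sleep}). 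Discarding the pre-reboot observation prefix via \rulename{I-Rb-Ind} gives $O_1\oleqm O_2$; combining segment-by-segment via \rulename{Cp-Ind} yields $\oleqmc$ at the top.

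The main obstacle is the interaction between input non-determinism and the choice of continuous witness. Unlike the no-input proof, I cannot pick a single canonical continuous trace once and for all: each checkpoint segment re-chooses a continuous trace whose input sequence agrees with the intermittent trace \emph{after the last reboot in that segment}, and the dashed-line invariant depends sensitively on $\inputs$ through $\runof{\sigma,\inputs,c'}$. The discipline will be: (i) at each checkpoint, reset to the solid-line relation with $\inputs=\emptyset$ using Lemma~\ref{lem:io-mem-same-to-init}; (ii) within a segment, maintain the dashed-line relation against the unique continuous trace selected by the current post-reboot input prefix; (iii) when a reboot occurs, discard the prior input prefix and Lemma~\ref{lem:io-solid-reboot} lets me re-select a continuous witness starting at the segment's beginning with the fresh input prefix. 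The $\Vdash_\rio$ checking is what keeps the dashed-line relation stable across tainted branches taken differently from the eventual witness trace, and the $\Vdash_\war$ checking handles the non-input idempotence obligations exactly as in Appendix~C; separating the two judgments is what makes the induction go through modularly.
\end{proofsketch}
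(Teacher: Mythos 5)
Your proposal follows essentially the same route as the paper's proof: an outer induction on the number of checkpoints reducing to a one-checkpoint/multiple-failures lemma, an inner induction on reboots that switches between the solid (current-to-initial) and dashed (same-point, input-indexed) memory relations via the reboot and checkpoint transfer lemmas, discards pre-reboot observations with \rulename{I-Rb-Ind}, and reconciles timestamps with the sleep rule. The key role you assign to $\Vdash_\rio$ (confining input-divergent writes to checkpointed or must-first-written locations) is exactly how the paper discharges the dashed-relation preservation, so the sketch is correct and matches the paper's argument.
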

\ifproofs
\begin{proof} 
By induction on the number of checkpoints in $O_1$.
\begin{description}
\item[Base case:] $O_1$ does not include any checkpoint, directly
  apply Lemma~\ref{lem:io-one-cp-m-f}.
\item[Inductive case:] $O_1$ contains $k+1$ checkpoints where $k\geq
  0$
\begin{tabbing}
By assumption, 
\\\qquad \= (1)\quad \= 
$T= (\timestamp,\context, \nvmem, \vmem, \cmd) \MStepsto{O_1}
(\timestamp_1, \context, \nvmem_1, \vmem_1, \m{checkpoint}(\omega);\cmd_1) 
\Stepsto{\m{checkpoint}} \Sigma' \MStepsto{O_2} \Sigma''$,
\\\>\> where $O_1$ does not contain checkpoints
\\By Lemma~\ref{lem:io-one-cp-m-f}, exists $O'_1$  and $\timestamp'$ s.t.
\\\>(2)\>$(\timestamp',\nvmem, \vmem, \cmd) \MSeqStepsto{O'_1}
(\timestamp_1,\nvmem_1, \vmem_1, \m{checkpoint}(\omega);\cmd_1) $ and $O_1 \oleqm
O'_1$ and $\timestamp' \geq \timestamp$
\\By \rulename{I/O-CP-CheckPoint} and (1)
\\\>(3)\> $\Sigma' =(\timestamp_1 +1, \context_1, \nvmem_1, \vmem_1, \cmd_1)$ where
$\context_1 =  (\proj{\nvmem_1}{\omega}, \vmem_1, \cmd_1)$
\\By \rulename{I/O-Tnt-CheckPoint} and (2)
\\\>(4)\> $(\timestamp_1, \nvmem_1, \vmem_1, \m{checkpoint}(\omega);\cmd_1)
\SeqStepsto{} (\timestamp_1 +1, \nvmem_1, \vmem_1, \cmd_1)$ 
\\By  $\nvmem_0, \emptyset, \emptyset \Vdash_\rio \cmd: \m{ok}$  and 
Lemma~\ref{lem:io-cmd-runtime-ok}
\\\>(5)\> $\exists$ $I'$, $M$ s.t. $\nvmem_0, I', M \vDash_\rio \m{checkpoint}(\omega);\cmd_1:\m{ok}$
\\\>\>or $\exists$ $M$ s.t. $\nvmem_0, M \Vtaint \m{checkpoint}(\omega);\cmd_1:\m{ok}$
\\By inversion of (5)
\\\>(6)\>  $\omega, \emptyset, , \emptyset\Vdash_\rio \cmd_1: \m{ok}$
\\By Lemma~\ref{lem:cmd-runtime-ok} and $\nvmem_0, \emptyset, \emptyset \Vdash_\war \cmd: \m{ok}$
\\\>(7)\> $\exists W', R' $ s.t. $\nvmem_0,  W', R' \Vdash_\war \m{checkpoint}(\omega);\cmd_1: \m{ok}$
\\By inversion of (7)
\\\> (8)\> $\omega, \emptyset, \emptyset \Vdash_\war \cmd_1: \m{ok}$
\\By I.H. on the tail of $T$ starting from $\Sigma'$, (3), (6), (8),
exists $O'_2$ and $\timestamp_2$ s.t.
\\\>(9)\>  $(\timestamp_2,\nvmem_1, \vmem_1, \cmd_1) \MSeqStepsto{O'_2}
  \sigma''$ and $\timestamp_2, O'_2\bnfalt_\m{in}, \nvmem_1  \vdash \Sigma'' \sim \sigma''$ 
  and $ O_2 \oleqmc O_2'$ and $\timestamp_2 \geq (\timestamp_1 + 1)$ and

\\\>(10)\>
 $\Sigma'' \MStepsto{O_3} (\timestamp_3,\context_3, \nvmem_3, \vmem_3, \m{checkpoint}(\omega');\cmd_3)$
implies
\\\>\>$\sigma'' \MSeqStepsto{O_3} (\timestamp_3, \nvmem_3, \vmem_3, \m{checkpoint}(\omega');\cmd_3)$,
\\By (2) and (9) and \rulename{Cp-Ind}
\\\>(11)\>\ $O_1, \m{checkpoint}, O_2 \oleqmc O'_1, O'_2$ 
\\By \rulename{I/O-Tnt-Sleep} and $\timestamp_2 \geq (\timestamp_1 + 1)$
\\\>(12) \> $(\timestamp_1+1,\nvmem_1, \vmem_1, \cmd_1) 
 \SeqStepsto{} (\timestamp_2,\nvmem_1, \vmem_1, \cmd_1)$
\\By connecting the executions the conclusion holds
\end{tabbing}
\end{description}
\end{proof}
\fi

\begin{lem}\label{lem:io-cmd-runtime-ok} 
  If $\nvmem_0, \emptyset , \emptyset\Vdash_\rio \cmd: \m{ok}$ and
  $(\timestamp, \nvmem, \vmem, \cmd)\MSeqStepsto{O} 
  (\timestamp',  \nvmem', \vmem', \cmd')$, and $O$ does not
  contain checkpoints, then either $\exists$ $I'$, $M$ s.t.
  $\nvmem_0, I' , M\Vdash_\rio \cmd': \m{ok}$ or 
$\exists$ $M$ s.t.
  $\nvmem_0, M \Vtaint \cmd': \m{ok}$. 
\end{lem}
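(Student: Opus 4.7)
The plan is to prove this by induction on the length of the execution trace $(\timestamp, \nvmem, \vmem, \cmd)\MSeqStepsto{O} (\timestamp',  \nvmem', \vmem', \cmd')$. The base case (zero steps) gives $\cmd = \cmd'$, so the original hypothesis $\nvmem_0, \emptyset, \emptyset \Vdash_\rio \cmd: \m{ok}$ discharges the conclusion directly, picking $I' = M = \emptyset$. For the inductive case, the trace factors as a single step followed by a shorter trace, and the result follows from a single-step preservation sub-lemma combined with the induction hypothesis. The observation-does-not-contain-checkpoints side condition is preserved under any suffix of a trace, so it transfers to the recursive call.

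The real work is the single-step sub-lemma, which I would strengthen to the following disjunctive form to match the conclusion: for any $I$, $M$, if $\nvmem_0, I, M \Vdash_\rio \cmd: \m{ok}$ (resp.\ $\nvmem_0, M \Vtaint \cmd: \m{ok}$) and $\cmd$ takes a single non-checkpoint step to $\cmd'$, then there exist $I'$, $M'$ with $\nvmem_0, I', M' \Vdash_\rio \cmd': \m{ok}$, or there exists $M'$ with $\nvmem_0, M' \Vtaint \cmd': \m{ok}$. The proof proceeds by case analysis on the semantic step. The instruction cases (\rulename{I/O-CP-NV-Assign}, \rulename{I/O-CP-V-Assign}, \rulename{I/O-CP-Assign-Arr}, \rulename{I/O-CP-NV-Assign-In}, and their variants) combine with \rulename{I/O-CP-Seq} and inversion on \rulename{RIO-Seq} (resp.\ \rulename{RIO-Seq-tainted}) to yield well-formedness of the tail $\cmd$ under the updated $I'$, $M'$. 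The \rulename{I/O-CP-Skip} case is immediate. Branch rules are the most interesting: an \rulename{I/O-CP-If-T}/\rulename{I/O-CP-If-F} step under \rulename{RIO-If-NDep} preserves the $\Vdash_\rio$ judgment on the chosen branch with the same $I, M$; under \rulename{RIO-If-Dep}, both branches were checked with $\Vtaint$ and the must-write set $M'$ computed by $\Vdash^\mt{mstWt}$, so after the step we land in the $\Vtaint$ disjunct of the conclusion; under \rulename{RIO-If-tainted} we stay in $\Vtaint$.

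The main obstacle is the transition between the two judgments at an input-dependent conditional. The checking rule \rulename{RIO-If-Dep} closes over the must-write set $M'$ computed statically for the \emph{entire} if-statement, whereas dynamically we only execute one side. I expect this to go through because the post-step conclusion only needs \emph{some} $M'$ witnessing $\Vtaint$, and the $M'$ from \rulename{RIO-If-Dep} is a valid witness for either branch (since both branches were checked with $N; M' \Vtaint \cmd_i: \m{ok}$). A secondary subtlety is that the $\Vdash_\rio$ to $\Vtaint$ switch is one-directional except at the \rulename{RIO-Cp-tainted} rule, which re-enters $\Vdash_\rio$; however, our trace assumption excludes checkpoints, so we never need to execute that re-entry, only preserve its static well-formedness as part of the tail command. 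With the single-step sub-lemma in hand, the outer induction on trace length is straightforward, yielding the disjunctive conclusion as stated.
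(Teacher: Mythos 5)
Your proposal is correct. The paper's own proof is a one-line sketch --- ``by induction over the derivation $\nvmem_0, W, R \Vdash \cmd: \m{ok}$'' (it even reuses the $W,R$ metavariables from the WAR analogue) --- i.e., it structures the induction over the static checking derivation, whereas you induct on the length of the execution trace and push all the content into a strengthened single-step preservation sub-lemma. The two decompositions are interchangeable here because the checking rules mirror the command structure and the operational semantics consumes the command one syntactic node at a time; your route makes the multi-step threading explicit (which the paper leaves implicit) at the cost of having to state the sub-lemma with the generalized hypothesis over arbitrary $I$, $M$ and the disjunction over $\Vdash_\rio$ versus $\Vtaint$, which you correctly identify as necessary for the induction to close. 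Your handling of the two delicate points is right: at \rulename{RIO-If-Dep} the must-write set $M'$ computed for the whole conditional is a legitimate $\Vtaint$ witness for whichever branch is dynamically taken, since the rule checks both branches against that same $M'$; and the one-way nature of the $\Vdash_\rio\to\Vtaint$ transition is harmless precisely because the no-checkpoint side condition rules out ever needing to execute the \rulename{RIO-Cp-tainted} re-entry. Nothing is missing.
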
 
\begin{proofsketch}
By induction over the derivation $\nvmem_0, W, R \Vdash \cmd:
\m{ok}$. 
\end{proofsketch}

\begin{lem}[One checkpoint, multiple failures]
\label{lem:io-one-cp-m-f}
If
  $T = (\timestamp, \context, \nvmem_1, \vmem, \cmd) \MStepsto{O_1} \Sigma$
  where $T$ does not execute checkpoint, $ \nrb = \m{nearestRb(T)}$
 $\context = (\nvmem_0, \vmem, \cmd)$, $\nvmem_0\subseteq\nvmem_1$, $\nvmem_0\subseteq\nvmem_2$,
 and $\nvmem_0, \nvmem_1, \vmem, \cmd \vdash \nvmem_1 \sim \nvmem_2$ 
 and  $\nvmem_0, \emptyset, \emptyset \Vdash_\war \cmd: \m{ok}$,  
 $\nvmem_0, \emptyset, \emptyset \Vdash_\rio \cmd: \m{ok}$ 
then
\begin{itemize}
\item  $\exists O_2, \timestamp_2, \sigma$ s.t. $(\timestamp_2,\nvmem_2, \vmem, \cmd) \MSeqStepsto{O_2}
  \sigma$ and $\timestamp_2, O_2\bnfalt_\m{in}, \nrb \vdash \Sigma \sim \sigma$ and $\timestamp_2\geq\timestamp$
  and $O_1\oleqm O_2$.
\item 
$\forall T' = \Sigma \MStepsto{O} \Sigma'$ where $\Sigma' =
(\timestamp',  \context', \nvmem', \vmem', \m{checkpoint}(\omega);\cmd')$ and $T'$
does not contain checkpoints
implies
$\sigma \MSeqStepsto{O} \erase{(\Sigma')}$,
\end{itemize}
\end{lem}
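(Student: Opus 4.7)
\begin{proofsketch}
The plan is induction on the number of reboots in $O_1$, mirroring the structure of Lemma~\ref{lem:one-cp-m-f} but with extra care to choose a continuous execution that agrees with the \emph{final} re-execution's input timestamps, since RIOs make observation sequences from earlier failed segments incomparable to those of the continuous witness.

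\textbf{Base case (no reboots).} The trace $T$ is a sequence of ordinary intermittent steps from $(\timestamp, \context, \nvmem_1, \vmem, \cmd)$ to $\Sigma$. First I would convert the solid-line relation $\nvmem_0, \nvmem_1, \vmem, \cmd \vdash \nvmem_1 \sim \nvmem_2$ into the dashed (same-execution-point) relation $\timestamp, \nvmem_1, \vmem, \cmd, \cmd, \emptyset \vdash \nvmem_1 \sim \nvmem_2$ via Lemma~\ref{lem:io-mem-relation-relation}. Then, by repeated application of a one-step same-point preservation lemma (analogous to Lemma~\ref{lem:related-to-related-one-step} extended with inputs and taint), the continuous execution $(\timestamp_2, \nvmem_2, \vmem, \cmd) \MSeqStepsto{O_2} \sigma$ can be built step-by-step so that each non-input observation matches, and each input is read at exactly the timestamp produced by the intermittent step, so the dashed relation is preserved throughout. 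Picking $\timestamp_2 = \timestamp$ gives $\timestamp_2 \geq \timestamp$ and $O_1 \oleqm O_2$ follows by \rulename{I-Rb-Base}. The second conjunct, completion to the next checkpoint, is obtained by a counterpart of Lemma~\ref{lem:one-f-completion}: at the checkpoint, the dashed relation forces $\nint = \ncont$, so $\sigma = \erase{\Sigma'}$.

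\textbf{Inductive case ($k+1$ reboots).} Decompose $T$ as
\begin{mathpar}
T = (\timestamp, \context, \nvmem_1, \vmem, \cmd) \MStepsto{O_1'} (\timestamp_f, \context, \nvmem_1', \vmem_f, \cmd_f) \Stepsto{} (\timestamp_f{+}1, \context, \nvmem_1', \resetm(\vmem_f), \m{reboot}(n)) \Stepsto{\m{reboot}} \Sigma_r \MStepsto{O_1''} \Sigma,
\end{mathpar}
where the first segment has no reboots. By an RIO analogue of Lemma~\ref{lem:partial-run} the solid-line relation $\nvmem_0, \nvmem_1, \vmem, \cmd \vdash \nvmem_1' \sim \nvmem_2$ is preserved through the reboot-free prefix. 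After \rulename{I/O-CP-Reboot}, non-volatile memory becomes $\nvmem_1' \lhd \nvmem_0$, and Lemma~\ref{lem:io-solid-reboot} re-establishes the solid relation $\nvmem_0, \nvmem_1' \lhd \nvmem_0, \vmem, \cmd \vdash \nvmem_1' \lhd \nvmem_0 \sim \nvmem_2$. I can then invoke the induction hypothesis on the suffix starting at $\Sigma_r$ (which has only $k$ reboots), yielding a continuous execution $(\timestamp_2, \nvmem_2, \vmem, \cmd) \MSeqStepsto{O_2} \sigma$ with $\timestamp_2, O_2|_\m{in}, \nrb \vdash \Sigma \sim \sigma$, $\timestamp_2 \geq \timestamp_f + n + 1 \geq \timestamp$, and $O_1'' \oleqm O_2$. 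Applying \rulename{I-Rb-Ind} discards $O_1'$ and the $\m{reboot}$ observation, giving $O_1', \m{reboot}, O_1'' \oleqm O_2$. The completion clause (2) is inherited directly from the I.H.

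\textbf{Main obstacle.} The subtle part is that each failed prefix $O_1'$ may have read inputs at timestamps that cannot possibly be re-read by any single continuous execution, because time advances monotonically and each re-execution reads at fresh timestamps. I therefore cannot pick one continuous witness up front and then splice; instead I must let the I.H. (on the post-final-reboot segment) choose the continuous starting timestamp, and only then appeal to \rulename{I-Rb-Ind} to forget the earlier observations. This is exactly why the observation relation $\oleqm$ throws away prefixes before a reboot, and why the $O_2|_\m{in}$ component appears in the same-point configuration relation --- those are the inputs the continuous witness is committed to. Threading these timestamps through so that $\timestamp_2 \geq \timestamp$ and the dashed relation's $\inputs$ parameter agrees with the actual post-reboot inputs is where all the bookkeeping concentrates, but requires no new conceptual ingredient beyond the three memory-relation lemmas already established.
\end{proofsketch}
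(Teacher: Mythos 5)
Your proposal is correct and follows essentially the same route as the paper's proof: induction on the number of reboots, using Lemma~\ref{lem:io-mem-relation-relation} together with Lemma~\ref{lem:io-one-f} and Lemma~\ref{lem:io-one-f-completion} in the base case, and Lemma~\ref{lem:io-partial-run}, Lemma~\ref{lem:io-solid-reboot}, the induction hypothesis, and \rulename{I-Rb-Ind} in the inductive case. Your observation that \rulename{I-Rb-Ind} simply discards the pre-reboot observations (so, unlike the input-free Lemma~\ref{lem:one-cp-m-f}, no prefix claim about the failed segments is needed) is exactly the simplification the paper's proof exploits.
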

\ifproofs
\begin{proof}
By induction on the number of reboots in $O_1$.
\begin{description}
\item[Base case:] $O_1$ does not include any reboot, first apply
  Lemma~\ref{lem:io-mem-relation-relation} to show $\timestamp,N_1, V, c, c, \emptyset \vdash
  N_1 \sim N_2$; then directly
  apply Lemma~\ref{lem:io-one-f} and Lemma~\ref{lem:io-one-f-completion}.
\item[Inductive case:] $O_1$ contains $k+1$ reboots where $k\geq
  0$
\begin{tabbing}
By assumption
\\\qquad \= (1)\quad \= 
$T= (\timestamp, \context, \nvmem_1, \vmem, \cmd) \MStepsto{O'_1}
(\timestamp_1, \context, \nvmem_1', \vmem_1, \m{reboot}(\omega)) 
\Stepsto{\m{reboot}} \Sigma' \MStepsto{O''_1} \Sigma''$,
\\\>\> and $O'_1$ does not contain reboots or checkpoints
\\By assumption $\nvmem_0,\nvmem_1, \vmem, \cmd \vdash \nvmem_1 \sim \nvmem_2$
\\By Lemma~\ref{lem:io-partial-run} 
\\\>(2)\> $\nvmem_0, \nvmem_1, \vmem, \cmd \vdash \nvmem_1' \sim \nvmem_2$.  
It follows that  $\nvmem_0, \nvmem_1, \vmem, \cmd \vdash \nvmem_1' \lhd\nvmem_0 \sim \nvmem_2$
\\By (2) and Lemma~\ref{lem:io-solid-reboot} 
\\\>(2b)\> $\nvmem_0, \nvmem_1' \lhd\nvmem_0, \vmem, \cmd \vdash \nvmem_1' \lhd\nvmem_0 \sim \nvmem_2$
\\By \rulename{I/O-CP-Reboot} and (1)
\\\>(3)\> $\Sigma' = (\timestamp_1 + n, \context, \nvmem_1' \lhd \nvmem_0,\vmem,\cmd)$
\\By I.H. on all the tail of $T$ starting from $\Sigma'$ and (2), (2b), (3)
\\\> (4)\> exists $O_2$, $\timestamp'$, and $\sigma$ 
s.t. $(\timestamp', \nvmem_2, \vmem, \cmd) \MSeqStepsto{O_2}
  \sigma$  and $\timestamp', O_2\bnfalt_{\mt{in}}, \nrb \vdash \Sigma'' \sim \sigma $ 
\\\>\>  and $\timestamp' \geq \timestamp_1 + n $ and $ O''_1\oleqm O_2$
\\\> (5)\> $\Sigma'' \MStepsto{O} 
(\timestamp_3, \context', \nvmem', \vmem', \m{checkpoint}(\omega);\cmd')$
\\\>\>implies
$\sigma \MSeqStepsto{O} (\timestamp_3, \nvmem', \vmem',
\m{checkpoint}(\omega);\cmd')$ 
\\By (1), (4) and \rulename{Rb-Ind}
\\\>(6)\>$ O'_1, \m{reboot}, O''_1\oleqm O_2$
\\By (3) and (4) and time is monotonically increasing
\\\>(7)\>$\timestamp' \geq \timestamp_1 + n\geq \timestamp_1 \geq \timestamp$
\\By (5), (6), (7) the conclusion holds.
\end{tabbing}
\end{description}
\end{proof}
\fi

\begin{lem}[Partial run relates to initial state (Multi-steps)]
\label{lem:io-partial-run}
If
$T= (\timestamp, \context, \nvmem_1, \vmem, \cmd_0) \MStepsto{O}
(\timestamp',  \context,
\nvmem'_1, \vmem', \cmd)$,  $T$ does not
  execute checkpoint or reboot, $\nvmem_0; \emptyset;
  \emptyset\Vdash_\war \cmd_0: \m{ok}$,
$\nvmem_0, \emptyset, \emptyset \Vdash_\rio \cmd_0: \m{ok}$, 
$\context = (\nvmem_0, \vmem, \cmd_0)$,  
and $\nvmem_0, \nvmem_1, \vmem, \cmd_0 \vdash \nvmem_1 \sim \nvmem_2$ 
then 
 $\nvmem_0, \nvmem_1, \vmem,\cmd_0 \vdash \nvmem'_1 \sim \nvmem_2$. 
\end{lem}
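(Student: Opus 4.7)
The plan is to mirror the structure used for the basic model (Appendix C.2, Lemma~\ref{lem:partial-run}) and lift it to the I/O setting, proceeding by induction on the length of the trace $T$. The base case is when $T$ has length zero: then $\nvmem'_1 = \nvmem_1$ and the required relation is exactly the hypothesis $\nvmem_0, \nvmem_1, \vmem, \cmd_0 \vdash \nvmem_1 \sim \nvmem_2$.

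For the inductive step, decompose $T$ as $T_0$ followed by a single step $(\timestamp_0, \context, \nvmem_1'', \vmem'', \cmd'') \Stepsto{o} (\timestamp', \context, \nvmem'_1, \vmem', \cmd)$. Because $T$ contains neither a reboot nor a checkpoint, the context $\context = (\nvmem_0, \vmem, \cmd_0)$ is unchanged throughout, and the well-formedness hypotheses $\nvmem_0, \emptyset, \emptyset \Vdash_\war \cmd_0:\m{ok}$ and $\nvmem_0, \emptyset, \emptyset \Vdash_\rio \cmd_0:\m{ok}$ still refer to the same $\cmd_0$. Apply the induction hypothesis to $T_0$ to obtain $\nvmem_0, \nvmem_1, \vmem, \cmd_0 \vdash \nvmem_1'' \sim \nvmem_2$, and then invoke a single-step preservation lemma (the I/O analog of Lemma~\ref{lem:partial-run-one-step}) to extend the relation across the final step.

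The one-step lemma is itself proved by case analysis on the step's derivation. The rules \rulename{I/O-CP-Skip}, \rulename{I/O-CP-V-Assign}, \rulename{I/O-CP-V-Assign-I/O}, \rulename{I/O-CP-If-T}, \rulename{I/O-CP-If-F}, and \rulename{I/O-CP-Seq} do not touch non-volatile memory, so the relation is preserved syntactically. The interesting cases are \rulename{I/O-CP-NV-Assign}, \rulename{I/O-NV-Assign}, \rulename{I/O-CP-Assign-Arr}, \rulename{I/O-CP-NV-Assign-In}, and \rulename{I/O-CP-Assign-Arr-In}, where a non-volatile location $\loc$ is updated. We must show: if the updated value of $\loc$ in $\nvmem'_1$ differs from $\nvmem_2(\loc)$, then $\loc \in \nvmem_0 \cup \mt{MFstWt}(\nvmem_1, \vmem, \cmd_0)$.

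The main obstacle is the argument promoting a single-trace first-write to a \emph{must}-first-write across all input-dependent paths. I expect to handle this by case-splitting on whether $\loc$ is read before being written along the concrete trace $T_0$ extended with the step. If it is read first, then Lemma~\ref{lem:cmd-runtime-ok}'s I/O analog gives a judgment of the form $\nvmem_0; W; R \Vdash_\war \iota; \cmd'' : \m{ok}$ with $\loc \in R$, so inversion forces $\loc \in \nvmem_0$. Otherwise $\loc$ is first-written along the concrete trace; here the RIO hypothesis does the crucial work, since \rulename{RIO-If-Dep} and the tainted sub-judgment $\Vtaint$ together guarantee that any location written non-uniformly across input-dependent branches is already in $\nvmem_0$. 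Consequently, a location that is first-written along \emph{some} path and not in $\nvmem_0$ must be first-written along \emph{all} input-dependent paths, placing it in $\mt{MFstWt}(\nvmem_1, \vmem, \cmd_0)$ and closing the case. This is the step where the I/O proof genuinely diverges from the basic-model argument (which only required the weaker $\mt{FstWt}$ invariant of Lemma~\ref{lem:wt-fst-version}), and it is where I anticipate spending the most effort when filling in the details.
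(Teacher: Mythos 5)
Your overall structure is the paper's: induction on the length of $T$ with a trivial base case, and an inductive step that applies the I.H.\ to the prefix and then a one-step preservation lemma (the paper's Lemma~\ref{lem:io-partial-run-one-step}), whose only interesting cases are the non-volatile writes, where one must place the written location in $\nvmem_0 \cup \mt{MFstWt}(\nvmem_1, \vmem, \cmd_0)$.

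The one place your sketch does not go through as written is the ``consequently'' in the final paragraph. \rulename{RIO-If-Dep} and $\Vtaint$ give you only that a location written on some path and not in $\nvmem_0$ is written on \emph{all} paths (a must-write); they say nothing about whether those other paths \emph{read} it first. To upgrade ``written on all paths'' to ``first-written on all paths'' you must re-invoke the WAR argument on every path, not just the concrete trace of your case split: since $\Vdash_\war$ is a static, path-universal check, on any trace that writes $\loc$ the location is either first-written on that trace or in $\nvmem_0$. The paper factors this cleanly into two separate lemmas --- Lemma~\ref{lem:io-wt-must-version} (writes are must-writes or versioned, from $\Vdash_\rio$) and Lemma~\ref{lem:io-wt-fst-version} (writes are first-writes or versioned on \emph{every} trace through the write, from $\Vdash_\war$) --- and intersects them. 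A second ingredient your sketch omits is Lemma~\ref{lem:not-tainted-deterministic}: when the write site is reached under untainted control, you still need to argue that every other run reaches the same write, which follows because untainted execution is deterministic modulo tainted values; the tainted case is the one discharged by the must-write set of \rulename{RIO-If-Dep}. With those two adjustments your argument coincides with the paper's.
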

\begin{proof}
By induction over the length of $T$. The base case is trivial. The
inductive case uses I.H. and Lemma~\ref{lem:io-partial-run-one-step}. 
\end{proof}

\begin{lem}[Relating syntactic and semantic read]~\label{lem:io-ok-read-same}
If 
$\ee:: \nvmem_0; W_0; R_0\Vdash \cmd_0: \m{ok}$, 
and $T = (\timestamp_0, N_1, V_0, \cmd_0) \MSeqStepsto{}
(\timestamp,  N, V, c)$, $T$ does not
contain any checkpoints
then $\exists$ $W$ and $R$ s.t.
$\nvmem_0; W; R \Vdash_\war \cmd: \m{ok}$, and 
$\mt{RD}(T) \cup R_0 \subseteq R$.
\end{lem}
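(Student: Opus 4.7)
\begin{proofsketch}
The plan is to proceed by induction on the length $n$ of the trace $T$. The base case ($n = 0$) is immediate: we have $c = c_0$, and we simply set $W = W_0$, $R = R_0$, so that both the checking judgment and the containment $\mt{RD}(T) \cup R_0 \subseteq R_0$ hold trivially.

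For the inductive step, I would split $T$ into $T' \cdot s$ where $s$ is the final small step taking $(\timestamp'', N', V', c')$ to $(\timestamp, N, V, c)$. Applying the IH to $T'$ yields $W', R'$ with $\nvmem_0; W'; R' \Vdash_\war c' : \m{ok}$ and $\mt{RD}(T') \cup R_0 \subseteq R'$. The work then is a case analysis on the shape of $c'$ and the rule used in step $s$. The key invariant I expect to maintain is that the syntactic read/write sets $R', W'$ threaded through the checking judgment are a superset of the dynamic read/write sets accumulated so far, so that each dynamic read is already ``anticipated'' by the static judgment.

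The routine cases are instruction-in-sequence steps ($c' = \iota;c$): inverting \rulename{WAR-Seq} gives $\nvmem_0; W'; R' \Vdash_\war \iota : W''; R''$ and $\nvmem_0; W''; R'' \Vdash_\war c : \m{ok}$, where by inspection of the instruction-level WAR rules we have $R'' \supseteq R' \cup \mt{rd}(\iota)$. Since the dynamic reads performed by $s$ are a subset of $\mt{rd}(\iota)$, we take $W = W''$, $R = R''$. The branching cases ($c' = \m{if}~e~\m{then}~c_1~\m{else}~c_2$): inverting \rulename{WAR-If} gives $\nvmem_0; W'; R' \cup \mt{rd}(e) \Vdash_\war c_i : \m{ok}$ for $i \in \{1,2\}$, and the only dynamic read added by $s$ is contained in $\mt{rd}(e)$, so either branch's judgment certifies $c$ with the enlarged read set. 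A trailing \rulename{WAR-Skip}/\rulename{CP-Skip} step and the singleton-instruction case are handled similarly.

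The main obstacle will be bookkeeping around the \rulename{WAR-Cp} rule in the \emph{middle} of $T$: if $c'$ starts with a checkpoint, the checking rule resets $R$ and $W$ to $\emptyset$ and switches the checkpoint set to $\omega$, but our hypothesis says $T$ contains no checkpoints, so this case cannot arise along $T$ and must be ruled out by appealing to the fact that $T$ contains no checkpoint observations. Aside from that dispatch, the proof is a direct structural book-keeping argument; no auxiliary lemmas beyond inversion of the WAR rules and the obvious monotonicity $\mt{RD}(T' \cdot s) = \mt{RD}(T') \cup \mt{rd}(s)$ are required.
\end{proofsketch}
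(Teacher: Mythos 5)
Your proof is correct, but it sets up the induction differently from the paper. The paper's (one-line) proof inducts over the structure of the checking derivation $\ee$, which amounts to peeling instructions off the \emph{front} of $\cmd_0$: the \rulename{WAR-Seq} and \rulename{WAR-If} rules hand you the sub-derivation for the continuation with the already-enlarged $W', R'$, and the first step of $T$ lines up with the head instruction, so the inductive hypothesis applies directly to the tail of the trace with no inversion needed. You instead induct on the length of $T$ and peel off the \emph{last} step, which forces you to first obtain a judgment at the intermediate command $c'$ from the IH and then invert it to push through one more step. Both are standard subject-reduction-style arguments and both close all the cases you list (the checkpoint case is indeed vacuous by the no-checkpoint hypothesis, and the returned read sets of the instruction-level rules do contain $\mt{rd}(\iota)$, so your containment invariant is maintained); the paper's decomposition is marginally lighter because the needed sub-derivation is a premise of the rule rather than something recovered by inversion, while yours has the advantage of not needing to argue that the trace's unfolding tracks the derivation's structure. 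One bookkeeping point worth making explicit in either version: since this is the I/O setting, the trace may step through $x := \m{IN}()$, so you should note that the WAR instruction judgment treats it as an assignment with an empty read set.
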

\begin{proofsketch}
Induction over the derivation $\ee$. 
\end{proofsketch}

\begin{lem}[Relating syntactic must write and semantic must write]
\label{lem:io-ok-mst-wt-same}
If  $\ee:: M \Vdash^\mt{MstWt} \cmd: M'$, $x\in M'\setminus M$ 
then $\forall T$ s.t. $T = (\timestamp,  N, V, \cmd)\MSeqStepsto{} (\timestamp',  N', V',
\m{checkpoint}(\omega);c')$ and $|T|\geq 1$, $x\in\mt{Wt}(T)$.
\end{lem}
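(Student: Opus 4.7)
\begin{proofsketch}
My plan is to proceed by structural induction on the derivation $\ee :: M \Vdash^\mt{MstWt} \cmd: M'$, observing that in every rule the new additions $M' \setminus M$ come from one specific syntactic site, and then matching those additions to concrete write events along any trace that reaches the next checkpoint. Throughout, I would repeatedly use determinism of the intermittent (and continuous-powered) small-step semantics to identify the first step of $T$ from the shape of $\cmd$, then split $T$ into its first step and a (possibly empty) tail $T_2$ and reason about $T_2$ inductively.

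For the base cases, Must-NV-Assign and Must-Assign-In each add exactly the assigned variable $x_0$ to the must-write set. When $\cmd$ is just such an instruction, no trace of length $\geq 1$ from $\cmd$ can end in $\m{checkpoint}(\omega);c'$, so the obligation is vacuous; the interesting situation is when the instruction sits inside a sequence, which is subsumed by the Must-Seq case below. Must-Assign-Arr and Must-CP each have $M' = M$, so $M' \setminus M = \emptyset$ and the statement is vacuously true. For Must-If with $M' = M_1 \cap M_2$, any $x \in M' \setminus M$ lies in $M_i \setminus M$ for both $i$; the first step of $T$ picks one branch $\cmd_i$, and I apply the I.H. on $\cmd_i$ to the tail $T_i$ to conclude $x \in \mt{Wt}(T_i) \subseteq \mt{Wt}(T)$. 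For Must-Seq with $\cmd = \iota;\cmd_1$ and intermediate set $M_1$, the first step of $T$ executes $\iota$ (via \rulename{I/O-CP-Seq}); if $x \in M_1 \setminus M$, inversion on the instruction rules forces $\iota$ to be either $x := e$ or $x := \m{IN}()$, both of which directly put $x$ into $\mt{Wt}(T)$; if instead $x \in M' \setminus M_1$, I apply the I.H. to $M_1 \Vdash^\mt{MstWt} \cmd_1 : M'$ on the tail $T_2$.

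The main obstacle is the degenerate subcase where, after peeling off the first step, the remaining tail $T_2$ has length $0$ (so I cannot invoke the I.H., which requires $|T_2| \geq 1$). This can only happen if $\cmd_1$ (in Must-Seq) or $\cmd_i$ (in Must-If) already begins with $\m{checkpoint}(\omega);c'$, but then Must-CP forces the must-write set to not grow at that point — i.e., $M' = M_1$ or $M_i = M$ respectively — which collapses $M' \setminus M_1$ or $M_i \setminus M$ to $\emptyset$, contradicting the assumption on $x$. Ruling out this case is the only real content of the argument; the rest is a routine traversal of the Must-write rules matched against the deterministic trace shape.
\end{proofsketch}
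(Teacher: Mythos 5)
Your proposal is correct and matches the paper's approach: the paper proves this lemma by induction over the structure of $\ee$, which is exactly your strategy, and your case analysis (including ruling out the degenerate zero-length tail via \rulename{Must-CP}) fills in the details the paper's one-line sketch omits.
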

\begin{proofsketch} By induction over the structure of $\ee$.
\end{proofsketch}

\begin{lem}[Variable Writes are either First Writes or Versioned]
\label{lem:io-wt-fst-version}
If  $\nvmem_0; \emptyset; \emptyset\Vdash_\war \cmd_0: \m{ok}$, 
given any trace $T_c$ from $\cmd_0$ to the nearest checkpoint:
 $T_c = T_0\cdot (\timestamp,  N, V,  x:=e; \cmd_1)  
 \MSeqStepsto{O}  (\timestamp',  N', V', \m{checkpoint}(\omega);c')$ 
then $x\in \mt{FstWt}(T_c)$ or $x\in \nvmem_0$.
\end{lem}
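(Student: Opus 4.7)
The plan is to mirror the proof of Lemma~\ref{lem:wt-fst-version} from the basic checkpoint system, generalized to the input-aware setting, by induction on the number of writes to $x$ occurring in the prefix $T_0$. The input operations do not affect the structure of the argument because the WAR checking judgment $\Vdash_\war$ and the definition of $\mt{FstWt}$ only care about whether a location is read before it is first written, not what value is being read.

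In the base case, $T_0$ contains no write to $x$. I would split on whether $x$ is read somewhere in $T_0$ or in the expression $e$. If $x \notin \mt{RD}(T_0) \cup \mt{rd}(e)$, then by the definition of $\mt{FstWt}$ the assignment $x := e$ constitutes a first write, so $x \in \mt{FstWt}(T_c)$ and we are done. Otherwise $x \in \mt{RD}(T_0) \cup \mt{rd}(e)$. I would invoke Lemma~\ref{lem:io-ok-read-same} on the ok derivation $\nvmem_0;\emptyset;\emptyset \Vdash_\war \cmd_0 : \m{ok}$ together with the trace $T_0$ to obtain some $W, R$ with $\mt{Wt}(T_0) \subseteq W$, $\mt{RD}(T_0) \subseteq R$, and $\nvmem_0; W; R \Vdash_\war (x{:=}e; \cmd_1): \m{ok}$. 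Inverting that judgment and using $x \in R \cup \mt{rd}(e)$ together with $x \notin \mt{Wt}(T_0)$ (hence $x \notin W$) forces the \rulename{WAR-Checkpointed} rule, yielding $x \in \nvmem_0$.

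In the inductive case, $T_0$ has $n{+}1$ writes to $x$. I would isolate the most recent such write and split the trace as $T_0 = T_1 \cdot (\timestamp_1, N_1, V_1, x{:=}e'; \cmd_2) \MSeqStepsto{O_1} (\timestamp, N, V, x{:=}e; \cmd_1)$. Then the trace $T_1 \cdot (\timestamp_1, N_1, V_1, x{:=}e'; \cmd_2) \MSeqStepsto{} (\timestamp', N', V', \m{checkpoint}(\omega); c')$ (obtained by prepending everything up to and including that earlier write to the tail of $T_c$) has only $n$ prior writes to $x$ before its final write instruction. Applying the induction hypothesis to this shorter trace gives $x \in \mt{FstWt}(T_c') \cup \nvmem_0$ where $T_c'$ is the corresponding reassembled trace. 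Since first-write-ness and membership in $\nvmem_0$ are preserved by extending the suffix of the trace (the property depends only on what happens up to the first write of $x$), the conclusion transfers to the original $T_c$.

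The only genuine obstacle is keeping the bookkeeping of read/write sets straight when inverting the $\Vdash_\war$ judgment at an intermediate program point: I need Lemma~\ref{lem:io-ok-read-same} to give an ok judgment \emph{in which} $\mt{RD}(T_0)$ is contained in the accumulated read set $R$, so that the inversion in the base case actually fires the checkpointed-variable rule rather than a write-dominated one. Since Lemma~\ref{lem:io-ok-read-same} guarantees exactly this containment, no new machinery is required, and the remaining steps are direct case analysis on the rules.
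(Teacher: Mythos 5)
Your proposal is correct and matches the paper's argument: the paper proves this lemma by deferring to the proof of Lemma~\ref{lem:wt-fst-version}, which is exactly the induction on the number of prior writes to $x$ in $T_0$, with the same base-case split on $x\in \mt{RD}(T_0)\cup\mt{rd}(e)$ and the same appeal to the read-set containment lemma to force the checkpointed-variable rule under inversion. No substantive differences.
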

\begin{proofsketch}
The same as proof Lemma~\ref{lem:wt-fst-version}. 
\end{proofsketch}

\begin{lem}\label{lem:io-taint-exists-if}
If  $\ee::\nvmem_0; I ; M\Vdash_\rio \cmd_0: \m{ok}$, 
 exists $T$ s.t. $T =  (\timestamp_0,  N_1, V_1,
\cmd_0)\MSeqStepsto{}  (\timestamp_2,  N_2, V_2, \cmd_1)$  and 
$\nvmem_0, M_1 \Vtaint \cmd_1: \m{ok}$, then 
$T =  (\timestamp_0, N_1, V_1,
\cmd_0) \MSeqStepsto{}  (\timestamp_1,  N_1, V_1, \cmd)
\MSeqStepsto{}  (\timestamp_2,  N_2, V_2, \cmd_1)$ where $\cmd
= \m{if}\ e\ \m{then}\ \cmd_t\ \m{else}\ \cmd_f$ and 
 $M' \Vdash^{\mt{MstWt}} \cmd : M_1$ and $\exists \ee', I $
 s.t. $\ee'::\nvmem_0; I \Vdash_\rio \cmd: \m{ok}$ and $\ee'$ is a
 sub-derivation of $\ee$.
\end{lem}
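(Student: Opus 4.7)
The plan is to proceed by structural induction on the derivation $\ee$, with case analysis on its final rule. The conceptual crux is that the tainted checking judgment $\Vtaint$ is introduced from the untainted judgment $\Vdash_\rio$ exclusively through the rule \rulename{RIO-If-Dep}. So any execution that begins in a $\Vdash_\rio$-typed state and arrives at a $\Vtaint$-typed command must, operationally, first pass through an input-dependent conditional; the middle configuration $(\timestamp_1, N_1, V_1, \cmd)$ in the conclusion is precisely the state just before taking that conditional.

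Case analysis on the last rule of $\ee$:
\textbf{(i)} \rulename{RIO-If-Dep}: $\cmd_0 = \m{if}~e~\m{then}~\cmd_t~\m{else}~\cmd_f$ with $e$ input-dependent. Here take $\cmd := \cmd_0$, $\timestamp_1 := \timestamp_0$, and let $\ee' := \ee$ (which is trivially a sub-derivation). The premise $M \Vdash^{\mt{MstWt}} \cmd: M'$ of \rulename{RIO-If-Dep} directly supplies the must-write witness required by the conclusion, and the empty prefix $T_1$ leaves $N_1, V_1$ unchanged.
\textbf{(ii)} \rulename{RIO-If-NDep}: $e$ is not input-dependent, so \rulename{I/O-CP-If-T/F} deterministically reduces $\cmd_0$ to either $\cmd_t$ or $\cmd_f$, each still typed under $\Vdash_\rio$ by inversion of the rule. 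Apply the I.H.\ to the shorter tail of $T$ and prepend this single step.
\textbf{(iii)} \rulename{RIO-Seq}: $\cmd_0 = \iota;\cmd'$. If $\iota$ is a conditional or a checkpoint it falls under the other cases; otherwise $\iota$ steps to $\m{skip}$ and by \rulename{I/O-CP-Seq} we continue executing $\cmd'$, which by inversion is typed under $\Vdash_\rio$ with possibly updated $I',M'$. Apply the I.H.\ to this shorter tail.
\textbf{(iv)} \rulename{RIO-Cp}: After the $\m{checkpoint}(\omega)$ step, the context resets to $\omega;\emptyset;\emptyset$ and the remaining program $\cmd'$ is still typed under $\Vdash_\rio$ by the premise; apply the I.H.
\textbf{(v)} \rulename{RIO-$\iota$}: $\cmd_0$ is a single instruction. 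It either reduces to $\m{skip}$ in one step or (for $\m{reboot}$, $\m{checkpoint}$) is handled by the corresponding cases; in neither subcase can we reach a $\Vtaint$-typed conditional command without an intervening if, so this case either reduces to the base case or is vacuous by inspection of the $\Vtaint$ rules.

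The main technical obstacle is case \textbf{(iii)}: when stepping through an assignment in a sequence, we must argue that the resulting $I'$ (the transitively tainted variables) and $M'$ (the accumulated writes) remain appropriate for applying the I.H., and that the dynamic step does not itself introduce a $\Vtaint$ obligation that could be satisfied outside of an if. This is handled by invoking Lemma~\ref{lem:io-cmd-runtime-ok}, which guarantees that after any non-checkpoint, non-reboot reduction the residual command is again well-formed under one of the two judgments, paired with the observation that in \emph{our} hypothesis the residual is specifically under $\Vtaint$ --- so the transition must lie strictly in the tail, justifying the appeal to the I.H. A secondary subtlety is threading the must-write set $M_1$ through the reduction so that the $M' \Vdash^{\mt{MstWt}} \cmd : M_1$ premise extracted from the (eventual) \rulename{RIO-If-Dep} application agrees with the $M_1$ at which $\cmd_1$ is later checked; this follows because the must-write computation is monotone along sequential composition and is preserved by the reductions that do not cross a checkpoint.
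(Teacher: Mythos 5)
Your proposal is correct and matches the paper's argument in substance: the paper's own proof is just the one-line sketch ``by induction over the length of $T$,'' and your structural induction on $\ee$ yields the same case analysis, since in this loop-free language each operational step peels off exactly one layer of the typing derivation. The key observation you identify --- that \rulename{RIO-If-Dep} is the only rule that transitions from $\Vdash_\rio$ to $\Vtaint$, so the trace must factor through an input-dependent conditional whose $\Vdash^{\mt{MstWt}}$ premise supplies $M_1$ (which the $\Vtaint$ rules then carry unchanged through the branch) --- is precisely the content of the lemma.
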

\begin{proofsketch} By induction over the length of $T$. 
\end{proofsketch}

\begin{lem}[Variable Writes are either Must Writes or Versioned]
\label{lem:io-wt-must-version}
If  $\nvmem_0; \emptyset; \emptyset\Vdash_\rio \cmd_0: \m{ok}$, 
 exists $T$ s.t. $T =  (\timestamp_0, N_1, V_1,
\cmd_0)\MSeqStepsto{}  (\timestamp,  N_2, V_2,  x:=e; \cmd_1)$  
then  $\forall T_c$ s.t. $T_c = (\timestamp_0,  N, V,
\cmd_0)\MSeqStepsto{} (\timestamp',  N', V',
\m{checkpoint}(\omega);c')$, $x\in \mt{Wt}(T_c)\cup \nvmem_0$.
\end{lem}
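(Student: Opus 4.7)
The plan is to prove a stronger statement covering both the $\Vdash_\rio$ and $\Vtaint$ judgments, by induction on the well-formedness derivation. First I would state a companion lemma for the tainted mode: \emph{if $\nvmem_0; M \Vtaint \cmd : \m{ok}$ and some checkpoint-free trace from $\cmd$ executes an assignment $x:=e$, then $x \in \nvmem_0 \cup M$.} This follows by a straightforward structural induction on the $\Vtaint$ derivation: rule \rulename{RIO-Assign-tainted} places any tainted assignment target in $N\cup M$ directly, the sequencing and if-tainted rules only enlarge $M$ or preserve the invariant, and a \rulename{RIO-Cp-tainted} rule would require the trace to cross a checkpoint, which is ruled out by assumption.

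Next, I would strengthen the main claim so it can be instantiated at intermediate execution points: \emph{if $\nvmem_0; I; M \Vdash_\rio \cmd : \m{ok}$ and a checkpoint-free trace $T_x$ from $\cmd$ executes $x:=e$, then for every checkpoint-free $T_c$ from $\cmd$ ending at the next checkpoint, $x \in \mt{Wt}(T_c) \cup \nvmem_0$.} The original lemma is the $I=\emptyset, M=\emptyset$ specialization. I would then induct on the $\Vdash_\rio$ derivation. In the \rulename{RIO-Cp} case $T_c$ is empty and $T_x$ would also have to be empty, contradicting that it executes an assignment. For the instruction, sequencing, and \rulename{RIO-If-NDep} cases, both traces advance deterministically in lockstep---the guard of an input-independent branch evaluates identically because its read set is untainted and thus consistent across the two starting memories under the ambient memory relation of Section~\ref{sec:rio-relations}---and either the current step writes $x$ so that $x \in \mt{Wt}(T_c)$, or the inductive hypothesis applies to the tail with the updated $I', M'$.

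The main obstacle is the \rulename{RIO-If-Dep} case, where the sub-commands are checked under $\Vtaint$ with must-write set $M'$ supplied by the auxiliary judgment $M \Vdash^\mt{mstWt} \m{if}~e~\m{then}~\cmd_1~\m{else}~\cmd_2 : M'$. Since $T_x$ enters some branch $\cmd_i$ and writes $x$ there, the companion claim yields $x \in \nvmem_0 \cup M'$. If $x \in \nvmem_0$ we are done. Otherwise $x \in M' \setminus M$, and Lemma~\ref{lem:io-ok-mst-wt-same} converts this \emph{syntactic} must-write membership into $x \in \mt{Wt}(T_c)$: the trace $T_c$ starts with the if-statement and runs to a checkpoint, so it has length at least one and thus satisfies that lemma's hypothesis. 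Bridging the static must-write analysis to the semantic write set of the branch actually taken by $T_c$ is the key ingredient that makes this case---and hence the overall proof---close, and no other case introduces further subtleties beyond routine bookkeeping on $I$ and $M$.
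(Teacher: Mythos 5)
Your overall route---a companion lemma for the $\Vtaint$ mode plus an induction on the $\Vdash_\rio$ derivation with a strengthened statement---is a legitimate alternative to the paper's decomposition (which instead uses preservation of the checking judgment along the trace, Lemma~\ref{lem:io-cmd-runtime-ok}, determinism of untainted execution, Lemma~\ref{lem:not-tainted-deterministic}, and the trace-splitting Lemma~\ref{lem:io-taint-exists-if}). The companion claim itself is sound, since \rulename{RIO-Seq-tainted} and \rulename{RIO-If-tainted} pass $M$ through unchanged and \rulename{RIO-Assign-tainted} forces every assignment target into $M\cup N$.

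However, your strengthened induction hypothesis has a genuine gap: by anchoring $T_c$ at the current command $\cmd$ while dropping $M$ from the conclusion, the claim $x \in \mt{Wt}(T_c)\cup\nvmem_0$ becomes false for nonempty $M$, and your \rulename{RIO-If-Dep} case analysis silently skips the sub-case where it fails. The companion lemma gives $x \in \nvmem_0 \cup M'$ with $M' \supseteq M$; you write ``otherwise $x \in M'\setminus M$,'' but $x$ may instead lie in $M$ itself, and then Lemma~\ref{lem:io-ok-mst-wt-same} is inapplicable (its hypothesis is $x \in M'\setminus M$) and the conclusion is simply not true for a $T_c$ that starts at the branch. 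Concretely, take $y:=1;\ i:=\m{IN}();\ \m{if}\ i>0\ \m{then}\ (y:=2;\ z:=1)\ \m{else}\ z:=1$ with $y\notin\nvmem_0$: this passes $\Vdash_\rio$ because $y\in M$ at the branch, the then-branch trace plays the role of $T_x$, yet the else-branch trace from the if-statement to the checkpoint never writes $y$. The top-level lemma survives only because the earlier write $y:=1$ lies in the prefix shared by every $T_c$ from $\cmd_0$. To close the induction you must either weaken the conclusion to $x \in \mt{Wt}(T_c)\cup\nvmem_0\cup M$ (recovering the lemma at $M=\emptyset$), or keep $T_c$ anchored at $\cmd_0$ and carry the invariant that every element of the incoming $M$ is already in the write set of the deterministic prefix of $T_c$ up to $\cmd$---which is precisely the extra conclusion the paper threads through Lemma~\ref{lem:not-tainted-deterministic}.
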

\begin{proof}
 By Lemma~\ref{lem:io-cmd-runtime-ok} there are two cases:
 (I):  $\exists$ $I$ $M$ s.t.
  $\nvmem_0, I; M \Vdash_\rio x:=e; \cmd_1: \m{ok}$ or  (II):
$\exists$ $M$ s.t.  $\nvmem_0, M \Vtaint x:=e; \cmd_1: \m{ok}$. 
\\By assumption, let $T_0$ be the trace s.t. 
$T_0 =  (\timestamp_0, N_1, V_1,
\cmd_0)\MSeqStepsto{}  (\timestamp,  N_2, V_2,  x:=e; \cmd_1)$  
\begin{description}
\item[Case I: ] $\nvmem_0, I, M \Vdash_\rio x:=e; \cmd_1: \m{ok}$
\begin{tabbing}
By Lemma~\ref{lem:not-tainted-deterministic}, 
\\\quad \= (I1)\quad \=  $\forall T$ s.t. $T = (\timestamp_0, N, V,
\cmd_0)\MSeqStepsto{} (\timestamp',  N', V',
\m{checkpoint}(\omega);c')$, 
\\\>\> $T = (\timestamp, N_1, V_1, \cmd_0) \MSeqStepsto{}
(\timestamp'',  N'_2, V'_2,  x:=e; \cmd_1) \MSeqStepsto{} (\timestamp',  N', V',
\m{checkpoint}(\omega);c')$
\\By (I1) and operation semantics and the definition of $\mt{Wt}$
\\\>(I2)\> $x\in\mt{Wt}(T)$
\end{tabbing}

\item[Case II: ] $\exists$ $M$ s.t.  $\ee::\nvmem_0, M \Vtaint x:=e; \cmd_1: \m{ok}$. 
\begin{tabbing}
By inversion of $\ee$
\\\quad \= (II1)\quad \=  $x\in \nvmem_0$ (conclusion holds) or $x\in M$
\\ We continue with the case where $x\in M$
\\By Lemma~\ref{lem:io-taint-exists-if}
\\\>(II2)\> $T_0 =  (\timestamp_0, N_1, V_1,
\cmd_0) \MSeqStepsto{}  (\timestamp_1,  N_1, V_1, \cmd)
\MSeqStepsto{}  (\timestamp,  N_2, V_2, x:=e;\cmd_1)$ 
\\\>\>where $\cmd = \m{if}\ e\ \m{then}\ \cmd_t\ \m{else}\ \cmd_f$ 
\\\>(II3)\>  and $M' \Vdash^{\mt{MstWt}} \cmd : M$ 
\\\>(II4)\> and $\exists I $ s.t. $\nvmem_0; I; M' \Vdash_\rio \cmd:
\m{ok}$.
\\By Lemma~\ref{lem:not-tainted-deterministic}
\\ \>(II5)\> $\forall T_c$ s.t. $T_c = (\timestamp_0, N, V,
\cmd_0)\MSeqStepsto{} (\timestamp',  N', V',
\m{checkpoint}(\omega);c')$, 
\\\>\> $T_c = T_{c1}\cdot T_{c2}$,  $T_{c1} = (\timestamp_0, N, V,
\cmd_0) \MSeqStepsto{}  (\timestamp'_1,  N'_1, V'_1, \cmd)$, 
\\\>\>$T_{c2} =  (\timestamp'_1,  N'_1, V'_1, \cmd) 
 \MSeqStepsto{} (\timestamp',  N', V', \m{checkpoint}(\omega);c')$, 
\\\>\>where $\cmd = \m{if}\ e\ \m{then}\ \cmd_t\ \m{else}\ \cmd_f$ 
\\\>(II6)\> and $\forall x\in M'$, $x\in \mt{Wt}(T_{c1})$
\\By Lemma~\ref{lem:io-ok-mst-wt-same}, (II3), and $x\in M \setminus M'$
\\\>(II7)\> $x\in\mt{Wt}(T_{c2})$
\\By (II6) and (II7), $x\in\mt{Wt}(T_{c})$
\end{tabbing}
\end{description}
\end{proof}

\begin{lem}[Array Writes are either First Writes or Versioned]
\label{lem:io-wt-fst-version-arr}
If  $\nvmem_0; \emptyset; \emptyset\Vdash_\war \cmd_0: \m{ok}$, 
given any trace $T_c$ from $\cmd_0$ to the nearest checkpoint:
 $T_c = T_0\cdot (\timestamp,  N, V,  a[e]:=e'; \cmd_1)  
 \MSeqStepsto{O}  (\timestamp',  N', V',
 \m{checkpoint}(\omega);c')$,  $N,V \vdash e \Downarrow_r v$
then $a[v]\in \mt{Wt}(T_c)$ implies $a[v]\in \mt{FstWt}(T_c)$ or $a\in \nvmem_0$.
\end{lem}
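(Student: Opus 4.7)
The plan is to mirror the proof of Lemma~\ref{lem:io-wt-fst-version} (the variable-write analogue), adapted for arrays. The crucial structural observation is that the WAR checking rule \rulename{WAR-Checkpointed-Arr} adds the entire array $a$ (not merely the specific index $a[v]$) to the checkpointed set whenever any element of $a$ appears in a WAR dependency. This coarse treatment of arrays means we do not need to reason about index aliasing at the level of checkpointed locations: once $a$ is flagged as read, the whole array is in $\nvmem_0$.

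First, I would proceed by induction on the number of writes to $a[v]$ appearing in the prefix trace $T_0$. In the base case, no earlier write to $a[v]$ has occurred in $T_0$, and I split into two subcases on whether $a[v]$ appears in $\mt{RD}(T_0)\cup\mt{rd}(e)\cup\mt{rd}(e')$. If it does not, then the write in question is by definition the first access to $a[v]$ along $T_c$, so $a[v]\in\mt{FstWt}(T_c)$ and the conclusion is immediate.

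If instead $a[v]$ has been read (or is read while evaluating $e,e'$), then I apply Lemma~\ref{lem:io-ok-read-same} (the I/O analogue of Lemma~\ref{lem:ok-read-same}) to the derivation $\nvmem_0;\emptyset;\emptyset \Vdash_\war \cmd_0 : \m{ok}$ together with the sub-trace $T_0$ to obtain $W,R$ such that $\nvmem_0; W; R \Vdash_\war a[e]:=e';\cmd_1 : \m{ok}$ with $\mt{RD}(T_0)\cup\mt{rd}(e)\cup\mt{rd}(e')\subseteq R$. Inverting this judgment, the only applicable instruction rule with $a\in R'$ (where $R' = R\cup\mt{rd}(e)\cup\mt{rd}(e')$) is \rulename{WAR-Checkpointed-Arr}, which forces $a\in\nvmem_0$. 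The inductive case is straightforward: if $T_0$ contains $n+1$ writes to $a[v]$, write $T_0 = T_1 \cdot (\_, N'',V'',a[e'']:=e''';\cmd'') \cdot T_1'$ for the last such write, and apply the induction hypothesis to the shorter trace ending at that earlier write position to obtain the conclusion for that write; but the present lemma's conclusion concerns the final write, so we equivalently apply I.H. to the earlier write and conclude that $a\in\nvmem_0$ or $a[v]\in\mt{FstWt}(T_c)$ (noting $\mt{FstWt}$ is determined by the first write in the entire trace $T_c$, not by the position of the current write).

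The main obstacle is cosmetic rather than technical: one must be careful that $\mt{FstWt}(T_c)$ refers to the first write along the entire trace (so later writes to the same $a[v]$ also witness membership), and that the inductive hypothesis is stated strongly enough to handle this. The array-specific wrinkle --- that $\mt{rd}(e)$ and $\mt{rd}(e')$ syntactically record only the variables appearing in the index and value expressions, not the read of $a[v]$ itself --- is handled uniformly by the WAR rules, which always add the full array $a$ on any array assignment, guaranteeing that the inversion step yields $a\in\nvmem_0$ whenever the read history forces it.
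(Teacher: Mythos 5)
Your proof is correct and follows essentially the same route as the paper's: a case split on whether $a[v]\in \mt{RD}(T_0)\cup\mt{rd}(e)\cup\mt{rd}(e')$, concluding $a[v]\in\mt{FstWt}(T_c)$ in the negative case and using Lemma~\ref{lem:io-ok-read-same} plus inversion (only \rulename{WAR-Checkpointed-Arr} can apply, forcing $a\in\nvmem_0$) in the positive case. The only difference is your outer induction on the number of prior writes to $a[v]$, which the paper omits here: it is needed for the variable analogue only because of the write-dominated rule \rulename{WAR-Wt}, and since there is no such rule for arrays the base-case argument already covers traces with earlier writes, making the induction harmless but redundant.
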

\begin{proofsketch}
We consider two cases: (I) $a[v]\in \mt{RD}(T_0)\cup \mt{rd}(e) \cup \mt{rd}(e')$ 
and (II) $a[v]\notin \mt{RD}(T_0)\cup \mt{rd}(e) \cup \mt{rd}(e')$ 
\begin{description}
\item[Case (I)] $a[v]\in \mt{RD}(T_0)\cup \mt{rd}(e) \cup \mt{rd}(e')$ 
\begin{tabbing}
$a[v]\in \mt{RD}(T_0)\cup \mt{rd}(e)\cup \mt{rd}(e')$
\\By Lemma~\ref{lem:io-ok-read-same} and (2) and (4)
\\\quad\=(I1)\quad\= $\exists$ $W$ and $R$ s.t. 
$\ee':: \nvmem_0; W; R \Vdash_\war a[e]:=e'; \cmd_1: \m{ok}$,
 and 
 \\\>(I2)\>$\mt{RD}(T_0)  \subseteq R$
\\By inversion of $\ee'$ and (I2), $a\in \nvmem_0$
\end{tabbing}

\item[Case (II)] $a[v]\notin \mt{RD}(T_0)\cup \mt{rd}(e)\cup \mt{rd}(e')$
\\By definition of $\mt{FstWt}$, $a[v]\in \mt{FstWt}(T_c)$
\end{description}
\end{proofsketch}

\begin{lem}[Array Writes are either Must Writes or Versioned]
\label{lem:io-wt-must-version-arr}
If  $\nvmem_0; \emptyset; \emptyset\Vdash_\rio \cmd_0: \m{ok}$, 
 exists $T_0$ s.t. $T_0=  (\timestamp_0, N_1, V_1,
\cmd_0)\MSeqStepsto{}  (\timestamp,  N_2, V_2,  a[e]:=e';
\cmd_1)$  ,
 $N_2,V_2 \vdash e \Downarrow_r v$
then  $\forall T_c$ s.t. $T_c = (\timestamp_0, N, V,
\cmd_0)\MSeqStepsto{} (\timestamp',  N', V',
\m{checkpoint}(\omega);c')$ 
$a[v]\in \mt{Wt}(T_c)$ or $a\in\nvmem_0$.
\end{lem}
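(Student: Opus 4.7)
The plan is to mirror the proof of Lemma~\ref{lem:io-wt-must-version}, dispatching on whether control has become tainted by the point at which $a[e]:=e'$ is the next instruction. The first step is to apply Lemma~\ref{lem:io-cmd-runtime-ok} to the trace $T_0$ from $\cmd_0$ to $(\timestamp, N_2, V_2, a[e]:=e'; \cmd_1)$, yielding two mutually exclusive cases for the checked judgment on the residual command: either (I) there exist $I, M$ with $\nvmem_0; I; M \Vdash_\rio a[e]:=e'; \cmd_1: \m{ok}$, or (II) there exists $M$ with $\nvmem_0; M \Vtaint a[e]:=e'; \cmd_1: \m{ok}$.

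Case II dispatches immediately: inversion on the $\Vtaint$ rules using \rulename{RIO-Arr-tainted} forces $a \in \nvmem_0$, which discharges the right-hand disjunct of the conclusion for every $T_c$.

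In Case I, I would invert the $\Vdash_\rio$ derivation. If \rulename{RIO-Arr-loc} applies (the index $e$ depends on a tainted variable), then its premise $a \in N = \nvmem_0$ again closes the goal. Otherwise \rulename{RIO-Arr-NDep} or \rulename{RIO-Arr-dep} applies, meaning $I \cap \mt{rd}(e) = \emptyset$: the array index expression reads only untainted data. For any $T_c$ from $\cmd_0$ to the next checkpoint, I would then apply Lemma~\ref{lem:not-tainted-deterministic} (as in the proof of Lemma~\ref{lem:io-wt-must-version}, Case I) to conclude that $T_c$ must factor through a state $(\timestamp'', N'_2, V'_2, a[e]:=e'; \cmd_1)$ reached from $\cmd_0$ by the same command sequence as $T_0$. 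Combined with an auxiliary determinism-of-evaluation fact for expressions whose free variables are disjoint from $I$, we get $N'_2, V'_2 \vdash e \Downarrow_{r'} v$ with the same $v$ as in $T_0$. Hence the operational rule for array assignment forces $a[v] \in \mt{Wt}(T_c)$.

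The main obstacle is the last step: reusing Lemma~\ref{lem:not-tainted-deterministic} in a way that also pins down the value of $e$, not merely the control-flow skeleton. In the variable case, the conclusion only required that \emph{some} write to $x$ occur, so syntactic agreement of the traces up to the target instruction sufficed. For arrays, we need the dynamic index $v$ to agree across $T_0$ and $T_c$ so that the same concrete location $a[v]$ is written. I expect this to follow either directly from the statement of Lemma~\ref{lem:not-tainted-deterministic} (if it already provides state equality at matching program points along the untainted prefix) or from a short companion lemma showing that expressions reading only untainted locations evaluate identically in any two traces that agree at the common checkpoint; in either case, no new induction on $\cmd_0$ is needed, and the rest of the proof structure is a direct transcription of Lemma~\ref{lem:io-wt-must-version}.
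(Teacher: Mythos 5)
Your proposal is correct and follows essentially the same route as the paper: Lemma~\ref{lem:io-cmd-runtime-ok} splits into the $\Vdash_\rio$ and $\Vtaint$ cases, the tainted case closes by inversion giving $a\in\nvmem_0$, and the untainted case uses Lemma~\ref{lem:not-tainted-deterministic} followed by exactly the companion fact you anticipate, namely Lemma~\ref{lem:not-tainted-eq-val} applied to the $\tntrel$-related states at the matching program point to pin down the index value $v$. Your explicit inversion in Case~I to peel off \rulename{RIO-Arr-loc} (where $a\in N=\nvmem_0$) before invoking the untainted-evaluation lemma is actually a needed justification of that lemma's precondition that the paper's own proof leaves implicit.
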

\begin{proof}
By Lemma~\ref{lem:io-cmd-runtime-ok} there are two cases:
 (I):  $\exists$ $I$ $M$ s.t.
  $\nvmem_0, I, M \Vdash_\rio a[e]:=e'; \cmd_1: \m{ok}$ or  (II):
$\exists$ $M$ s.t.  $\nvmem_0, M \Vtaint a[e]:=e'; \cmd_1: \m{ok}$. 
\\By assumption, let $T_0$ be the trace s.t. 
$T_0 =  (\timestamp_0, N_1, V_1,
\cmd_0)\MSeqStepsto{}  (\timestamp,  N_2, V_2,  a[e]:=e'; \cmd_1)$  
\begin{description}
\item[Case I: ] $\nvmem_0, I, M \Vdash_\rio a[e]:=e'; \cmd_1: \m{ok}$
\begin{tabbing}
By Lemma~\ref{lem:not-tainted-deterministic}, 
\\\quad \= (I1)\quad \=  $\forall T$ s.t. $T = (\timestamp_0, N, V,
\cmd_0)\MSeqStepsto{} (\timestamp',  N', V',
\m{checkpoint}(\omega);c')$, 
\\\>\> $T = (\timestamp, N_1, V_1, \cmd_0) \MSeqStepsto{}
(\timestamp'',  N'_2, V'_2,  a[e]:=e'; \cmd_1) \MSeqStepsto{} (\timestamp',  N', V',
\m{checkpoint}(\omega);c')$
\\\>(I2)\> and $N_2\tntrel N'_2$ and $V_2\tntrel V'_2$ 
\\By Lemma~\ref{lem:not-tainted-eq-val} and (I2)
\\\>(I3)\>  $N'_2,V'_2 \vdash e \Downarrow_r v$
\\By (I1) and operation semantics and the definition of $\mt{Wt}$
\\\>(I4)\> $a[v]\in\mt{Wt}(T)$
\end{tabbing}

\item[Case II: ] $\exists$ $M$ s.t.  $\ee::\nvmem_0, M \Vtaint a[e']:=e; \cmd_1: \m{ok}$. 
\\ By inversion of $\ee$,  $a\in \nvmem_0$, conclusion holds
\end{description}
\end{proof}

We write $\sigma\MSeqStepsto{} \m{CP}$ to denote the trace from $\sigma$
to the nearest checkpoint. We write $N_1 \tntrel N_2$ to mean that
$N_1$ and $N_2$ only differ in tainted value. 

\begin{lem}[Taint Diff]\label{lem:instruction-taint-diff}
If $N_0\tntrel N'_0$, $V_0\tntrel V'_0$, 
$(\timestamp_0, N_0, V_0, \iota)\SeqStepsto{}  (\timestamp_0+1,
N_1, V_1, \m{skip})$, then $(\timestamp_0, N'_0, V'_0, \iota)\SeqStepsto{}  (\timestamp_0+1,
N'_1, V'_1, \m{skip})$ and $N_1\tntrel N'_1$, $V_1\tntrel V'_1$. 
\end{lem}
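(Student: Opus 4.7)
The plan is to proceed by case analysis on the instruction $\iota$, equivalently on the last rule used in the derivation of the step $(\timestamp_0, N_0, V_0, \iota)\SeqStepsto{} (\timestamp_0+1, N_1, V_1, \m{skip})$. Since the reduct is $\m{skip}$, the only applicable rules are the assignment-style ones: \rulename{I/O-Tnt-NV-Assign}, \rulename{I/O-Tnt-V-Assign}, \rulename{I/O-Tnt-Assign-Arr}, and the three input variants \rulename{I/O-Tnt-NV-Assign-In}, \rulename{I/O-Tnt-V-Assign-In}, \rulename{I/O-Tnt-Assign-Arr-In}. The cases for $\m{skip}$, $\m{checkpoint}$, and $\m{reboot}$ either do not apply or do not reduce to $\m{skip}$ in one step, so they may be dispatched immediately.

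For the non-input assignment cases, the key sub-lemma I would invoke (the taint-preservation analogue of Lemma~\ref{lem:eq-var-eq-eval}) is that expression evaluation on $\tntrel$-related memories yields $\tntrel$-related values: if $N_0\tntrel N'_0$ and $V_0\tntrel V'_0$, then $N_0,V_0\vdash e\Downarrow_r v$ and $N'_0,V'_0\vdash e\Downarrow_{r'} v'$ with $v \tntrel v'$ (and the read observations differ only in taint as well). This is a straightforward structural induction on $e$: variables differ only in taint by hypothesis, binary operations propagate taint componentwise, and array lookups reduce to a variable read at a potentially tainted index. From this, for $x := e$ the resulting memories differ from the originals only at $x$, which now holds $\tntrel$-related values, so $N_1\tntrel N'_1$ and $V_1\tntrel V'_1$ follow immediately. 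For $x := \m{IN}()$, both executions write the \emph{identical} value $\m{in}(\tau_0)^t$ to the same location $x$, so the relation is trivially preserved.

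The main obstacle is the array-assignment case \rulename{I/O-Tnt-Assign-Arr} (and its input variant) when the index expression is tainted. In that case $N_0,V_0\vdash e\Downarrow_r \val$ may yield $\val = n^t$ while $N'_0,V'_0\vdash e\Downarrow_{r'} \val'$ yields $\val' = {n'}^t$ with $n\neq n'$, so the two executions write to distinct array cells $a[n]$ and $a[n']$. Hence the updated memories may disagree at up to two locations rather than one. This case goes through only because (i) the value written at $a[n]$ in $N_1$ is a tainted value (since the written expression $e'$ also depends on tainted data or the cell was already tainted, by the taint propagation rules for array writes at tainted indices), and (ii) the prior value at $a[n']$ in $N_1$ was already tainted under the assumption $N_0 \tntrel N'_0$ applied to location $a[n']$ (since that location takes a different value in the two memories only if it was tainted). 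Making this argument precise requires pinning down the $\tntrel$ relation to allow any location where both sides hold tainted values, and carefully tracking that array writes at tainted indices produce tainted stores on both sides.

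Once all cases are dispatched, the conclusion $N_1\tntrel N'_1$ and $V_1\tntrel V'_1$ follows uniformly, along with the observation that the second execution reaches $\m{skip}$ at timestamp $\timestamp_0+1$ by applying the same rule schema (the rule choice depends only on the instruction shape and the domain of the assigned location, not on concrete values). This lemma can then be lifted by induction to multi-step traces and used in the correctness proofs that rely on taint-tracking semantics (Section~\ref{sec:correctness-proofs}).
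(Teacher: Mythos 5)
Your overall strategy---case analysis on the last semantic rule, with a taint-preserving analogue of Lemma~\ref{lem:eq-var-eq-eval} for expression evaluation---is exactly what the paper's one-line sketch (``by examining all the operational semantic rules'') intends, and your non-array cases (scalar assignment, volatile assignment, and all the $\m{IN}()$ variants) go through as you describe. You also deserve credit for isolating the genuinely delicate case, the array write at a tainted index, which the paper's sketch silently glosses over.

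However, your resolution of that case rests on a claim the semantics do not support. You assert (i) that the value written at $a[n]$ is tainted ``by the taint propagation rules for array writes at tainted indices,'' but no such dynamic rule exists: \rulename{I/O-Tnt-Assign-Arr} stores $\val'$, the evaluation of $e'$, unmodified, so if $e'$ reads no tainted data the stored value is untainted even when the index is tainted. The conservative whole-array tainting you are implicitly relying on lives only in the \emph{static} collection/checking rules (e.g., \rulename{I/O-Arr-loc} adds $a$ to $I$ and requires $a\in N$), not in the operational semantics. Concretely, take $N_0 = \{a[1]\mapsto 5^t, a[2]\mapsto 7\}$, $N_0' = \{a[1]\mapsto 6^t, a[2]\mapsto 7\}$, $V_0=\{x\mapsto 1^t\}$, $V_0'=\{x\mapsto 2^t\}$ (all $\tntrel$-related), and $\iota = a[x]:=0$. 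One execution yields $a[1]\mapsto 0$, the other $a[2]\mapsto 0$; afterwards the memories disagree at $a[2]$ with the values $7$ and $0$, \emph{both untainted}, so $N_1\not\tntrel N_1'$. Your claim (ii) does not rescue this either, since $a[2]$ agreed (untainted) before the step and so $N_0\tntrel N_0'$ imposed no constraint there. Closing this gap requires either strengthening the dynamic semantics so that a write through a tainted index taints the stored value (and treating every cell of such an array as tainted in the definition of $\tntrel$, matching the paper's stated intent that ``propagating taint to an array element would cause the entire array to be conservatively tainted''), or adding a well-formedness side condition to the lemma that excludes untainted right-hand sides under tainted indices. As stated, the lemma is false and your case analysis cannot complete without one of these repairs.
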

\begin{proofsketch} By examining all the operational semantic rules. 
\end{proofsketch}

\begin{lem}[Relating Syntactic and  Semantics Taint]\label{lem:instruction-taint-semantics}
If
$N; I_0 \Vdash \iota: I_1$ and 
$\mt{taint}(N_0\cup V_0)  \subseteq I_0$, and
$(\timestamp_0, N_0, V_0, \iota)\SeqStepsto{}  (\timestamp_0+1,
N_1, V_1, \m{skip})$, then $\mt{taint}(N_1\cup V_1)  \subseteq I_1$.
\end{lem}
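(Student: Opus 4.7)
The plan is to proceed by case analysis on the last rule of the RIO checking derivation $\ee :: N; I_0 \Vdash \iota : I_1$, with corresponding inversion on the operational semantics step producing $(\timestamp_0+1, N_1, V_1, \m{skip})$. As a preliminary step, I would prove a small auxiliary lemma about expression evaluation: if $N_0, V_0 \vdash e \Downarrow_r v$ and $v$ carries a taint tag (i.e., $v = v'^t$), then $\mt{rd}(e) \cap \mt{taint}(N_0 \cup V_0) \neq \emptyset$. This follows by straightforward structural induction on $e$, using the taint-propagation behavior of \rulename{Rd-Var}, \rulename{Rd-Arr}, and \rulename{BinOp} in the augmented tainting semantics. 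Combined with the hypothesis $\mt{taint}(N_0 \cup V_0) \subseteq I_0$, its contrapositive gives the key fact: whenever $I_0 \cap \mt{rd}(e) = \emptyset$, $e$ evaluates to an untainted value.

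The main cases then line up cleanly. For \rulename{RIO-dep} ($I_1 = I_0 \cup \{x\}$), the semantic step may assign a tainted value to $x$, but since $x \in I_1$ and all other locations retain their previous taint status (bounded by $I_0 \subseteq I_1$), the inclusion holds. For \rulename{RIO-dep-clear} ($x \in I_0$, $I_1 = I_0 \setminus \{x\}$), the auxiliary lemma forces $e$ to evaluate untainted, so the overwrite of $x$ strips its taint; no other location changes, giving $\mt{taint}(N_1 \cup V_1) \subseteq \mt{taint}(N_0 \cup V_0) \setminus \{x\} \subseteq I_1$. For \rulename{RIO-NDep} ($x \notin I_0$, $I_1 = I_0$), the premise forces $x$ to be untainted before and after, while the auxiliary lemma ensures the written value is untainted. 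The \rulename{RIO-Get} rule is immediate since \rulename{I/O-Tnt-V-Assign-In} (or its NV/array analog) writes $\m{in}(\timestamp_0)^t$, and $x \in I_1$.

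The array cases \rulename{RIO-Arr-NDep}, \rulename{RIO-Arr-dep}, and \rulename{RIO-Arr-loc} require a bit more care: the operational semantics writes to a specific element $a[v]$, but $I_1$ tracks whole arrays. Here I would lift $\mt{taint}$ to arrays so that $a \in \mt{taint}(M)$ whenever any element $a[n]$ carries a taint tag, and then observe that the conservative rules that add $a$ to $I_1$ when the index or the value is tainted are exactly what is needed for the conservative containment to go through.

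The main obstacle, which I anticipate will take the most attention, is the array reasoning: ensuring that the taint-lifting convention for arrays is consistent between the soundness lemma and the operational semantics (in particular \rulename{I/O-Tnt-Assign-Arr-In}), and that index taint is propagated into the whole-array taint wherever the syntactic rule relies on it. Everything else is routine inversion plus the auxiliary expression-evaluation lemma.
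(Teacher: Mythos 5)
Your proposal is correct and matches the paper's approach: the paper proves this lemma only by the one-line sketch ``by examining all the operational semantic rules,'' and your case analysis on the $\Vdash_\rio$ instruction rules, the auxiliary expression-taint lemma (the contrapositive counterpart of the paper's Lemma on untainted-value evaluation), and the whole-array taint-lifting convention are exactly the details that sketch leaves implicit.
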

\begin{proofsketch} By examining all the operational semantic rules. 
\end{proofsketch}

\begin{lem}[Not tainted execution deterministic]\label{lem:not-tainted-deterministic}
If $\ee::\nvmem; I; M\Vdash_\rio \cmd_0: \m{ok}$,
$\ee'::\nvmem; I_1; M_1 \Vdash_\rio \cmd_1: \m{ok}$,  $\ee'$ is a
sub-derivation of $\ee$,
and exists $\timestamp_0$, $N_0$, $V_0$, $N_1$, $V_1$, $T$ 
s.t. $T =  (\timestamp_0, N_0, V_0,
\cmd_0)\MSeqStepsto{}  (\_,  N_1, V_1,  \cmd_1)$, 
and $\mt{taint}(N_0\cup V_0)  \subseteq I$
then  $\forall \timestamp_0, N'_0, V'_0, T_c$ s.t. $T_c = (\timestamp'_0, N'_0, V'_0,
\cmd_0)\MSeqStepsto{} \m{CP}$, 
and $N_0\tntrel N'_0$, $V_0\tntrel V'_0$,
it is the case that 
$T_c = T_{c1}\cdot T_{c2}$ where $T_{c1} = (\timestamp'_0, N'_0, V'_0, \cmd_0)
\MSeqStepsto{}  (\_,  N'_1, V'_1,  \cmd_1)$ and $T_{c2}= (\_,  N'_1, V'_1,  \cmd_1)
\MSeqStepsto{} \m{CP}$, and $N_1\tntrel N'_1$ and $V_1\tntrel V'_1$
and $\forall x\in M_1\setminus M$, $x\in \mt{Wt}(T_{c1})$.
\end{lem}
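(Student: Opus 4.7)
The plan is to induct on the length of the trace $T$. In the base case where $|T| = 0$, we have $\cmd_0 = \cmd_1$, $N_0 = N_1$, $V_0 = V_1$, and $\ee' = \ee$, so $I_1 = I$ and $M_1 = M$. The conclusion holds by picking $T_{c1}$ to be the empty prefix and $T_{c2} = T_c$; the set $M_1 \setminus M$ is empty so the must-write requirement is vacuous.

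For the inductive step, I would peel off the first step of $T$: $(\timestamp_0, N_0, V_0, \cmd_0) \SeqStepsto{} (\timestamp_0 + 1, N_0^\star, V_0^\star, \cmd_0^\star)$, then case on the structure of $\cmd_0$. If the first step fires an instruction $\iota$, inversion of $\ee$ yields $\nvmem; I; M \Vdash_\rio \iota : I^\star; M^\star$ and $\nvmem; I^\star; M^\star \Vdash_\rio \cmd_0^\star : \m{ok}$. By Lemma~\ref{lem:instruction-taint-diff}, the same instruction also steps from $(N'_0, V'_0)$ to some $(N'^\star_0, V'^\star_0)$ with $N_0^\star \tntrel N'^\star_0$ and $V_0^\star \tntrel V'^\star_0$; by Lemma~\ref{lem:instruction-taint-semantics}, $\mt{taint}(N_0^\star \cup V_0^\star) \subseteq I^\star$. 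Applying the IH to the tail of $T$ gives the required split of $T_c$, and the new must-writes added to $M$ in this step are precisely the write of $\iota$, which is in $\mt{Wt}(T_{c1})$.

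If instead $\cmd_0$ is an if-statement, I would case on which \rulename{RIO-If} rule appears at the root of $\ee$. Under \rulename{RIO-If-NDep}, $I \cap \mt{rd}(e) = \emptyset$, so since the taint of $(N_0, V_0)$ is bounded by $I$ and $N_0 \tntrel N'_0$, $V_0 \tntrel V'_0$, the expression $e$ evaluates identically (up to taint tag) in both, via Lemma~\ref{lem:not-tainted-eq-val}; both traces therefore take the same branch, and the sub-derivation for that branch has form $\nvmem; I; M \Vdash_\rio \cmd_b : \m{ok}$, so the IH applies directly. Under \rulename{RIO-If-Dep}, both sub-commands are checked in $\Vtaint$, so there is no $\Vdash_\rio$ sub-derivation beneath this if-node that does not first pass through a checkpoint; since by assumption $T$ ends before the nearest checkpoint, the sub-derivation relation forces this case to be vacuous.

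The main obstacle will be making the ``sub-derivation'' side condition work cleanly across the induction: I need to thread through not just the parameters $I_1, M_1$ but also the invariant that the residual trace stays inside the same $\Vdash_\rio$ region of $\ee$ (i.e., has not yet entered a $\Vtaint$ scope that would only be exited by a checkpoint). A secondary subtlety is tracking $M_1 \setminus M$ precisely: in the instruction and NDep-branch cases, any variable added to $M$ at that step is either written by the step itself or is added later in the sub-derivation and covered by the IH, and in neither case is a must-write ``lost'' by the choice of branch, because the NDep rule checks both sub-commands with the same ambient $M$. Taken together, these give the required decomposition $T_c = T_{c1} \cdot T_{c2}$ with $T_{c1}$ ending at $\cmd_1$ in a state taint-related to $(N_1, V_1)$ and $\mt{Wt}(T_{c1}) \supseteq M_1 \setminus M$.
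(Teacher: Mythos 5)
Your proof is correct and matches the paper's argument in all essentials: the paper inducts on the distance between $\ee$ and $\ee'$ in the derivation tree rather than on $|T|$, but the resulting case analysis is identical --- sequenced instructions handled via Lemmas~\ref{lem:instruction-taint-diff} and~\ref{lem:instruction-taint-semantics}, untainted branches forced to agree via Lemma~\ref{lem:not-tainted-eq-val} under \rulename{RIO-If-NDep}, and the same bookkeeping for $M_1\setminus M$. Your observation that the \rulename{RIO-If-Dep} case is vacuous for a nonempty residual trace is exactly the paper's treatment of that rule (together with the instruction rule) as the base case $\ee=\ee'$, viewed from the other induction measure.
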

\begin{proof}
By induction over the distance between $\ee$ and $\ee'$.  For the base
case: $\ee = \ee'$, the conclusion holds trivially. For the inductive
case, we case on $\ee$.

\begin{description}
\item[$\ee$ ends in \rulename{RIO-Seq}]  
\begin{tabbing}
\\By assumptions
\\\quad \=(1)\quad \= 
$\ee = \inferrule*[right=RIO-Seq]{
\ee_1::N;I; M \Vdash \iota: I_1, M_1 \\ 
\ee_2::N;I_1;M_1 \Vdash_\rio \cmd_1: \m{ok}}{
  N;I ; M\Vdash_\rio \iota;\cmd_1: \m{ok}}$
\\\>\> and $\ee'$ is a sub-derivation of $\ee_2$
\\By assumptions, exists $T_0$ s.t.
\\\>(2)\> $T_ 0=  (\timestamp_0, N_0, V_0,
\iota;\cmd_1)\MSeqStepsto{}  (\timestamp,  N_2, V_2,  \cmd)$ 
\\By operational semantic rules and (2)
\\\>(3)\> $T_0 =  (\timestamp_0, N_0, V_0, \iota;\cmd_1) 
\SeqStepsto{} (\timestamp_0+1, N_1, V_1, \cmd_1)
\MSeqStepsto{}  (\timestamp,  N_2, V_2,  \cmd)$ 
\\By Lemma~\ref{lem:instruction-taint-semantics}
\\\>(4)\> $\mt{taint}(N_1\cup V_1)  \subseteq I_1$
\\By I.H. on $\ee_2$, (4)
\\\>(5)\> $\forall \timestamp_1, T_1, N'_1, V'_1$ s.t. 
$T_1 = (\timestamp_1, N'_1, V'_1, \cmd_1)\MSeqStepsto{} \m{CP}$, 
\\\>\> and $N_1\tntrel N'_1$ and $V_1\tntrel V'_1$
\\\>\> $T_1 = (\timestamp_1, N'_1, V'_1, \cmd_1)
\MSeqStepsto{}  (\timestamp_2,  N'_2, V'_2,  \cmd)
\MSeqStepsto{} \m{CP}$, 
\\\>\> and $N_2\tntrel N'_2$ and $V_2\tntrel V'_2$ 
\\Given any time $\timestamp'_0$, $N'_0$,  $V'_0$ s.t. $N_0\tntrel
N'_0$ and $V_0\tntrel V'_0$
\\Given any $T'_0 = (\timestamp'_0, N'_0 , V'_0, \iota;\cmd_1)
\MSeqStepsto{}\m{CP}$
\\By operational semantic rules 
\\\>(6)\> $T'_0 = (\timestamp'_0, N'_0 , V'_0, \iota;\cmd_1)
\SeqStepsto{} (\timestamp'_0+1, N''_1, V''_1, \cmd_1)\cdot T'_1$
\\\>\> and $T'_1 = (\timestamp'_0+1, N''_1, V''_1, \cmd_1)\MSeqStepsto{}\m{CP}$
\\By Lemma~\ref{lem:instruction-taint-diff} and (3) and (5)
\\\>(7)\> $N_1\tntrel N''_1$ and  $V_1\tntrel V''_1$ 
\\By (5), (7), and (6)
\\\>(8)\> $T'_1 = (\timestamp'_0+1, N''_1, V''_1, \cmd_1)
\MSeqStepsto{}  (\timestamp'_2,  N'_2, V'_2,  \cmd)
\MSeqStepsto{} \m{CP}$
\\By (6) and (8), the conclusion holds
\end{tabbing}

\item[$\ee$ ends in \rulename{RIO-If-NDep}]
\begin{tabbing}
\\By assumptions
\\\quad \=(1)\quad \= 
$\ee = \inferrule*[right=RIO-If-no-dep]{\ee_1::I \cap rd(e) = \emptyset 
\\  \ee_2::N;I ;M\Vdash_\rio \cmd_i: \m{ok}   \\ i \in [1, 2] }{
 N;I ;M\Vdash_\rio \m{if}\ e\ \m{then}\ \cmd_1 \m{else}\ \cmd_2: \m{ok}}$
\\By assumptions, exists $T_0$ s.t.
\\\>(2)\> $T_ 0=  (\timestamp_0, N_0, V_0,
\m{if}\ e\ \m{then}\ \cmd_1\ \m{else}\ \cmd_2)\MSeqStepsto{}  (\timestamp,  N_2, V_2,  \cmd)$ 
\\We show the case where the
true branch is taken; the proof for the other case is the same.
\\By operational semantic rules and (2), 
\\\>(3)\> $T_0 =  (\timestamp_0, N_0, V_0, \m{if}\ e\ \m{then}\ \cmd_1\ \m{else}\ \cmd_2)
\SeqStepsto{} (\timestamp_0+1, N_0, V_0, \cmd_1)
\MSeqStepsto{}  (\timestamp,  N_2, V_2,  \cmd)$, 
\\\>(4)\> and $N_0, V_0 \vdash e \Downarrow_{r} \m{true}$
\\By I.H. on $\ee_2$ 
\\\>(5)\> $\forall \timestamp_1, T_1, N'_1, V'_1$ s.t. 
$T_1 = (\timestamp_1, N'_1, V'_1, \cmd_1)\MSeqStepsto{} \m{CP}$, 
\\\>\> and $N_0\tntrel N'_1$ and $V_0\tntrel V'_1$
\\\>\> $T_1 = (\timestamp_1, N'_1, V'_1, \cmd_1)
\MSeqStepsto{}  (\timestamp_2,  N'_2, V'_2,  \cmd)
\MSeqStepsto{} \m{CP}$, 
\\\>\> and $N_2\tntrel N'_2$ and $V_2\tntrel V'_2$ 
\\Given any time $\timestamp'_0$, $N'_0$,  $V'_0$ s.t. $N_0\tntrel
N'_0$ and $V_0\tntrel V'_0$
\\Given any $T'_0 = (\timestamp'_0, N'_0 , V'_0, \m{if}\ e\ \m{then}\ \cmd_1\ \m{else}\ \cmd_2)
\MSeqStepsto{}\m{CP}$
\\By assumption $\mt{taint}(N_0\cup V_0)  \subseteq I$ 
\\\>(6)\> $\mt{taint}(N'_0\cup V'_0)  \subseteq I$
\\By Lemma~\ref{lem:not-tainted-eq-val}, $\ee_1$, and (4)
\\\>(7)\> and $N'_0, V'_0 \vdash e \Downarrow_{r} \m{true}$
\\By operational semantic rules 
\\\>(8)\> $T'_0 = (\timestamp'_0, N'_0 , V'_0, \iota; \m{if}\ e\ \m{then}\ \cmd_1\ \m{else}\ \cmd_2)
\SeqStepsto{} (\timestamp'_0+1, N'_0, V'_0, \cmd_1)\cdot T'_1$
\\\>\> and $T'_1 = (\timestamp'_0+1, N'_0, V'_0, \cmd_1)\MSeqStepsto{}\m{CP}$
\\By (5), (8)
\\\>(9)\> $T'_1 = (\timestamp'_0+1, N'_0, V'_0, \cmd_1)
\MSeqStepsto{}  (\timestamp'_2,  N'_2, V'_2,  \cmd)
\MSeqStepsto{} \m{CP}$
\\By (8) and (9), the conclusion holds
\end{tabbing}

\item[$\ee$ ends in \rulename{RIO-If-Dep} or \rulename{RIO-$\iota$}]   
~\\These two cases are the base cases where $\ee=\ee'$.
\end{description}
\end{proof}

\begin{lem}[Not tainted val eq]\label{lem:not-tainted-eq-val}
  If $N_1$ and $N_2$ only differs in tainted values, and $V_1$ and
  $V_2$ only differs in tainted values, $\forall \loc\in \mt{rd}(e)$,
  $\nvmem_1\cup\vmem_1(\loc)$ is not tainted, and
  $\nvmem_1,\vmem_1 \vdash e \Downarrow_{r_1} v_1$
  $\nvmem_2, \vmem_2 \vdash e \Downarrow_{r_2} v_2$ then $r_1 = r_2$
  and $v_1 = v_2$.
\end{lem}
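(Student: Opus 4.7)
\begin{proofsketch}
The plan is to proceed by straightforward structural induction on the expression $e$, leveraging the determinism of the evaluation rules (\rulename{Val}, \rulename{Rd-Var}, \rulename{Rd-Arr}, \rulename{BinOp}). The core intuition is that the hypothesis pins down the values (and hence the read observations) at exactly the memory locations that $e$ inspects, so the two evaluation derivations must agree step-for-step.

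First, for the base cases: if $e = v$, then by \rulename{Val} both derivations yield $v$ with empty observation, so we are done. If $e = x$, the side condition $\forall \loc \in \mt{rd}(e), (\nvmem_1 \cup \vmem_1)(\loc)$ is untainted gives us that $x$'s value is untainted in $\nvmem_1 \cup \vmem_1$. Since $\nvmem_1 \tntrel \nvmem_2$ and $\vmem_1 \tntrel \vmem_2$ (they differ only in tainted values), we get $(\nvmem_1 \cup \vmem_1)(x) = (\nvmem_2 \cup \vmem_2)(x)$, and hence by \rulename{Rd-Var} we obtain $v_1 = v_2$ and the read observation $\m{rd}\, x\, v_1 = \m{rd}\, x\, v_2$.

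The inductive cases are routine. For $e = e_1\ \m{bop}\ e_2$, we have $\mt{rd}(e_1), \mt{rd}(e_2) \subseteq \mt{rd}(e)$, so the untainted hypothesis transfers to each subexpression; applying the I.H. to $e_1$ and $e_2$ gives matching values and observations, and then $\denote{v_1\ \m{bop}\ v_2}$ is the same on both sides. For $e = a[e']$, we first apply the I.H. to $e'$ to conclude that it evaluates to the same index value $v_e$ with the same sub-observation. Note that $a[v_e] \in \mt{rd}(a[e'])$ by the definition of $\mt{rd}$, so the untainted hypothesis covers this array location; combining with $\nvmem_1 \tntrel \nvmem_2$, $\vmem_1 \tntrel \vmem_2$, we get $(\nvmem_1 \cup \vmem_1)(a[v_e]) = (\nvmem_2 \cup \vmem_2)(a[v_e])$, and \rulename{Rd-Arr} then yields the same value and observation.

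The only mild subtlety, and the one point worth flagging, is the array case: we must confirm that the definition of $\mt{rd}$ treats the indexed location $a[v_e]$ as part of $\mt{rd}(e)$, not just the syntactic variable $a$ together with $\mt{rd}(e')$. Under the natural reading where $\mt{rd}(a[e']) = \mt{rd}(e') \cup \{a[v_e]\}$ (with $v_e$ the dynamically evaluated index), the hypothesis is exactly what is needed; otherwise a small auxiliary argument is required to lift the untaintedness assumption from the syntactic mention of $a$ to the specific element $a[v_e]$. This is the only step that is not a mechanical unfolding, but it is still essentially bookkeeping rather than a deep obstacle.
\end{proofsketch}
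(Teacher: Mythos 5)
Your proof is correct and takes the same route as the paper: a structural induction on $e$ using determinism of the evaluation rules. The array subtlety you flag is resolved in the paper by convention rather than by the dynamic-index reading --- the paper stipulates that $a^n \in \mt{rd}(e)$ stands for \emph{all} locations $a[1],\dots,a[n]$, so the untaintedness hypothesis automatically covers whichever element $a[v_e]$ is actually read.
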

\begin{proofsketch}
By induction over the structure of $e$. Note that when $a^n\in
\mt{rd}(e)$, we mean all of $a[1]$ to $a[n]$ in $\mt{rd}(e)$. 
\end{proofsketch}

\begin{lem}[Partial run relates to initial state (One step)]
\label{lem:io-partial-run-one-step}
 If  all of the following hold
\begin{itemize}
\item $T= (\timestamp, \context, \nvmem, \vmem, \cmd)
  \Stepsto{o} (\timestamp+1,  \context,
\nvmem', \vmem', \cmd')$,  where  $\context = (\nvmem_0, \vmem_0, \cmd_0)$,  
\item $o$ is not checkpoint or reboot,  
\item  $\nvmem_0; \emptyset; \emptyset\Vdash_\war \cmd_0: \m{ok}$,  
   $\nvmem_0; \emptyset ; \emptyset\Vdash_\rio \cmd_0: \m{ok}$, 
\item $T_0 = (\timestamp_0, N_1, V_0, \cmd_0) \MSeqStepsto{}
  (\timestamp,  N, V, \cmd)$, $T_0$ does not
contain any checkpoints, 
\item $\nvmem_0, \nvmem_1, \vmem_0, \cmd_0 \vdash \nvmem\sim \nvmem_2$ 
\end{itemize}
then 
 $\nvmem_0, \nvmem_1, \vmem_0, \cmd_0 \vdash \nvmem' \sim \nvmem_2$. 
\end{lem}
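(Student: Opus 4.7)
The plan is to proceed by case analysis on the derivation of the one-step transition $T$, mirroring the structure of the non-IO proof of Lemma~\ref{lem:partial-run-one-step} but threading through the subtler \emph{must-first-write} requirement that appears in Definition~\ref{def:app-io-initial}. First I would dispose of the rules that do not alter non-volatile memory: \rulename{I/O-CP-Skip}, \rulename{I/O-CP-V-Assign}, \rulename{I/O-CP-V-Assign-I/O}, \rulename{I/O-CP-If-T}, and \rulename{I/O-CP-If-F} (plus \rulename{I/O-CP-Seq} whose instruction falls in this class). For each of these $\nvmem' = \nvmem$, so the conclusion follows directly from the premise $\nvmem_0,\nvmem_1,\vmem_0,\cmd_0 \vdash \nvmem \sim \nvmem_2$.

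The substantive cases are the four rules that write to non-volatile memory: \rulename{I/O-NV-Assign}, \rulename{I/O-CP-NV-Assign-In}, \rulename{I/O-CP-Assign-Arr}, and \rulename{I/O-CP-Assign-Arr-In}. In each, exactly one non-volatile location $\ell$ is modified (either some $x$ or $a[v]$). By the premise, every other location in $\nvmem'$ already satisfies the relation against $\nvmem_2$, so it suffices to show that if $\nvmem'(\ell) \neq \nvmem_2(\ell)$ then $\ell \in \m{dom}(\nvmem_0) \cup \mt{MFstWt}(\nvmem_1,\vmem_0,\cmd_0)$. The first disjunct is handled by inspection of $\nvmem_0$; the rest of the argument aims at the second disjunct under the assumption $\ell \notin \m{dom}(\nvmem_0)$.

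To establish $\ell \in \mt{MFstWt}(\nvmem_1,\vmem_0,\cmd_0)$, I would combine two facts. First, the extended trace $T_0 \cdot T$ witnesses a write to $\ell$, so Lemma~\ref{lem:io-wt-must-version} (for variable writes) or Lemma~\ref{lem:io-wt-must-version-arr} (for array writes) applied with the $\Vdash_\rio$ judgement gives $\ell \in \mt{Wt}(T_c)$ for \emph{every} trace $T_c$ from $\cmd_0$ to the next checkpoint. Second, the WAR-ok hypothesis $\nvmem_0;\emptyset;\emptyset \Vdash_\war \cmd_0 : \m{ok}$, via Lemma~\ref{lem:io-ok-read-same} and inversion of the \rulename{WAR-Checkpointed}/\rulename{WAR-Checkpointed-Arr} rules, forces any location that is read before being written on some execution path to lie in $\nvmem_0$. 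Contrapositively, because $\ell \notin \nvmem_0$, $\ell$ cannot be read before being written on any trace to the next checkpoint. Together these yield that $\ell$ is in fact \emph{first}-written on every such trace, which is exactly $\ell \in \mt{MFstWt}(\nvmem_1,\vmem_0,\cmd_0)$.

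The hardest step will be the array cases. There the written location is $a[v]$ where $v$ is produced by evaluating $e$ at runtime, and different input sequences may select different indices, so we cannot reason at the level of the concrete element via $T_0$ alone. The mitigation is already baked into Lemmas~\ref{lem:io-wt-fst-version-arr} and \ref{lem:io-wt-must-version-arr}, which provide the conservative escape hatch $a \in \nvmem_0$ whenever the element-level condition cannot be guaranteed; this is why the collection algorithm adds the whole array to $\omega$ at tainted indices. The \rulename{I/O-CP-Seq} rule is handled by inversion and delegation to the instruction case, and the input-writing variants \rulename{I/O-CP-NV-Assign-In} and \rulename{I/O-CP-Assign-Arr-In} are identical to their deterministic counterparts once we note that they still modify exactly one non-volatile location whose membership in $\mt{MFstWt}$ depends only on the syntactic structure of $\cmd_0$, not on the particular input value $\m{in}(\timestamp)$ stored.
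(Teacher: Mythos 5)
Your proposal is correct and follows essentially the same route as the paper's proof: dismiss the non-writing rules, then for each rule that writes a single non-volatile location combine the ``writes are must-writes or versioned'' lemmas (Lemmas~\ref{lem:io-wt-must-version} and \ref{lem:io-wt-must-version-arr}) with the ``writes are first-writes or versioned'' lemmas (Lemmas~\ref{lem:io-wt-fst-version} and \ref{lem:io-wt-fst-version-arr}) to place the written location in $\mt{MFstWt}\cup\nvmem_0$. The only cosmetic difference is that for the variable case you partially unfold the first-write lemma into its underlying argument via Lemma~\ref{lem:io-ok-read-same} and inversion of the WAR rules, whereas the paper cites it as a black box.
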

\ifproofs
\begin{proof} 
We case on  $T$. We only show the cases where
 non-volatile memory is updated; as the conclusion trivially holds in
 other cases. 
\begin{description}
\item [Case] $T$ ends in a variable assignment (when $x:=e$ is the
  last instruction, $\cmd_1 = \m{skip}$)
\begin{tabbing}
By assumption
\\\qquad \= (1)~~ \= $
  (\timestamp, \context, \nvmem, \vmem, x:=e; \cmd_1) 
 \Stepsto{[r]}  (\timestamp+1, \context, \nvmem[x\mapsto \val], \vmem, \cmd_1) $
\\\>\> $x \in \m{dom}(N)$, and $N, V \vdash e \Downarrow_{r} \val$
\\\>(2)\> $\nvmem_0; \emptyset; \emptyset\Vdash_\war \cmd_0: \m{ok}$
\\\>(3)\> $\nvmem_0; \emptyset; \emptyset\Vdash_\rio \cmd_0: \m{ok}$
\\We only need to show that $x\in \mt{MFstWt}(N_1, V_0, \cmd_0)\cup
\nvmem_0$. 
\\By Lemma~\ref{lem:io-wt-must-version}, 
\\\>(4)\>$x\in \mt{MustWt}(N_1, V_0, \cmd_0)\cup \nvmem_0$. 
\\By (4) and Lemma~\ref{lem:io-wt-fst-version}, 
\\\>(5)\> $x\in \mt{MFstWt}(N_1, V_0, \cmd_0)\cup \nvmem_0$, 
\\The conclusion holds from (4), (5)
\end{tabbing}

\item[Case] $T$ ends in array assignment 
\begin{tabbing}
By assumption
\\\qquad \= (1)~~ \= $(\timestamp, \context, \nvmem, \vmem, a[e]:=e'; \cmd_1) 
 \Stepsto{[r,r']}  (\timestamp+1, \context, \nvmem[a[v]\mapsto v'], \vmem, \cmd_1) $
\\\>\> $N, V \vdash e \Downarrow_{r} v $ and $N, V \vdash e' \Downarrow_{r'} v'$
\\\>(2)\> $\nvmem_0; \emptyset; \emptyset\Vdash_\war \cmd_0: \m{ok}$
\\\>(3)\> $\nvmem_0; \emptyset; \emptyset\Vdash_\rio \cmd_0: \m{ok}$
\\We only need to show that $a[v]\in \mt{MFstWt}(N_1, V_0, \cmd_0)\cup
\nvmem_0$. 
\\By Lemma~\ref{lem:io-wt-must-version-arr}, 
\\\>(4)\>$a[v]\in \mt{MustWt}(N_1, V_0, \cmd_0)\cup \nvmem_0$. 
\\By Lemma~\ref{lem:io-wt-fst-version-arr}, 
\\\>(5)\> $a[v]\in \mt{MFstWt}(N_1, V_0, \cmd_0)$ or $a\in\nvmem_0$, 
\\The conclusion holds from (4), (5)
\end{tabbing}

\item[Case] $T$ ends in input assignment  
\begin{tabbing}
By assumption
\\\qquad \= (1)~~ \= $
  (\timestamp,\context, \nvmem, \vmem, x:=\m{IN}(); \cmd_1) \Stepsto{\m{in}(\timestamp)}
  (\timestamp +1, \context, \nvmem[x\mapsto \m{in}(\timestamp)^t], \vmem, \cmd_1)$
and $x \in \m{dom}(N)$
\\\>(2)\> $\nvmem_0; \emptyset; \emptyset\Vdash_\war \cmd_0: \m{ok}$
\\\>(3)\> $\nvmem_0; \emptyset; \emptyset\Vdash_\rio \cmd_0: \m{ok}$
\\We only need to show that $x\in \mt{MFstWt}(N_1, V_0, \cmd_0)\cup  \nvmem_0$. 
\\By Lemma~\ref{lem:io-wt-must-version}, 
\\\>(4)\>$x\in \mt{MustWt}(N_1, V_0, \cmd_0)\cup \nvmem_0$. 
\\By (4) and Lemma~\ref{lem:io-wt-fst-version}, 
\\\>(5)\> $x\in \mt{MFstWt}(N_1, V_0, \cmd_0)\cup \nvmem_0$, 
\\The conclusion holds from (4), (5)
\end{tabbing}
\end{description}
\end{proof}
\fi

\begin{lem}[One failure]
\label{lem:io-one-f}
If
\begin{itemize}
\item $T= (\timestamp, \context, \nvmem_1, \vmem, \cmd) 
\MStepsto{O} (\timestamp',  \context, \nvmem'_1, \vmem', \cmd')$,  $T$ does not
  execute checkpoint or reboot,
\item $\context = (\nvmem_0, \vmem_0, \cmd_0)$,  and 
\item $T_0 = (\timestamp_0, N, V_0, \cmd_0) \MSeqStepsto{O_0}
  (\timestamp,  N_1, V, \cmd)$, $T_0$ does not
contain any checkpoints, 
\item $\timestamp_0,\nvmem,\vmem_0, \cmd_0, \cmd, O_0\bnfalt_\m{in} \vdash \nvmem_1 \sim \nvmem_2$ 
\end{itemize}
then 
$ (\timestamp,  \nvmem_2, \vmem, \cmd) \MSeqStepsto{O}
(\timestamp',  \nvmem'_2, \vmem', \cmd')$, 
 and $\timestamp_0,\nvmem, \vmem_0, \cmd_0, \cmd', (O_0\cdot O)\bnfalt_\m{in}  \vdash \nvmem'_1 \sim \nvmem'_2$. 
\end{lem}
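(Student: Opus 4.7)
The plan is to mirror the proof of Lemma~\ref{lem:one-f} from the basic (input-free) setting, inducting on the length of the intermittent trace $T$ and appealing to a one-step counterpart lemma (the analogue of Lemma~\ref{lem:related-to-related-one-step}). For the base case $|T|=0$, the continuous execution takes zero steps as well and the conclusion follows from the hypothesis directly, with $O = \cdot$ so that $(O_0\cdot O)|_{\m{in}} = O_0|_{\m{in}}$. For the inductive step, I split $T$ as a single intermittent transition $(\timestamp,\context,\nvmem_1,\vmem,\cmd) \Stepsto{o} (\timestamp+1,\context,\nvmem_1'',\vmem'',\cmd'')$ followed by the remainder. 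The one-step auxiliary lemma supplies a corresponding continuous transition $(\timestamp,\nvmem_2,\vmem,\cmd) \SeqStepsto{o} (\timestamp+1,\nvmem_2'',\vmem'',\cmd'')$ that preserves the same-point memory relation, after which the induction hypothesis closes out the rest of the trace.

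The one-step lemma is itself proved by case analysis on the intermittent rule applied. For rules that do not touch non-volatile memory (\rulename{I/O-CP-Skip}, \rulename{I/O-CP-V-Assign}, \rulename{I/O-CP-If-T}, \rulename{I/O-CP-If-F}, and their I/O variants on volatile targets), I invoke a counterpart of Lemma~\ref{lem:related-n-eq-var} together with Lemma~\ref{lem:eq-var-eq-eval} (its tainted analogue, to be applied to the continuous-tainted semantics) to show that read-sets within the two memories agree on the values read, so the continuous rule fires with the same observation and leaves the relation intact. For a non-volatile assignment the new written location must be added to the ``allowed to differ'' set; this is exactly what Lemma~\ref{lem:io-partial-run-one-step} established using the must-first-write/checkpointed invariants, so the post-step memories continue to satisfy Definition~\ref{def:io-same_exec}. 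The array case is analogous, using Lemma~\ref{lem:io-wt-must-version-arr}.

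The delicate case is the input rule \rulename{I/O-CP-NV-Assign-In} (and its volatile and array variants). Here the intermittent execution produces observation $\m{in}(\timestamp)$ and writes $\m{in}(\timestamp)^t$ into the updated location. In the continuous execution, the tainted semantic rule \rulename{I/O-Tnt-NV-Assign-In} fires at the \emph{same} logical time $\timestamp$ (because the state is parameterized by $\timestamp$ and both transitions synchronously advance it), producing the identical tainted input value. The updated input observation is appended to $O_0$, preserving $(O_0\cdot O)|_{\m{in}}$; the written location either already belongs to $\omega$ via the RIO check (hence to the checkpointed set of Definition~\ref{def:io-initial}) or lies in the must-first-write set by Lemma~\ref{lem:io-wt-must-version}, so the same-point relation is maintained.

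The main obstacle is keeping the input sequence $\inputs$ coherent across the two executions: the relation is parameterized by $\inputs$, and each intermittent input step must be reflected by a continuous input step at the same timestamp, so that $\runof{\sigma,\inputs,\cmd'}$ determines the same singleton trace on both sides. This requires carefully threading the assumption $\timestamp_0 \le \timestamp$ and the monotonicity of timestamps through the induction, and making sure no intermittent rule silently advances time differently from its continuous counterpart. Once the timestamps and taint tags are shown to align, the must-first-write and checkpoint invariants established by Lemmas~\ref{lem:io-wt-must-version} and~\ref{lem:io-wt-must-version-arr} do the rest.
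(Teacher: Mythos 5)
Your overall strategy is exactly the paper's: induct on the length of the intermittent trace $T$ and discharge each step with a one-step preservation lemma (the paper's Lemma~\ref{lem:io-related-to-related-one-step}), whose proof is a case analysis on the last rule applied, using Lemma~\ref{lem:io-related-n-eq-var} and Lemma~\ref{lem:eq-var-eq-eval} to show that reads agree and hence the continuous execution fires the same rule with the same observation.

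There is, however, one conceptual slip in your description of the non-volatile assignment case. You claim the newly written location ``must be added to the allowed-to-differ set'' and cite Lemma~\ref{lem:io-partial-run-one-step} for this. That lemma concerns the \emph{current-to-initial} (solid-line) relation of Definition~\ref{def:io-initial}, which is used in the reboot case of the enclosing Lemma~\ref{lem:io-one-cp-m-f}, not here. The relation you must preserve in this lemma is the \emph{same-point} (dashed-line) relation of Definition~\ref{def:io-same_exec}, and for it the argument goes in the opposite direction: because the read locations agree (Lemma~\ref{lem:io-related-n-eq-var}) and expression evaluation is deterministic on equal inputs (Lemma~\ref{lem:eq-var-eq-eval}), the two executions write the \emph{same} value to the assigned location, so that location no longer differs at all; meanwhile any still-differing location must satisfy $\loc\notin \mt{Wt}(T)$ for the now-longer trace, which holds precisely because the step did not write it. No appeal to the must-first-write machinery is needed in this case. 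Your handling of the input rules and of the observation bookkeeping $(O_0\cdot O)|_{\m{in}}$ is otherwise consistent with the paper.
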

\begin{proof} 
By induction over the length of $T$, apply Lemma~\ref{lem:io-related-to-related-one-step}.
\end{proof}

\begin{lem}[related NV step to equal NV by checkpoint]
\label{lem:io-one-f-completion}
  If   $T= (\timestamp, \context, \nvmem_1, \vmem, \cmd) 
 \MStepsto{O} \Sigma'$, $\mt{CP}(\Sigma')$ 
and $T$ does not
  execute checkpoint or reboot,
 $\context = (\nvmem_0, \vmem_0, \cmd_0)$, 
$T_0 = (\timestamp_0, N, V_0, \cmd_0) \MSeqStepsto{O_0}  
  (\timestamp,  N_1, V, \cmd)$, $T_0$ does not
contain any checkpoints, 
 and $\timestamp_0,\nvmem, \vmem_0, \cmd_0, \cmd, O_0\bnfalt_\m{in} \vdash \nvmem_1 \sim \nvmem_2$, 
 then
  $ (\timestamp, \nvmem_2, \vmem, \cmd) \MSeqStepsto{O} \erase{\Sigma'}$
\end{lem}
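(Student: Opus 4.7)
The plan is to mirror the structure of Lemma~\ref{lem:one-f-completion}, proceeding by induction on the length of the intermittent trace $T$, but now carrying the input projection $O_0\bnfalt_\m{in}$ and the timestamp as the extra data that pins down the relation in the presence of non-deterministic inputs. The overall shape is: peel one step off $T$, use a step-by-step simulation lemma (the I/O-augmented analogue of Lemma~\ref{lem:related-to-related-one-step}, of the same sort invoked in Lemma~\ref{lem:io-one-f}) to make the continuous execution take a matching step with the same observation $o$, then apply the induction hypothesis to the tail.

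In the base case $|T| = 0$, the command $\cmd$ already begins with a checkpoint and $\Sigma'$ coincides with the starting configuration. Unfolding $\timestamp_0, \nvmem, \vmem_0, \cmd_0, \cmd, O_0\bnfalt_\m{in} \vdash \nvmem_1 \sim \nvmem_2$, any location at which the two memories disagree must be written on the remaining sub-trace from $\cmd$ to the next checkpoint. That sub-trace has length zero, so no such location exists: $\nvmem_1 = \nvmem_2$, and the continuous execution vacuously reaches $\erase{\Sigma'}$.

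For the inductive case I would write $T = (\timestamp, \context, \nvmem_1, \vmem, \cmd) \Stepsto{o} (\timestamp{+}1, \context, \nvmem'_1, \vmem'', \cmd'') \MStepsto{O'} \Sigma'$, where $o$ is neither a checkpoint nor a reboot (both are excluded by hypothesis on $T$). The step-simulation lemma provides a matching continuous transition $(\timestamp, \nvmem_2, \vmem, \cmd) \SeqStepsto{o} (\timestamp{+}1, \nvmem'_2, \vmem'', \cmd'')$ that preserves the same-point relation, now with the cumulative input projection $(O_0 \cdot o)\bnfalt_\m{in}$. Invoking the induction hypothesis on the shorter tail yields $(\timestamp{+}1, \nvmem'_2, \vmem'', \cmd'') \MSeqStepsto{O'} \erase{\Sigma'}$, and concatenating the two continuous segments finishes the case.

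The main obstacle I expect is input-reading steps inside the step-simulation call: when $o = \m{in}(\timestamp)$, the continuous execution must consume its input at the same timestamp so both runs observe the same value, which is exactly why the relation is indexed by $\timestamp_0$ and by the input projection. A secondary subtlety is maintaining the invariant that every differing location still lies in $\mt{MFstWt}(N, V_0, \cmd_0)$ or is written on all remaining paths from $\cmd''$ to the next checkpoint as $\cmd$ advances; this ultimately reduces to the bookkeeping carried out by Lemmas~\ref{lem:io-wt-must-version} and~\ref{lem:io-wt-must-version-arr}, combined with the determinism of untainted execution established in Lemma~\ref{lem:not-tainted-deterministic}. Once the step-simulation lemma is proved with those ingredients, the induction here is essentially just routine glue.
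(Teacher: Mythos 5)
Your proposal matches the paper's proof: induction on the length of $T$, with the base case closing because the must-write set of a trace that already begins at a checkpoint is empty (forcing $\nvmem_1 = \nvmem_2$), and the inductive case discharged by the one-step simulation lemma (Lemma~\ref{lem:io-related-to-related-one-step}) followed by the induction hypothesis on the tail. The subtleties you flag about input timestamps and the must-write/first-write bookkeeping are exactly the ones the paper delegates to that step lemma and its supporting lemmas.
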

\ifproofs
\begin{proof} 
By induction over the length of $T$.
\begin{description}
\item[Base case:] $|T| = 0$
\begin{tabbing}
By assumption
\\\qquad \= (1)~~ \= $\cmd= \m{checkpoint}(\omega);\cmd'$
\\ \>(2)\> $\forall \loc \in N_1$ s.t. $N_1(\loc) \neq N_2(\loc)$,  
$\loc \in \mt{MFstWt}(N, \vmem_0, \cmd_0)$ and 
\\\>\> $\loc \in \mt{MstWt}(\nvmem_1, \vmem,
\m{checkpoint}(\omega);\cmd') = \emptyset$
\\By (2) and $\nexists \loc$ s.t. $\loc \in \emptyset$ 
\\\>(3)\> $N_1 = N_2$
\\ The continuous powered execution also takes $0$ steps and the
conclusion holds
\end{tabbing}

\item[Inductive case:]
~
\begin{tabbing}
By assumption
\\\qquad \= (1)~~ \= 
$T= (\timestamp, \context, \nvmem_1, \vmem, \cmd) \Stepsto{o} 
(\timestamp_1, \context, \nvmem'_1, \vmem_1, \cmd_1) \MStepsto{O} 
(\timestamp', \context,  \nvmem', \vmem', \m{checkpoint}(\omega);\cmd')$, 
\\By Lemma~\ref{lem:io-related-to-related-one-step}
\\\>(2)\>  $(\timestamp,\nvmem_2, \vmem, \cmd) \SeqStepsto{o} (\timestamp_1, \nvmem'_2,
\vmem_1, \cmd_1) $ and $\timestamp_0,\nvmem, \vmem_0, \cmd_0, \cmd_1, o\bnfalt_{\mt{in}} 
\vdash \nvmem'_1 \sim \nvmem'_2$.
\\By I.H. on the tail of $T$ 
\\\>(3)\>$(\timestamp_1,\nvmem'_2,\vmem_1, \cmd_1) \MSeqStepsto{O} (\timestamp', \nvmem',
\vmem', \m{checkpoint}(\omega);\cmd')$
\\By (2) and (3)
\\\> The conclusion holds
\end{tabbing}
\end{description}
\end{proof}
\fi

\begin{lem}[related NVs step to related NVs (One step)] 
\label{lem:io-related-to-related-one-step}
  If   $T= (\timestamp, \context, \nvmem_1, \vmem, \cmd) 
 \Stepsto{o} (\timestamp+1, \context,
  \nvmem'_1, \vmem', \cmd')$, and $T$ does not
  execute checkpoint or reboot,
 $\context = (\nvmem_0, \vmem_0, \cmd_0)$, 
$T_0 = (\timestamp_0, N, \vmem_0, \cmd_0) \MSeqStepsto{O_0}  
(\timestamp,  N_1, \vmem, \cmd)$, $T_0$ does not
contain any checkpoints, 
 and $\timestamp_0,\nvmem,  \vmem_0, \cmd_0, \cmd, O_0\bnfalt_\m{in} \vdash \nvmem_1 \sim \nvmem_2$, 
 then
  $ (\timestamp, \nvmem_2, \vmem, \cmd) \SeqStepsto{o} 
 (\timestamp+1, \nvmem'_2, \vmem', \cmd')$
and $\timestamp_0,\nvmem, \vmem_0, \cmd_0, \cmd',  (O_0\cdot o) \bnfalt_\m{in}\vdash \nvmem'_1 \sim \nvmem'_2$.
\end{lem}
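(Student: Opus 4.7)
My plan is to proceed by case analysis on the semantic rule used in the single step $T$, mirroring the structure of the analogous Lemma~\ref{lem:related-to-related-one-step} for the basic system but now accounting for inputs, timestamps, and taint. For each case I will exhibit the corresponding continuous step, argue that the observations match, and then verify that the updated memories still satisfy Definition~\ref{def:app-io-same_exec} with respect to the extended input sequence $(O_0\cdot o)\bnfalt_\m{in}$.

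First, the routine cases: \rulename{I/O-CP-Skip}, \rulename{I/O-CP-If-T}, and \rulename{I/O-CP-If-F} do not touch non-volatile memory, so I only need to show that expression evaluations agree between $\nvmem_1$ and $\nvmem_2$. For this, I will invoke Lemma~\ref{lem:io-related-n-eq-var} (the analog of Lemma~\ref{lem:related-n-eq-var} for the input setting, establishing that locations read by $\iota$ or by the branch condition cannot be among the differing locations, since such locations must be in $\mt{MstWt}(\nint, V, c')$ and not yet in $\mt{Wt}(T)$) together with Lemma~\ref{lem:not-tainted-eq-val} to conclude that $e$ evaluates to the same value in both configurations. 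Similarly for \rulename{I/O-CP-V-Assign}, only volatile state changes, so the non-volatile relation is trivially preserved after matching the read observations.

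For the interesting cases \rulename{I/O-NV-Assign} and \rulename{I/O-CP-Assign-Arr}, the continuous execution updates $\nvmem_2$ at location $x$ (or $a[v]$) with the freshly computed value. I will again use the evaluation-equivalence argument to show that the value and read observations match ($v_1=v_2$, $r_1=r_2$). Then, to re-establish the same-point relation at $\cmd'$, I need to show that if $\nvmem_1[x\mapsto v]$ and $\nvmem_2[x\mapsto v]$ still differ at some location $\loc$, then $\loc$ satisfies the conditions of Definition~\ref{def:app-io-same_exec} relative to the new remaining trace. Since the only newly-matching location is $x$ (or $a[v]$) and the set of previously-differing locations can only shrink, the witnessing membership in $\mt{MFstWt}(N, V_0, c_0)$ and $\mt{MstWt}$ carries over; I must however check that removing the current instruction from the suffix preserves the ``not in $\mt{Wt}(T)$'' and ``in $\mt{MstWt}$ of the remaining suffix'' clauses, which follows from Lemma~\ref{lem:io-wt-must-version} (and Lemma~\ref{lem:io-wt-must-version-arr} for arrays).

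The main obstacle, and the genuinely new content compared to the input-free version, will be the input cases \rulename{I/O-CP-NV-Assign-In}, \rulename{I/O-CP-V-Assign-In}, and \rulename{I/O-CP-Assign-Arr-In}. Here the intermittent step writes $\m{in}(\timestamp)^t$ to $x$ (or $a[v]$) and emits observation $\m{in}(\timestamp)$, extending $O_0\cdot o\bnfalt_\m{in}$ by one input. I must arrange that the continuous execution, which is determined by $\runof{(\tau,N,V_0,c_0),\inputs,c'}$ with $\inputs=(O_0\cdot o)\bnfalt_\m{in}$, takes the corresponding \rulename{I/O-Tnt-*-In} step and writes the \emph{same} $\m{in}(\timestamp)$ tag (the timestamp being determined by the position in the trace). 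Because the input value just written is tainted, it becomes a legitimately differing location only transiently: I will use Lemma~\ref{lem:instruction-taint-semantics} to show the written tainted location falls within the input-dependent set tracked statically, and then Lemma~\ref{lem:io-wt-must-version} (together with the hypothesis $\Vdash_\rio c_0:\m{ok}$) to conclude $x\in\mt{MstWt}$ of the remaining trace, so the relation is restored. The expression-evaluation subtlety is that the array index $e$ in \rulename{I/O-CP-Assign-Arr-In} must still agree between the two executions, which again reduces to Lemma~\ref{lem:io-related-n-eq-var} plus Lemma~\ref{lem:not-tainted-eq-val}. Finally, the \rulename{I/O-CP-Seq} case is dispatched by directly applying the instruction-level case to the head and propagating the residual command $\cmd'$ unchanged.
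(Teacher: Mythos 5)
Your proposal is correct and follows essentially the same route as the paper's proof: a case analysis on the semantic rule used in the single step, using Lemma~\ref{lem:io-related-n-eq-var} together with evaluation-determinism (the paper cites Lemma~\ref{lem:eq-var-eq-eval} where you cite Lemma~\ref{lem:not-tainted-eq-val}) to match read observations and written values, and then direct verification that the same-point relation carries over to the extended trace. One small simplification: in the input cases both executions write the identical value $\m{in}(\timestamp)^t$ because the timestamps coincide, so the written location never actually differs and your extra argument about it being a ``transiently differing'' location restored via $\mt{MstWt}$ is unnecessary.
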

\ifproofs
\begin{proof}
By examining the structure of $T$. 
\begin{description}
\item[Case:] $T$ ends in \rulename{I/O-CP-Skip} rule. 
\begin{tabbing}
By assumption
\\\qquad \= (1)~~ \= 
$T= (\timestamp, \context, \nvmem_1, \vmem, \m{skip};\cmd) 
 \Stepsto{}  (\timestamp+1, \context, \nvmem_1, \vmem, \cmd) $ and
\\\>(2)\> $\context = (\nvmem_0, \vmem_0, \cmd_0)$,
  and $\timestamp_0,\nvmem, \vmem_0, \cmd_0, (\m{skip};\cmd), O_0\bnfalt_\m{in}  \vdash \nvmem_1 \sim \nvmem_2$
\\ By  \rulename{I/O-Tnt-Skip} rule
\\\>(3)\> $ (\timestamp, \nvmem_2, \vmem, \m{skip};\cmd)  \SeqStepsto{}
(\timestamp+1, \nvmem_2, \vmem, \cmd)$
\\By (2) and $\m{skip}$ does not write to store
\\\>(4)\>  $\timestamp_0,\nvmem, \vmem_0, \cmd_0, \cmd, O_0\bnfalt_\m{in} \vdash \nvmem_1 \sim \nvmem_2$
\end{tabbing}

\item[Case:] $T$ ends in \rulename{I/O-CP-Seq} rule and uses
  \rulename{I/O-CP-NV-Assign} or ends in \rulename{I/O-CP-NV-Assign}
  rule. The proof for both of these two cases are the same except that
  in the latter, the resulting command is $\m{skip}$.
\begin{tabbing}
By assumption
\\\qquad \= (1)~~ \= 
$T= (\timestamp, \context, \nvmem_1, \vmem, x:=e;\cmd) 
 \Stepsto{[r_1]}  (\timestamp +1,\context, \nvmem_1[x \mapsto v_1], \vmem, \cmd) $ and
\\\>(2)\> $\nvmem_1; \vmem \vdash e \Downarrow_{r_1} v_1$ and 
\\\>(3)\> $\context = (\nvmem_0, \vmem_0, \cmd_0)$ and
  $\timestamp_0, \nvmem, \vmem_0, \cmd_0, (x:=e;\cmd), O_0\bnfalt_\m{in} \vdash \nvmem_1 \sim \nvmem_2$
\\ By  \rulename{I/O-Tnt-NV-Assign} rule
\\\>(4)\> $ (\timestamp, \nvmem_2, \vmem, x:=e;\cmd)  \SeqStepsto{[r_2]}
(\timestamp+1, \nvmem_2[x \mapsto v_2], \vmem, \cmd)$ and
\\\>(5)\> $\nvmem_2; \vmem \vdash e \Downarrow_{r_2} v_2$
\\ By Lemma~\ref{lem:io-related-n-eq-var} and Lemma~\ref{lem:eq-var-eq-eval}
\\\>(6)\> $v_1 = v_2$ and $r_1 = r_2$
\\By (3) and (6), and 
\\\>(7)\>  $\timestamp_0,\nvmem, \vmem_0, \cmd_0, \cmd,O_0\bnfalt_\m{in} 
 \vdash \nvmem_1[x \mapsto v_1] \sim \nvmem_2[x \mapsto v_2]$
\end{tabbing}

\item[Case:] $T$ ends in  \rulename{I/O-CP-Seq} rule and uses
  \rulename{I/O-CP-Arr-Assign} rule or ends in \rulename{I/O-CP-Arr-Assign} rule. 
\begin{tabbing}
By assumption
\\\qquad \= (1)~~ \= 
$T= (\timestamp, \context, \nvmem_1, \vmem, a[e']:=e;\cmd) 
 \Stepsto{[r_1',r_1]}  (\timestamp+1,\context, \nvmem_1[ (a[v_1']) \mapsto v_1], \vmem, \cmd) $ and
\\\>(2)\> $\nvmem_1; \vmem \vdash e \Downarrow_{r_1} v_1$ 
and $\nvmem_1; \vmem \vdash e' \Downarrow_{r_1'} v_1'$ 
\\\>(3)\> $\context = (\nvmem_0, \vmem_0, \cmd_0)$ and
  $\timestamp_0,\nvmem, \vmem_0, \cmd_0, (a[e']:=e;\cmd), O_0\bnfalt_\m{in} \vdash \nvmem_1 \sim \nvmem_2$
\\ By  \rulename{I/O-Tnt-Assign-Arr} rule
\\\>(4)\> $ (\timestamp, \nvmem_2, \vmem, a[e']:=e;\cmd)  \SeqStepsto{[r_2',r_2]}
(\timestamp+1, \nvmem_2[(a[v_{2}']) \mapsto v_2], \vmem, \cmd)$ and
\\\>(5)\> $\nvmem_2; \vmem \vdash e' \Downarrow_{r_2'} v_{2}'$ and $\nvmem_2; \vmem \vdash e \Downarrow_{r_2} v_2$
\\ By Lemma~\ref{lem:io-related-n-eq-var} and Lemma~\ref{lem:eq-var-eq-eval}
\\\>(6)\> $v_1 = v_2$ and $v_1' = v_2'$ and $r_1 = r_2$ and $r_1' = r_2'$
\\By (3) and (6)
\\\>(7)\>  $\timestamp_0,\nvmem, \vmem_0, \cmd_0, \cmd, O_0\bnfalt_\m{in}  \vdash \nvmem_1[a[v_1'] \mapsto v_1] \sim \nvmem_2[a[v_2'] \mapsto v_2]$
\end{tabbing}

\item[Case:] $T$ ends in  \rulename{I/O-CP-Seq} rule and uses
  \rulename{I/O-CP-V-Assign} rule or ends in \rulename{I/O-CP-V-Assign} rule.  
\begin{tabbing}
By assumption 
\\\qquad \= (1)~~ \= 
$T= (\timestamp, \context, \nvmem_1, \vmem, x:=e;\cmd) 
 \Stepsto{[r_1]}  (\timestamp+1, \context, \nvmem_1, \vmem[x\mapsto v_1], \cmd) $
 and $\nvmem_1; \vmem \vdash e \Downarrow_{r_1} v_1$
\\\>(2)\> $\context = (\nvmem_0, \vmem_0, \cmd_0)$,
and  $\timestamp_0,\nvmem,  \vmem_0, \cmd_0, (x:=e;\cmd), O_0\bnfalt_\m{in} \vdash \nvmem_1 \sim \nvmem_2$
\\ By  \rulename{I/O-Tnt-V-Assign} rule
\\\>(3)\> $ (\timestamp, \nvmem_2, \vmem, x:=e;\cmd)  \SeqStepsto{[r_2]}
(\timestamp+1, \nvmem_2, \vmem[x\mapsto v_2] , \cmd)$ and  $\nvmem_2; \vmem \vdash e
\Downarrow_{r_2} v_2$
\\ By Lemma~\ref{lem:io-related-n-eq-var} and Lemma~\ref{lem:eq-var-eq-eval}
\\\>(4)\> $v_1 = v_2$ and $r_1 = r_2$
\\By (4), $\vmem[x\mapsto v_1] =\vmem[x\mapsto v_2]$ 
\\By (2) and \rulename{CP-V-Assign} does not write to non-volatile store
\\\>(4)\>  $\timestamp_0,\nvmem,  \vmem_0, \cmd_0, \cmd, O_0\bnfalt_\m{in}  \vdash \nvmem_1 \sim \nvmem_2$
\end{tabbing}

\item[Case:] $T$ ends in \rulename{I/O-CP-Seq} rule and uses
  \rulename{I/O-CP-NV-Assign-In} or ends in \rulename{I/O-CP-NV-Assign-In}
  rule. 
\begin{tabbing}
By assumption
\\\qquad \= (1)~~ \= 
$T= (\timestamp, \context, \nvmem_1, \vmem, x:=\mt{IN}();\cmd) 
 \Stepsto{\mt{in}(\timestamp)}  
 (\timestamp +1, \context, \nvmem_1[x \mapsto \mt{in}(\timestamp)^t], \vmem, \cmd) $
  and 
\\\>(2)\> $\context = (\nvmem_0, \vmem_0, \cmd_0)$ and
  $\timestamp_0,\nvmem, \vmem_0, \cmd_0, (x:=\mt{IN}();\cmd), O_0\bnfalt_\m{in} \vdash \nvmem_1 \sim \nvmem_2$
\\ By  \rulename{I/O-Tnt-NV-Assign-In} rule
\\\>(3)\> $ (\timestamp, \nvmem_2, \vmem, x =\mt{IN}();\cmd)  \SeqStepsto{\mt{in}(\timestamp)}
(\timestamp+1,\nvmem_2[x \mapsto \mt{in}(\timestamp)^t], \vmem, \cmd)$ and
\\By (1) and (3)
\\\>(7)\>  $\timestamp_0,\nvmem, \vmem_0, \cmd_0, \cmd,(O_0\cdot \mt{in}(\timestamp))\bnfalt_\m{in} 
 \vdash \nvmem_1[x \mapsto \mt{in}(\timestamp)^t] \sim \nvmem_2[x \mapsto \mt{in}(\timestamp)^t]$
\end{tabbing}

\item[Case:] $T$ ends in  \rulename{I/O-CP-Seq} rule and uses
  \rulename{I/O-CP-Arr-Assign-In} rule or ends in \rulename{I/O-CP-Arr-Assign-In} rule. 
\begin{tabbing}
By assumption
\\\qquad \= (1)~~ \= 
$T= (\timestamp, \context, \nvmem_1, \vmem, a[e']:=\mt{IN}();\cmd) 
 \Stepsto{[r_1'],\mt{in}(\timestamp)}  
 (\timestamp+1, \context, \nvmem_1[ (a[v_1']) \mapsto \mt{in}(\timestamp)^t], \vmem, \cmd) $ and
\\\>(2)\> $\nvmem_1; \vmem \vdash e' \Downarrow_{r_1'} v_1'$ 
\\\>(3)\> $\context = (\nvmem_0, \vmem_0, \cmd_0)$ and
  $\timestamp_0,\nvmem, \vmem_0, \cmd_0, (a[e']:=\mt{IN}();\cmd), O_0\bnfalt_\m{in} \vdash \nvmem_1 \sim \nvmem_2$
\\ By  \rulename{I/O-Tnt-Assign-Arr-In} rule
\\\>(4)\> $ (\timestamp,\nvmem_2, \vmem, a[e']:=\mt{IN}();\cmd)  \SeqStepsto{[r_2'],\mt{in}(\timestamp)}
(\timestamp+1, \nvmem_2[(a[v_{2}']) \mapsto \mt{in}(\timestamp)^t], \vmem, \cmd)$ and
\\\>(5)\> $\nvmem_2; \vmem \vdash e' \Downarrow_{r_2'} v_{2}'$
\\ By Lemma~\ref{lem:io-related-n-eq-var} and Lemma~\ref{lem:eq-var-eq-eval}
\\\>(6)\> $v_1' = v_2'$ and $r_1' = r_2'$
\\By (3), (4) and (6)
\\\>(7)\>  $\timestamp_0,\nvmem, \vmem_0, \cmd_0, \cmd, (O_0\cdot \mt{in}(\timestamp))\bnfalt_\m{in}  
\vdash \nvmem_1[a[v_1'] \mapsto \mt{in}(\timestamp)^t] \sim \nvmem_2[a[v_2'] \mapsto \mt{in}(\timestamp)^t]$
\end{tabbing}

\item[Case:] $T$ ends in  \rulename{I/O-CP-Seq} rule and uses
  \rulename{I/O-CP-V-Assign-In} rule or ends in \rulename{I/O-CP-V-Assign-In} rule.  
\begin{tabbing}
By assumption 
\\\qquad \= (1)~~ \= 
$T= (\timestamp, \context, \nvmem_1, \vmem, x:=\mt{IN}();\cmd) 
 \Stepsto{\mt{in}(\timestamp)}  
 (\timestamp+1,\context, \nvmem_1, \vmem[x\mapsto \mt{in}(\timestamp)^t], \cmd) $
\\\>(2)\> $\context = (\nvmem_0, \vmem_0, \cmd_0)$,
and  $\timestamp_0,\nvmem,  \vmem_0, \cmd_0, (x:=\mt{IN}();\cmd), O_0\bnfalt_\m{in} \vdash \nvmem_1 \sim \nvmem_2$
\\ By  \rulename{I/O-Tnt-V-Assign-In} rule
\\\>(3)\> $ (\timestamp, \nvmem_2, \vmem, x:=\mt{IN}();\cmd)  \SeqStepsto{\mt{in}(\timestamp)}
(\timestamp+1, \nvmem_2, \vmem[x\mapsto \mt{in}(\timestamp)^t] , \cmd)$ 
\\By (2) and \rulename{CP-V-Assign} does not write to non-volatile store
\\\>(4)\>  $\timestamp_0,\nvmem,  \vmem_0, \cmd_0, \cmd, (O_0\cdot \mt{in}(\timestamp))\bnfalt_\m{in}  
\vdash \nvmem_1 \sim \nvmem_2$
\end{tabbing}

\item[Case:] $T$ ends in \rulename{I/O-CP-If-T} rule. 
\begin{tabbing}
By assumption
\\\qquad \= (1)~~ \= 
$T= (\timestamp,\context, \nvmem_1, \vmem, \m{if}~e~\m{then}~c_1~\m{else}~c_2) 
 \Stepsto{[r_1]}  (\timestamp+1,\context, \nvmem_1, \vmem, \cmd_1) $ and $\nvmem_1,
 \vmem  \vdash e \Downarrow_{r_1} \m{true}$ and
\\\>(2)\> $\context = (\nvmem_0, \vmem_0, \cmd_0)$, and 
  $\timestamp_0,\nvmem, \vmem_0, \cmd_0, \m{if}~e~\m{then}~c_1~\m{else}~c_2, O_0\bnfalt_\m{in} 
  \vdash \nvmem_1 \sim \nvmem_2$
\\Let's assume
\\\> (3)\>  $\nvmem_2, \vmem  \vdash e \Downarrow_{r_2} v$
\\By Lemma~\ref{lem:io-related-n-eq-var} and Lemma~\ref{lem:eq-var-eq-eval}
\\\>(4)\> $v = \m{true}$ and $r_1 = r_2$
\\ By  \rulename{I/O-Tnt-If-T} rule and (4)
\\\>(5)\> $ (\timestamp, \nvmem_2, \vmem, \m{if}~e~\m{then}~c_1~\m{else}~c_2)  
\SeqStepsto{[r_2]} (\timestamp +1, \nvmem_2, \vmem, \cmd_1)$
\\ By (5) and \rulename{CP-If-T} doesn't write to storage, and the  must
write set of if statement
\\  is the intersection of the must writes in both branches, 
\\\>(6)\>  $\timestamp_0,\nvmem, \vmem_0, \cmd_0, \cmd_1, O_0\vdash \nvmem_1 \sim \nvmem_2$
\end{tabbing}

\item[Case:] $T$ ends in   \rulename{CP-If-F} rule. Similar to the
  previous case. 
\end{description}
\end{proof}
\fi

\begin{lem}[Eq val in read locations] \label{lem:io-related-n-eq-var}
If 
$T= (\timestamp_0,\nvmem, \vmem_0, \cmd_0) 
 \MSeqStepsto{O} (\timestamp,  \nvmem_1, \vmem', \iota;\cmd)$,  $T$ does not
  execute checkpoint,
$\timestamp_0,\nvmem,  \vmem_0, \cmd_0, (\iota;\cmd),  O\bnfalt_\m{in}\vdash \nvmem_1 \sim \nvmem_2$ 
where
$\iota\neq \m{checkpoint}$, 
and $\loc\in \mt{rd}(\iota)\cap \nvmem_1$,
then $\nvmem_1(\loc) = \nvmem_2(\loc)$.
\end{lem}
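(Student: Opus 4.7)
The plan is to argue by contradiction, mirroring the structure of Lemma~\ref{lem:related-n-eq-var}. Assume $\nvmem_1(\loc) \neq \nvmem_2(\loc)$. Then by unfolding Definition~\ref{def:app-io-same_exec} applied to the hypothesis $\timestamp_0,\nvmem, \vmem_0, \cmd_0, (\iota;\cmd), O\bnfalt_\m{in} \vdash \nvmem_1 \sim \nvmem_2$, there are two possible sources of divergence: either $\loc \in \mt{MFstWt}(\nvmem, \vmem_0, \cmd_0)$ (the trace-wide must-first-write condition), or $\loc$ is in the suffix must-write set $\mt{MstWt}(\nint, V, c')$ where $c' = \iota;\cmd$ while simultaneously $\loc \notin \mt{Wt}(T)$.

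The key observation is that $\loc \in \mt{rd}(\iota)$ and $\iota$ is the very next instruction to execute. First I would handle the second disjunct: since $\iota$ begins with reading $\loc$, the semantic observation attached to this step contains $\m{rd}\ \loc\ \_$, so $\loc$ cannot satisfy being ``first written before being read'' on every run starting from $(\iota;\cmd)$ — indeed there exists a run (and in fact every run) in which $\loc$ is read before it is written. This directly contradicts membership of $\loc$ in the must-first-write component baked into the same-point relation at the suffix. For the first disjunct, I use the assumed trace $T$ itself as a witness: $T$ reaches $(\nvmem_1, \vmem', \iota;\cmd)$, and continuing one more step via $\iota$ produces a run in which $\loc$ is read. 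Combined with the hypothesis $\loc \notin \mt{Wt}(T)$ inherited from the relation, this shows $\loc$ is not first-written in $T$ extended by $\iota$, contradicting $\loc \in \mt{MFstWt}(\nvmem, \vmem_0, \cmd_0)$, which quantifies over all runs from $(\timestamp_0, \nvmem, \vmem_0, \cmd_0)$ to the next checkpoint.

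To carry this out cleanly, I would first establish a small auxiliary fact: if $\loc \in \mt{rd}(\iota)$ and $\iota$ is not $\m{checkpoint}$ (hence also not a bare input operation with no read), then in the one-step evaluation $(\_, N, V, \iota) \SeqStepsto{o} (\_, N', V', \m{skip})$, the observation $o$ contains $\m{rd}\ \loc\ \_$; consequently $\loc \in \mt{Rd}$ of any extension of $T$ by this step before it enters $\mt{Wt}$. This is a routine case analysis on the syntactic forms of $\iota$ (assignment, array assignment, input to a variable/array) using the expression-evaluation rules, and it depends critically on the side condition $\iota \neq \m{checkpoint}$ so that the next step is an actual memory/read-bearing instruction. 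With this fact in hand, both disjuncts collapse.

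The main obstacle I anticipate is phrasing the $\mt{MFstWt}$ / $\mt{MstWt}$ quantification correctly in the input setting: these sets range over all traces induced by all input sequences $\inputs$, but here we are reasoning about one particular trace $T$ whose inputs are fixed by $O\bnfalt_\m{in}$. Because both sets are universally quantified over $\inputs$, exhibiting even one extension of $T$ (with input suffix that extends $O\bnfalt_\m{in}$) in which $\loc$ is read before being written suffices to refute membership. I would therefore need to appeal to the fact that the intermittent trace $T$ combined with the deterministic one-step firing of $\iota$ yields a legal continuous run for some concrete $\inputs$, and hence a witness against the must-first-write requirement. This is the only subtle bookkeeping; the rest of the argument is a straightforward contradiction.
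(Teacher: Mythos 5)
Your core argument is the paper's: assume $\nvmem_1(\loc)\neq\nvmem_2(\loc)$, unfold the same-point relation, and use the fact that $\iota$ reads $\loc$ while $\loc\notin\mt{Wt}(T)$ to conclude $\loc\notin\mt{MFstWt}(\nvmem,\vmem_0,\cmd_0)$, contradicting the relation. One correction, though: the bulleted conditions in Definition~\ref{def:app-io-same_exec} are conjunctive, not disjunctive --- every differing location must satisfy \emph{all} of $\loc\in\mt{MFstWt}$, $\loc\in\mt{MstWt}$, and $\loc\notin\mt{Wt}(T)$ --- so your refutation of the $\mt{MFstWt}$ clause (which correctly borrows $\loc\notin\mt{Wt}(T)$ from the relation) already closes the proof, exactly as the paper does. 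Your separate treatment of the ``second disjunct'' is therefore unnecessary, and as written it is also unsound: $\mt{MstWt}$ is a must-\emph{write} set, not a must-first-write set, so a read of $\loc$ at $\iota$ does not contradict $\loc$ being written later on every path to the next checkpoint. Dropping that case leaves an argument that matches the paper's.
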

\ifproofs
\begin{proof}
We assume that  $\nvmem_1(\loc) \neq \nvmem_2(\loc)$ then derive a
contradiction.
\begin{tabbing}
By assumption, 
\\\qquad \= (1)~~ \= $\nvmem_1(\loc) \neq \nvmem_2(\loc)$  and
$\timestamp_0, \nvmem, \vmem_0, \cmd_0, (\iota;\cmd)\vdash \nvmem_1 \sim \nvmem_2$.
\\ By definition~\ref{def:io-same_exec}
\\\>(2)\> 
\\\>\>$\forall x \in N_1$ s.t. $N_1(x) \neq N_2(x)$, 
 $x\in\mt{MFstWt}(N, V_0, \cmd_0)$, $x\in\mt{MstWt}(N_1, V', \iota;\cmd)$ and $x\notin \mt{Wt}(T')$
 \\\>
where $\{T'\} = \runof{\sigma, O\bnfalt_{\mt{in}}, (\iota;\cmd)}$,
  $\sigma=(\timestamp_0, \nvmem, V_0, \cmd_0)$ and the last state of $T'$ is
  $(\timestamp,N_1, V', \iota;\cmd)$
\\By (1) and (2)
\\\>(3)\> $\loc\in\mt{MFstWt}(N, V_0, \cmd_0)$, 
and $\loc\notin \mt{Wt}(T')$
\\By $\loc\in \mt{rd}(\iota)$, $\loc \notin \mt{WT}(T')$, 
and the definitions of $\mt{FstWt}$ and $\mt{MFstWt}$
\\\>(4)\> $\loc\notin \mt{MFstWt}(N, V_0, c_0)$
\\(3) and (4) derive a contradiction, so the conclusion holds.
\end{tabbing}
\end{proof}
\fi

\end{appendices}
\newpage
\bibliography{ms}

\end{document}